\newcommand{\mr}{\ms\hline\ms}
\def\@mkboth#1#2{}
\newlength\appendixwidth
\preto\appendix{\addtocontents{toc}{\protect\patchl@section}}
\newcommand{\patchl@section}{%
  \settowidth{\appendixwidth}{\textbf{Appendix }}%
  \addtolength{\appendixwidth}{1.5em}%
  \patchcmd{\l@section}{1.5em}{\appendixwidth}{}{\ddt}%
}
\newcommand{\GHPw}[2]{\left\{ #1, #2 \right\}}
\renewcommand{\S}{{\mathcal S}}
\newcommand{\E}{{\mathcal E}}
\newcommand{\T}{{\mathcal T}}
\renewcommand{\O}{{\mathcal O}}
\newcommand{\A}{{\mathcal A}}
\newcommand{\sM}{{\mathscr M}}
\newcommand{\sS}{{\mathscr S}}
\newcommand{\sI}{{\mathscr I}}
\newcommand{\sH}{{\mathscr H}}
\newcommand{\sP}{{\mathscr P}}
\newcommand{\sL}{{\mathscr L}}
\newcommand{\sV}{{\mathscr V}}
\newcommand{\sW}{{\mathscr W}}
\font\ec=ecrm0800 at 12pt
\def\th{\hbox{\ec\char'336}}
\def\mb{\bar m}
\def\pheq{\phantom{=}}
\DeclareMathOperator{\thorn}{\text{\rm \th}}
\let\eth\relax
\DeclareMathOperator{\eth}{\text{\rm \dh}}
\newtheorem{theorem}{Theorem}
\newtheorem*{corollary}{Corollary}
\newtheorem{lemma}[theorem]{Lemma}
\newtheorem{definition}{Definition}
\newtheorem{proposition}{Proposition}
\renewcommand{\Re}{\operatorname{Re}}
\newcommand{\half}{\dfrac{1}{2}}
\newcommand{\intd}[1]{\textrm{d} #1}
\newcommand{\lie}[1]{\mathcal{L}_#1}
\newcommand{\dd}{{\rm d}}
\def\ben{\begin{equation}}
\def\een{\end{equation}}
\def\bena{\begin{eqnarray}}
\def\eena{\end{eqnarray}}
\tikzset{snake it/.style={decorate, decoration=snake}}
\tikzstyle arrowstyle=[scale=1]
\tikzstyle directed=[postaction={decorate,decoration={markings,mark=at position .65 with {\arrow[arrowstyle]{stealth}}}}]
\tikzstyle reverse directed=[postaction={decorate,decoration={markings,mark=at position .65 with {\arrowreversed[arrowstyle]{stealth};}}}]
\begin{document}

\title{Metric Reconstruction in Kerr Spacetime}

\author{Stefan Hollands$^{1,2}$, Vahid Toomani}

\address{$^1$ Institut f\"ur Theoretische Physik, Universit\"at Leipzig, Br\"uderstra{\ss}e 16, 04103 Leipzig, Germany.}

\address{$^2$ Max-Planck-Institute MiS, Inselstrasse 22, 04103 Leipzig, Germany.}



\ead{
stefan.hollands@uni-leipzig.de, 
vahidtoomani2002@gmail.com}

\date{\today}

\begin{abstract}
Metric reconstruction is the general problem of parameterizing GR in terms of its two ``true degrees  
of freedom'', e.g., by a complex scalar ``potential''---in  
practice mostly with the aim of simplifying the Einstein equation (EE) within perturbative approaches.  
In this paper, we re-analyze the metric reconstruction procedure by Green, Hollands, and Zimmerman (GHZ) [Class. Quant. Grav. \textbf{37}, 075001 (2020)], which is a 
generalization of the Chrzanowski-Cohen-Kegeles (CCK) approach. Contrary to the CCK method, that by GHZ is applicable not only to the vacuum, but also to 
the sourced linearized Einstein equation (EE). Our main innovation is a version of the GHZ method 
giving an efficient integration scheme for the initial value problem of the sourced linear EE. By iteration, our scheme gives the metric to as high an order in perturbation theory around Kerr as one might wish, in principle. At each order, the metric perturbation is a sum of a corrector, obtained by solving a triangular system of transport equations, a reconstructed piece, obtained from a Hertz potential as in the CCK approach, and an algebraically special perturbation, determined by the ADM quantities. As a by-product, we determine the precise relations between the asymptotic tail of the Hertz potential in the GHZ and CCK schemes, and the quantities 
relevant for gravitational radiation, namely, the energy flux, news- and memory tensors, and their associated BMS-supertranslations. We also discuss ways of transforming the metric perturbation to Lorenz gauge.
\end{abstract}

\maketitle
\tableofcontents

\section{Introduction}

\subsection{Metric reconstruction in Kerr}

When analyzing a disturbance of the Kerr spacetime caused, e.g., by a small body orbiting the hole, or by a gravitational wave impinging on it, it is natural to start with the linearized Einstein equation (EE), 
\begin{equation}\label{eq:Eh=T}
(\E h)_{ab}=T_{ab},
\end{equation}
where $\E$ is the linearized Einstein operator on the Kerr background \eqref{eq:linearE}. The stress energy tensor, $T_{ab}$, represents any matter, or, in an order-by-order 
perturbative approach to the non-linear EE, the driver of the metric perturbation at next order caused by the lower orders. 
The linearized---let alone non-linear\footnote{The stability of at the  non-linear level has been shown by \cite{Klainerman:2021qzy} for slowly rotating Kerr black holes, by \cite{Hintz2} for slowly rotating Kerr-deSitter black holes and by \cite{Dafermos:2021cbw} for Schwarzschild by an independent analysis.}---EE equations are extremely difficult to analyze, either from a mathematical viewpoint, such as when estimating the decay of solutions
(see, e.g., \cite{Holzegel} for a review), or from a practical viewpoint, such as
when studying the dynamics of the small body and its gravitational radiation in the self-force approach (see, e.g., \cite{Poisson:2011nh,Pound:Wardell,Barack} for reviews). 

One of the main difficulties associated with \eqref{eq:Eh=T} is that, on a Kerr background, the components of $h_{ab}$ are coupled in an intricate way.
A second difficulty arises from the invariance of \eqref{eq:Eh=T} under gauge transformations, implying that, in practice, one often has to adopt different gauges 
in different parts of spacetime in order to analyze the solution. A third---mostly practical---difficulty, strongly inhibiting the use of frequency domain techniques, 
is that there is no known\footnote{To the authors.} basis of mode functions in which \eqref{eq:Eh=T} directly becomes amenable to the separation of variables method. 
From either of these points of view, it is much more convenient to work with the perturbed extreme Weyl scalars, $\psi_0$ and $\psi_4$, instead of $h_{ab}$: These are not only scalar, but also gauge invariant and obey 
separately Teukolsky's equation \cite{Teukolsky:1972my}, which can be solved by a separation of variables ansatz \cite{Teukolsky:1973ha}. At linear order, $\psi_0$ and $\psi_4$
are entirely sufficient to analyze gravitational radiation interacting with the Kerr black hole. For recent mathematical studies of Teukolsky's equation
directly relevant for this paper, see \cite{DHR,Ma,Hintz,Angelopoulos,Shlapentokh-Rothman:2023bwo,Andersson}.

However, if one wants go beyond linear order, one requires 
the metric perturbation $h_{ab}$ {\it itself}, and not just $\psi_0$ and/or $\psi_4$, 
to compute its second or higher order Einstein tensor sourcing the second or higher order perturbation. 
Similarly, $h_{ab}$ is required, e.g., in order to evaluate the self-force on a small body in orbit around the hole to first or higher order \cite{van2015metric,meent2,Keidl1,Keidl2,Pound:2012dk,Spiers:2023mor,Spiers:2023cip,Wardell:2021fyy}. The problem of re-obtaining $h_{ab}$---to the extent possible---from $\psi_0$, $\psi_4$, or perhaps some other scalar field(s) obeying separable wave equations, is called ``metric reconstruction''. 

For the {\it homogeneous} version of the EE \eqref{eq:Eh=T}, i.e., when $T_{ab}=0$, the reconstruction problem was addressed by \cite{Chrzanowski:1975wv,Kegeles:1979an}, 
who showed that a metric perturbation of the form
\begin{equation}
\label{eq:reconstr}
h_{ab} = \Re \S^\dagger_{ab} \Phi
\end{equation}
is a solution to the homogeneous EE if $\Phi$ satisfies the (adjoint) Teukolsky equation, where $\S^\dagger_{ab}$ is a certain second order partial differential operator 
that we recall below in \eqref{eq:Sdag}. Furthermore, it is generally taken for granted, and will be justified by a complete proof for the first time in corollary \ref{GHZcor}, that, up to possibly an infinitesimal perturbation corresponding to a variation of the parameters $(M,a)$ of Kerr, and up to a gauge transformation, essentially {\it any} physically relevant linear perturbation solving the homogeneous EE can be written in this way. Given the significant advantages of Teukolsky's equation over the full system of EE, it would clearly be highly desirable if solutions to the {\it inhomogenous}
EE could similarly be written in ``reconstructed form'' \eqref{eq:reconstr}. 

Unfortunately, as we recall in Sec. \ref{sec:GHZ decomposition}, for general sources this hope is illusory for simple algebraic reasons, and so a more sophisticated theory of metric reconstruction than that 
by \cite{Chrzanowski:1975wv,Kegeles:1979an} must be sought. Such a scheme would naturally be deemed satisfactory if (a) {\it any} physically reasonable perturbation solving the inhomogeneous EE for any physically reasonable $T_{ab}$ (especially ones coming up in perturbative approaches)
can be reconstructed, (b) the ingredients in the reconstruction obey differential equations that are less complicated than the full EE, such as e.g., Teukolsky or simple transport equations, (c) there exist numerically efficient ways of solving these differential equations. Current proposals include\footnote{
We do not include in this list other integration schemes such as \cite{Benomio:2022tfe} or \cite{Klainerman:2021qzy}, \cite{Dafermos:2021cbw}, aimed more broadly at a stability/decay analysis of the full non-linear EE, rather than at reducing the EE, order-by-order in perturbation theory, to a Teukolsky equation. At the linearized level, \cite{Benomio:2022tfe} has some similarities with \cite{Andersson:2019dwi,Loutrel:2020wbw} described below, though it is better optimized for good control near $\sH^+$. 
}:

\begin{itemize}
\item 
The method initiated by Keidl et al. \cite{Keidl1,Keidl2} applies to the important special case of a point particle source \eqref{eq:pp}. It works by gluing two vacuum solutions in reconstructed form \eqref{eq:reconstr} along an interface at the particle's instantaneous radial position, and underlies many first order self-force calculations. Prominently, it has been used by van de Meent \cite{meent2} to fully compute the self-force for general bound orbits in Kerr.\footnote{See \cite{Barack:2017oir} for an implementation of this method in the time domain.} While very reasonable and efficient in this setting, this method is by its very construction not applicable beyond first order perturbations, nor to extended sources even at first order. 

\item Loutrel et al. \cite{Loutrel:2020wbw} have described a method for reconstructing the metric perturbation from Weyl scalars at first and second order in perturbation theory, but 
assuming a non-generic algebraic condition on the stress tensor (closely related to the outgoing radiation gauge, ORG). Thus, while useful in special cases where this condition is satisfied, or approximately so, the method falls short of satisfying criterion (a).

\item Andersson et al. \cite{Andersson:2019dwi} have used an integration scheme quite similar to \cite{Loutrel:2020wbw} to prove that one can, in a sense, transfer decay results for the 
Weyl scalars such as \cite{DHR, Ma} to $h_{ab}$. Though different from the reconstruction method by \cite{Chrzanowski:1975wv,Kegeles:1979an}
described above, their scheme is also formulated only for the homogeneous EE ($T_{ab}=0$), and thus not what we are looking for either.

\item Andersson et al. \cite{Andersson:2021eqc} have established a non-linear version of the so-called ingoing radiation gauge (IRG), $h_{ab}l^b=0$, where $l^a$ is the outgoing principal null direction, see Fig. \ref{fig:2}.
In this sense, their scheme is quite similar to the reconstruction method by \cite{Chrzanowski:1975wv,Kegeles:1979an} described above, which also yields a perturbation in (traceless) IRG automatically. Furthermore, 
if one were to expand their method perturbatively, their IRG construction would apply order-by-order to the higher perturbations  sourced by the $T_{ab}$ corresponding to the lower orders. However, \cite{Andersson:2021eqc} does not show that the IRG metric is generated by one or more scalars satisfying ``simple'' differential equations in the sense of (b).  
Thus, their scheme is not really a reconstruction method and for this reason alone\footnote{Another difference is that \cite{Andersson:2021eqc} employs several tetrad frames, whereas we will stick to one and only such frame (up to equivalence), tied to the principal null directions of Kerr.} quite different from what we are looking for in this paper. 

\item The AAB method \cite{Aksteiner:2016pjt} (after Aksteiner, Andersson and B\" ackdahl) uses {\it two} potentials (obtained as anti-time-derivatives of $\psi_0$ and $\psi_4$), 
as well as a correction piece which is an anti-time-derivative of certain other derivatives of $T_{ab}$. We give a more conceptual derivation of their decomposition in \ref{sec:TScov}, see \eqref{AAB}, based on an interesting  ``Nullstellensatz'' for the EE on Kerr (theorem \ref{lem:null} in \ref{sec:NSsatz}). The AAB method has, perhaps, an intrinsic lack of economy, because one complex scalar should, in principle, suffice to describe the two real propagating degrees of freedom of General Relativity.
Nevertheless, the AAB scheme is a contender in the sense of (a)--(c) above. 

\item The GHZ method \cite{Green:2019nam} (after Green, Hollands and Zimmerman) is closer to the metric reconstruction approach for the homogeneous EE
by \cite{Chrzanowski:1975wv,Kegeles:1979an} than the AAB method. In addition to the Hertz potential, $\Phi$, it requires a correction piece obeying certain decoupled transport equations involving a general source $T_{ab}$. Thus, the GHZ scheme also is compatible with (a)--(c) above.
\end{itemize}

The AAB and GHZ methods therefore seem to be the most generally applicable metric reconstruction schemes, and both deserve to be studied further. Both the AAB and GHZ approaches require further gauge changes to be practically useful, see Sec. \ref{sec:gaugeissues}.

In this work, we will focus on the GHZ method (apart from \ref{sec:TScov}). Our main advance is an extension of that method capable of reducing the Cauchy problem for the sourced linearized EE to a Teukolsky equation and transport equations, along with a precise characterization relating the asymptotic expansion of the GHZ decomposition to the observables at future null infinity, $\sI^+$. 

\subsection{Outline of GHZ method}
\label{sec:GHZ decomposition}

As we have said, the GHZ method \cite{Green:2019nam} is a possible generalization of the metric reconstruction procedure by \cite{Chrzanowski:1975wv,Kegeles:1979an} for 
solutions $h_{ab}$ to the {\it inhomogeneous} EE \eqref{eq:Eh=T}. The method considers the following decomposition, 
\begin{equation}\label{eq:decompi}
h_{ab} =  \Re\, \S^\dagger_{ab} \Phi + \dot g_{ab} + {\mathcal L}_X g_{ab} + x_{ab}, 
\end{equation}
into a reconstructed piece, $\Re\,\S^\dagger_{ab} \Phi$ [see \eqref{eq:Sdag}], a perturbation $\dot g_{ab}$ corresponding to varying the hole's mass and spin (called ``zero mode'', see \ref{app:D}), a pure gauge 
perturbation, ${\mathcal L}_X g_{ab}$, and a so-called ``corrector'', $x_{ab}$. A corrector is necessary for sufficiently general $T_{ab}$, 
as can be seen as follows \cite{Price,Price2}: Both $\dot g_{ab}$ and ${\mathcal L}_X g_{ab}$ are solutions to the homogeneous EE, i.e., they are annihilated 
by $\E$. $\Re\,\S^\dagger_{ab} \Phi$ automatically is in traceless ingoing radiation gauge (TIRG), meaning $h_{ab}l^a = 0 = g^{ab} h_{ab}$. Inspection of the $ll$ NP component of the linearized Einstein operator in GHP form given in \ref{app:LinEinGHP} shows that, for any metric perturbation in TIRG, $(\E h)_{ll} = 0$. Thus, if $(\E h)_{ll} \neq 0$, or said differently, if $T_{ll} \neq 0$, then it is simply impossible to write $h_{ab}$ in reconstructed form up to a gauge- and algebraically special perturbation, i.e., we need a suitable corrector, $x_{ab}$.

It turns out \cite{Green:2019nam} that the $l$ NP components of the stress energy tensor $T_{ab}$ in the EE \eqref{eq:Eh=T} are the essential obstacle to write $h_{ab}$ in reconstructed form---thus in particular in TIRG. The corrector is thus constructed in such a way that $h_{ab}-x_{ab}$ has a stress tensor 
without any $l$ NP component, i.e., following \cite{Green:2019nam} we seek an $x_{ab}$ such that
\begin{equation}\label{xeq}
(\mathcal{E} x)_{ab} l^b = (\E h)_{ab} l^b,
\end{equation}
so that $h_{ab} - x_{ab}$ satisfies the linearized EE with a new source $S_{ab} = T_{ab} - (\mathcal{E} x)_{ab}$ such that $S_{ab}l^b = 0$. 
\cite{Green:2019nam} chose a $x_{ab}$ having only $nn, nm, m\bar{m}$ NP components, and observed that, with such a choice, all three NP components of \eqref{xeq} become {\em ordinary} differential equations for these components. More precisely, consider an $x_{ab}$ with the NP components
\begin{equation}
\label{eq:xdef}
x_{ab} = 2m_{(a} \bar{m}_{b)} x_{m\bar{m}} -2l_{(a} \bar{m}_{b)} x_{nm} -2l_{(a} m_{b)} x_{n\bar{m}} + l_a l_b x_{nn}.
\end{equation}
Then one sees \cite{Green:2019nam}, using formulas of \ref{app:LinEinGHP}, that the NP components of  \eqref{xeq} are
schematically of the form
%
\begin{equation}
\label{eq:xmmb}
\rho^2 \thorn \left[ \frac{\bar{\rho}}{\rho^3} \thorn \left(\frac{\rho}{\bar{\rho}} x_{m\bar{m}} \right) \right] = T_{ll}
\end{equation}
for $x_{m\bar{m}}$ 
\begin{equation}
\label{eq:xnm1}
\frac{\rho}{2(\rho+\bar\rho)} \thorn
\left\{ (\rho+\bar{\rho})^2
\thorn \left[ \frac{1}{\rho(\rho+\bar\rho)} x_{nm} \right] \right\} = T_{lm}
\text{$+$ (terms involving $x_{m\bar{m}}$)}
\end{equation}
for $x_{nm}$, and
\begin{equation}
\label{eq:xnn1}
\half (\rho+\bar{\rho})^2\thorn \left( \frac{1}{\rho+\bar\rho}x_{nn} \right) = T_{ln}
\text{$+$ (terms involving $x_{m\bar{m}},x_{nm}$)}
\end{equation}
for $x_{nn}$. This system, called ``GHZ transport equations'', for $x_{nn}, x_{nm}, x_{m\bar{m}}$ is triangular because the integration of the equation for  $x_{m\bar{m}}$ does not involve  either $x_{nn}, x_{nm}$, and the integration for $x_{nm}$ does not involve $x_{nn}$. Furthermore, 
\begin{equation}
\thorn = \partial/\partial r , \quad \rho = -(r-ia\cos \theta)^{-1},
\end{equation}
 in retarded Kerr-Newman coordinates $(x^\mu)=(u,r,\theta,\varphi_*)$ and the Kinnersley frame (see Sec. \ref{sec:Kpert}), so solving the GHZ transport equations  involves $r$-integrations at constant $(u,\varphi_*,\theta)$, see Figs. \ref{fig:3}, \ref{fig:4}.
Eqs. \eqref{eq:xnm1}, \eqref{eq:xnn1} are written out more fully in \eqref{eq:xnm}, \eqref{eq:xnn} below.

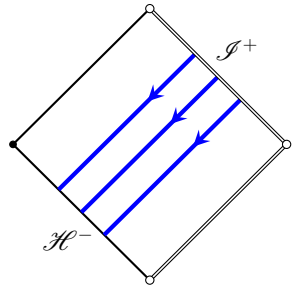
\begin{figure} 
\begin{center}
\begin{tikzpicture}[scale=0.6, transform shape]
\draw[thick](3,-3)--(0,0)--(3,3);
\draw[blue, reverse directed, ultra thick](1,-1)--(4,2);
\draw[blue, reverse directed, ultra thick](1.5,-1.5)--(4.5,1.5);
\draw[blue, reverse directed, ultra thick](2,-2)--(5,1.0);

\draw[double](3,3)--(6,0)--(3,-3);

\node[anchor=west] at(4.3,2.1) {{\Large ${\mathscr I}^+$}};
\draw (0,0) node[draw,shape=circle,scale=0.4,fill=black]{};
\node[anchor=east] at(1.9,-2.1) {{\Large ${\mathscr H}^-$}};


\draw (3,3) node[draw,shape=circle,scale=0.5,fill=white]{};
\draw (6,0) node[draw,shape=circle,scale=0.5,fill=white]{};
\draw (3,-3) node[draw,shape=circle,scale=0.5,fill=white]{};

\end{tikzpicture}
\end{center}
\caption{\label{fig:3} Backward integration contour for GHZ transport equations.}
\end{figure}

Once the corrector $x_{ab}$ has been determined, the second step in the GHZ approach is to show that the remaining part of the metric, that is $h_{ab}-x_{ab}$, can be written in reconstructed form ${\rm Re}\, \S^\dagger_{ab} \Phi$ for some $\Phi$, plus a  suitable pure gauge perturbation plus a suitable zero mode.  In practice, once we {\it know} that the GHZ decomposition exists, the only remaining task is to have a convenient {\it algorithm} for actually 
{\it finding} the Hertz potential $\Phi$  and the corrector $x_{ab}$ in the GHZ decomposition. 
It seems that the following sequence of steps is a particularly convenient integration scheme:

\medskip
\noindent
{\bf Step 1: GHZ transport equations.}
Find $x_{ab}$, which only involves solving the GHZ transport equations with boundary conditions concretely determined by the problem at hand, e.g., retarded or advanced solutions. In \cite{Casals:2024ynr} explicit Green's functions for the GHZ transport equations were provided which may be employed for this purpose.

\medskip
\noindent
{\bf Step 2: Teukolsky equation.}
Finding $\Phi$ is easiest solving first the initial value problem for the usual sourced Teukolsky equation for the Weyl scalar $\psi_0$, e.g., by standard frequency domain techniques. 

\medskip
\noindent
{\bf Step 3: Inversion.}
With $\psi_0$ at hand, one obtains $\Phi$ as the appropriate solution to the transport equation $-\frac{1}{4} \thorn^4 \bar \Phi = \psi_0$. For retarded or advanced solutions, this inversion can be implemented e.g., using radial 
Teukolsky-Starobinsky (TS) identities in the frequency domain \cite{Ori:2002uv}.  On the other hand, as we demonstrate in Sec. \ref{sec:Cauchy}, if we want to solve an initial value problem for the sourced EE, 
this inversion step involves angular TS identities.

\medskip
\noindent
{\bf Step 4: Iteration.} This procedure can be iterated to obtain the metric $g_{ab}(\epsilon) = g_{ab} + \epsilon h_{ab}^{(1)} + \epsilon^2 h_{ab}^{(2)} + \dots,$ solving the EE with source $T_{ab}(\epsilon) = 
\epsilon T_{ab}^{(1)} + \epsilon^2 T_{ab}^{(2)} + \dots$ to increasing accuracy. For the extreme mass ratio inspiral problem, the expansion parameter $\epsilon = \mu/M$ would be the ratio of the particle's mass and that of the black hole, $T^{(1)}_{ab}$ would be the stress tensor for a point particle  on a bound geodesic $\gamma$ orbiting the hole,
\begin{equation}
\label{eq:pp}
     T^{(1)}_{ab}(x) \propto \int \dd \tau \, \dot \gamma_a \dot \gamma_b \delta^4(x-\gamma(\tau))/\sqrt{-g}, 
\end{equation}
$T^{(2)}_{ab}$ would e.g., account for the particle's spin
\cite{Mathews}. We would run the GHZ procedure to first get $h^{(1)}_{ab}$
solving \eqref{eq:Eh=T} with source $T^{(1)}_{ab}$. Then we would suitably adjust the gauge of $h^{(1)}_{ab}$, see Sec. \ref{sec:gaugeissues}, and next solve \eqref{eq:Eh=T} for $h^{(2)}_{ab}$ with source $T^{(2)}_{ab} - G^{(2)}_{ab}(h^{(1)},h^{(1)})$, 
where $G^{(2)}_{ab}$ is the second order Einstein tensor (see \cite{Spiers:2023cip} for its GHP form), and so on.

\medskip
\noindent
Of course, to justify this scheme, we must prove that the 
perturbation actually {\it has} a GHZ decomposition \eqref{eq:decompi} with the desired properties in the first place! The difficult step is showing that, after the corrector $x_{ab}$ has been determined, $h_{ab}-x_{ab}$ can, after all, be written in reconstructed form ${\rm Re}\, \S^\dagger_{ab} \Phi$ for some $\Phi$, plus a  suitable pure gauge perturbation plus a suitable zero mode. To demonstrate this, we require certain decay conditions on $h_{ab}$ at $\sI^+$ or certain boundary conditions at $\sH^-$. In \cite{Green:2019nam}, the authors required that $h_{ab}$ be smooth and vanish in an open neighborhood of $\sH^-$ in some gauge, as would be the case for a retarded solution associated with a $T_{ab}$ of compact support in the asymptotic past. A trivial modification of their argument would also establish the GHZ decomposition if, e.g., $h_{ab}$ were to vanish in an open neighborhood of $\sI^+$ in some gauge, as would be the case for an advanced solution associated with a $T_{ab}$ of compact support in the asymptotic future.

While the condition that $h_{ab}$ vanish (in some gauge) in an open neighborhood of $\sH^-$ or $\sI^+$ is appropriate for retarded or advanced solutions to the linearized EE, such conditions clearly are unsuitable e.g., if one would like to use the GHZ scheme to solve the Cauchy problem with initial conditions posed on some Cauchy surface of the hole's exterior. The main aim of the present paper is to suitably generalize the GHZ decomposition theorem \cite{Green:2019nam} so as to incorporate such situations, and thereby be able to 
make use of the GHZ integration scheme for solving the Cauchy problem for the sourced linearized EE. 

More precisely, we will prove the GHZ decomposition in the following situations:
\begin{itemize}
\item[1)] metric perturbations $h_{ab}$ and stress tensors $T_{ab}$ having appropriate ``Bondi-like'' decay at $\sI^+$ (see \ref{BondiIRG}), or
\item[2)] metric perturbations $h_{ab}$ and stress tensors $T_{ab}$ satisfying appropriate ``no outgoing radiation'' type conditions at $\sH^-$. 
\end{itemize}
While the general strategy is essentially the same as in \cite{Green:2019nam}, very considerable extra work is required to carry them out in the more general setting of the present paper.  In case 1), we will consider a particular boundary condition for the various transport equations, effectively at $\sI^+$, corresponding to a ``backward integration'' scheme, 
see Fig. \ref{fig:3} for the integration contours. To this end, we will go through the following steps in Sec. \ref{GHZproof}:

\begin{itemize}
\item
We state our global decay assumptions at $\sI^+$ on $h_{ab}$ and $T_{ab}$ and extract from $h_{ab}$ the zero mode, $\dot g_{ab}$, 
see Sec. \ref{sec:gdot}.

\item
We perform a gauge transformation with gauge vector field $\xi^a$, eliminating any $l$ NP component of $h_{ab}$, see 
Sec. \ref{sec:gaugexi}.

\item
We obtain the corrector tensor $x_{ab}$, see Sec. \ref{sec:xabdef}.

\item
We perform a residual gauge transformation with gauge vector field
$\zeta^a$, allowing us to find the Hertz potential $\Phi$ in 
Secs. \ref{Hertzbackward} and \ref{sec:zetaproof}. The combination $X^a = \xi^a + \zeta^a$ 
is the gauge vector field appearing in the GHZ decomposition.
\end{itemize}

We will also give a GHZ decomposition corresponding to case 2). The resulting ``forward integration'' scheme goes through similar steps and is described in Sec. \ref{GHZproofH-}, 
see Fig. \ref{fig:4} for the integration contours. Many auxiliary results required in Secs. \ref{GHZproof}, \ref{GHZproofH-} are relegated to various appendices.

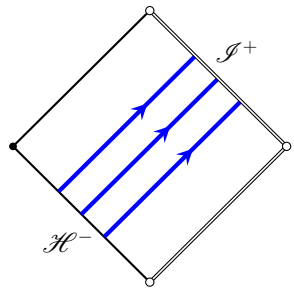
\begin{figure} 
\begin{center}
\begin{tikzpicture}[scale=0.6, transform shape]
\draw[thick](3,-3)--(0,0)--(3,3);
\draw[blue, directed, ultra thick](1,-1)--(4,2);
\draw[blue, directed, ultra thick](1.5,-1.5)--(4.5,1.5);
\draw[blue, directed, ultra thick](2,-2)--(5,1.0);

\draw[double](3,3)--(6,0)--(3,-3);

\node[anchor=west] at(4.3,2.1) {{\Large ${\mathscr I}^+$}};
\draw (0,0) node[draw,shape=circle,scale=0.4,fill=black]{};
\node[anchor=east] at(1.9,-2.1) {{\Large ${\mathscr H}^-$}};


\draw (3,3) node[draw,shape=circle,scale=0.5,fill=white]{};
\draw (6,0) node[draw,shape=circle,scale=0.5,fill=white]{};
\draw (3,-3) node[draw,shape=circle,scale=0.5,fill=white]{};

\end{tikzpicture}
\end{center}
\caption{\label{fig:4} Forward integration contours for GHZ transport equations.}
\end{figure}

\subsection{Gauge issues}
\label{sec:gaugeissues}

The metric reconstruction approach \cite{Chrzanowski:1975wv,Kegeles:1979an}, and its possible generalization, the GHZ method, yields a metric 
perturbation in TIRG respectively IRG [or ORG, after changing $(\sI^+, \sH^-) \leftrightarrow (\sI^-, \sH^+)$ and applying the GHP priming operation \cite{GHP} to all our relations]. This has certain disadvantages: As one may see e.g., from theorem \ref{thm:2}, some metric components in IRG may fail to 
decay towards $\sI^+$. Another issue, analyzed from the point of view of ``wave front sets'' \cite{hormander1} in \cite{Casals:2024ynr}, 
is that singularities of the source $T_{ab}$ in the linearized EE
may be propagated---as gauge singularities---off the source. A manifestation of this phenomenon in the case of a source describing a point-particle on a worldline, 
is the appearance of  ``string singularities'' along outgoing principal null geodesics emanating from the particle in IRG, see e.g., \cite{Pound:2013faa,HT2}. 

To remedy such shortcomings, a possible strategy is to construct explicitly a gauge vector field putting the linear perturbation into a more regular gauge. 
One possibility is the so-called ``no-string gauge'', treated in \cite{HT2}, see \cite{Bourg:2024vre} for a recent application to self-force problems. Another possibility is the Lorenz gauge, in which the linearized EE becomes manifestly hyperbolic, implying, e.g., the ``normal'' propagation of singularities \cite{Casals:2024ynr}.

As is well-known, in order to put a linear perturbation into the Lorenz gauge, one needs to solve a wave-equation for the prerequisite gauge vector field. Unfortunately, solving this wave equation on Kerr is roughly at the same level of difficulty as solving the linearized EE itself. Thus, one needs yet another reconstruction procedure, now for the gauge vector field, to make this task feasible in practice. 

\begin{itemize}
    \item If the metric perturbation is in reconstructed form, 
    $\Re\,\S^\dagger_{ab} \Phi$, then the prerequisite gauge VF can be obtained by a very efficient trick \cite{Dolan}, which in effect only requires taking anti-time-derivatives and derivatives of a perturbed Weyl scalar---a trivial task in the frequency domain.
    \item If the metric perturbation satisfies the EE with source, then we need to supplement the reconstructed part with a corrector, as e.g. in the GHZ or AAB prescriptions outlined above for spin-2. In this case, one can still reduce the problem of finding the gauge vector field to that of solving sourced massless scalar wave equations and sourced Maxwell equations, as we describe in Sec. \ref{sec:RedMax}. However, the task becomes more complicated.
\end{itemize}

The massless scalar wave equation already is a Teukolsky equation, whereas methods for reducing the sourced Maxwell equation to Teukolsky equations by a sort of reconstruction method (actually requiring two Hertz-type potentials) were given by \cite{Green2,Green3,Dolan2}.  
The prescription by \cite{Green2,Green3} has so far not appeared in the literature, so we will give a detailed account in somewhat modified form below
in Sec. \ref{SourcedMax}.  This same problem is also solved, in a different way, by the 
spin-1 analogue of the GHZ construction \cite{Hollands:2020vjg}, which we will recall in Sec. \ref{sec:MaxGHZ}, and which requires solving only one Teukolsky equation plus one transport equation, similarly to the GHZ method for spin-2 as outlined above.


\medskip
\noindent
{\bf Conventions:} We generally follow the conventions of 
\cite{Waldbook}, except for the following: the metric signature is $(+,-,-,-)$ in this paper, in order to be consistent with most of the literature about the Geroch-Held-Penrose (GHP) formalism \cite{GHP}. We use units in which $8\pi G=c=1$. 
$\lie{\xi}$ is the ordinary Lie-derivative when acting on tensors, but the GHP covariant Lie-derivative \cite{Edgar} when acting on quantities with non-trivial GHP weights.

\section{Kerr perturbation theory}

\subsection{Kerr metric, NP frames, and GHP formalism}
\label{sec:Kpert}


Many developments in the perturbation theory of the Kerr metric rely on the fact that it is a Petrov type D spacetime (see e.g., \cite{Waldbook}). This is also the case for the developments in this paper. A type D spacetime by definition has two repeated principal null directions, $n^a$ and $l^a$. In Boyer-Lindquist (BL) coordinates $(x^\mu) = (t,r,\theta,\varphi)$ in Kerr, they may be chosen as
\begin{subequations}\label{eq:Kintet}
  \begin{align}
    l^a &= \frac{1}{\Delta} \left[ (r^2+a^2) \left( \frac{\partial}{\partial t} \right)^a + \Delta \left( \frac{\partial}{\partial r} \right)^a + a \left( \frac{\partial}{\partial \varphi} \right)^a \right], \\
    n^a &= \frac{1}{2 \Sigma} \left[ (r^2+a^2) \left( \frac{\partial}{\partial t} \right)^a - \Delta \left( \frac{\partial}{\partial r} \right)^a + a \left( \frac{\partial}{\partial \varphi} \right)^a \right], \\
    m^a &= \frac{1}{\sqrt{2} (r+ia\cos\theta)} \left[ ia\sin\theta \left( \frac{\partial}{\partial t} \right)^a +
    \left( \frac{\partial}{\partial \theta} \right)^a + i\csc\theta \left( \frac{\partial}{\partial \varphi} \right)^a \right],
  \end{align}
\end{subequations}
where $\Delta = r^2 -2Mr+a^2$ and $\Sigma = r^2 + a^2 \cos^2 \theta$. We assume throughout that $|a| \le M$, and denote the outer horizon radius by $r_+=M+\sqrt{M^2-a^2}$. We restrict attention in this paper to the black hole's exterior, $r>r_+$.
As is commonly done, we have completed $n^a, l^a$ to a complex null (Newman-Penrose, NP) tetrad with the complex null vector $m^a$.
Eqs.  \eqref{eq:Kintet} are called the Kinnersley tetrad. 

Our formalism will be fully covariant with respect to local rescalings of the NP tetrad (see below), so all of our formulas can be trivally transformed from one NP frame to any other. Nevertheless, it will sometimes be useful to refer to a specific frame, e.g. when making statements about the decay of certain quantities at $\sI^+$, or their properties at $\sH^-$. For such purposes, the Kinnersley frame is preferable because it is regular at $\sH^-$ and well-behaved at $\sI^+$. 

The Kerr metric itself is
\begin{equation}
g^{ab} = 2(l^{(a}n^{b)}-\bar m^{(a} m^{b)}).
\end{equation}
This corresponds to the normalizations $l^a n_b = 1, m^a \bar m_a = -1$, with all other contractions equal to zero, and to the signature $(+,-,-,-)$. With the standard choice of time-orientation ($\nabla^a t$ future-directed), $l^a$ and $n^a$ are future directed, and the orbits of $l^a$ are outgoing to $\sI^+$ while the orbits of $n^a$ are ingoing at $\sH^+$, see Fig. \ref{fig:2}.

\usetikzlibrary{decorations.pathmorphing}
\tikzset{zigzag/.style={decorate, decoration=zigzag}}
\begin{figure} 
\begin{center}
\begin{tikzpicture}[scale=0.6, transform shape]
\draw[thick](3,-3)--(0,0)--(3,3);
\draw[blue,directed, ultra thick](1,-1)--(4,2);
\draw[red, reverse directed, ultra thick](1.0,1.0)--(4.0,-2.0);
\node[anchor=east] at(2.7,1) {{\Large \color{blue} $l^a$}};
\node[anchor=west] at(3.1,-.5) {{\Large \color{red} $n^a$}};

\draw[double](3,3)--(6,0)--(3,-3);

\node[anchor=west] at(4.3,2.1) {{\Large ${\mathscr I}^+$}};
\node[anchor=west] at(4.3,-2.1) {{\Large ${\mathscr I}^-$}};
\draw (0,0) node[draw,shape=circle,scale=0.4,fill=black]{};
\node[anchor=east] at(1.9,2.1) {{\Large ${\mathscr H}^+$}};
\node[anchor=east] at(1.9,-2.1) {{\Large ${\mathscr H}^-$}};


\draw (3,3) node[draw,shape=circle,scale=0.5,fill=white]{};
\draw (6,0) node[draw,shape=circle,scale=0.5,fill=white]{};
\draw (3,-3) node[draw,shape=circle,scale=0.5,fill=white]{};

\end{tikzpicture}
\end{center}
\caption{\label{fig:2} Principal null vector fields $l^a$ and of $n^a$ in the exterior of the Kerr black hole.}
\end{figure}
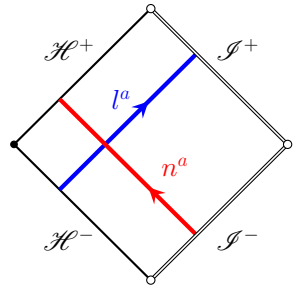

For calculations near $(\sH^+, \sI^-)$ respectively $(\sH^-,\sI^+)$, advanced respectively retarded Kerr-Newman (KN) coordinates are useful. They are defined by
\begin{equation}
\label{eq:KNdef}
(x^\mu) = (v=t+r_*(r), r, \theta, \varphi^*), \quad 
(x^\mu) = (u=t-r_*(r), r, \theta, \varphi_*),
\end{equation}
where
\begin{subequations}
\begin{align}
\dd \varphi^* = \ & \dd \varphi + a \frac{\dd r}{\Delta}, \\
\dd \varphi_* = \ & \dd \varphi - a \frac{\dd r}{\Delta},
\end{align}
\end{subequations}
and where 
\begin{equation}
\dd r_*=\frac{r^2+a^2}{\Delta}\dd r\, 
\end{equation}
defines the Kerr tortoise coordinate $r_*$.
In retarded KN coordinates $(x^\mu)=(u,r,\theta,\varphi_*)$, the Kinnersley tetrad \eqref{eq:Kintet}  is given by:
\begin{subequations}
\label{eq:Kintet up}
\begin{align}
l^a &= \left( \frac{\partial}{\partial r} \right)^a , \\
n^a &=\frac{1}{\Sigma} \left[ (r^2+a^2) \left( \frac{\partial}{\partial u} \right)^a -(\Delta/2)
\left( \frac{\partial}{\partial r} \right)^a + a\left( \frac{\partial}{\partial \varphi_*} \right)^a \right] ,\\
m^a &= \frac{1}{\sqrt{2} (r+ia\cos\theta)} \left[ ia\sin\theta \left( \frac{\partial}{\partial u} \right)^a +
    \left( \frac{\partial}{\partial \theta} \right)^a + i\csc\theta \left( \frac{\partial}{\partial \varphi_*} \right)^a \right].
\end{align}
\end{subequations}
The isometry group Kerr spacetime  includes a discrete part ${\mathbb Z}_2 \times {\mathbb Z}_2$ corresponding to 
the $t-\varphi$-reflection and the reflection across the equatorial plane $\theta = \pi/2$. If we apply the $t-\varphi$ reflection 
to the Kinnersley frame, we would get a frame that is regular instead at $\sI^-$ and $\sH^+$.


The existence of geometrically preferred directions at any point in the Kerr spacetime $\sM$ makes a formalism based on NP tetrads aligned with these directions generally preferable. 
In a tetrad formalism, general relativity closely resembles a gauge theory, see e.g., \cite{Waldbook}. In fact, in the ordinary tetrad formalism based on real, oriented and time oriented 
orthonormal frames, the collection of such frames over each spacetime point defines a so-called principal fibre bundle $\mathscr F$ over $\sM$, called the frame-bundle (see e.g., 
\cite{kobayashi}). The bundle $\mathscr F$ is acted upon by the proper orthochronous Lorentz group at every point, tensor fields are sections of suitable associated vector bundles, and the Levi-Civita covariant derivatives corresponds to a special gauge connection 1-form of $\mathscr F$. 

The Geroch-Held-Penrose (GHP) formalism~\cite{GHP} can be introduced from the perspective of fibre bundles \cite{Ehlers}: One considers a reduction $\sP$ of $\mathscr F$ to complex null (NP) frames aligned with the principal null directions and the time- and spacetime orientations\footnote{The GHP formalism can in fact be set up in generic spacetimes, i.e., with respect to any smooth distribution of pairs of null directions, not necessarily geometrically preferred.}. 
Correspondingly, the structure group $G$ of the principal fibre bundle $\sP$ is that subgroup of the proper orthochronous Lorentz group consisting of boosts preserving the pair of null directions, and rotations in their orthogonal complement. More precisely, the remaining gauge transformations consist of (a) a local boost, sending $l^a \to \lambda \bar{\lambda} l^a$, $n^a \to \lambda^{-1} \bar{\lambda}^{-1} n^a$, or (b) a local rotation $m^a \to \lambda \bar{\lambda}^{-1} m^a$, where $\lambda$ is a nonzero complex number depending on the spacetime point. In other words, the local gauge group can be identified with the multiplicative group $G = {\mathbb C}^\times$ of non-zero complex numbers.\footnote{The GHP formalism is {\it not} covariant under the null rotations which change the null directions $l^a$ or $n^a$.} In a sense, the GHP formalism is analogous to writing the equations of a spontaneously broken local gauge theory in terms of covariant derivatives, curvatures, etc. associated with the unbroken part of the gauge group (here the local gauge transformations (a) and (b) parameterized by $\lambda \in G$.)

Just as matter fields in ordinary gauge theories are classified by a representation of the gauge group that they transform in, and are sections in a corresponding associated vector bundle (see e.g., \cite{kobayashi}), so are the so-called GHP-scalars. The representations 
of the 1-dimensional gauge group $G = {\mathbb C}^\times$ in question are the 1-dimensional representations on the complex vector space $\mathbb C$ given by $\pi_{p,q}(\lambda) = \lambda^p \bar \lambda^q$, where $p,q$ are real numbers called the GHP weights. A GHP scalar $\eta$ of weights $(p,q)$, then, is a (smooth) section, i.e., $\eta \in \Gamma^\infty(\sM, \sL_{p,q})$, in the associated 1-dimensional complex vector bundle $\sL_{p,q} := \sP \ltimes_{\pi_{p,q}} {\mathbb C}$; we write this as 
\begin{equation}
\eta \circeq \GHPw{p}{q} :\Longleftrightarrow \eta \in \Gamma^\infty(\sM, \sL_{p,q}).
\end{equation} 
GHP weights are additive under multiplication---abstractly, we have $\sL_{p,q} \otimes \sL_{p',q'} = \sL_{p+p',q+q'}$.
From this it follows in particular that the dual bundle of complex linear maps is $(\sL_{p,q})' = \sL_{-p,-q}$, because the product of GHP scalars with opposite weights is an ordinary complex valued scalar field on $\sM$.

A specification of an NP frame $\{n^a, l^a, m^a\}$ aligned with the fixed null directions in some open neighborhood, is, according to the general theory of principal fibre bundles, a local section of $\sP$. Such a section corresponds to a trivialization of $\sP$, i.e., local identification of $\sP$ with $\sM \times \mathbb{C}^\times$. A trivialization allows us to locally identify any section $\eta \circeq \GHPw{p}{q}$ with an honest to God scalar, i.e., $\mathbb{C}$-valued function, called the ``gauge representative'' of $\eta$. If we change from one such frame to another, i.e., 
\begin{equation}
 \tilde l^a = \lambda \bar{\lambda} l^a, \quad 
 \tilde n^a = \lambda^{-1} \bar{\lambda}^{-1} n^a, \quad 
 \tilde m^a = \lambda \bar{\lambda}^{-1} m^a,  
\end{equation}
then the gauge representative changes to $\tilde \eta$, where 
\begin{equation}\label{GHPtrafo}
\tilde \eta = \lambda^p \bar{\lambda}^q \eta .
\end{equation}
In the following, we will, for ease of notation, not distinguish a GHP scalar and its local gauge representative, and the above transformation law, which also applies to any gauge-invariant equation between GHP scalars, will be understood implicitly. 

An easy way to produce GHP scalars from an ordinary tensor field is to contract its indices into the legs of an NP tetrad. The resulting scalar function, $\eta$, on $\sM$ can then be seen as the gauge representative, in the NP tetrad  utilized, of a corresponding 
section in the appropriate line bundle $\sL_{p,q}$. 
For example, one may consider the Weyl components,
\begin{subequations}\label{eq:Weylcomp}
\begin{align}
\Psi_0  =& -C_{abcd} \; l^a m^b l^c m^d \circeq \GHPw{4}{0}, \\
\Psi_1  =& - C_{abcd} l^a n^b l^c m^d \circeq \GHPw{2}{0}, \\
\Psi_2  =& -\tfrac{1}{2} C_{abcd}( l^a n^b l^c n^d + l^a n^bm^c \bar{m}^d) \circeq \GHPw{0}{0},\\
\Psi_3  =& -C_{abcd} l^a n^b \bar{m}^c n^d \circeq \GHPw{-2}{0}, \\
\Psi_4  =& -C_{abcd} \; n^a \bar{m}^b n^c \bar{m}^d \circeq \GHPw{-4}{0} ,
\end{align} 
\end{subequations}
which are gauge representatives of invariantly defined GHP scalars (i.e. sections of $\sL_{p,q}$) of the indicated weights. 
More non-trivially, the quantities 
\begin{subequations}\label{GHPscalars}
\begin{align}
\kappa &=m^a l^b \nabla_b l_a \circeq \GHPw{3}{1} ,\\
\tau &= m^a n^b \nabla_b l_a \circeq \GHPw{1}{-1},\\
\sigma &= m^a m^b \nabla_b l_a \circeq \GHPw{3}{-1}, \\
\rho &= m^a  \bar{m}^b \nabla_b l_a  \circeq \GHPw{1}{1},
\end{align}
\end{subequations}
are also GHP scalars of the indicated weights, i.e., they have the homogeneous scaling \eqref{GHPtrafo} despite the presence of the derivative operators in the definitions.
One may also consider the primed counterparts $\kappa',\tau',\sigma',\rho'$, where the GHP priming operation is in general defined by exchanging $l^a \leftrightarrow n^a, m^a \leftrightarrow \bar{m}^a$. If $\eta \circeq \GHPw{p}{q}$, then $\eta' \circeq \GHPw{-p}{-q}$. Furthermore, complex conjugation gives a GHP scalar of weights $\bar \eta \circeq \GHPw{q}{p}$, 
so it makes sense to talk about real GHP scalars when $p=q$, though not otherwise. The difference, $s=(p-q)/2$ is called the ``spin'', and so changes its sign under complex conjugation.

The GHP scalars \eqref{GHPscalars} are sometimes referred to as the ``optical scalars'' because they are directly related to the expansion, shear, and twist in case the null leg(s) of the NP tetrad are tangent to geodesics, as is automatically the case for our choice of NP tetrad aligned with the principal null directions $l^a$ or $n^a$ in Kerr, see table \ref{Opticalscalars}. %
\begin{table}[t]\label{Opticalscalars}
\renewcommand{\arraystretch}{1.5}
\begin{indented}
\item[]
\begin{tabular}{ c | c }
\br
\ GHP quantity \ & interpretation\\
\hline
$\kappa$ & \ failure of $l^a$ to be tangent to geodesic \ \\
${\rm Re} \; \rho$ & expansion of $l^a$\\
${\rm Im} \; \rho$ & twist of $l^a$\\
$\sigma $ & shear of $l^a$\\
$\tau$ & 
H\'aji\v{c}ek twist
\\
\br
\end{tabular}
\end{indented}
\caption{Optical scalars in the GHP formalsm; $\kappa=\kappa'=\sigma=\sigma'=0$ in Kerr. Note that the real and imaginary parts of $\rho$ are well-defined GHP quantities because the GHP weight $\GHPw{1}{1}$ of $\rho$ equals that of $\bar{\rho}$. The case of $n^a$ is analogous and would correspond to the GHP primed optical scalars.}
\end{table}
In our NP tetrad aligned with the principal null directions of Kerr, 
we have $\Psi_i = 0, i \neq 2$ and $\kappa=\sigma=\kappa'=\sigma'=0$. 
We refer e.g., to \cite{Price} for the values of the remaining GHP quantities in various frames.

For GHP calculus we 
require a GHP covariant derivative, denoted by $\Theta_a$, on the bundles $\sL_{p,q}$, i.e., its action on a GHP scalar $\eta$ with scaling weights 
$\eta \circeq \GHPw{p}{q}$ gives another such scalar\footnote{More precisely, $\Theta_a \eta$ is a section in $T^*M \otimes \sL_{p,q}$. The covariant derivative $\Theta_a$ should be understood to be a mapping from sections of $T^{k,l}\sM \otimes \sL_{p,q}$ to sections in 
$T^{k+1,l}\sM \otimes \sL_{p,q}$, where $T^{k,l}\sM$ are the tensor bundles of $k$-times covariant and $l$-times contravariant tensors.}, and it satisfies the usual axioms (see e.g., \cite{kobayashi}) of a covariant derivative. Such a derivative is defined by 
\begin{align}\label{nabladef}
\Theta_a \eta &= \left[\nabla_a - \half(p+q) n^b \nabla_a l_b + \half(p-q) \bar{m}^b \nabla_a m_b\right]\eta \nonumber \\
 & \equiv \left[\nabla_a + l_a(p\epsilon'+q\bar{\epsilon}') + n_a(-p\epsilon-q\bar{\epsilon}) - m_a(p \beta' -q \bar{\beta}) - \bar{m}_a(-p \beta +q \bar{\beta}')\right]\eta. 
\end{align}
The GHP covariant directional derivatives along the NP tetrad legs are denoted traditionally by
\begin{equation}
\thorn = l^a \Theta_a, \quad 
\thorn' = n^a \Theta_a, \quad 
\eth = m^a \Theta_a, \quad 
\eth' = \bar{m}^a \Theta_a . 
\end{equation}
Due to the presence of the NP tetrad legs $l^a, n^a, m^a, \bar m^a$ these operators shift the GHP weights by the amounts 
$\thorn: \GHPw{1}{1}, \thorn': \GHPw{-1}{-1}, \eth: \GHPw{1}{-1}, \eth': \GHPw{-1}{1}$.
In order to write the GHP covariant directional derivatives operators explicitly, one requires the remaining 
4 complex spin coefficients $\epsilon$, $\epsilon'$, $\beta$, $\beta'$. These 
may, in principle, be read off from \eqref{nabladef}, and are given by
\begin{subequations}\label{GHPscalars1}
\begin{align}
\beta &=\half ( m^a n^b \nabla_a l_b - m^a \bar{m}^b \nabla_a m_b)  ,\\
\epsilon &= \half ( l^a n^b \nabla_a l_b - l^a \bar{m}^b \nabla_a m_b) .
\end{align}
\end{subequations}
%
Contrary to the optical scalars, these remaining spin coefficients do not have definite GHP weight i.e., they do not transform as GHP scalars \eqref{GHPtrafo}. Instead, they form part of the definition of the GHP covariant derivative $\Theta_a$ \eqref{nabladef} and consequently of $\thorn$, $\thorn'$, $\eth$, $\eth'$. The spin coefficients $\epsilon$, $\epsilon'$, $\beta$, $\beta'$ may never appear explicitly in any GHP covariant equation nor in any GHP covariant calculation. The full list of
commutators between the operators $\thorn$, $\thorn'$, $\eth$, $\eth'$ as well as their action on the non-zero optical scalars 
$\rho$, $\rho'$, $\tau$, $\tau'$ in Kerr, as used extensively throughout this paper, is recalled in \ref{sec:GHPformulas}. 

In type D spacetimes such as Kerr, the GHP formalism based on a pair $l^a, n^a$ of repeated principal null directions is symmetric under the GHP priming operation (i.e., $n^a \leftrightarrow l^a, m^a \leftrightarrow \bar m^a$), in the sense that if some GHP covariant geometric equation holds, then so does its primed version. Thus e.g., $\sigma = 0$ automatically implies $\sigma'=0$ because $\sigma$ is a GHP scalar.

\subsection{Teukolsky formalism and metric reconstruction}

Teukolsky's approach \cite{Teukolsky:1973ha,Teukolsky:1972my} to perturbation theory on Kerr may be elegantly recast into an operator identity \cite{Wald}, which intertwines the 
linear operator, $\E$, acting on $h_{ab}$ in the linearized EE, with Teukolsky's operator, $\O$, acting on the Weyl scalars $\psi_0$ or $\psi_4$---the linearizations of $\Psi_0$ or $\Psi_4$, see \eqref{eq:Weylcomp}---around Kerr. This ``SEOT' identity is 
\begin{equation}
\label{eq:SEOT}
\S \E = \O \T.
\end{equation}
$\E$, the linearized Einstein operator, is given concretely by
\begin{align}\label{eq:linearE}
\delta G_{ab} = (\mathcal{E} h)_{ab} \equiv \frac{1}{2} \big[ & -\nabla^c\nabla_c h_{ab} - \nabla_a\nabla_bh + 2 \nabla^c\nabla_{(a} h_{b)c} \nonumber\\
 & + g_{ab}(\nabla^c \nabla_c h - \nabla^c\nabla^d h_{cd}) \big].
\end{align}
The intertwining operator $\T$ prepares $\psi_0 \equiv \delta \Psi_0$ 
from $h_{ab}$. Concretely\footnote{This, and the following tensorial forms for $\O, \S$
were given by \cite{Araneda,Aksteiner:2014zyp,Aksteiner:2016pjt}.}, 
\begin{equation}
\delta \Psi_0 = \mathcal{T}h \equiv - \frac{1}{2}Z^{bcda} \nabla_a \nabla_b h_{cd},
\end{equation}
where $Z^{abcd} = Z^{ab}Z^{cd}, Z = l \wedge m$.
$\O$ is Teukolsky's wave operator, expressed in GHP covariant form as
\begin{equation}
\mathcal{O} \eta \equiv \left[ g^{ab}(\Theta_a + 4B_a)(\Theta_b + 4 B_b) - 16 \Psi_2 \right]\eta , 
\end{equation}
for any $\eta \circeq \GHPw{4}{0}$, 
where $B_a = -\rho n_a + \tau \bar{m}_a \circeq \GHPw{0}{0}$. 
Finally, the intertwining operator $\S$ acts on symmetric covariant rank two tensors $t_{ab}$ and is given by 
\begin{equation}
\mathcal{S}t \equiv \frac{1}{4} Z^{bcda} \zeta^{-4} \nabla_a (\zeta^4 \nabla_b t_{cd}),
\end{equation}
where $\zeta=\Psi_2^{-\frac{1}{3}}$ in Kerr.
$\S$ is in fact the operator that prepares the Teukolsky source, ${}_{+2}T=\S^{ab}T_{ab}$, from the stress tensor $T_{ab}$: If 
$h_{ab}$ is a metric perturbation satisfying the linearized EE $(\E h)_{ab} = T_{ab}$, such that $\psi_0 = \mathcal{T}^{ab}h_{ab}$ is the corresponding linearized 
bottom Weyl scalar, then the SEOT identity  gives precisely
\begin{equation}
\label{eq:Teukpsi0}
\O \psi_0 = {}_{+2} T.
\end{equation}
The coordinate form of Teukolsky's equation appearing in \cite{Teukolsky:1973ha,Teukolsky:1972my} is recovered by going to 
BL coordinates and the Kinnersley frame, and substituting the corresponding expressions for the optical scalars and GHP 
covariant derivatives in the GHP formalism. For reader's convenience, we summarize the well-known alternative  forms of 
the operators $\E$, $\S$, $\O$ and $\T$ in terms of the optical scalars and directional derivatives in the GHP formalism in
\ref{app:LinEinGHP}, \ref{app:ST}, \ref{sec:appO}. Applying the GHP priming operation to \eqref{eq:SEOT}, we get analogous equations involving the GHP prime of Teukolsky's operator, $\O'$, the top perturbed Weyl scalar $\psi_4 = \T^{\prime ab} h_{ab}$, and the spin $-2$ Teukolsky source $_{-2} T = \S^{\prime ab}T_{ab}$, i.e.,
\begin{equation}
\label{eq:SEOTprime}
\S' \E = \O' \T',
\end{equation}
using that $\E' = \E$.

Following the interpretation by \cite{Wald}, one may arrive at the metric reconstruction procedure by \cite{Chrzanowski:1975wv,Kegeles:1979an}
via an interesting change of perspective on the SEOT identity \eqref{eq:SEOT}. This change of perspective is brought about by simply applying the 
operation of (formal) adjoint to \eqref{eq:SEOT}. Namely, using that $\E^\dagger = \E$, we have
\begin{equation}
\label{eq:SEOT*}
\E \S^\dagger =  \T^\dagger \O^\dagger, 
\end{equation}
where for completeness, we state the formal definition of the adjoint operation $\dagger$:

\begin{definition}
\label{def:adj}
Let $\A$ be a partial differential operator from sections of a $\mathbb{C}$-vector bundle $\sV$ to a $\mathbb{C}$-vector bundle $\sW$. 
Let $\sW'$ and $\sV'$ be the 
dual bundles of $\mathbb{C}$-linear maps. Then $\A^\dagger$ is the unique partial differential operator from sections in $\sW'$ to sections in $\sV'$ defined by
\begin{equation}
w' (\A v) - (\A^\dagger w')v = \nabla^a f_a,
\end{equation}
where $w'$ is a section of $\sW'$, $v$ a section of $\sV$ and $f_a$ is a $\mathbb{C}$-valued 1-form locally constructed out of the derivatives 
of $v,w'$ and the coefficients of $\A$. Note that the adjoint operation is $\mathbb{C}$-linear, and that it depends on a choice of divergence operator, provided in our case by the metric $g_{ab}$.
\end{definition}

In our application involving $\T, \S, \E, \O$, the vector bundles in question are suitable tensor products of $\sL_{\pm 4,0}, T\sM, T^*\sM$. Consider now a 
GHP scalar $\Phi \circeq \GHPw{-4}{0}$ such that $\O^\dagger \Phi = 0$, i.e., it is a solution to the homogeneous adjoint Teukolsky equation. Applying 
\eqref{eq:SEOT*} to $\Phi$, we find that $(\S^\dagger \Phi)_{ab}$ is in the kernel of $\E$. Furthermore, $\E$ clearly is a real operator, so we have 
\begin{equation}
\E \Re (\S^\dagger \Phi)_{ab} = 0.
\end{equation}
Thus, one has obtained a vacuum solution, $h_{ab} = \Re(\S^\dagger \Phi)_{ab}$, from a ``Hertz potential'', $\Phi$ solving the adjoint Teukolsky equation \cite{Chrzanowski:1975wv,Kegeles:1979an}. The well-known 
explicit GHP-forms of $\O^\dagger, \S^\dagger, \T^\dagger$ are recalled for convenience in \ref{app:ST}, \ref{sec:appO}. Using such formulas, 
we can express the NP components of the reconstructed metric perturbation $h_{ab} = \Re(\S^\dagger \Phi)_{ab}$ in terms 
of the GHP directional derivative operators $\thorn,\eth,$ and the
optical scalars $\rho,\tau,\tau'$ as follows
\begin{subequations}
\begin{align}
\label{irg}
0&= h_{ll} = h_{ln} = h_{lm} = h_{m\bar m},\\ 
      \label{eq:hmbmb1}
  h_{\bar m \bar m } &=  -\half(\thorn - \rho)(\thorn + 3\rho) \Phi, \\
  \label{eq:hmbn1}
  h_{\bar m n} &=  -\frac{1}{4} \Big[ (\thorn - \rho + \bar{\rho})(\eth + 3\tau) + (\eth-\tau+\bar{\tau}')(\thorn +3\rho) \Big] \Phi, \\
  \label{eq:hnn1}
  h_{nn} &= -\half (\eth-\tau)(\eth+3\tau) \Phi + \textrm{c.c.}\quad ,
\end{align}
\end{subequations}
where ``c.c.'' means ``complex conjugate''. The conditions \eqref{irg} state that $h_{ab}$ automatically is 
in traceless ingoing radiation gauge, TIRG. Alternatively, in tensorial notation, we have \cite{Araneda}
\begin{equation}
\label{hreconstr}
h_{ab} = \Re(\S^\dagger \Phi)_{ab} = \frac{1}{2}  \Re \left\{\nabla^c \left[ \zeta^4 \nabla^d (\zeta^{-4}Z_{d(ab)c} \Phi) \right] \right\}.
\end{equation}
Intuitively speaking, 
a gravitational perturbation, i.e. solution to the linearized EE, has two degrees of freedom per point, which is equal to the number of 
real components of the complex GHP scalar $\Phi$. Thus, it would appear not totally implausible that, up to possibly a pure gauge perturbation, and possibly up to a 
finite number of modes representing various global obstructions, any sufficiently regular solution to the linearized {\it homogeneous} EE be 
be representable in reconstructed form \eqref{hreconstr} [or equivalently \eqref{irg}, \eqref{eq:hmbmb1}, \eqref{eq:hmbn1}, \eqref{eq:hnn1}], 
for some $\Phi \circeq \GHPw{-4}{0}$ solving the homogeneous adjoint Teukolsky equation $\O^\dagger \Phi = 0$.

It can be seen 
by a conceptual argument \cite{Prabhu:2018jvy} that a non-trivial ``zero mode'' $\dot g_{ab}$ (see \ref{app:D}) i.e., a linear perturbation in the 
Kerr family parameterized by $(M,a)$, definitely cannot be written in reconstructed form up to a pure gauge perturbation. 
Since a zero mode is in particular regular, including at $\sH^\pm, i^\pm, i^0$, and asymptotically flat 
at $\sI^\pm$, the best that we can hope for, even within this class of solutions, is  
a decomposition of the form $h_{ab} = (\S^\dagger \Phi)_{ab} + (\mathcal{L}_X g)_{ab} + \dot g_{ab}$. 

This statement is, in effect, widely accepted by practitioners using metric reconstruction. Evidence comes from the fact that 
it is true for mode solutions \cite{Ori:2002uv}. Furthermore \cite{Prabhu:2018jvy,Green:2019nam} have shown that solutions of the 
form $h_{ab} = (\S^\dagger \Phi)_{ab} + (\mathcal{L}_X g)_{ab} + \dot g_{ab}$ are in a sense dense in the set of all smooth, asymptotically flat 
solutions with respect to a---rather weak---norm on the space of initial data for the homogeneous linearized EE. However, 
these arguments fall considerably short of a general decomposition theorem for smooth, decaying, asymptotically flat (in the Bondi sense) solutions to the homogeneous EE. In this paper, we shall obtain such a result as corollary \ref{GHZcor} (see Sec. \ref{ch:GHZ}) of our much more general GHZ decomposition theorems for asymptotically flat solutions to the 
{\it inhomogeneous} linearized EE. 

\subsection{Held's version of GHP}

Held's version \cite{Held} of the GHP formalism is extremely convenient when considering asymptotic expansions near $\sI^+$, i.e., in powers of $1/r$ in retarded KN coordinates, in a GHP covariant setup. For this, it turns out to be most practical to work with the GHP scalar  $\rho$ instead of $1/r$. 
$\rho$ is complex and not a coordinate, but the relation
\begin{equation}
\label{eq:Heldrho}
\frac{1}{\rho} = - r + ia \cos \theta
\quad \text{(Kinnersley frame)}
\end{equation}
shows that $1/r = O(\rho)$ near $\sI^+$, so asymptotic expansions in $1/r$ can readily be converted to ones in $\rho$.
$\rho$ is superior, however, for GHP calculus because it has a useful relationship 
with the GHP covariant directional derivative $\thorn$ along $l^a$,
\begin{equation}
\thorn \frac{1}{\rho} = -1. 
\end{equation}
Thus, $\thorn$ is formally like a partial derivative ``$\thorn = \partial_{-1/\rho}$''. This formal relation is the essence of the Held formalism \cite{Held}. Following his conventions, we use the circle notation as in $x^\circ$ to mean that (i) $x^\circ$ is a GHP scalar of some definite weights and (ii) $\thorn x^\circ = 0$. 
The operators $\thorn', \eth, \eth'$ have the disadvantage that, when acting on a GHP scalar $x^\circ$ annihilated by $\thorn$, they do not produce another such quantity.
This is accomplished by Held's operators \cite{Held} (acting on weight $\GHPw{p}{q}$ GHP scalars),
\begin{subequations}\label{eq:Hops}
\begin{align}
\tilde{\thorn}' &=\thorn' - \bar{\tau} \tilde{\eth} - \tau \tilde{\eth}' + \tau \bar{\tau} \left( \frac{q}{\rho} + \frac{p}{\bar{\rho}} \right) + \half \left( \frac{q \bar{\Psi}_2}{\bar{\rho}} + \frac{p \Psi_2}{\rho} \right), \\
\tilde{\eth} &=\frac{1}{\bar{\rho}} \eth + \frac{q\tau}{\rho}, \\
\tilde{\eth}' &= \frac{1}{\rho} \eth' + \frac{p\bar{\tau}}{\bar{\rho}}.
\end{align}
\end{subequations}
To automate computations with these operators, one requires their commutators, expansions of the background GHP scalars
$\rho', \tau, \tau',\Psi_2$ in terms of $\rho, \bar \rho$, and certain expansion coefficients $\tau^\circ, \rho^{\prime \circ}, \Omega^\circ, \Psi^\circ$, 
as well as formulas for the action of the Held operators  \eqref{eq:Hops} on these. These identities are recalled in \ref{sec:Held} for convenience.

\subsection{Mode solutions of the Teukolsky equations} 
\label{sec:modesoln}
The Teukolsky equations $\O\psi_0 = 0$ ($s=2$) and $\O^{\prime} \psi_4 = 0$ ($s=-2$) are separable in BL coordinates and the Kinnersley frame \cite{Teukolsky:1972my}, so we have mode solutions of the form\footnote{In fact, the Teukolsky equation is separable for all spins
$s=0,1/2,1,3/2,2$. The equations corresponding to the adjoints of $\O$ and $\O'$ are also separable observing that
\begin{subequations}\label{OprimeOdag}
\begin{align}
\mathcal{O}^{\prime \dagger} =& \; \zeta^{-2s} \; \mathcal{O} \; 
\zeta^{2s} ,\\
\mathcal{O}^{\prime} =& \;  \zeta^{2s} \; \mathcal{O}^\dagger \; 
\zeta^{-2s}.
\end{align}
\end{subequations}
}
\begin{equation}
\label{modes}
{}_s \Psi_{\omega \ell m}(t,r,\theta,\varphi) = {}_s R_{\omega \ell m}(r) \, {}_s S_{\ell m}(\theta,\varphi;a\omega) e^{-i\omega t}. 
\end{equation}
Below, we will sometimes drop for simplicity the mode labels $\omega,\ell,m$ as e.g., in ${}_s R_{\omega \ell m} \equiv {}_s R$, which are implicitly understood in such cases. These functions obey the ``radial Teukolsky equation'' \cite{Teukolsky:1973ha}, which is not required in this paper.
The spin-weighted spheroidal harmonics
${}_s S_{\ell m}(\theta,\varphi;a\omega)$ have a harmonic dependence $e^{im\varphi}$ on $\varphi$, and
satisfy the ``angular Teukolsky equation'' \cite{Teukolsky:1973ha} 
\begin{equation}
\left[ \frac{1}{\sin \theta} \partial_\theta \big( \sin \theta \partial_\theta \big) + (a\omega \cos \theta)^2 -2sa\omega\cos\theta -\frac{(m+s\cos\theta)^2}{\sin^2 \theta} + {}_s E - s^2 \right] {}_s S = 0.
\end{equation}
For $\omega \in \mathbb{R}$ the possible angular eigenvalues of ${}_s E$ for given $m,\omega$ are determined by demanding that 
\eqref{modes} define a smooth GHP scalar, i.e., a smooth section in the bundle $\sL_{2s,0}$. This condition is 
equivalent to requiring that the limits of ${}_s S(\theta,\varphi;a\omega)$ as $\theta \to 0,\pi$ exist. The possible angular eigenvalues ${}_s E_{\ell m}(a\omega)$
for given $m,\omega$ are labelled by $\ell$. Traditionally, the labelling is chosen such that $m \in \{-\ell, \dots, \ell\}, \ell \in \{|s|, |s|+1, \dots\}$. 

The angular operator is a perturbation, uniformly bounded in the sense of $L^2([0,\pi])$, of the 
corresponding operator with $a\omega = 0$ (Schwarzschild), so standard results in perturbation 
theory (see e.g., \cite[Ch. 7]{Kato}) apply, showing that the angular eigenvalues ${}_s E_{\ell m}(a\omega)$
go to those for spin-weighted spherical harmonics  (see e.g., \cite{Chandrasekhar:1984siy}) when $a\omega \to 0$. 
The labelling of the angular eigenvalues is in fact such that
\begin{equation}
{}_s E_{\ell m}(a\omega) = \ell(\ell+1) + O(a\omega),
\end{equation}
uniformly in $m,\ell$. See e.g., \cite[App. B]{Kavanagh:2016idg} for a more systematic development of the perturbation theory for small $a\omega$.
On the other hand, for large $a\omega \to \pm \infty$ but fixed $m,\ell$, the angular eigenvalues have an asymptotic expansion of the form \cite{Casals:2004zq,Casals:2018cgx}
\begin{subequations}
\begin{align}
&{}_s E_{\ell m}(a\omega ) = -(a\omega)^2 +2a\omega\, {}_s q_{\ell m} +o_{\ell,m}(a\omega), \\
&{}_s q_{\ell m}=
\begin{cases}
\ell & \text{$\ell$ odd, $\ell \ge |m+s|+s$,}\\
\ell +1 & \text{$\ell+1$ even, $\ell \ge |m+s|+s$,}\\
2\ell+1 - (|m+s|+s) & \text{$\ell < |m+s|+s$.}
\end{cases}
\end{align}
\end{subequations}
We may and will chose the spin-weighted spheroidal harmonics to be real (for $\omega \in \mathbb{R}$) 
in the sense that $\overline{{}_s S_{\ell m}(\theta, \varphi; a\omega)} = {}_s S_{\ell m}(\theta, -\varphi; a\omega)$.
For the purposes of this paper, mostly $s=\pm 2$ will be relevant. $s=\pm 1, 0$ are required for a frequency domain analysis of the auxiliary Teukolsky equations appearing 
in Sec. \ref{sec:PTL}. 

The angular eigenvalues and eigenfunctions have analytic continuations to multi-valued functions on the complex $a\omega$-plane, see e.g., 
\cite[Prop. 2.1]{TdC20} for a summary of various classical results in this direction and \cite{Finster:2015xma} for a detailed study of the analytic continuation of 
the spin-weighted spheroidal functions to complex $a\omega$.  Below we require the following lemma.
\begin{lemma}
\label{lem:u}
For all $\omega \in \mathbb{R}$ we have 
\begin{equation}
\left| 
\frac{\dd{}_s E_{\ell m} (a\omega)}{\dd \omega} \right| = O(1+|a\omega|+ |m|),
\end{equation}
uniformly in $\ell$.
\end{lemma}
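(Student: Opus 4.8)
The plan is to recognize the angular Teukolsky eigenvalue problem as a self-adjoint holomorphic eigenvalue family in $\omega$ and apply first-order (Hellmann--Feynman) perturbation theory. Concretely, I would rewrite the angular equation as $A(\omega)\,{}_s S = {}_s E\, {}_s S$ on the Hilbert space $L^2\big((0,\pi),\sin\theta\,\dd\theta\big)$, with
\[
A(\omega) = -\frac{1}{\sin\theta}\partial_\theta(\sin\theta\,\partial_\theta) + \frac{(m+s\cos\theta)^2}{\sin^2\theta} + s^2 + 2sa\omega\cos\theta - a^2\omega^2\cos^2\theta .
\]
The key structural observation is that the only $\omega$-dependent terms, $2sa\omega\cos\theta$ and $-a^2\omega^2\cos^2\theta$, are \emph{bounded} multiplication operators, since $\abs{\cos\theta}\le 1$. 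Hence $A(\omega)=A_0+2sa\omega\cos\theta-a^2\omega^2\cos^2\theta$ is a holomorphic family of type (A) in the sense of Kato, with $\omega$-independent domain and compact resolvent (the spectrum being discrete, with eigenfunctions the smooth spin-weighted spheroidal harmonics).

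First, I would invoke the analytic perturbation theory of \cite[Ch.~7]{Kato} (Rellich's theorem): for a self-adjoint holomorphic family of type (A) with compact resolvent, the eigenvalues and normalized eigenvectors may be chosen real-analytic in $\omega$ on the real axis. In particular each labelled branch ${}_s E_{\ell m}(a\omega)$ is differentiable (with one-sided derivatives at the isolated points where branches may cross), and carries a normalized eigenvector ${}_s S$. Second, I would apply the Hellmann--Feynman formula,
\[
\frac{\dd\, {}_s E}{\dd\omega} = \big\langle {}_s S,\; (\partial_\omega A)\, {}_s S \big\rangle,
\qquad
\partial_\omega A = 2sa\cos\theta - 2a^2\omega\cos^2\theta ,
\]
valid for the analytically chosen normalized eigenvector. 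Third, since $\partial_\omega A$ acts by multiplication by a function bounded in modulus by $2\abs{s}\abs{a}+2a^2\abs{\omega}$, and ${}_s S$ is normalized, Cauchy--Schwarz gives immediately
\[
\abs{\frac{\dd\, {}_s E}{\dd\omega}} \le 2\abs{s}\abs{a} + 2a^2\abs{\omega} = 2\abs{a}\big(\abs{s}+\abs{a\omega}\big) = O\big(1+\abs{a\omega}\big),
\]
uniformly in $\ell$ and $m$, for the finitely many relevant spins $s$. This is in fact strictly stronger than the claimed bound $O(1+\abs{a\omega}+\abs m)$, and I would note that no $\abs m$ term is needed: the only $m$-dependent piece of $A(\omega)$ is the centrifugal term $(m+s\cos\theta)^2/\sin^2\theta$, which is $\omega$-independent and therefore drops out under $\partial_\omega$.

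The main obstacle is the functional-analytic setup rather than the estimate, which is a one-line bound once that setup is in place. Specifically, one must verify self-adjointness of the singular Sturm--Liouville operator with the correct behaviour at the singular endpoints $\theta=0,\pi$ (so that the self-adjoint eigenfunctions are exactly the smooth spin-weighted spheroidal harmonics defining ${}_s E_{\ell m}$), confirm that the domain is genuinely $\omega$-independent, and justify the differentiability of the branches uniformly in $\ell$. A secondary, standard subtlety is the possibility of non-simple eigenvalues or branch crossings; here I would observe that the bound applies verbatim to every eigenvector in a degenerate eigenspace, hence to the one-sided derivatives of the physically labelled branches, so the particular $\ell$-labelling (continuation from $a\omega=0$) does not affect the uniform estimate.
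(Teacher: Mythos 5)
Your proof is correct and takes essentially the same route as the paper: the paper likewise invokes standard (Kato) perturbation theory for the angular operator on $L^2([0,\pi])$, differentiability of the branches on the real $a\omega$-axis (via the analyticity results it cites), and the fact that the first-order perturbation is a bounded multiplication operator, uniformly in $\ell$. The only difference is that the paper passes through the Schr\"odinger form of Finster--Smoller, in which the $m$- and $\omega$-dependence are entangled in the potential (whence its bound $O(1+|a\omega|+|m|)$), whereas your direct Hellmann--Feynman computation on the Teukolsky form, where $\partial_\omega A = 2sa\cos\theta - 2a^2\omega\cos^2\theta$ is manifestly $m$-independent, yields the slightly sharper bound $O(1+|a\omega|)$, which of course implies the lemma as stated.
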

\begin{proof}
The analytic continuations of the angular eigenvalues have no branch cuts or poles in some open neighborhood of the real axis \cite[Prop. 2.1.3]{TdC20}, and so are in particular 
differentiable on the real $a\omega$-axis. The first derivative may be computed using standard perturbation theory. In fact, 
by writing the angular equation as a 1-dimensional Schr\" odinger equation on $L^2([0,\pi]_\theta)$ as e.g., in \cite[Eqs. 2.7,2.8]{Finster:2015xma}, the first order 
perturbation off of the real $a\omega$-axis is seen to be a bounded operator on $L^2([0,\pi]_\theta)$ with operator norm of order $O(1+|a\omega|+ |m|)$, from which the result follows.
\end{proof}
Both the radial and angular functions ${}_s R, {}_s S$ obey so-called Teukolsky-Starobinski (TS) identities \cite{Teukolsky1974,Starobinsky1974}, see e.g., \cite{Chandrasekhar:1984siy,Casals:2020fsb} for further detail\footnote{The angular TS identities for mode functions may be obtained from the covariant form 
\eqref{TSonshell1}.}. The (angular) TS identities that we shall require are ($\omega \in \mathbb{R}$):
\begin{subequations}
\label{S:TS}
\begin{align}
\left(
\prod_{j=0}^{3} {}_{2-j} \mathcal{L}^+(a\omega)
\right) \, {}_{+2} S_{\ell m}(\theta, \varphi; a\omega) =& \, {}_2 B_{\ell m}(a\omega) \, {}_{-2} S_{\ell m}(\theta, \varphi; a\omega),\\
\left(
\prod_{j=0}^{3} {}_{j-2}  \mathcal{L}^-(a\omega)
\right) \, {}_{-2} S_{\ell m}(\theta, \varphi; a\omega) =& \, {}_2 B_{\ell m}(a\omega) \, {}_{+2} S_{\ell m}(\theta, \varphi; a\omega),
\end{align}
\end{subequations}
where 
\begin{equation}
{}_s \mathcal{L}^\pm(a\omega) = -\frac{1}{\sqrt{2}}\bigg[ \partial_\theta \mp (i \csc \theta \partial_\varphi + a\omega \sin \theta) \pm s \cot \theta \bigg],
\end{equation}
and where for $\omega \in \mathbb{R}$, ${}_2B_{\ell m}(a\omega)$ is the non-negative real constant such that 
\begin{equation}
{}_2 B^2 = \frac{1}{16} {}_{-2} \lambda^2({}_{-2} \lambda +2)^2 - \frac{1}{2}(a\omega) {}_{-2} \lambda \Big[(a\omega-m)(5{}_{-2} \lambda +6) -12 \Big] + 9(a\omega)^2(a\omega -m)^2.
\end{equation}
Here 
${}_s \lambda = {}_s E - s(s+1) -2ma\omega+(a\omega)^2$ is the separation constant in the radial Teukolsky equation.
We shall in particular require the value ${}_2 B^2_{\ell m}(a\omega = 0) = (\ell-1)\ell(\ell+1)(\ell+2)/4$ for $a\omega = 0$.
As a consequence of the  discrete group ${\mathbb Z}_2 \times {\mathbb Z}_2$ of isometries of Kerr, the spin-weighted spheroidal harmonics have the symmetries
\begin{subequations}
\label{S:sym}
\begin{align}
{}_s S_{\ell m}(\theta,\varphi; a\omega) =& \ (-1)^{s+m} \, {}_{-s} S_{\ell m}(\theta,-\varphi;-a\omega),\\
{}_s S_{\ell m}(\theta,\varphi; a\omega) =& \ (-1)^{s+\ell} \, {}_{s} S_{\ell m}(\pi-\theta,-\varphi;-a\omega),
\end{align}
\end{subequations}
where the choice of signs ensures consistency with the TS identities.  

\section{The GHZ method}
\label{ch:GHZ}

\subsection{Proof of the GHZ decomposition: boundary conditions at $\sI^+$}
\label{GHZproof}

\subsubsection{Backward integration: Global decay assumptions at $\sI^+$ and zero mode $\dot g_{ab}$}
\label{sec:gdot}

In this section, we will give a proof of the GHZ decomposition in the exterior region of Kerr by imposing ``final'' conditions for the various transport equations along the orbits of $l^a$ at $\sI^+$. In order to be able to do so, we must require that the fields under consideration, namely the metric perturbation $h_{ab}$
and the stress tensor $T_{ab}$ in the linearized EE $(\E h)_{ab}=T_{ab}$, are sufficiently regular at $\sI^+$. 
In order to not further obstruct the already complicated construction by details of functional analytic nature, we will consider only smooth $h_{ab}$ and $T_{ab}$. Furthermore, we since we want to integrate our transport equations backwards from $\sI^+$, must clearly impose some decay conditions at $\sI^+$. 

For the metric perturbation, we will impose the linearized version of the usual Bondi-type asymptotic flatness conditions at $\sI^+$ (see e.g., \cite{Waldbook}). 
As we will explain more fully in  \ref{BondiIRG}, this implies that the NP components of $h_{ab}$ are required to have, near $\sI^+$, asymptotic 
expansions of the general form 
\begin{equation}
\label{asympt_exp}
    X = \sum_{j = j_0}^N X^{j \circ} \rho^j + O(\rho^{N+1}),
\end{equation}
where $N$ is as large as we like and $X$ is some NP component of $h_{ab}$. $X^{j \circ}$ are GHP scalars annihilated by $\thorn$ (so they may be 
considered as functions of the retarded KN coordinates $(u,\varphi_*,\theta) \in \mathbb{R} \times \mathbb{S}^2$ in the Kinnersley frame, see \ref{sec:Held}).
$j_0$ depends on the NP component considered and is the decay order, because in the Kinnersley frame, $\rho = O\left(r^{-1}\right)$ as $r \to \infty$ 
at fixed $(u,\varphi_*,\theta)$.  


\medskip
\noindent
{\bf Standing Decay Assumptions on $h_{ab}$:} The NP components of $h_{ab}$ have asymptotic expansions \eqref{asympt_exp}
with decay orders 
\begin{equation}
\label{eq:hdec}
\begin{split}
    h_{ll}, h_{lm}, h_{ln}, h_{m\bar{m}} &= O\left(\frac{1}{r^2}\right), \\
h_{nn}, h_{nm}, h_{mm} &= O\left(\frac{1}{r}\right),
\end{split}
\end{equation}
as $r \to \infty$ at fixed $(u,\varphi_*,\theta)$ (in the Kinnersley frame).
There exist $\delta M, \delta a$ such that, subtracting a corresponding zero mode (see \ref{app:D})
\begin{equation}
\label{zerosubtract}
    h_{ab} \to h_{ab} - \dot g_{ab},
\end{equation}
the leading expansion coefficients in the asymptotic expansion \eqref{asympt_exp}  all go to zero as $u \to -\infty$.

\medskip
\noindent
{\bf Remarks 1.}
\begin{enumerate}
\item
One can prove that $\delta M, \delta a$ correspond to the perturbed mass and specific angular momentum of $h_{ab}$, 
as calculated e.g., via its Cauchy data on some constant $t$ slice using the standard ADM formulas \cite{Waldbook}, or via Abbott-Deser integrals \cite{Abbott:1981ff} from $T_{ab}$.
\item
Note that we do not, at this stage, assume any particular behavior of the asymptotic expansion coefficients, let alone decay, for $u \to +\infty$. 
However, to complete the proof of the GHZ decomposition by the backward integration scheme, see theorem \ref{thm:2}, we will  require such a condition. 
\end{enumerate}

\medskip
From now, we assume that the zero mode $\dot g_{ab}$ has been subtracted as in \eqref{zerosubtract}. In the end of our construction, we have to remember to 
add $\dot g_{ab}$ back in. We also make corresponding assumptions about the NP components of $T_{ab}$  near $\sI^+$:

\medskip
\noindent
{\bf Standing Decay Assumptions on $T_{ab}$:} The NP components of $T_{ab}$ have asymptotic expansions \eqref{asympt_exp} with decay orders 
\begin{equation}\label{Tdec}
\begin{split}
T_{nn} =&  \ O\left(\frac{1}{r^2}\right) , \\
T_{ln}, T_{m\bar m}, T_{nm} =& \ O\left(\frac{1}{r^3}\right), \\
T_{lm}, T_{mm} =& \ O\left(\frac{1}{r^4}\right), \\
T_{ll} =& \ O\left(\frac{1}{r^5}\right),
\end{split}
\end{equation}
as $r \to \infty$, at fixed $(u,\varphi_*,\theta)$ (in the Kinnersley frame). The coefficients in the asymptotic expansion \eqref{asympt_exp} of the NP components of $T_{ab}$ go to zero as $u \to -\infty$.

\medskip
\noindent
{\bf Remark 2.}
These assumptions on $T_{ab}$ are consistent with, but not implied by, the Standing Decay Assumptions on $h_{ab}$, 
see \ref{BondiIRG} for a detailed analysis of the relations between the asymptotic expansions for the metric, the stress tensor, and the 
observables for gravitational radiation. Our Standing Decay 
Assumptions on $T_{ab}$ are sufficiently weak so as to allow fluxes of matter stress energy, 
i.e., a non-trivial $T_{nn}$ at order $O(1/r^2)$, or matter angular momentum, i.e., a nontrivial $T_{nm}$ at order $O(1/r^3)$.

\subsubsection{Backward integration for gauge vector field $\xi^a$}
\label{sec:gaugexi}

It is fairly easy to see that, without the use of any EE, a metric perturbation $h_{ab}$ may be put in IRG, i.e., $h_{ab} l^b=0$ \cite{Green:2019nam}. Indeed, if $h_{ab}$ is not in this gauge to begin with, we can solve for a gauge vector field $\xi^a$ satisfying 
\begin{equation}
[h_{ab} - (\lie{\xi} g)_{ab}] l^a = 0.
\end{equation}
Using formulas for the NP components of the Lie derivative recalled in \ref{sec:residualg}, this equation is seen to be equivalent to the GHP equations,
\begin{subequations}\label{eq:xidef}
\begin{align}
\label{a}
2 \thorn \xi_l =& \ h_{ll},\\
\label{b}
(\thorn + \bar{\rho}) \xi_m + (\eth + \bar{\tau}') \xi_l =& \ h_{lm},\\
\label{c}
\thorn \xi_n + \thorn' \xi_l + (\tau + \bar{\tau}') \xi_{\bar{m}} + (\bar{\tau} + \tau') \xi_m =& \ h_{ln} . 
\end{align}
\end{subequations}
Since $\thorn = \partial_r$ in retarded KN coordinates and the Kinnersley frame, this is a triangular system of transport equations for the NP components of $\xi_a$:
We first integrate \eqref{a} to obtain $\xi_l$,  which is substituted into \eqref{b}, and thereby gives a transport equation for $\xi_m$. Integrating that equation and substituting both solutions $\xi_l, \xi_m$ into \eqref{c} gives yet another transport equation for $\xi_n$. Having determined the NP tetrad components $(\xi_l, \xi_n, \xi_m, \xi_{\bar{m}})$ of $\xi_a$ we redefine $h_{ab} \to h_{ab} - ({\lie{\xi} g})_{ab}$, thereby putting the perturbation into IRG, $h_{ll}, h_{ln}, h_{lm} = 0$. 

The solution to the transport equations for $(\xi_l, \xi_n, \xi_m)$ is not unique since the system 
\eqref{eq:xidef} has a non-trivial kernel. Our Standing Decay Assumptions on $h_{ab}$ can be used to show that 
we can, and will, pick a solution such that, near $\sI^+$, 
\begin{equation}
\xi_l, \xi_n, \xi_m = O(\rho),
\end{equation}
and it is easy to see that this requirement renders the solution unique. The formulas for the Lie-derivative in GHP form in \ref{sec:residualg}, and the formulas 
for the GHP operators and optical scalars given e.g., in \ref{sec:Held},
show that, after the redefinition $h_{ab} \to h_{ab} - ({\lie{\xi} g})_{ab}$, we still have $h_{nn}, h_{nm}, h_{mm} = O(\rho), h_{m\bar{m}} = O(\rho^2)$, 
and additionally we have now the IRG conditions. In the end, we have to remember to add $({\lie{\xi} g})_{ab}$ back to $h_{ab}$.

\subsubsection{Backward integration for corrector $x_{ab}$}
\label{sec:xabdef}

As described already in the outline, sec. \ref{sec:GHZ decomposition}, the corrector tensor field $x_{ab}$ is next determined in such a way that
\begin{equation}\label{eq:x}
[T_{ab} - (\E x)_{ab}] l^b = 0, 
\end{equation}
with the idea to eliminate any $l$ NP component from $T_{ab}$. Our ansatz for $x_{ab}$ was \eqref{eq:xdef},
%
%
where $x_{n\bar{m}} = \bar{x}_{nm}$, so there are 4 real independent components encoded in $x_{m\bar{m}}, x_{mn}, x_{nn}$ by which we attempt to satisfy the 4 real independent equations \eqref{eq:x}. We first transvect \eqref{eq:x} with $l^a$ and use the $ll$ NP component of the Einstein operator $\E$, \eqref{eqn:Ell_app}, to obtain
\eqref{eq:xmmb}.
Next, we transvect \eqref{eq:x} with $m^a$ and use the $ml$ NP component of the Einstein operator $\E$, \eqref{eqn:Elm_app} and obtain:
\begin{equation}\label{eq:xnm}
\begin{split}
&\frac{\rho}{2(\rho+\bar\rho)} \thorn
\left\{ (\rho+\bar{\rho})^2
\thorn \left[ \frac{1}{\rho(\rho+\bar\rho)} x_{nm} \right] \right\}\\
=&\ T_{lm} - \half\bigg[(\thorn+\rho-\bar{\rho})(\eth+\bar{\tau}'-\tau) + 2\bar{\tau}'(\thorn-2\rho) -
(\eth-\tau-\bar{\tau}')\bar{\rho} +2\rho\tau\bigg]x_{m\bar{m}}.
\end{split}
\end{equation}
Finally, we transvect \eqref{eq:x} with $n^a$ and use the $nl$ NP component of the Einstein operator $\E$, \eqref{eqn:Eln_app} and obtain:
\begin{equation}
\label{eq:xnn}
\begin{split}
 &\half (\rho+\bar{\rho})^2\thorn \left( \frac{1}{\rho+\bar\rho}x_{nn} \right)\\
 =
 & \ T_{ln}- \half\bigg[(\eth'+\tau'-\bar{\tau})(\eth-\tau+\bar{\tau}') +
(\eth'\eth-\tau\tau'-\bar{\tau}\bar{\tau}'+\tau\bar{\tau}) - (\Psi_2+\bar{\Psi}_2)\\
&\pheq +(\thorn'-2\rho')\bar{\rho} + (\thorn-2\bar{\rho})\rho' +\rho(3\thorn'-2\bar{\rho}')
+\bar{\rho}'(3\thorn-2\rho)\\
&\pheq -2\thorn'\thorn + 2\rho\bar{\rho}' +2\eth'(\tau)-\tau\bar{\tau}\bigg]x_{m\bar{m}}
 \\
&\pheq - \half\bigg[(\thorn-2\rho)(\eth'-\bar{\tau}) + (\tau'+\bar{\tau})(\thorn+\bar{\rho})
-2(\eth'-\tau')\rho-2\bar{\tau}\thorn \bigg]x_{nm}\\
&\pheq - \half\bigg[(\thorn-2\bar{\rho})(\eth-\tau) + (\bar{\tau}'+\tau)(\thorn+\rho)
-2(\eth-\bar{\tau}')\bar{\rho}-2\tau\thorn \bigg]x_{n\bar{m}}.\\
\end{split}
\end{equation}
The system of equations \eqref{eq:xmmb}, \eqref{eq:xnm}, \eqref{eq:xnn}, given first in \cite{Green:2019nam}, is now solved integrating backwards from $\sI^+$ in a similar way as 
we did to obtain the NP components of $\xi^a$, noting again that e.g., in the Kinnersley frame and retarded KN coordinates, $\thorn = \partial_r$. We chose trivial final 
conditions at $\sI^+$, in the sense that 
\begin{equation}\label{xdec}
x_{nn}, x_{nm} = O(\rho), \quad  x_{m\bar{m}} = O\left( \rho^2 \right). 
\end{equation}
 The Standing Decay Assumptions that we are imposing on the NP 
components of $T_{ab}$ are sufficient to show that such a solution exists.
Having determined $x_{ab}$, we now adjust 
\begin{equation}
\label{eq:hadjust}
h_{ab} \to h_{ab} - x_{ab}. 
\end{equation}
The adjusted $h_{ab}$ still satisfies the Standing Decay Assumptions, still has $h_{ab} l^b = 0$, and additionally satisfies the linearized EE
\begin{equation}\label{eq:ES}
(\E h)_{ab} = S_{ab} \equiv T_{ab} - (\E x)_{ab} \ . 
\end{equation}
Our readjustment \eqref{eq:hadjust} has achieved that the new source, $S_{ab}$, in the linearized EE has vanishing NP components containing at least one $l^a$ tetrad leg. The transport equations for the corrector $x_{ab}$ have a kernel (i.e., non-trivial solutions even for $T_{ab}=0$) which was determined in \cite{HT2}. If we insist on remaining consistent with the decay \eqref{xdec}, then the remaining freedom corresponding to the kernel is to redefine
\begin{equation}
\label{eq:x12dec}
\begin{split}
      x_{nn} \to & \, x_{nn} + (\rho + \bar \rho)\eta_n^\circ,\\
      x_{nm} \to & \, x_{nm} + \rho(\rho + \bar \rho)\eta_m^\circ,
\end{split}
\end{equation}
for unspecified GHP scalars 
$\eta_n^\circ \circeq \GHPw{-3}{-3}, \eta_m^\circ \circeq \GHPw{-2}{-4}$ annihilated by $\thorn$. With the aim to improve the decay of certain NP components of $S_{ab}$ at $\sI^+$, 
we can, and will, adjust these so that 
\begin{equation}
  \begin{split}
  \label{eq:xreadj}
      \tilde{\thorn}' x_{nn}^{1\circ} = & \, T^{2 \circ}_{nn},\\
      - \frac{3}{2} \tilde{\thorn}' x_{nm}^{2\circ} = & \, T^{3\circ}_{nm} + \half \tilde{\eth} x_{nn}^{1\circ} .
\end{split}  
\end{equation}
Since $\tilde{\thorn}'=\partial_u$ in retarded KN coordinates $(r, u,\theta,\varphi_*)$ and the Kinnersley frame, and since GHP scalars annihilated by $\thorn = \partial_r$ do not depend on $r$ in those coordinates, this is always possible. The smooth solutions $x_{nn}^{1\circ}, x_{nm}^{2\circ}$ are 
unique if we insist that they vanish for $u \to -\infty$, as we do. With this choice, the adjusted $h_{ab}$ \eqref{eq:hadjust} still satisfies the 
Standing Decay Assumptions, $h_{ab} l^b = 0$, and \eqref{eq:ES}. We furthermore have the following lemma characterizing the improved 
properties of $S_{ab}$ relative to $T_{ab}$ for our purposes. 

\begin{lemma}
\label{lem:S}
    The tensor $S_{ab}$ defined by  
    \eqref{eq:ES} satisfies the Standing Decay Assumptions, as well as:
    \begin{itemize}
        \item $\nabla^a S_{ab} = 0$. 
        \item $S_{ab} l^b=0=S_{ab}m^a \bar m^b$.
        \item 
        $S_{nn} = O(\rho^3), S_{nm}, S_{mm} = O(\rho^4)$.
    \end{itemize}
\end{lemma}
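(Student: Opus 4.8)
The plan is to verify the asserted properties of $S_{ab}=T_{ab}-(\E x)_{ab}$ essentially in the order listed, exploiting the construction of the corrector and the geometry of the Kerr background.

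First I would dispose of the two algebraic/identity-based claims. The condition $S_{ab}l^b=0$ is nothing but the defining equation \eqref{eq:x} for $x_{ab}$, so it holds by construction. For $\nabla^a S_{ab}=0$ I would invoke the linearized contracted Bianchi identity: on the Ricci-flat Kerr background one has $\nabla^a(\E v)_{ab}=0$ for \emph{every} symmetric $v_{ab}$, since this is the linearization of $\nabla^a G_{ab}=0$ about a vacuum metric. Applied to $v=h$ (which obeys $(\E h)_{ab}=T_{ab}$) it yields $\nabla^a T_{ab}=0$, and applied to $v=x$ it yields $\nabla^a(\E x)_{ab}=0$; subtracting gives $\nabla^a S_{ab}=0$. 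The vanishing of $S_{m\bar m}$ is then a bonus of conservation: contracting $\nabla^a S_{ab}=0$ with $l^b$ and commuting the derivative past the vanishing quantity $l^b S_{ab}$ gives $S_{ab}\nabla^a l^b=0$. Expanding $\nabla^a l^b$ in the tetrad and dropping every term carrying an $l$-leg (annihilated by $S_{ab}$), the fact that the principal null direction $l^a$ is geodesic ($\kappa=0$) and shear-free ($\sigma=0$) in Kerr collapses this identity to $(\rho+\bar\rho)\,S_{m\bar m}=0$. Since $\rho+\bar\rho=-2r/\Sigma\neq0$ throughout the exterior $r>r_+$, I conclude $S_{m\bar m}=0$.

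The Standing Decay Assumptions for $S_{ab}$ I would obtain by bookkeeping. $T_{ab}$ satisfies them by hypothesis, so it suffices to check that the NP components of $(\E x)_{ab}$ decay at least at the orders \eqref{Tdec}. Since $\E$ is a second-order GHP operator and the corrector has the decay \eqref{xdec} with the tensorial form \eqref{eq:xdef}, I would rewrite the relevant components of $\E$ in the Held variables of \ref{app:LinEinGHP} and \ref{sec:Held}; the only operator that can lower the power of $\rho$ is $\thorn=\partial_{-1/\rho}$, and its effect is controlled by the explicit factors of $\rho$ in \eqref{eq:xdef}, \eqref{xdec}, so no NP component of $(\E x)_{ab}$ decays more slowly than the corresponding entry of \eqref{Tdec}.

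The sharpened orders $S_{nn}=O(\rho^3)$ and $S_{nm},S_{mm}=O(\rho^4)$ form the real content of the lemma, and I expect this to be the main obstacle. Here I would substitute the asymptotic expansions \eqref{asympt_exp} of the components of $x_{ab}$ into the $nn$, $nm$ and $mm$ NP components of the linearized Einstein operator (taken from \ref{app:LinEinGHP}) and extract the leading Held coefficients. The residual kernel freedom \eqref{eq:x12dec} was fixed precisely by the transport equations \eqref{eq:xreadj} in order to force the leading $\rho^2$-coefficient of $(\E x)_{nn}$ to equal $T^{2\circ}_{nn}$, and the leading $\rho^3$-coefficient of $(\E x)_{nm}$ to equal $T^{3\circ}_{nm}+\tfrac12\tilde{\eth} x^{1\circ}_{nn}$; these two cancellations promote $S_{nn}$ and $S_{nm}$ by one power of $\rho$, while $S_{mm}=O(\rho^4)$ follows because $T_{mm}$ already decays at that order and $(\E x)_{mm}$ contributes nothing slower. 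The difficulty is entirely in this last step: it requires a careful asymptotic expansion of $\E$ in the Held formalism and an exact identification of the leading coefficients, so that the transport equations \eqref{eq:xreadj} are seen to produce precisely the advertised cancellations. This is where the background expansions and operator identities of \ref{sec:Held} carry the weight of the argument.
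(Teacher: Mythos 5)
Your proposal is correct and follows essentially the same route as the paper: $S_{ab}l^b=0$ by construction of the corrector, $\nabla^a S_{ab}=0$ from the linearized Bianchi identity, the contraction $0=l^b\nabla^a S_{ab}=-(\rho+\bar\rho)S_{m\bar m}$ (using $\kappa=\sigma=0$ and $\Re\rho\neq0$ on the exterior), Held-expansion bookkeeping for the Standing Decay orders, and the adjustment \eqref{eq:xreadj} of the kernel freedom \eqref{eq:x12dec} to force $S^{2\circ}_{nn}=S^{3\circ}_{nm}=0$. The only cosmetic difference is that the paper extracts these cancellations from the already-expanded IRG Einstein components \eqref{eq:Tnn}, \eqref{eq:Tnm} applied to the improved metric $h_{ab}-x_{ab}$, rather than expanding $(\E x)_{ab}$ from scratch as you suggest, which amounts to the same computation.
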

\begin{proof}
    The first item is a consequence of $\nabla^a T_{ab} = 0 = \nabla^a (\E x)_{ab}$. $S_{ab}l^b=0$ follows by construction of $x_{ab}$, see  \eqref{eq:x}. From these two properties and by computing the divergence of a symmetric rank-2 tensor in GHP form, it can be seen that
    \begin{equation}
        0=l^a \nabla^b S_{ab} = -(\rho+\bar\rho) S_{m\bar m},
    \end{equation}
    from which $0=S_{ab}m^a \bar m^b$ follows immediately since $\Re \rho$ is non-vanishing everywhere. 
    
    For the third item, consider the explicit form of the non-trivial components of $S_{ab}$, given by:
\begin{equation}
\label{eq:S1s}
\begin{split}
S_{mm}
=& \pheq T_{mm}-\bigg[(\thorn-\bar{\rho})(\eth-\tau) - (\eth-\tau-\bar{\tau}')\bar{\rho} -\tau(\thorn+\rho)
+ \bar{\tau}'(\thorn-\rho+\bar{\rho})\bigg]x_{nm} \\
&\pheq+ \bigg[ (\tau+\bar{\tau}')\eth +
(\tau-\bar{\tau}')^2\bigg]x_{m\bar{m}},
\end{split}
\end{equation}
\begin{equation}
\begin{split}
S_{nn}
=& \pheq T_{nn}-
\bigg[(\eth'-\bar{\tau})(\eth-\tau) +\bar{\rho}'(\thorn-\rho+\bar{\rho}) -
(\thorn'-\bar{\rho}')\bar{\rho} + \bar{\Psi}_2\bigg]x_{nn}\\
&\pheq - \bigg[-(\thorn'-3\rho')(\eth'+\tau'-\bar{\tau}) + \tau'\thorn' - \rho'\eth'\bigg]x_{nm}\\
&\pheq - \bigg[-(\thorn'-3\bar{\rho}')(\eth+\bar{\tau}'-\tau) + \bar{\tau}'\thorn' - \bar{\rho}'\eth\bigg]x_{n\bar{m}},
\end{split}
\end{equation}
\begin{equation}
\begin{split}
S_{n\bar{m}}
=& \pheq T_{n\bar{m}}- \half\bigg[(\thorn-\rho+\bar{\rho})(\eth'-\bar{\tau}) - (\eth'-2\tau'+\bar{\tau})\rho +
\tau'(\thorn-\bar{\rho})\bigg]x_{nn}\\
&\pheq -\half\bigg[-\eth'(\eth'-2\tau') -
2\bar{\tau}(\tau'-\bar{\tau})\bigg]x_{nm}\\
&\pheq -\half\bigg[-(\thorn+\bar{\rho})(\thorn'-2\bar{\rho}') + \rho'(\thorn+2\rho-2\bar{\rho}) -
4\rho\bar{\rho}' + 2\Psi_2\\
&\pheq\pheq + (\eth+\bar{\tau}')(\eth'-2\bar{\tau}) - \tau'(\eth+\tau-2\bar{\tau}') -
\tau(\tau'-4\bar{\tau})\bigg]x_{n\bar{m}}\\
&\pheq -\half\bigg[(\thorn'+\rho'-\bar{\rho}')(\eth'-\tau'+\bar{\tau}) + 2\tau'(\thorn'-2\rho')
-(\eth'-\tau'-\bar{\tau})\bar{\rho}'+2\rho'\tau'\bigg]x_{m\bar{m}},
\end{split}
\end{equation}
using formulas for the linearized Einstein operator in GHP form in \ref{app:LinEinGHP}.

To obtain the stated decay properties we
expand the GHP operators and symbols in
terms of Held's operators and symbols, see \ref{app:D}, to 
make explicit the dependence upon $\rho$, and we use the decay properties of $x_{ab}$
given in  \eqref{xdec} and those of $T_{ab}$ given in \eqref{Tdec}. Then it follows at first that $S_{nn} = O(\rho^2), S_{nm} = O(\rho^3), S_{mm} = O(\rho^4)$. 

To get the improved decay rates $S_{nn} = O(\rho^3), S_{nm} = O(\rho^4)$ we consider the corresponding orders of the linearized EEs \eqref{eq:Tnn}, \eqref{eq:Tnm} for the unimproved metric, i.e. before subtracting the corrector. Subtracting the corrector with the properties \eqref{eq:xreadj} and 
\eqref{xdec} implies that we have $S^{2\circ}_{nn} = S^{3\circ}_{nm} = 0$, as we needed to show.
\end{proof}

Now, we look at the $ll$ NP component of the EE $(\E h)_{ll}=0$ for the adjusted metric \eqref{eq:hadjust}. Eq. \eqref{eqn:Ell_app} gives
\begin{equation}
0=\Big[\thorn(\thorn-\rho-\bar{\rho})+2\rho\bar{\rho} \Big]h_{m\bar{m}} \equiv \rho^2 \thorn \left[ \frac{\bar{\rho}}{\rho^3} \thorn \left(\frac{\rho}{\bar{\rho}} h_{m\bar{m}} \right) \right],
\end{equation}
which integrates to \cite{Price,Price2}
\begin{equation}\label{eq:hmmb}
h_{m\bar{m}} = \bar{h}^\circ \frac{\rho}{\bar{\rho}} + h^\circ \frac{\bar{\rho}}{\rho} + (j^\circ + \bar{j}^\circ)(\rho+\bar{\rho}),
\end{equation}
for undetermined GHP scalars $h^\circ \circeq \GHPw{0}{0} ,j^\circ \circeq \GHPw{-1}{-1}$ annihilated by $\thorn$. However, by 
\eqref{xdec}, \eqref{eq:hdec} we know that $h_{m\bar{m}}=O(\rho^2)$ near $\sI^+$. Thus, we learn that $h^\circ = j^\circ = 0$
and thus $h_{m\bar m}= 0$ identically.

To summarize, we have obtained at this stage from the original perturbation satisfying \eqref{eq:Eh=T} by a shift $h_{ab} \to h_{ab} - \dot g_{ab} - (\lie{\xi} g)_{ab}  - x_{ab}$ a new perturbation in TIRG satisfying \eqref{eq:ES}, the Standing Decay Assumptions improved to
\begin{subequations}
\begin{align}
&h_{ll} = h_{ln} = h_{lm} = h_{m\bar{m}} = 0, \\
&h_{nn} = \sum_{j=1}^N \rho^j h_{nn}^{j\circ} + O(\rho^{N+1}),\\ 
&h_{nm} = \sum_{j=1}^N \rho^j h_{nm}^{j\circ} + O(\rho^{N+1}),\\ 
&h_{mm} = \sum_{j=1}^N \rho^j h_{mm}^{j\circ} + O(\rho^{N+1}),
\end{align}
\end{subequations}
and satisfying the EE \eqref{eq:ES} 
whose source $S_{ab}$ has the properties stated in lemma \ref{lem:S}. 
Since we have described a constructive procedure for obtaining $\xi^a, \dot g_{ab}, x_{ab}, S_{ab}$ from the original metric perturbation and source $h_{ab},T_{ab}$, we can from now work with our shifted $h_{ab}$, remembering to add $\dot g_{ab}, x_{ab}$ and $(\lie{\xi} g)_{ab}$ back in at the end. 

\subsubsection{Backward integration for Hertz potential $\Phi$}
\label{Hertzbackward}

We now take the final step and show---making use of every last drop of gauge invariance left over at this stage---that the shifted $h_{ab}$ satisfying the EE \eqref{eq:ES} is of the form $h_{ab} = \Re(\S^\dagger \Phi)_{ab}$, for some potential $\Phi \circeq \GHPw{-4}{0}$ satisfying the $\O^\dagger$-Teukolsky equation with some source to be determined. From the definition of $\S^\dagger$ \eqref{eq:Sdag}, this means that $\Phi$ should simultaneously satisfy the three equations \eqref{eq:hmbmb1}, \eqref{eq:hmbn1}, 
\eqref{eq:hnn1}.
%
%
It is natural to use the first equation \eqref{eq:hmbmb1} in order to define $\Phi$. Eq. \eqref{eq:hmbmb1} can be viewed as a second order ODE along the geodesic tangent to $l^a$. At first, we fix any particular solution to this equation and ask to what extent the second and third equation hold. For this, we define
\begin{equation}
\label{eq:yzdef1}
\begin{split}
y =& \; \text{left side $-$ right side of \eqref{eq:hmbn1},}\\
z =& \; \text{left side $-$ right side of \eqref{eq:hnn1},}
\end{split}
\end{equation}
so obviously we would like that $y=z=0.$ In \cite{Green:2019nam}, this was shown under the hypothesis that 
the metric perturbation and stress tensors vanish in a neighborhood of $\sH^-$. In the present context, this is typically not the case, so we need a more involved argument. We begin with a lemma, which generalizes considerations in \cite[Sec. 4]{Green:2019nam}.

\begin{lemma}
\label{lem:10}
We have
\begin{equation}
\label{eq:yzdef}
\begin{split}
y =& \ \bar{\rho} (\rho + \bar{\rho}) a^\circ + \bar{\rho} (\bar{\rho} - 2\rho) \frac{b^\circ}{\rho^3},\\
z =& \ 2 \Re \left( \bar{\rho}^2 \tilde{\eth} a^\circ + 2 \rho \bar{\rho}^2 \tau^\circ a^\circ + \frac{\bar{\rho}^2}{\rho^3} \tilde{\eth} b^\circ - 4 \frac{\bar{\rho}^2}{\rho^2} \tau^\circ b^\circ \right) + (\rho + \bar{\rho}) c^\circ,
\end{split}
\end{equation}
for certain GHP scalars $a^\circ \circeq \GHPw{-4}{-2}$,
$b^\circ \circeq \GHPw{-1}{1}$, $c^\circ \circeq \GHPw{-3}{-3}$ annihilated by $\thorn$. Note that $c^\circ$ is real.
\end{lemma}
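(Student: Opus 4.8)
The plan is to use the first reconstruction equation \eqref{eq:hmbmb1} to \emph{define} $\Phi$, and then to read off the structure of the residual quantities $y,z$ of \eqref{eq:yzdef1} by exploiting the fact that the shifted $h_{ab}$ solves the linearized EE \eqref{eq:ES} with a source $S_{ab}$ having no $l$-leg, by lemma \ref{lem:S}. Concretely, I would fix any particular solution $\Phi \circeq \GHPw{-4}{0}$ of the second-order $\thorn$-ODE \eqref{eq:hmbmb1} (solvable since $\thorn=\partial_r$) and set $w_{ab} := h_{ab} - \Re(\S^\dagger \Phi)_{ab}$. By construction $w_{ab}$ is in TIRG with $w_{\bar m\bar m}=0$, and its only independent nonzero NP components are $w_{n\bar m}=y$, $w_{nm}=\bar y$, and the real $w_{nn}=z$, directly from the definitions \eqref{eq:yzdef1}.

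The key input is the operator identity $\E\S^\dagger=\T^\dagger\O^\dagger$ of \eqref{eq:SEOT*}: since $\T^\dagger$ produces a tensor whose only nonvanishing NP components are proportional to $\bar m_a \bar m_b$ (and its complex conjugate $m_a m_b$), the tensor $\Re(\T^\dagger\O^\dagger\Phi)_{ab}$ is annihilated upon contraction with $l^b$. Combined with $S_{ab}l^b=0$ from lemma \ref{lem:S}, this yields $(\E w)_{ab}l^b=0$. Its nontrivial GHP content is that the $l\bar m$- and $ln$-components of $\E w$ vanish: the $ll$-component is automatic since it only involves $w_{m\bar m}=0$, and the $lm$-component is the conjugate of the $l\bar m$ one. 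Substituting the ansatz for $w$ into the GHP forms of $\E$ collected in \ref{app:LinEinGHP}, these two conditions become precisely the \emph{homogeneous} versions of the GHZ transport equations \eqref{eq:xnm1}, \eqref{eq:xnn1}, now read as $\thorn$-ODEs for the components of $w$ rather than for a corrector.

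It then remains to integrate. The $l\bar m$-equation is a second-order homogeneous $\thorn$-ODE for $y$; its two-dimensional solution space is spanned by the $\rho$-profiles appearing in \eqref{eq:yzdef}, the two integration ``constants'' being $\thorn$-annihilated GHP scalars $a^\circ \circeq \GHPw{-4}{-2}$ and $b^\circ \circeq \GHPw{-1}{1}$, whose weights are fixed by that of $y \circeq \GHPw{-2}{0}$ together with the explicit powers of $\rho,\bar\rho$. Feeding the resulting $y$ into the $ln$-equation turns it into a first-order $\thorn$-ODE for $z$ sourced by $y$; integrating, and converting the $\eth$-type terms in the source into Held's $\tilde{\eth}$ acting on $a^\circ,b^\circ$ (which generates the $\tau^\circ$ corrections, see \ref{sec:Held}), yields the displayed expression for $z$ up to a further $\thorn$-annihilated homogeneous term $(\rho+\bar\rho)c^\circ$. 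Reality of $w_{nn}=z$ forces $c^\circ \circeq \GHPw{-3}{-3}$ to be real and organizes the $a^\circ,b^\circ$-contributions into the overall $2\Re(\cdots)$ of \eqref{eq:yzdef}.

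I expect the main obstacle to be twofold. First, one must verify cleanly that $\T^\dagger\O^\dagger\Phi$ carries only the $\bar m\bar m$ (and conjugate $mm$) component, so that the $l$-contractions genuinely drop out and the three $l$-components of $\E w$ collapse to homogeneous transport equations. Second, the GHP/Held bookkeeping in the integration step is delicate---in particular producing \emph{exactly} the profiles $\bar\rho(\rho+\bar\rho)$ and $\bar\rho(\bar\rho-2\rho)/\rho^3$ for $y$, and correctly tracking how $\eth$ applied to the $\thorn$-annihilated scalars reorganizes into $\tilde{\eth}$ plus $\tau^\circ$-terms for $z$. These are precisely the places where one leans on $\thorn\rho=\rho^2$, $\thorn\bar\rho=\bar\rho^2$, and the Held identities recalled in \ref{sec:Held}.
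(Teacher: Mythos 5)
Your proposal follows the paper's proof essentially verbatim: define $\Phi$ by \eqref{eq:hmbmb1}, use the adjoint SEOT identity $\E\,\S^\dagger = \T^\dagger \O^\dagger$ together with $S_{ab}l^b=0$ (lemma \ref{lem:S}) to conclude $\big(\E(h-\Re\,\S^\dagger\Phi)\big)_{ab}l^b=0$, and then integrate the resulting homogeneous $l\bar m$-equation (second order in $\thorn$, yielding the two $\rho$-profiles for $y$ with $\thorn$-constants $a^\circ, b^\circ$) followed by the $ln$-equation (first order in $\thorn$, sourced by $y$, yielding $z$ up to the homogeneous piece $(\rho+\bar\rho)c^\circ$ with $c^\circ$ real by reality of $z$), exactly as in the paper. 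One correction to a side remark: $\T^\dagger\eta$ is \emph{not} supported only on $\bar m_a\bar m_b$ and its conjugate---by \eqref{eq:Tdag} it is built from $l_al_b$, $m_am_b$ and $l_{(a}m_{b)}$, so it has nonvanishing $nn$, $\bar m\bar m$ and $n\bar m$ NP components (this is why $S_{nn}$ and $S_{\bar m n}$ in \eqref{eq:S2}, \eqref{eq:S3} are generically nonzero); what is true, and all that your argument actually uses, is that each of these building blocks is annihilated on contraction with $l^b$, i.e.\ $(\T^\dagger\eta)_{ab}l^b=0$.
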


\begin{proof}
We consider the $l\bar{m}$ NP component of the 
EE \eqref{eq:ES}, $(\E h)_{l \bar{m}}=0$; see  \eqref{eqn:Elm_app}. Using 
the SEOT operator identity \eqref{eq:SEOT}
in the form $\E {\rm Re} ( \mathcal{S}^\dagger \Phi ) = {\rm Re}(
\T^\dagger \O^\dagger \Phi)$ and the fact that $\T^\dagger$ has a trivial $lm$ or $l\bar{m}$ NP component, this equation gives $\left(\mathcal{E} (h - \Re \mathcal{S}^\dagger \Phi) \right)_{l\bar{m}}=0$, into which we substitute \eqref{eq:hmbmb1}.
We get
\begin{equation}
[\thorn(\thorn -2\bar{\rho}) +2\rho(\bar{\rho}-\rho)] y = 0.
\end{equation}
This is a second order ODE along the geodesics tangent to $l^a$, which integrates using $\thorn \rho = \rho^2$ to the claimed expression for $y$. 

We next consider the $ln$ NP component of the EE \eqref{eq:ES}, $(\E h)_{ln}=0$; see  \eqref{eqn:Eln_app}. 
Using again the SEOT operator identity \eqref{eq:SEOT}
in the form $\E {\rm Re} ( \mathcal{S}^\dagger \Phi ) = {\rm Re}(
\T^\dagger \O^\dagger \Phi)$ and the fact that $\T^\dagger$ has a trivial $ln$ NP component, this equation implies $\left(\mathcal{E} (h - \Re \mathcal{S}^\dagger \Phi) \right)_{ln}=0$. Into this we substitute \eqref{eq:hmbmb1}, \eqref{eq:hmbn1}, and we use our formula for $y$.
Then, recalling the definition \eqref{eq:yzdef1} of $z$, we get after a lengthy calculation:
\begin{equation}
\begin{split}
\left(\mathcal{E} (h - \Re \mathcal{S}^\dagger \Phi) \right)_{ln} =& \Re \bigg[ \bar{\rho}^2 (\rho^2 - \rho \bar{\rho} - \bar{\rho}^2) \tilde{\eth} a^\circ + \frac{\bar{\rho}^2}{\rho^3} (- \bar{\rho}^2 + \rho \bar{\rho} + 4 \rho^2) \tilde{\eth} b^\circ\\
&-  2 \rho \bar{\rho}^3 (3 \rho + \bar{\rho}) \tau^\circ a^\circ + 4 \frac{\bar{\rho}^2}{\rho^2} (\bar{\rho}^2 - 3 \rho^2) \tau^\circ b^\circ \bigg]\\
&+ \half (\rho + \bar{\rho})^2 \thorn \left(\frac{z}{\rho + \bar{\rho}} \right)= 0.
\end{split}
\end{equation}
for some $c^\circ$.
This is a first order ODE along the geodesics tangent to $l^a$, which integrates using $\thorn \rho = \rho^2$ to
the claimed expression for $z$.
\end{proof}

We have a certain freedom in choosing $\Phi$ because we may add 
a homogeneous solution to equation \eqref{eq:hmbmb1}. Furthermore, we may add a pure gauge piece to $h_{ab}$, provided it preserves the TIRG. The gauge vector fields $\zeta^a$ preserving the TIRG were classified in \cite{Price,Price2} and are recalled in theorem \ref{thm:resgauge} of \ref{sec:residualg}. They have the form \eqref{eq:residual1} in terms of certain  GHP scalars $\zeta^\circ_l \circeq \GHPw{1}{1}$, $\zeta^\circ_n \circeq \GHPw{-1}{-1}$ and $\zeta^\circ_m \circeq \GHPw{2}{0}$ annihilated by $\thorn$, satisfying the conditions given in \eqref{eq:residcond}, i.e.,
\begin{subequations}\label{eq:residcond2}
\begin{equation}\label{eq:residcond:Xil}
\tilde \thorn' \zeta_l^\circ = -\Re \left( \tilde{\eth}' \zeta_m^\circ \right),
\end{equation}
\begin{equation}\label{eq:residcond:Xin}
\zeta_n^\circ = \Re \Big[ \left(\tilde{\eth}' \tilde{\eth} - \rho^{\prime\circ} \right)\zeta^\circ_l - \Omega^\circ \tilde{\eth}' \zeta_m^\circ + 2 \bar{\tau}^\circ \zeta^\circ_m \Big]. 
\end{equation}    
\end{subequations}
Altogether, if we insist on preserving the TIRG for $h_{ab}$ and on preserving \eqref{eq:hmbmb1}, we are free to choose $d^\circ \circeq \{-4,0\}$, $e^\circ \circeq \{-1,3\}$ and $\zeta^\circ_m, \zeta^\circ_n, \zeta^\circ_l$ restricted by \eqref{eq:residcond2}, and to make a change of the form
\begin{equation}
\label{Phireddef}
\Phi \to \Phi + d^\circ + \frac{2}{\rho} \left(\tilde{\eth}'\tilde{\eth}' \zeta_l^\circ - 2 \bar{\tau}^\circ \zeta^\circ_{\bar{m}} \right) - \frac{2}{\rho^2} \tilde{\eth}' \zeta_{\bar{m}}^\circ + \frac{1}{\rho^3} e^\circ, 
\end{equation}
and simultaneously
\begin{equation}
\label{hreddef}
h_{ab} \to h_{ab} - (\lie{\zeta} g)_{ab}, 
\end{equation}
where $\zeta^a$ is determined by 
\eqref{eq:residcond}.

We now seek to satisfy equations \eqref{eq:hmbn1}, \eqref{eq:hnn1}, i.e. to set $y=z=0$ by making the changes proposed through \eqref{Phireddef}, \eqref{hreddef}.
Firstly, for \eqref{eq:hmbn1} to hold, one must set $a^\circ, b^\circ$ of lemma \ref{lem:10} to zero. One shows by a lengthy calculation using the relations of Held's formalism (see \ref{sec:Held}) that this will be achieved if 
$d^\circ, e^\circ$ satisfy
\begin{equation}
\label{eq:dcircdet}
\begin{split}
\tilde{\eth} \, d^\circ =& \; -2 a^\circ + 2 \Omega^\circ \tilde{\eth}' \tilde{\eth}' \tilde{\eth} \; \zeta^\circ_l - 2 \Omega^{\circ 2} \tilde{\eth}' \tilde{\eth}' \zeta^\circ_m - 2 \left(2 \Omega^\circ \bar{\rho}^{\prime \circ} - \bar{\Psi}^\circ \right) \tilde{\eth}' \zeta^\circ_l \\
& - 2 \Omega^\circ \bar{\tau}^\circ \tilde{\eth} \; \zeta^\circ_{\bar{m}} + 2 \Omega^\circ \bar{\tau}^\circ \tilde{\eth}' \zeta^\circ_m + 2 \Omega^\circ \Psi^\circ \zeta^\circ_{\bar{m}} ,
\end{split}
\end{equation}
and
\begin{equation}
\label{eq:ecircdet}
\tilde{\eth} \, e^\circ = -2 b^\circ - 2 \tilde{\thorn}' \zeta^\circ_{\bar{m}},
\end{equation}
for given $\zeta^\circ_m, \zeta_n^\circ, \zeta_l^\circ$. Secondly, to satisfy equation \eqref{eq:hnn1}, we must further set $c^\circ$ of lemma \ref{lem:10} to zero. One sees after a lengthy calculation using the relations of Held's formalism (see \ref{sec:Held}) that this will be the case if 
%
\begin{equation}\label{eq:c2}
c^\circ = - \Re \bigg\{ \tilde{\eth}' \tilde{\eth}' \tilde{\eth} \, \tilde{\eth} \, \zeta^\circ_l 
- \bigg[  \Omega^\circ \left(\tilde{\eth}' \tilde{\eth} + \tilde{\eth} \, \tilde{\eth}' \right) + 2 \tau^\circ \tilde{\eth}'  - 2 \bar{\tau}^\circ \tilde{\eth} + \half \Psi^\circ - \frac{7}{2} \bar{\Psi}^\circ \bigg] \tilde{\eth}'
\zeta^\circ_m \bigg\}.
\end{equation}
Finally, $\zeta^\circ_l, \zeta^\circ_n, \zeta^\circ_m$ must be related through \eqref{eq:residcond2}.

We postpone the fairly non-trivial proof that there exists a smooth solution $(d^\circ, e^\circ, \zeta^\circ_l, \zeta^\circ_n, \zeta^\circ_m)$ to equations \eqref{eq:residcond2}, \eqref{eq:c2}, \eqref{eq:ecircdet}, \eqref{eq:dcircdet} (under a decay assumption on the Bondi news tensor, see theorem \ref{thm:2})
to \ref{sec:zetaproof}, and perform the changes proposed through \eqref{hreddef} and \eqref{Phireddef}. Then \eqref{eq:hmbmb1}, \eqref{eq:hnn1}
and \eqref{eq:hmbn1} hold. 

If we combine the gauge vector field $\zeta^a$ defined by 
\eqref{eq:residual1} and the gauge vector field $\xi^a$ defined by
\eqref{eq:xidef}
into 
\begin{equation}
    X^a = \zeta^a + \xi^a,
\end{equation}
if we take into account the zero mode $\dot g_{ab}$, and the corrector $x_{ab}$ defined in Sec. \ref{sec:xabdef}, then  we can summarize our analysis\footnote{Including the steps relegated to \ref{sec:zetaproof}.} as follows. 

\begin{theorem}
\label{thm:2}
Suppose the smooth perturbation $h_{ab}$ satisfies the inhomogeneous linearized EE $(\E h)_{ab} = T_{ab}$ on the exterior region of Kerr, and that the NP components of $h_{ab}, T_{ab}$ satisfy the Standing Decay Assumptions (see Sec. \ref{sec:gdot}) near $\sI^+$.
Assume, additionally, that the Bondi news tensor $N_{AB}$
(see \ref{BondiIRG}) satisfies, for some $\epsilon>0$, 
\begin{equation}
N_{AB} = O(|u|^{-\frac{3}{2} - \epsilon}) \quad \text{as $|u| \to \infty$,}
\end{equation}
and that the same holds for all of its derivatives along $\sI^+$.
Then there exists a gauge vector field $X^a$, a zero mode $\dot g_{ab}$, a GHP scalar $\Phi \circeq \GHPw{-4}{0}$, and a corrector $x_{ab}$ satisfying the transport equations \eqref{eq:xmmb}, \eqref{eq:xnm}, \eqref{eq:xnn}, such that
\begin{equation}
\label{eq:decomp2}
h_{ab} = x_{ab} + \dot g_{ab} + (\lie{X} g)_{ab} + \Re(\S^\dagger \Phi)_{ab},    
\end{equation}
and such that the nonzero NP components 
(in the Kinnersley frame) have the following decay at $\sI^+$ ($r \to \infty$ at fixed $(u,\theta,\varphi_*)$ in retarded KN coordinates)
\begin{subequations}
    \begin{align}
        \Phi =& \ O(r^3),\\
        x_{nn}, x_{nm} =& \ O(r^{-1}),\\
        x_{m\bar m} =& \ O(r^{-2}),\\
X_l,X_n =& \ O(1),\\
X_m =& \ O(r),\\
\dot g_{nn} =& \ O(r^{-1}),\\
\dot g_{nm} =& \ O(r^{-2}).
    \end{align}
\end{subequations}
$\Phi, x_{ab}, X^a$ are uniquely determined by these conditions if we insist that the leading terms in their asymptotic expansions as in \eqref{asympt_exp} all go to zero as $u \to -\infty$.
\end{theorem}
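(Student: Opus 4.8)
The plan is to assemble, in order, the constructive steps carried out in Secs.~\ref{sec:gdot}--\ref{Hertzbackward}: each step strips off one summand of \eqref{eq:decomp2} and leaves behind a perturbation that is simpler and obeys sharper decay, until nothing but a reconstructed piece $\Re(\S^\dagger\Phi)_{ab}$ remains. First I would invoke the Standing Decay Assumptions to extract $\delta M,\delta a$ and subtract the zero mode $\dot g_{ab}$ as in \eqref{zerosubtract}, so that the leading asymptotic coefficients of the shifted $h_{ab}$ vanish as $u\to-\infty$. Next I would solve the triangular transport system \eqref{eq:xidef} for the gauge field $\xi^a$ by integrating backwards from $\sI^+$ with the normalization $\xi_l,\xi_n,\xi_m=O(\rho)$, which renders $\xi^a$ unique and places the perturbation in IRG while preserving the Standing Decay Assumptions.

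The third step is the corrector. I would solve the GHZ transport equations \eqref{eq:xmmb}, \eqref{eq:xnm}, \eqref{eq:xnn} by backward integration from $\sI^+$ subject to the trivial final conditions \eqref{xdec}, which is possible precisely because of the Standing Decay Assumptions on $T_{ab}$, and then exhaust the residual kernel \eqref{eq:x12dec} by imposing \eqref{eq:xreadj}. After the shift $h_{ab}\to h_{ab}-x_{ab}$, lemma~\ref{lem:S} guarantees that the new source $S_{ab}$ is divergence-free, has vanishing $l$ and $m\bar m$ NP components, and decays as $S_{nn}=O(\rho^3)$, $S_{nm},S_{mm}=O(\rho^4)$. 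Feeding $S_{ab}l^b=0$ into the $ll$ component of the EE forces $h_{m\bar m}=0$ identically, so the shifted perturbation is now in TIRG with $h_{nn},h_{nm},h_{mm}=O(\rho)$.

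The final and hardest step is to realize the TIRG remainder as $\Re(\S^\dagger\Phi)_{ab}$. I would define $\Phi$ by integrating the $\bar m\bar m$ equation \eqref{eq:hmbmb1} along $l^a$, and measure the failure of the remaining two reconstruction identities by $y,z$ as in \eqref{eq:yzdef1}; lemma~\ref{lem:10} then pins $y,z$ down in terms of three $\thorn$-constant scalars $a^\circ,b^\circ,c^\circ$. To annihilate these I would use the two freedoms that remain: adding a homogeneous solution to \eqref{eq:hmbmb1} (the shift \eqref{Phireddef}, parameterized by $d^\circ,e^\circ$) and a residual TIRG-preserving gauge $\zeta^a$ (classified by \eqref{eq:residcond2}), reducing the requirements $a^\circ=b^\circ=0$ and $c^\circ=0$ to the coupled system \eqref{eq:dcircdet}, \eqref{eq:ecircdet}, \eqref{eq:c2}, \eqref{eq:residcond2} for $(d^\circ,e^\circ,\zeta^\circ_l,\zeta^\circ_n,\zeta^\circ_m)$. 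I expect the genuine obstacle to be the solvability of this system with the correct decay: the relations \eqref{eq:c2}, \eqref{eq:dcircdet}, \eqref{eq:ecircdet} are elliptic-type equations on the sphere (requiring one to invert the angular operators $\tilde{\eth},\tilde{\eth}'$, hence to respect their integrability conditions), coupled through the $\tilde{\thorn}'=\partial_u$ evolution in \eqref{eq:residcond:Xil}, and one must produce a solution decaying suitably as $u\to-\infty$ without spoiling the earlier backward-integration normalizations. This is exactly where the Bondi news hypothesis $N_{AB}=O(|u|^{-\frac{3}{2}-\epsilon})$ enters; it is the content relegated to \ref{sec:zetaproof}, which I would treat by inverting the angular operators mode-by-mode and controlling the $u$-integrals using the news decay.

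Finally I would set $X^a=\xi^a+\zeta^a$, reinstate $\dot g_{ab}$ and $x_{ab}$, and collect the pieces into \eqref{eq:decomp2}. The stated decay rates for $\Phi,x_{ab},X_l,X_n,X_m,\dot g_{nn},\dot g_{nm}$ then follow by reading off the leading powers of $\rho$ generated at each step in the Kinnersley frame, and uniqueness follows because at every stage the backward-integration constant is fixed by demanding that the leading asymptotic coefficient vanish as $u\to-\infty$.
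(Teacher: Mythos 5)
Your outline reproduces the paper's own proof almost step for step: zero-mode subtraction (Sec.~\ref{sec:gdot}), backward integration for $\xi^a$ (Sec.~\ref{sec:gaugexi}), the corrector with the kernel adjustment \eqref{eq:xreadj} and lemma~\ref{lem:S} (Sec.~\ref{sec:xabdef}), the vanishing of $h_{m\bar m}$ from the $ll$ equation, the definition of $\Phi$ via \eqref{eq:hmbmb1}, lemma~\ref{lem:10}, and the reduction of $a^\circ=b^\circ=c^\circ=0$ to the coupled system \eqref{eq:residcond2}, \eqref{eq:c2}, \eqref{eq:ecircdet}, \eqref{eq:dcircdet}, with $X^a=\xi^a+\zeta^a$ at the end. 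Up to that point the proposal is correct and is the same argument.

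However, there is a genuine gap in your treatment of the decisive step, the solvability of the residual-gauge system in \ref{sec:zetaproof}. You propose to handle it by "inverting the angular operators $\tilde{\eth},\tilde{\eth}'$ mode-by-mode" as an elliptic problem on spheres coupled through $\tilde{\thorn}'=\partial_u$. As stated this would stall: the system is not of elliptic (or any canonical) type in Kerr, and it does not decouple sphere by sphere, which is precisely why the paper flags that existence of a solution is "far from obvious". The paper's resolution rests on two ingredients absent from your sketch. First, the hidden structure of lemma~\ref{lem:9}: all five unknowns are generated from a single master scalar $f^\circ \circeq \GHPw{4}{0}$ solving the fourth-order equation \eqref{eq:f}, $\tilde{\eth}'\tilde{\eth}'\tilde{\eth}'\tilde{\eth}' f^\circ - 3\Psi^\circ \tilde{\thorn}' \bar f^\circ = \tilde{\thorn}' \bar h^{1\circ}_{mm}$, with the consistency of the original system verified through the auxiliary quantities $k^\circ, p^\circ, x^\circ, y^\circ$ and the improved source decay of lemma~\ref{lem:S}; this includes an $\ell=1$ obstruction in the adjustment $d^\circ \to d^\circ + \delta^\circ$, resolved by orthogonality to the $s=-1$, $\ell=1$ harmonics. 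Second, equation \eqref{eq:f} couples $f^\circ$ to its complex conjugate, so in the frequency domain it mixes the mode $(\omega,m)$ with $(-\omega,-m)$ and is \emph{not} a scalar inversion of an angular operator; proposition~\ref{prop:13} solves it by converting $\tilde{\eth}'{}^4$ into multiplication by the angular TS constant ${}_2B_{\ell m}(a\omega)$ via the Teukolsky--Starobinsky identities \eqref{S:TS}, leading to \eqref{eq:fN2} with the combination ${}_2C_{\ell m}(\omega)={}_2B_{\ell m}(a\omega)^2+9M^2\omega^2$, whose uniform lower bound ${}_2C_{\ell m}(\omega)\ge c>0$ for $\ell\ge 2$ is what permits division. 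The news hypothesis $N_{AB}=O(|u|^{-3/2-\epsilon})$ enters exactly there, guaranteeing that $(1+|u|)N^\circ$ is $L^2$ so the Fourier-space construction (together with lemma~\ref{lem:u}) yields a smooth $f^\circ$ with the required decay as $u\to-\infty$. Without the $f^\circ$-reduction and the TS-constant bound, your mode-by-mode plan has no mechanism to close the conjugate coupling or to control the inversion uniformly in $(\ell,m,\omega)$.
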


\noindent
{\bf Remark 2.} 
From the perspective of the Cauchy problem (see e.g., \cite{Ringstrom}) the decay of the Bondi news tensor as $u \to -\infty$, can be achieved comfortably if we require the Cauchy data to coincide with those of a zero mode outside a compact set; see also footnote \ref{footnote:1}. By contrast, the decay assumption as $u \to +\infty$ on the Bondi news tensor is a much more non-trivial statement. It is consistent with, though significantly weaker than, ``Price's law'' \cite{RPrice}, as proven for the Teukolsky equations of various 
spins in sub-extremal Kerr by \cite{Angelopoulos,Hintz,Ma,Shlapentokh-Rothman:2023bwo}, and for the metric perturbation itself by 
\cite{Hafner} (for small $a/M$ and somewhat non-optimal decay rates). For prior works in Kerr, see also e.g., \cite{DHR}. [N.B.: Price's law would correspond to 
\begin{equation}
N_{AB} = O(u^{-5 + \epsilon}) \quad \text{as $u \to \infty$,}
\end{equation}
for an arbitrarily small $\epsilon>0$.]
The proof of the GHZ decomposition by backward integration, including the steps relegated to \ref{sec:zetaproof}, shows that all quantities in the decomposition have full asymptotic expansions of the form \eqref{asympt_exp}. However, some of these quantities may fail to decay as $|u| \to \infty$, as would be the case e.g., for certain terms in the asymptotic expansion of $\Phi$ or certain NP components of $X^a$. 

\subsection{Proof of the GHZ decomposition: boundary conditions at $\sH^-$}
\label{GHZproofH-}

\subsubsection{Forward integration: Conditions at $\sH^-$ and zero mode}
\label{sec:forward}

We will now construct a GHZ decomposition \eqref{eq:decompi} corresponding to a boundary condition for the transport equations at $\sH^-$. 
For this, we will need to require certain conditions on $h_{ab}$ (and by the EE, on $T_{ab}$) at $\sH^-$ analogous to \eqref{eq:hdec}. These 
conditions are physically more restrictive than those \eqref{eq:hdec} imposed at $\sI^+$, in the sense that they forbid ingoing fluxes of 
gravitational radiation and matter stress-energy-angular momentum at $\sH^-$. 
Our conditions on $h_{ab}$ are precisely as follows:

\medskip
\noindent
{\bf Standing Assumptions on $h_{ab}$
at $\sH^-$:}
We assume that $h_{ab}$ is twice continuously extendible across $\sH^-$ and that $h_{ab}$ satisfies the following conditions in a regular NP frame at $\sH^-$, possibly up to the addition of a zero mode or a pure gauge perturbation:
\begin{equation} 
\label{eq:hdec1}
\begin{split}
&h_{nn}, h_{nm},h_{m\bar m} = O((r-r_+)^2), \\
&h_{mm}, h_{lm}, h_{ll}, h_{ln} = O(r-r_+) \quad \text{at $\sH^-$.}
\end{split}
\end{equation}

\medskip
\noindent
{\bf Remark 3.}
The conditions $h_{nn} = O((r-r_+)^2), h_{nl}, h_{nm} = O(r-r_+)$, alone can be seen as mere gauge conditions.
Additionally, the conditions $h_{m\bar m} = {\rm const.} + O(r-r_+)$ and $\th' h_{m\bar m} = O(r-r_+)$ could be imposed at any given single cross section of $\sH^-$ such as e.g., the bifurcation surface $\mathscr B$, by a further change of gauge\footnote{As given, the argument by \cite{HWcanonicalenergy} only applies to sub-extremal black holes.} \cite{HWcanonicalenergy}. 
All other conditions are physical restrictions, related to the matter stress energy tensor, gravitational radiation, and geometric properties of $\sH^-$ under perturbations, as the following proposition
\ref{prop:standass} clarifies.

\begin{proposition}
\label{prop:standass}
Assume that the Standing Assumptions
at $\sH^-$ and the linearized EE $(\E h)_{ab}=T_{ab}$
hold. Then
\begin{equation}
\label{eq:Standassequiv}
T_{nn} = T_{nm} = T_{nl} = \psi_4 = 0 \quad \text{at $\sH^-$,}
\end{equation}
i.e., one is imposing the absence of outgoing fluxes of 
gravitational radiation and matter stress-energy-angular momentum at $\sH^-$. Furthermore,
\begin{equation}
\delta A = \delta \kappa = \delta \theta = 0 \quad \text{at $\sH^-$,}
\end{equation}
where $\delta A,\delta \theta$ and $\delta \kappa$ are the perturbed horizon area, perturbed expansion of $n^a$ along $\sH^-$, and 
perturbed surface gravity of $\chi^a = t^a + \Omega_+ \phi^a$, respectively.   

Conversely, assume that $h_{ab}$ is twice continuously differentiable across $\sH^-$ and that the 
NP components of $h_{ab}, \nabla_c h_{ab}$ in a regular frame Lie-derived by $\chi^a$ go to zero along 
affine parameters of $\sH^-$ reaching $i^-$. Then \eqref{eq:Standassequiv} imply the 
Standing Assumptions on $h_{ab}$ at $\sH^-$ in some gauge.
\end{proposition}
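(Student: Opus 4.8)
The analysis takes place throughout in retarded Kerr--Newman coordinates and the Kinnersley frame \eqref{eq:Kintet up}, in which $\sH^-$ is the hypersurface $r=r_+$. The geometric fact I would exploit first is that $\sH^-$ is a Killing horizon whose generator is the ingoing principal null direction: from \eqref{eq:Kintet up} one has $n^a=\tfrac{r_+^2+a^2}{\Sigma}\,\chi^a$ on $r=r_+$, where $\chi^a=t^a+\Omega_+\phi^a$ and $\Omega_+=a/(r_+^2+a^2)$, while $l^a=(\partial/\partial r)^a$ is transverse. Hence $\thorn'=n^a\Theta_a$, $\eth=m^a\Theta_a$ and $\eth'=\bar m^a\Theta_a$ are tangent to $\sH^-$, and $\thorn=\partial_r$ is the only transverse derivative. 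On the background $\Delta$ vanishes to first order at $\sH^-$, $\sigma'=0$ identically and $\rho'=O(\Delta)$ (the shear and expansion of the generator vanish there), whereas $\rho,\tau,\tau',\Psi_2$ stay finite. The key remark is that the Standing Assumptions \eqref{eq:hdec1} force \emph{every} NP component of $h_{ab}$ to vanish on $\sH^-$; consequently every tangential derivative ($\thorn',\eth,\eth'$) of any component, and every transverse derivative ($\thorn$) of an $O(\Delta^2)$ component, vanishes on $\sH^-$ as well.

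For the forward implication I would substitute these orders into the GHP expressions of \ref{app:LinEinGHP} for $(\E h)_{nn},(\E h)_{nm},(\E h)_{nl}$ and of \ref{app:ST} for $\psi_4=\T'h$, and power-count at $r=r_+$. Three conditions are then immediate. Being the GHP prime of $\psi_0=\T h$, the scalar $\psi_4$ is built from $\thorn',\eth'$ alone acting on $h_{nn},h_{n\bar m},h_{\bar m\bar m}$; $(\E h)_{nn}$ is the prime of the $ll$ focusing equation, which contains no $\thorn'$, so after priming only the tangential $\thorn',\eth,\eth'$ occur; and $(\E h)_{nm}$ is the prime of the $l\bar m$ momentum constraint, likewise free of transverse derivatives. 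Since the components on which they act all vanish on $\sH^-$, we obtain $\psi_4=T_{nn}=T_{nm}=0$ there. The horizon quantities go the same way once written through $\delta\Gamma\sim\nabla h$: $\delta A\propto\oint h_{m\bar m}$ vanishes because $h_{m\bar m}|_{\sH^-}=0$; the perturbed expansion $\delta\theta=\delta\,\Re\rho'$ is a purely tangential combination of $h_{m\bar m},h_{nm},h_{n\bar m}$ and vanishes; and the perturbed surface gravity, extracted as $\delta\kappa\propto n^b n^c\,l_a\,\delta\Gamma^a_{bc}=\tfrac12\!\left(2\thorn' h_{nl}-\thorn h_{nn}\right)\big|_{\sH^-}$, vanishes precisely because $h_{nl}=O(\Delta)$ annihilates the tangential term and $h_{nn}=O(\Delta^2)$ annihilates the transverse one.

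The genuinely delicate case is the cross component $T_{nl}=(\E h)_{nl}$, which is invariant under GHP priming and therefore does contain transverse derivatives $\thorn$. Here the orders in \eqref{eq:hdec1} are calibrated so that every such derivative lands on an $O(\Delta^2)$ component (as in the $\thorn h_{nn}$ and $\thorn h_{m\bar m}$ pieces already visible in the corrector equation \eqref{eq:xnn}), while the potentially dangerous transverse derivatives of the merely $O(\Delta)$ components $h_{ll},h_{lm},h_{ln}$ must be shown to drop out of the combination $\delta R_{nl}-\tfrac12 g^{ab}\delta R_{ab}$. Verifying this cancellation by direct substitution into the GHP formula is, I expect, the main computational obstacle of the forward direction, and the place where the precise powers of $\Delta$ are actually needed.

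For the converse I would regard $\sH^-$ as a characteristic surface whose free data for the linearized EE are the shear $\delta\sigma'$ (equivalently $\psi_4$) and the matter fluxes $T_{nn},T_{nm},T_{nl}$, together with four functions of pure gauge. First I fix the gauge: by Remark 3 one may choose a gauge vector field, obtained by integrating transport equations along the generators $n^a$ with trivial data at $i^-$ (legitimate by the assumed decay of $h_{ab}$ and $\nabla_c h_{ab}$ there), to impose $h_{nn}=O(\Delta^2)$, $h_{nl},h_{nm}=O(\Delta)$, and---invoking \cite{HWcanonicalenergy}---$h_{m\bar m}=\mathrm{const}+O(\Delta)$ with $\thorn' h_{m\bar m}=O(\Delta)$ at one cross section such as $\mathscr B$. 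With the gauge fixed, each of the six real conditions \eqref{eq:Standassequiv}, restricted to $\sH^-$ and linearized, becomes a homogeneous first-order transport equation along the generators for one combination of NP components: $\psi_4=0$ drives the linearized primed shear equation $\thorn'\delta\sigma'=(\mathrm{bg})\,\delta\sigma'$, whose only solution vanishing at $i^-$ is $\delta\sigma'=0$, giving $h_{mm}=O(\Delta)$; $T_{nn}=0$ together with $\delta\sigma'=0$ drives the linearized Raychaudhuri equation $\thorn'\delta\theta=(\mathrm{bg})\,\delta\theta$, giving $\delta\theta=0$ and hence $h_{m\bar m}=O(\Delta^2)$; $T_{nm}=0$ controls the perturbed H\'aji\v{c}ek/twist datum and upgrades $h_{nm}$ to $O(\Delta^2)$; and $T_{nl}=0$ fixes the last transverse datum, consistently with $\delta\kappa=0$, leaving $h_{ll},h_{lm}=O(\Delta)$. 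The bookkeeping is consistent because four gauge functions and six physical conditions together exhaust the ten components of $h_{ab}$. The main difficulty here is to verify that each restricted equation is genuinely homogeneous and first-order along the generators---that every inhomogeneous term is either one of the data set to zero or has been shown to vanish at an earlier step---and that the gauge vector field produced by transport from $i^-$ is compatible with $h_{ab}$ being twice continuously extendible across $\sH^-$.
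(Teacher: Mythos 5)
Your forward direction is essentially the paper's proof: insert $\rho'=0$ on $\sH^-$ into the GHP component formulas of \ref{app:LinEinGHP} and \ref{app:ST}, use that $\thorn',\eth,\eth'$ are tangential to $\sH^-$, and power-count against \eqref{eq:hdec1}. Two corrections of detail. First, the cancellation you fear in $T_{nl}$ never has to be arranged: once $\rho'=0$ is inserted, the operators multiplying $h_{ll},h_{ln},h_{lm},h_{l\bar m}$ in $(\E h)_{ln}$ contain no transverse derivative at all (the coefficient of $h_{ll}$ is itself proportional to $\rho'$ and drops), so the only $\thorn$'s act---at most once each---on $h_{nn},h_{nm},h_{m\bar m}=O((r-r_+)^2)$, and the paper concludes by exactly your counting; your parallel assertion that $(\E h)_{nm}$ is ``free of transverse derivatives'' is false as stated (terms $\thorn h_{nn}$ and $\thorn\thorn' h_{n\bar m}$ do occur) though harmless for the same reason. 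Second, since the Standing Assumptions hold only ``possibly up to the addition of a zero mode or a pure gauge perturbation,'' you must first dispose of that ambiguity: such pieces are annihilated by $\E$ and by $\T'$, so they affect neither $T_{ab}n^b$ nor $\psi_4$; the paper makes this explicit, and for $\delta\kappa$ it cites \cite{Hollands:2024vbe} (with the conditions $h_{ab}\chi^b=O(r-r_+)$, $h_{ab}\chi^a\chi^b=O((r-r_+)^2)$) rather than recomputing perturbed Christoffel symbols, though your direct formula is consistent with this.

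The converse is where you have a genuine gap. The four conditions \eqref{eq:Standassequiv} cannot, by themselves, force $h_{ll},h_{ln},h_{lm}=O(r-r_+)$: on $\sH^-$ these components enter $T_{nn},T_{nm},T_{nl},\psi_4$ only through tangential derivatives or with coefficients vanishing with $\rho'$, so setting those four quantities to zero places no pointwise restriction on them, and your closing count (``four gauge functions and six physical conditions together exhaust the ten components'') is an accounting heuristic, not a mechanism. The paper's missing ingredient is an \emph{additional} gauge condition: within the Gaussian null construction of \cite{HWcanonicalenergy} one can further impose $h_{ab}r^b=0$, where $r^a$ is the auxiliary null vector with $r^an_a=1$ and with $n^a,r^a$ spanning the orthogonal complement of the affine foliation (Frobenius); since $l^a$ is a combination of $r^a,n^a,m^a,\bar m^a$, this converts the already-controlled components into $h_{ll},h_{ln},h_{lm}=O(r-r_+)$. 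Relatedly, your transverse bookkeeping is misassigned: the linearized Raychaudhuri step can only give $h_{m\bar m}$ and its tangential derivatives vanishing on $\sH^-$, never $h_{m\bar m}=O((r-r_+)^2)$, which requires killing the transverse datum $\thorn h_{m\bar m}$; in the paper this is precisely what $T_{nl}=0$ delivers (it reduces on $\sH^-$ to $\thorn'\thorn h_{m\bar m}=0$, integrated from $i^-$ using the assumed decay), while $T_{nm}=0$ reduces, via a GHP commutator, to $\thorn'\thorn h_{nm}=0$ and upgrades $h_{nm}$ to $O((r-r_+)^2)$. So $T_{nl}=0$ is spent on $h_{m\bar m}$, not on the $l$-components as you propose. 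Your repackaging of $\psi_4=0$ and $T_{nn}=0$ as the linearized shear and Raychaudhuri equations is a fair geometric gloss on the paper's explicit on-horizon identities $\psi_4=\half\thorn'\thorn' h_{\bar m\bar m}$ and $T_{nn}=\thorn'\thorn' h_{m\bar m}$ in the chosen gauge, and the integrate-from-$i^-$ logic matches the paper's; but without the condition $h_{ab}r^b=0$ your chain of transport equations does not close.
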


\begin{proof}
$\Longrightarrow:$
The congruence of geodesics tangent to the principal null direction $n^a$ 
is expansion, shear and twist free, so we have 
$
\rho' = 0 \quad \text{on $\sH^-$.}
$
Using this fact in the $nm$- and $nn$ components of the linearized Einstein equations in GHP form in a type D background 
(see \ref{app:LinEinGHP}) gives, on $\sH^-$:
\begin{equation}
\begin{split}
T_{nn} 
&= \pheq  \Big[(\eth'-\bar{\tau})(\eth-\tau)  -\thorn'\bar{\rho} + {\bar\Psi}_2\Big]h_{nn} + \Big[-\thorn'(\eth'+\tau'-\bar{\tau}) + \tau'\thorn'\Big]h_{nm}\\
&\pheq + \Big[-\thorn'(\eth+\bar{\tau}'-\tau) + \bar{\tau}'\thorn' \Big]h_{n\bar{m}} + \thorn'\thorn'h_{m\bar{m}},
\end{split}
\end{equation}
\begin{equation}
\begin{split}
T_{ln} &= \half\Big[\rho(\thorn-\rho) + \bar{\rho}(\thorn-\bar{\rho})\Big]h_{nn}\\
&\pheq + \half\Big[-(\eth'+\tau'+\bar{\tau})(\eth-\tau-\bar{\tau}') -
(\eth'\eth+3\tau\tau'+3\bar{\tau}\bar{\tau}') + 2(\bar{\tau}+\tau')\eth\\
&\pheq\pheq+ \thorn(\rho') +\thorn'\bar{\rho}  -
\rho\thorn' -\Psi_2 -{\bar\Psi}_2\Big]h_{ln}\\
&\pheq + \half\Big[\thorn'(\eth'-\tau') + \bar{\tau}\thorn'
-\tau'\thorn'\Big]h_{lm}\\
&\pheq + \half\Big[\thorn'(\eth-\bar{\tau}') + \tau\thorn'
-\bar{\tau}'\thorn'\Big]h_{l\bar{m}}\\
&\pheq + \half\Big[(\thorn-2\rho)(\eth'-\bar{\tau}) + (\tau'+\bar{\tau})(\thorn+\bar{\rho})
-2(\eth'-\tau')\rho-2\bar{\tau}\thorn\Big]h_{nm}\\
&\pheq + \half\Big[(\thorn-2\bar{\rho})(\eth-\tau) + (\bar{\tau}'+\tau)(\thorn+\rho)
-2(\eth-\bar{\tau}')\bar{\rho}-2\tau\thorn\Big]h_{n\bar{m}}\\
&\pheq + \half\Big[-(\eth'-\bar{\tau})(\eth'-\tau') +
\bar{\tau}(\bar{\tau}-\tau')\Big]h_{mm}\\
&\pheq + \half\Big[-(\eth-\tau)(\eth-\bar{\tau}') +
\tau(\tau-\bar{\tau}')\Big]h_{\bar{m}\bar{m}}\\
&\pheq + \half\Big[(\eth'+\tau'-\bar{\tau})(\eth-\tau+\bar{\tau}') +
(\eth'\eth-\tau\tau'-\bar{\tau}\bar{\tau}'+\tau\bar{\tau}) - (\Psi_2+{\bar\Psi}_2)\\
&\pheq \pheq+\thorn'\bar{\rho} +3\rho\thorn'-2\thorn'\thorn +2\eth'(\tau)-\tau\bar{\tau}\Big]h_{m\bar{m}},
\end{split}
\end{equation}
and
\begin{equation}
\begin{split}
T_{n\bar{m}} &= 
\pheq  \half\Big[(\thorn-\rho+\bar{\rho})(\eth'-\bar{\tau}) - (\eth'-2\tau'+\bar{\tau})\rho +
\tau'(\thorn-\bar{\rho})\Big]h_{nn}\\
&\pheq - \half\thorn'(\eth'+\tau'-\bar{\tau})h_{ln}+\half\thorn'\thorn'h_{l\bar{m}}\\
&\pheq +\half\Big[-\eth'(\eth'-2\tau') -
2\bar{\tau}(\tau'-\bar{\tau})\Big]h_{nm}\\
&\pheq +\half\Big[-(\thorn+\bar{\rho})\thorn' + 2\Psi_2 + (\eth+\bar{\tau}')(\eth'-2\bar{\tau}) - \tau'(\eth+\tau-2\bar{\tau}') -
\tau(\tau'-4\bar{\tau})\Big]h_{n\bar{m}}\\
&\pheq -\half\thorn'(\eth-\tau+\bar{\tau}')h_{\bar{m}\bar{m}}+\half\Big[\thorn'(\eth'-\tau'+\bar{\tau}) + 2\bar{\tau}\thorn'\Big]h_{m\bar{m}} .
\end{split}
\end{equation}
A zero mode or a pure gauge perturbation is in particular a solution to the linearized EE, thus making no contribution to either $T_{nm}, T_{nn}.$ So we may ignore any such contribution to $h_{ab}$ in the Standing Assumptions for this discussion. 
The GHP directional derivative operators $\eth,\eth',\thorn'$ are along $m^a, \bar m^a, n^a$, which are tangent to $\sH^-$. Therefore, except for 
terms involving $\thorn h_{nm}, \thorn h_{nn}, \thorn h_{m\bar m}$ in $T_{nm}$ and in $T_{nl}$, all derivatives in the above expressions are tangent to $\sH^-$. However, those terms involve at most one derivative ($\thorn$) transversal to $\sH^-$, while $h_{nm},h_{nn},h_{m\bar m} = O((r-r_+)^2)$ vanish to second order. So $T_{nn}=T_{nm}=T_{nl}=0$ on $\sH^-$.

The perturbed Weyl component $\psi_4 = \T' h$ is obtained by acting with the operator $\T'$ on $h_{ab}$, see \ref{app:ST}. This operator annihilates any pure gauge perturbation or zero mode. So we may ignore again any such contributions to $h_{ab}$ in the Standing Assumptions. Taking the GHP prime of equation \eqref{eq:S} and using $\rho'=0$
gives
\begin{equation}
    \begin{split}
        \psi_4 ={}& \half(\eth'-\bar \tau)(\eth'-\bar \tau) h_{nn} + \half \thorn'\thorn' h_{\bar m\bar m} \nonumber \\
            &-\half\Big[\thorn'(\eth'-2\bar \tau) + (\eth' -\bar \tau)\thorn'\Big] h_{(n\bar m)}
    \end{split}
\end{equation}
on $\sH^-$. Since $\eth,\eth',\thorn'$ are derivatives along $\sH^-$, it follows again that 
$\psi_4=0$ there.

The perturbed area element on a cut of 
$\sH^-$ is proportional to $h_{m\bar m}$
and so it vanishes.
The perturbed expansion $\delta \theta$ of $n^a$ along $\sH^-$
is proportional to $n^a \nabla_a h_{m\bar m}$, which vanishes, since $h_{m\bar m}$ itself is constant on $\sH$. 
See \cite{Hollands:2024vbe} for a proof that $\delta \kappa$ vanishes if $h_{ab}\chi^b = O(r-r_+)$, $h_{ab}\chi^a \chi^b = O((r-r_+)^2)$, which in turn 
follows from the Standing Assumptions since on $\sH^-$, $n^a$ is proportional to $\chi^a$. 

\medskip
$\Longleftarrow:$ The conditions $h_{nn} = O((r-r_+)^2), h_{nl}, h_{nm} = O(r-r_+)$, can be imposed as gauge conditions by passing to a Gaussian null gauge based on an affine parameter on $\sH^-$ \cite{HWcanonicalenergy}. Actually, by choosing such a gauge, we can furthermore impose $h_{ab}r^b=0$, where $r^a$ is a null vector such that $r^a n_a = 1$ and such that the span of $n^b, r^b$ is orthogonal to the affine foliation, i.e. 
\begin{equation}
    n_{[a} r_b \nabla_c r_{d]} = 0 = 
    n_{[a} r_b \nabla_c n_{d]},
\end{equation}
by Frobenius' theorem. Now, from the expression for $T_{nn}$ we then get $\thorn' \thorn' h_{m\bar m}=T_{nn}=0$ on $\sH^-$. In view of the assumptions, as $\thorn'$ is a 
covariant derivative along $n^a$, we get $h_{m\bar m}=0$ on $\sH^-$. Next, from the expression for $\psi_4$
we get $\frac{1}{2} \thorn' \thorn' h_{\bar m\bar m}=\psi_4=0$ on $\sH^-$, so by the same 
argument, we have $h_{mm}=0$ on $\sH^-$. 
Using this and a GHP commutator, 
we then also get $-\frac{1}{2}\thorn'\thorn h_{nm} = T_{nm}=0$ on $\sH^-$, so again by the same 
argument, we have $\thorn h_{mn}=0$ on $\sH^-$, and since $\thorn$ is transverse to $\sH^-$, we have $h_{mn}=O((r-r_+)^2)$.
$l^a$ is a linear combination of $r^a, n^a, m^a, \bar m^a$, so from $h_{ab}r^b=0$
we then get also $h_{ll}, h_{ln}, h_{lm}=O(r-r_+)$. Using all of this information in our formula for $T_{nl}$, 
we finally get $\thorn' \thorn h_{m\bar m}=0$ on $\sH^-$, so again $\thorn h_{m\bar m}=0$ since $\thorn'$ is a covariant derivative along the null generators $\sH^-$ and $\thorn h_{m\bar m}=0$ goes to zero along those generators to $i^-$. Therefore $h_{m\bar m}=O((r-r_+)^2)$,
demonstrating all of the claims in the Standing Assumptions at $\sH^-$. 
\end{proof}

From now on, we assume the Standing Assumptions at $\sH^-$---or their equivalent characterization given by proposition \ref{prop:standass})---hold after 
a suitable zero mode and gauge perturbation has been subtracted, i.e. we will consider from now 
$h_{ab}-\dot g_{ab}-(\mathcal{L}_\eta g)_{ab}$ in place of $h_{ab}$.
We have to remember adding $\dot g_{ab}$
and $(\mathcal{L}_\eta g)_{ab}$ back in at the end. 

\subsubsection{Forward integration for gauge vector field $\xi^a$}
\label{sec:gaugexi-}

We now demonstrate that the perturbed metric can be put in IRG ($h_{ab} l^b=0$) without affecting the Standing Assumptions at $\sH^-$. 
 If $h_{ab}$ is not in IRG to begin with, we can solve for a gauge vector field $\xi^a$ satisfying 
$[h_{ab} - (\lie{\xi} g)_{ab}] l^a = 0$ by integrating the equations \eqref{eq:xidef} for the NP components 
$\xi_l, \xi_n, \xi_m$. As before, 
we first integrate the equation for 
$\xi_l$, viewing the first equation in \eqref{eq:xidef} as on ODE along the orbits of the principal null direction $l^a$, starting at 
the horizon, $\sH^-$, i.e., at $r=r_+$, with 
zero initial condition.
Since the source $h_{ll}$ in that ODE
is of order $O(r-r_+)$ by the Standing Assumptions, we have $\xi_l = O((r-r_+)^2)$. Next, we view the second equation in \eqref{eq:xidef} as on ODE
for $\xi_m$. Again, we impose zero initial condition, and we see that the source 
in the equation, now 
$-(\eth + \bar{\tau}') \xi_l + h_{lm}$, 
is again of order $O(r-r_+)$. Thus, 
we have $\xi_m = O((r-r_+)^2)$. Finally, 
we view that last equation in \eqref{eq:xidef} as on ODE
for $\xi_m$. Again, we impose zero initial condition, and we see that the source 
in the equation, now 
$-\thorn' \xi_l - (\tau + \bar{\tau}') \xi_{\bar{m}} - (\bar{\tau} + \tau') \xi_m + h_{ln}$, is again of order $O(r-r_+)$.
Thus, we have, altogether
\begin{equation}
    \xi_n, \xi_l, \xi_m = O((r-r_+)^2), 
\end{equation}
and we have $(h_{ab} - \lie{\xi} g_{ab}) l^a = 0$. This shows that the IRG perturbation $h_{ab} - (\lie{\xi} g)_{ab}$
fulfills the Standing Assumptions for all NP components involving $l$. 

However, we need to show that we have not destroyed any of the remaining Standing Assumptions. We therefore consider, using formulas of \ref{sec:residualg}:
\begin{subequations}
\begin{align}
\label{Liexignn}
\left( \lie{\xi} g \right)_{nn} =& \  2 \thorn' \xi_n = O((r-r_+)^2), \\
\label{Liexignm}
\left( \lie{\xi} g \right)_{nm} =& \  \left( \eth + \tau \right) \xi_n + \left( \thorn' + \rho' \right) \xi_m = O((r-r_+)^2), \\
\label{Liexigmm}
\left( \lie{\xi} g \right)_{mm} =& \  2 \eth \xi_m = \ O((r-r_+)^2), \\
\label{Liexigmmb}
\left( \lie{\xi} g \right)_{m\bar{m}} =& \left( \rho'+ \bar{\rho}' \right) \xi_l + \left( \rho + \bar{\rho} \right) \xi_n + \eth' \xi_m - \eth \xi_{\bar{m}} = O((r-r_+)^2),
\end{align}
\end{subequations}
since $\thorn,\eth,\eth'$ are tangent to $\sH^-$. This shows that the IRG perturbation $h_{ab} - (\lie{\xi} g)_{ab}$
fulfills the Standing Assumptions for all NP components. We now pass from $h_{ab}$
to $h_{ab} - (\lie{\xi} g)_{ab}$, remembering to add $(\lie{\xi} g)_{ab}$
in the end. Then the new metric perturbation
satisfies all Standing Assumptions, 
and additionally the IRG.

\subsubsection{Forward integration for the corrector $x_{ab}$ and Hertz potential $\Phi$}
\label{sec:x_abPhi-}

Next we integrate the equations for the corrector, \eqref{eq:xmmb}, \eqref{eq:xnm}, \eqref{eq:xnn}. As with all transport equations in the GHZ method, these are again understood as ODEs along the outgoing null geodesics tangent to $l^a$. The equations for \eqref{eq:xmmb}, \eqref{eq:xnm} are of second order in the affine parameter $r$ along those geodesics, so we may impose trivial initial conditions at $r=r_+$ for $x_{nm},x_{m\bar m}$ as well as their first $r$-derivative. Thus, we trivially get that $x_{nm},x_{m\bar m} =O((r-r_+)^2)$. By contrast, the equation \eqref{eq:xnn} for $x_{nn}$ is only of first order in $r$, so we may only impose the vanishing of $x_{nn}$ at $r=r_+$. 
However, based on what we know already, 
and by $T_{nl} = O(r-r_+)$ (see proposition \ref{prop:standass}), the source in \eqref{eq:xnn} is of order $O(r-r_+)$, and so we get $x_{nn} =O((r-r_+)^2)$, in fact.

We now pass from $h_{ab}$ to $h_{ab}-x_{ab}$ just as we did in the backward integration method, and the new $h_{ab}$
will then still satisfy all of our Standing Assumptions, and it has a new stress energy tensor, $S_{ab}$, such that 
$S_{ab}l^b=0$. 

Finally, we need to find the Hertz potential, $\Phi$. As in the backward integration scheme, it must simultaneously satisfy the three equations \eqref{eq:hmbmb1}, \eqref{eq:hmbn1}, 
\eqref{eq:hnn1}. As before, we view the first equation \eqref{eq:hmbmb1}, a second order transport equation along the 
orbits of $l^a$, as the defining equation for $\Phi$. In the forward integration scheme, we impose initial conditions at $r=r_+$. We may ask that $\Phi$ as well as its first $r$-derivative vanishes at $r=r_+$. Furthermore, since the source, 
$h_{\bar m\bar m}$, in the equation \eqref{eq:hmbmb1} is of order $O(r-r_+)$
by the Standing Assumptions, it follows that, in fact 
\begin{equation}
    \Phi=O((r-r_+)^3).
\end{equation}
Consider the failure of the remaining two equations, \eqref{eq:hmbn1}, 
\eqref{eq:hnn1} for $\Phi$, to hold, i.e. consider $y,z$ as in \eqref{eq:yzdef}. These quantities must 
have the form stated in lemma \ref{lem:10}. On the other hand, $\Phi=O((r-r_+)^3)$ and the Standing Assumptions give us $y=O((r-r_+)^2)=z$.
This gives us 2 independent equations per outgoing null geodesic for both $y,z$, that must hold at $r=r_+$. On the other hand, $y,z$ are completely fixed  the three complex GHP scalars $a^\circ, b^\circ, c^\circ$ annihilated by $\thorn$, i.e. constant along each outgoing null generator. Hence, we get 4 equations for 
$a^\circ, b^\circ, c^\circ$ per outgoing null generators, which are easily seen to imply that $a^\circ = b^\circ = c^\circ = 0$, so $y=z=0$. Therefore, 
all three equations \eqref{eq:hmbmb1}, \eqref{eq:hmbn1}, 
\eqref{eq:hnn1} hold, and we learn that $h_{ab} = \Re(\S^\dagger \Phi)_{ab}$. 

Now set $X^a = \xi^a + \eta^a$, we add $x_{ab}, (\mathcal{L}_X g)_{ab}, \dot g_{ab}$ back into $h_{ab}$, and summarize our analysis.

\begin{theorem}
\label{thm:3}
Suppose the perturbation $h_{ab}$ satisfies the inhomogeneous linearized EE $(\E h)_{ab} = T_{ab}$ on the exterior region of Kerr, where $h_{ab}$ satisfies the Standing Assumptions at $\sH^-$ (or the equivalent characterizations stated in proposition \ref{prop:standass}).
Then there exists a gauge vector field $X^a$, a GHP scalar $\Phi\circeq \GHPw{-4}{0}$, a zero mode $\dot g_{ab}$, and a corrector $x_{ab}$ satisfying the transport equations \eqref{eq:xmmb}, \eqref{eq:xnm}, \eqref{eq:xnn}, such that
\begin{equation}\label{eq:decomp3}
h_{ab} = \Re(\S^\dagger \Phi)_{ab} + x_{ab} + (\lie{X} g)_{ab} + \dot g_{ab},
\end{equation}
and such that the nonzero NP components of each quantity 
have the following behavior at $\sH^-$ ($r =r_+$ at fixed $(u,\theta,\varphi_*)$ in retarded KN coordinates and the Kinnersley frame)
\begin{subequations}
    \begin{align}
        \Phi =& \ O((r-r_+)^3),\\
        x_{nn}, x_{nm}, x_{m\bar m} =& \ O((r-r_+)^2),\\
X_l,X_n,X_{m} =& \ O((r-r_+)^2),\\
\dot g_{nn}, \dot g_{nm} =& \ O(1).
    \end{align}
\end{subequations}
\end{theorem}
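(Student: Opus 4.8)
The plan is to assemble the decomposition \eqref{eq:decomp3} by carrying out, in order, the four forward-integration constructions of Sections \ref{sec:forward}--\ref{sec:x_abPhi-} and then reading off the decay orders from each step. Since each construction has already been set up individually, the proof amounts to collecting those results and reassembling $h_{ab}$.

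First I would invoke Section \ref{sec:forward} to subtract a zero mode $\dot g_{ab}$ and an initial pure gauge perturbation $(\lie{\eta} g)_{ab}$, so that the remaining $h_{ab}$ satisfies the Standing Assumptions \eqref{eq:hdec1} at $\sH^-$; by proposition \ref{prop:standass} these are equivalent to the physical absence of outgoing fluxes. I would then construct the gauge vector field $\xi^a$ putting $h_{ab}$ into IRG by integrating the triangular system \eqref{eq:xidef} as ODEs along the orbits of $l^a$ from $r=r_+$ with vanishing initial data. Each source there is $O(r-r_+)$, giving $\xi_l,\xi_n,\xi_m=O((r-r_+)^2)$, and the GHP expressions \eqref{Liexignn}--\eqref{Liexigmmb} confirm that passing to $h_{ab}-(\lie{\xi} g)_{ab}$ preserves all Standing Assumptions while achieving $h_{ab}l^b=0$. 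Next I would integrate the GHZ transport equations \eqref{eq:xmmb}, \eqref{eq:xnm}, \eqref{eq:xnn} for the corrector $x_{ab}$ forward from $\sH^-$ with trivial initial data; the two second-order equations give $x_{nm},x_{m\bar m}=O((r-r_+)^2)$ at once, and the first-order equation for $x_{nn}$ yields the same order once $T_{nl}=O(r-r_+)$ from proposition \ref{prop:standass} is used. After passing to $h_{ab}-x_{ab}$ the new source $S_{ab}$ obeys $S_{ab}l^b=0$ and the Standing Assumptions persist.

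The last and only genuinely delicate step is the Hertz potential. I would read \eqref{eq:hmbmb1} as a second-order transport equation for $\Phi$, integrating it from $r=r_+$ with $\Phi$ and $\thorn\Phi$ vanishing there; since $h_{\bar m\bar m}=O(r-r_+)$ this gives $\Phi=O((r-r_+)^3)$. It then remains to show that the residuals $y,z$ of the two remaining reconstruction equations \eqref{eq:hmbn1}, \eqref{eq:hnn1} vanish. This is where the forward scheme is much easier than the backward scheme behind theorem \ref{thm:2}: lemma \ref{lem:10} already constrains $y$ and $z$ to be built out of the $\thorn$-constant GHP scalars $a^\circ,b^\circ,c^\circ$, while $\Phi=O((r-r_+)^3)$ together with the Standing Assumptions forces $y=z=O((r-r_+)^2)$ at $\sH^-$. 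Evaluating the formulas of lemma \ref{lem:10} at $r=r_+$ then produces four real equations per outgoing null generator for $a^\circ,b^\circ,c^\circ$, which are immediately seen to force $a^\circ=b^\circ=c^\circ=0$; hence $y=z=0$ identically. The conceptual point is that the restrictive ``no outgoing radiation'' data at $\sH^-$ annihilate the integration constants directly, so none of the elaborate residual-gauge analysis of \ref{sec:zetaproof}, nor any Bondi-news decay hypothesis, is required here. With $y=z=0$ all three equations \eqref{eq:hmbmb1}, \eqref{eq:hmbn1}, \eqref{eq:hnn1} hold and $h_{ab}=\Re(\S^\dagger\Phi)_{ab}$.

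Finally I would set $X^a=\xi^a+\eta^a$, add $x_{ab}$, $(\lie{X} g)_{ab}$ and $\dot g_{ab}$ back in to recover \eqref{eq:decomp3}, and collect the decay orders established above to obtain the stated behavior at $\sH^-$. Uniqueness follows because each piece was fixed by vanishing initial data at $r=r_+$.
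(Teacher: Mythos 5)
Your proposal is correct and follows essentially the same route as the paper's own proof in Sections \ref{sec:forward}--\ref{sec:x_abPhi-}: the same forward integration of \eqref{eq:xidef} and the GHZ transport equations from $r=r_+$ with vanishing initial data, the same definition of $\Phi$ via \eqref{eq:hmbmb1} yielding $\Phi=O((r-r_+)^3)$, and the same appeal to lemma \ref{lem:10} with the vanishing of $y,z$ to second order at $\sH^-$ overdetermining and killing $a^\circ, b^\circ, c^\circ$. The only cosmetic difference is your closing remark about uniqueness, which the theorem does not actually assert (unlike theorem \ref{thm:2}), but this is harmless.
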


\subsubsection{Comparison between the backward and forward GHZ integration schemes}

The main difference between the GHZ decompositions in theorems \ref{thm:2} and \ref{thm:3} is that, in the latter case, we 
are imposing physically much more restrictive conditions  
on the metric at $\sH^-$. These effectively forbid any outgoing gravitational radiation or matter energy-angular-momentum fluxes through $\sH^-$ by proposition \ref{prop:standass}. By contrast, such fluxes at $\sI^+$ are allowed in theorem \ref{thm:2}. Not surprisingly, the proof of theorem \ref{thm:2} has been much harder. 

It is unclear to us whether a proof of theorem \ref{thm:3} could be given under weaker conditions allowing fluxes at $\sH^-$. At any rate, in physical setups related to the computation of the self-force on a body in a bound orbit around Kerr, we are typically interested in forward (retarded) solutions. For the such solutions, there should be no outgoing gravitational radiation or matter energy-angular-momentum fluxes through $\sH^-$, in which case the forward integration method would apply. 


\subsection{Properties of Hertz-potential $\Phi$ in GHZ decomposition}

\subsubsection{Equation satisfied by $\Phi$}
\label{sec:Phieq}

We now turn to the equation satisfied by $\Phi$ in the GHZ decomposition, as provided by theorems \ref{thm:2} respectively \ref{thm:3}. To this end, we act with the linearized Einstein operator, $\E$, on the GHZ decomposition \eqref{eq:decompi} respectively \eqref{eq:decomp2}, and we use the SEOT identity \eqref{eq:SEOT}. This gives
\begin{equation}\label{eq:86}
\Re(\T^\dagger \O^\dagger \Phi)_{ab} = T_{ab}-(\E x)_{ab} \equiv S_{ab}.
\end{equation}
Thus, $\Phi$ solves the sourced, adjoint Teukolsky equation
\begin{equation}
\O^\dagger \Phi = \eta,
\end{equation}
with an $\eta \circeq \GHPw{-4}{0}$ such that
\begin{equation}
\label{eq:eta}
\Re (\T^\dagger \eta)_{ab} = S_{ab}. 
\end{equation}
In NP components, \eqref{eq:eta} is by  \eqref{eq:Tdag} equivalent to
\begin{subequations}
\begin{align}
\label{eq:S1}
S_{\bar{m} \bar{m}} =& \  \frac{1}{4} (\thorn-\rho)(\thorn-\rho)\eta ,\\
\label{eq:S2}
S_{nn} =& \ \half (\eth -\tau)(\eth-\tau)\eta + \textrm{c.c.} ,\\
\label{eq:S3}
S_{\bar{m} n} = & \ \frac{1}{4}  \bigg[(\eth + \bar{\tau}' -\tau)(\thorn - \rho) + (\thorn - \rho + \bar{\rho})(\eth-\tau) \bigg] \eta.
\end{align}
\end{subequations}
We would now like to determine $\eta$ from these equations. To do so, 
we need information about the global behavior of $S_{ab}$, hence of the metric $h_{ab}$ and the stress tensor $T_{ab}$, and we also need to 
distinguish what integration scheme (forward respectively backward) was used to determine the corrector $x_{ab}$ which enters $S_{ab}$ (see 
Sec. \ref{GHZproofH-} respectively Sec. \ref{GHZproof}). 

We will discuss separately the following model cases:
\begin{enumerate}
    
\item The stress tensor $T_{ab}$ is zero for $r>r_\textrm{max}$ for a certain $r$-value $r_\textrm{max}>r_+$. In this case, we shall use the 
backward integration scheme and we shall make additionally the assumptions on $h_{ab}$ described in theorem \ref{thm:2}. 

\item The stress tensor $T_{ab}$ is zero for $r<r_\textrm{min}$ for a certain $r$-value $r_\textrm{min}>r_+$. In this case, we shall use the 
forward integration scheme and we shall make additionally the assumptions on $h_{ab}$ described in theorem \ref{thm:3}. 

\end{enumerate}
In order to characterize $\eta$ in either of these cases, we will need a lemma. 

\begin{lemma}
\label{lem:pk}
In each of the above cases 1) respectively 2) there exist -- possibly different -- GHP scalars $k^\circ \circeq \GHPw{-4}{0}$ and $p^\circ \circeq \GHPw{-5}{-1}$ in region $r>r_{\rm max}$ respectively $r<r_{\rm min}$, such that
\begin{equation}\label{s1'}
\eta = p^\circ \rho + k^\circ ,
\end{equation}
and such that
\begin{equation}\label{eq:etaconstraints}
\begin{split}
0 =&   \, \tilde{\eth} \, p^\circ - \Omega^\circ \tilde{\eth}\, k^\circ,\\
0 =&   \, \Re \left(\tilde{\eth}\, \tilde{\eth}\, k^\circ\right),
\end{split}
\end{equation}
in the respective regions.
\end{lemma}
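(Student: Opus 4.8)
The plan is to reduce, in the relevant source-free region, the system \eqref{eq:S1}--\eqref{eq:S3} to a homogeneous one, integrate the first equation along the orbits of $l^a$, and then read off the two constraints \eqref{eq:etaconstraints} from the remaining two equations via Held's formalism. First I would show that the corrector $x_{ab}$ vanishes identically in the stated region, so that $S_{ab}=0$ there. In case 1) the transport equations \eqref{eq:xmmb}--\eqref{eq:xnn} are integrated backwards from $\sI^+$ with the trivial final data \eqref{xdec}; in the region $r>r_{\rm max}$ the source $T_{ab}$ vanishes, and the decay conditions \eqref{xdec} together with the vanishing at $\sI^+$ of the sources $T^{2\circ}_{nn},T^{3\circ}_{nm}$ of the readjustment \eqref{eq:xreadj} force $x_{ab}\equiv0$. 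In case 2) the same equations are integrated forwards from $\sH^-$ with trivial initial data, and the source-free region $r<r_{\rm min}$ abutting $\sH^-$ likewise gives $x_{ab}\equiv0$. As $\E$ is a local operator, $x_{ab}\equiv0$ implies $(\E x)_{ab}=0$, whence $S_{ab}=T_{ab}-(\E x)_{ab}=0$ throughout the region and \eqref{eq:S1}--\eqref{eq:S3} become homogeneous.

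Next I would integrate the homogeneous version of \eqref{eq:S1}, i.e.\ $(\thorn-\rho)(\thorn-\rho)\eta=0$, as a second-order ODE in $r$ along the integral curves of $l^a$, using $\thorn=\partial_r$ and $\thorn\rho=\rho^2$. Since $\rho$ spans the kernel of $\thorn-\rho$, setting $g:=(\thorn-\rho)\eta$ yields $g=\mu^\circ\rho$ with $\thorn\mu^\circ=0$; a particular solution of $(\thorn-\rho)\eta=\mu^\circ\rho$ is the $\thorn$-annihilated scalar $-\mu^\circ$, and adding the homogeneous solution $p^\circ\rho$ of this once-reduced equation gives
\begin{equation}
\eta = p^\circ\rho + k^\circ,\qquad k^\circ:=-\mu^\circ,\qquad \thorn p^\circ=\thorn k^\circ=0,
\end{equation}
which is \eqref{s1'}. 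Matching GHP weights fixes $p^\circ\circeq\GHPw{-5}{-1}$ and $k^\circ\circeq\GHPw{-4}{0}$, as claimed, and since $S_{ab}$ vanishes exactly the representation holds exactly (with no remainder) in the region.

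Finally I would substitute $\eta=p^\circ\rho+k^\circ$ into the homogeneous versions of \eqref{eq:S2} and \eqref{eq:S3}. From \eqref{eq:S1} one has $(\thorn-\rho)\eta=-\rho k^\circ$, so the $r$-dependence of each term is explicit; rewriting $\eth,\thorn'$ and the background scalars $\tau,\tau',\bar\tau'$ in terms of Held's operators and symbols via \ref{sec:Held} (in particular $\eth f^\circ=\bar\rho\,\tilde{\eth}f^\circ-\ldots$ for $\thorn$-annihilated $f^\circ$, and the twist symbol $\Omega^\circ\circeq\GHPw{-1}{-1}$) turns each equation into a finite Laurent polynomial in $\rho,\bar\rho$ whose coefficients are Held derivatives of $p^\circ,k^\circ$. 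Because $p^\circ,k^\circ$ are constant along the $l^a$-orbits, the equation can hold only if every coefficient vanishes separately; collecting these, I expect \eqref{eq:S3} to collapse to the complex relation $\tilde{\eth}p^\circ-\Omega^\circ\tilde{\eth}k^\circ=0$ and \eqref{eq:S2}, which is manifestly real through its complex-conjugate term, to collapse to $\Re(\tilde{\eth}\tilde{\eth}k^\circ)=0$ --- precisely \eqref{eq:etaconstraints}. The main obstacle is this last step: the Held bookkeeping is heavy, and one must check that the several relations produced by matching powers of $\rho,\bar\rho$ are mutually consistent and reduce to exactly the two stated constraints, neither over- nor under-determining $p^\circ,k^\circ$. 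From the ODE step onward the two cases are identical; only the region and the corrector's boundary data differ, which is why $p^\circ,k^\circ$ may differ between them.
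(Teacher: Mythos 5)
Your proposal is correct and follows essentially the same route as the paper's proof: the corrector and hence $S_{ab}$ vanish in the source-free region by locality of the integration scheme, the homogeneous $(\thorn-\rho)^2\eta=0$ integrates to $\eta=p^\circ\rho+k^\circ$, and Held-calculus reduction of the remaining two equations yields \eqref{eq:etaconstraints}. The only minor imprecision is that \eqref{eq:S2} does not collapse to $\Re(\tilde{\eth}\tilde{\eth}k^\circ)=0$ on its own but only after substituting the constraint $\tilde{\eth}p^\circ=\Omega^\circ\tilde{\eth}k^\circ$ coming from \eqref{eq:S3} and using $\tilde{\eth}\Omega^\circ=2\tau^\circ$ together with $\Omega^\circ=1/\bar{\rho}-1/\rho$, exactly as in the paper's passage from \eqref{eq:etaconstraints1} to the stated form.
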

\begin{proof}
 In either case 1) or 2), we consider the region where the source $T_{ab}$ in the linearized Einstein equation is zero, i.e. $r>r_\textrm{max}$ or $r>r_\textrm{min}$, respectively. In case 1), since we are using the backward integration scheme for the GHZ transport equations, the corrector $x_{ab}$ also vanishes for $r>r_\textrm{max}$, 
and therefore so does $S_{ab}$. Likewise, in case 2), since we are using the forward integration scheme for the GHZ transport equations, the corrector $x_{ab}$ also vanishes for $r<r_\textrm{min}$, 
and therefore so does $S_{ab}$. 

Consider \eqref{eq:S1} which can be seen as a second order ODE in $r$ in retarded KN coordinates and the Kinnersley frame, because $\thorn = \partial_r$ -- we have already used this many times. 
Integrating this ODE in the regions $r>r_\textrm{max}$ respectively $r>r_\textrm{min}$, we have $\eta = p_\textrm{out}^\circ \rho + k^\circ_\textrm{out}$ respectively $\eta = p_\textrm{in}^\circ \rho + k^\circ_\textrm{in}$. 

We now drop the subscripts ``in'' and ``out'', and we substitute equation \eqref{s1'} into equations \eqref{eq:S2} and 
\eqref{eq:S3} with $S_{nn}=S_{\bar m n} = 0$. Then the following equations \eqref{eq:etaconstraints1}
follow when we convert the GHP operators to Held's operators 
and subsequently use the relations of Held's calculus, see \ref{sec:Held}: 
    \begin{equation}
    \label{eq:etaconstraints1}
\begin{split}
0 =&  \frac{1}{4} \rho \bar{\rho}^2 (\tilde{\eth} \, p^\circ - \Omega^\circ \tilde{\eth} \, k^\circ),\\
0 =&  \Re \left( \half \bar{\rho}^2 \tilde{\eth} \, \tilde{\eth} \, k^\circ + \half \rho \bar{\rho}^2 \tilde{\eth} \, \tilde{\eth} \, p^\circ - \rho \bar{\rho}^2 \tau^\circ \tilde{\eth} \, k^\circ \right).
\end{split}
\end{equation}
From the first of these equations we get $\tilde{\eth} \, p^\circ = \Omega^\circ \tilde{\eth} \, k^\circ$. Substituting this into the second equation gives 
\begin{equation}
0 =  \Re \left( \half \bar{\rho}^2 \tilde{\eth}\,  \tilde{\eth} \,  k^\circ + \half \Omega^\circ \rho \bar{\rho}^2 \tilde{\eth}\,  \tilde{\eth} \,  k^\circ  \right) = \half \Re \left( \rho \bar{\rho} \, \tilde{\eth} \,  \tilde{\eth} \,  k^\circ \right) = \half \rho \bar{\rho} \Re \left( \tilde{\eth} \, \tilde{\eth} \,  k^\circ \right).
\end{equation}
This gives $0=\Re \left( \tilde{\eth} \tilde{\eth} k^\circ\right)$.
\end{proof}

We now need to analyze in more detail 
what forms $p^\circ, k^\circ$ in 
\eqref{s1'} may take. This analysis is very easy in case 2) but  more complicated in case 1), where we need to appeal to the full arsenal of results from \ref{sec:zetaproof}. The outcome of the analysis is summarized in the following theorem:

\begin{theorem}
\label{thm:4} 
1) (Backward integration scheme) Suppose $T_{ab}=0$ for $r>r_\textrm{max}$ and suppose the assumptions of theorem \ref{thm:2} hold. Then $\eta \circeq \GHPw{-4}{0}$ 
in the adjoint Teukolsky equation $\O^\dagger \Phi = \eta$ 
is the unique solution to \eqref{eq:S1} which vanishes for $r>r_\textrm{max}$. 
The corrector $x_{ab}$ vanishes for $r>r_\textrm{max}$, too.
    
2) (Forward integration scheme) Suppose $T_{ab}=0$ for $r<r_\textrm{min}$ and suppose the assumptions of theorem \ref{thm:3} hold. Then  $\eta \circeq \GHPw{-4}{0}$ 
in the adjoint Teukolsky equation $\O^\dagger \Phi = \eta$
is the unique solution to \eqref{eq:S1} which vanishes for $r<r_\textrm{min}$. 
The corrector $x_{ab}$ vanishes for $r<r_\textrm{min}$, too.
\end{theorem}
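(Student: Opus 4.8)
The plan is to reduce the theorem to two assertions holding in the respective source-free region — that the corrector $x_{ab}$ vanishes and that $\eta$ vanishes — after which the uniqueness statement comes for free. Indeed, \eqref{eq:S1}, i.e. $\frac{1}{4}(\thorn-\rho)(\thorn-\rho)\eta = S_{\bar m\bar m}$, is a second-order linear ODE along the orbits of $l^a$ (recall $\thorn=\partial_r$), and any two of its solutions differ by a homogeneous solution, which by the integration leading to \eqref{s1'} is of the form $p^\circ\rho+k^\circ$ with $p^\circ,k^\circ$ annihilated by $\thorn$. Since $\rho$ is non-constant in $r$ while $p^\circ,k^\circ$ are $r$-independent, such a difference can vanish on an open $r$-interval only if $p^\circ=k^\circ=0$; hence a solution vanishing in the source-free region is unique. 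It therefore suffices to prove that the $\eta$ produced by the GHZ construction, together with the corrector, vanishes there.

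First I would dispose of the corrector. In case 1) the transport equations \eqref{eq:xmmb}, \eqref{eq:xnm}, \eqref{eq:xnn} are integrated backward from $\sI^+$ with the trivial final data \eqref{xdec}; since $T_{ab}=0$ for $r>r_\textrm{max}$ their sources vanish there, and uniqueness of the ODE flow gives $x_{ab}=0$ for $r>r_\textrm{max}$. In case 2) the same equations are integrated forward from $\sH^-$ with trivial initial data, and $T_{ab}=0$ for $r<r_\textrm{min}$ likewise forces $x_{ab}=0$ for $r<r_\textrm{min}$. In either region $S_{ab}=T_{ab}-(\E x)_{ab}=0$ by \eqref{eq:ES}, so in particular $S_{\bar m\bar m}=S_{nn}=S_{\bar m n}=0$, and lemma \ref{lem:pk} applies: $\eta=p^\circ\rho+k^\circ$ subject to the constraints \eqref{eq:etaconstraints}.

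For case 2) it then remains to force $p^\circ=k^\circ=0$, which is easy. By theorem \ref{thm:3} the Hertz potential satisfies $\Phi=O((r-r_+)^3)$ at $\sH^-$. The operator $\O^\dagger$ is second order, and its only radial-derivative-raising part stems from the $\thorn\thorn'$ term in $g^{ab}\Theta_a\Theta_b$; near $\sH^-$ the transverse ($\partial_r$) part of $\thorn'$ carries a factor $\Delta=O(r-r_+)$, so $\O^\dagger$ lowers the vanishing order at $\sH^-$ by at most one. Hence $\eta=\O^\dagger\Phi=O((r-r_+)^2)$, which gives both $\eta=0$ and $\thorn\eta=p^\circ\rho^2=0$ at $r=r_+$ (using $\thorn\rho=\rho^2$). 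Since $\rho(r_+)\neq 0$ this yields $p^\circ=0$ and then $k^\circ=0$, so $\eta=0$ for $r<r_\textrm{min}$, completing this case.

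The hard part will be case 1), where $\Phi=O(r^3)$ grows towards $\sI^+$ and no decay argument is available; this is precisely why the full machinery of \ref{sec:zetaproof} is needed. Here $p^\circ$ and $k^\circ$ are nothing but the coefficients of $\rho^1$ and $\rho^0$ in the asymptotic expansion \eqref{asympt_exp} of $\eta=\O^\dagger\Phi$ at $\sI^+$ (all other coefficients already vanish because lemma \ref{lem:pk} forces $\eta$ to solve the homogeneous version of \eqref{eq:S1}), so the task is to compute these two leading Held coefficients and show they are zero. I would expand $\eta=\O^\dagger\Phi$ in Held's calculus (\ref{sec:Held}), inserting the asymptotics of the $\Phi$ built in Sec. \ref{Hertzbackward}: the gauge data $(d^\circ,e^\circ,\zeta^\circ_l,\zeta^\circ_n,\zeta^\circ_m)$ there were chosen precisely so that the obstructions $a^\circ,b^\circ,c^\circ$ of lemma \ref{lem:10} vanish and \eqref{eq:hmbmb1}, \eqref{eq:hmbn1}, \eqref{eq:hnn1} all hold. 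Combining this with the two constraints \eqref{eq:etaconstraints} — which tie $\tilde{\eth}\,p^\circ$ to $\tilde{\eth}\,k^\circ$ and impose $\Re(\tilde{\eth}\,\tilde{\eth}\,k^\circ)=0$ — I expect the leading coefficients to be forced to zero, giving $\eta=0$ for $r>r_\textrm{max}$. The main obstacle is exactly the bookkeeping of these asymptotic coefficients and the verification that the construction of \ref{sec:zetaproof} leaves no residual $p^\circ\rho+k^\circ$ term; once that is settled, the uniqueness and the corrector statements are immediate from the first two paragraphs.
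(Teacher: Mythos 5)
Your overall architecture is the paper's: kill the corrector in the source-free region, invoke lemma \ref{lem:pk} to write $\eta=p^\circ\rho+k^\circ$ there, and get uniqueness from the two-dimensional kernel of \eqref{eq:S1} (your observation that $p^\circ\rho+k^\circ$ vanishing on an open $r$-interval forces $p^\circ=k^\circ=0$ is exactly how the paper's uniqueness statement is understood). Your case 2) is complete and correct, and is in fact a slightly cleaner variant of the paper's argument: the paper extends $\Phi$ by zero across $\sH^-$ as a three-times continuously differentiable function (using the Standing Assumptions and forward integration of \eqref{eq:hmbmb1}) and concludes that $\eta=\O^\dagger\Phi$ extends by zero as a $C^1$ function, whereas you count vanishing orders directly, noting that the only doubly-transverse part of $\O^\dagger$ enters through the $\Delta$-suppressed $\partial_r$-piece of $\thorn'$, so $\eta=O((r-r_+)^2)$; both routes give $\eta=\thorn\eta=0$ at $r=r_+$ and hence $p^\circ=k^\circ=0$.

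The genuine gap is in case 1), and although you correctly locate it, the mechanism you propose for closing it would fail. The constraints \eqref{eq:etaconstraints} cannot force the leading coefficients to vanish: they are consistency conditions with a large space of nonzero solutions (any $k^\circ$ with $\Re(\tilde{\eth}\,\tilde{\eth}\,k^\circ)=0$ together with a $p^\circ$ solving $\tilde{\eth}\,p^\circ=\Omega^\circ\tilde{\eth}\,k^\circ$), so no amount of combining them with lemma \ref{lem:pk} yields $p^\circ=k^\circ=0$. What actually kills these coefficients in the paper is a direct identification: inserting the expansion \eqref{eq:phiexp} with the specific values \eqref{eq:phivalues} into $\O^\dagger\Phi$ and expanding in Held calculus gives \eqref{eq:530}; the order $\rho^{-1}$ term cancels identically by \eqref{eq:ecircdet}, and---after the ``very lengthy calculation'' using the identities of lemma \ref{lem:11}---the $\rho^0$ and $\rho^1$ coefficients turn out to be \emph{precisely} the auxiliary quantities $k^\circ$ and $p^\circ$ defined in \eqref{Kcircdef}, \eqref{Pcircdef} in the proof of lemma \ref{lem:9}, which vanish because the residual gauge data of \ref{sec:zetaproof} were constructed to satisfy \eqref{eq:f}, \eqref{eq:zm}, \eqref{eq:e} and \eqref{eq:d} (indeed $p^\circ=0$ \emph{is} equation \eqref{eq:d}, and $k^\circ=0$ is \eqref{eq:f} after substituting \eqref{eq:zm} and \eqref{eq:e}). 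This identification is the entire content of case 1), and your sketch only ``expects'' it. A minor further point you elide: the backward-integrated corrector is subsequently adjusted by kernel elements so that \eqref{eq:xreadj} holds; since $T_{ab}=0$ for $r>r_{\rm max}$ makes all asymptotic coefficients of $T_{ab}$ at $\sI^+$ vanish, this adjustment is trivial (given the $u\to-\infty$ normalization), so your claim $x_{ab}=0$ for $r>r_{\rm max}$ does survive, but this deserves a sentence.
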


\begin{proof}
\medskip
\noindent
{\bf Case 1)} ($r>r_{\rm max}$). In this case, the Hertz potential $\Phi$ was constructed in Sec. \ref{Hertzbackward}
by backward integration from $\sI^+$, where it was shown to have an expansion of the form 
given by  \eqref{eq:phiexp}. The coefficients in this expansion were given by 
 \eqref{eq:phivalues}. Inserting  \eqref{eq:phiexp} into $\O^\dagger \Phi = \eta$, and 
 using Held's operators and their relations (see \ref{sec:Held}), the expansion of the 
$\eta$ is found to be
\begin{equation}
\label{eq:530}
\begin{split}
\eta =& -2 \bar{\rho} \tilde{\eth}' \tilde{\eth} \Phi^{1\circ} -2 \frac{\bar{\rho}}{\rho} \tilde{\eth}' \tilde{\eth} \Phi^{2\circ} -2 \frac{\bar{\rho}}{\rho^2} \tilde{\eth}' \tilde{\eth} \Phi^{3\circ} + 6 \rho \tilde{\thorn}' \Phi^{0\circ} + 4 \tilde{\thorn}' \Phi^{1\circ} + \frac{2}{\rho} \tilde{\thorn}' \Phi^{2\circ}\\
&- 2 \frac{\bar{\rho}}{\rho} (5 \rho + \bar{\rho}) \bar{\tau}^\circ \tilde{\eth} \Phi^{2\circ} - 2 \frac{\bar{\rho}}{\rho^2} (7 \rho + \bar{\rho}) \bar{\tau}^\circ \tilde{\eth} \Phi^{3\circ} - 4 \rho \bar{\rho}^{\prime \circ} \Phi^{1\circ} - 2 (\rho^{\prime \circ} + \bar{\rho}^{\prime \circ}) \Phi^{2\circ}\\
&- 2 (\rho \Psi^\circ - \bar{\rho} \bar{\Psi}^\circ) \Phi^{2\circ} - 6 \Psi^\circ \Phi^{3\circ} + O(\rho^2).
\end{split}
\end{equation}
We now compare this expansion with that given in lemma \ref{lem:pk} (for $r>r_{\rm max}$). One may verify that the leading order ($\rho^{-1}$) gives $\tilde{\thorn}' \Phi^{2\circ} = \tilde{\eth}' \tilde{\eth} \Phi^{3\circ} \Rightarrow \tilde{\eth}' (2 \tilde{\thorn}' \xi^\circ_{\bar{m}} + \tilde{\eth} e^\circ) = 0$ which we already know is identically satisfied in view of \eqref{eq:ecircdet}. The first and second subleading orders ($\rho^0$ and $\rho^1$) are seen to be precisely equal to the GHP scalars $p^\circ$ and $k^\circ$ considered in the proof of lemma \ref{lem:9}, see \eqref{Kcircdef}, \eqref{Pcircdef}, after a very lengthy calculation using identities of lemma \ref{lem:11}. 
However, we show in the proof of lemma \ref{lem:9} that $k^\circ = p^\circ = 0$, by \eqref{eq:zm}, \eqref{eq:f}, \eqref{eq:d}.  
Therefore, lemma \ref{lem:pk}
shows that $\eta = 0$ in the region $r>r_{\rm max}$, as we wanted to prove.

\medskip
\noindent
{\bf Case 2)} ($r<r_{\rm min}$). In this case, the Hertz potential $\Phi$, corrector $x_{ab}$, and modified source 
$S_{ab}$ were constructed in Sec. \ref{sec:x_abPhi-} by forward integration from $\sH^-$.
We can conclude from the nature of this integration scheme that $S_{ab}, x_{ab}$ vanish in an open neighborhood of $\sH^-$. 
Furthermore, since $h_{ab}, \nabla_a h_{bc}$ are required to vanish at $\sH^-$ by our Standing Assumptions in Sec. \ref{GHZproofH-}, 
integrating \eqref{eq:hmbmb1} forwards from $\sH^-$ shows that $\Phi$ can be extended across $\sH^-$ by $0$ to the past of $\sH^-$ as a three times 
continuously differentiable function. Consequently, by $\O^\dagger \Phi = \eta$, $\eta$ can be extended by $0$ to the past of $\sH^-$ as a  
continuously differentiable function. However, for $r<r_{\rm min}$, $\eta$ is also given by the formulas of 
lemma \ref{lem:pk}, which together implies $p^\circ = k^\circ = 0$. Thus, $\eta=0$ for $r<r_{\rm min}$, as we wanted to prove.

\end{proof}

\medskip

The following corollary \ref{GHZcor} follows immediately from theorem \ref{thm:4}. The statement is usually taken for granted in the literature. Reference \cite{Ori:2002uv} has given an argument for the special case of mode solutions (i.e. having time-dependence $e^{-i\omega t}$ in BL coordinates). 
 Another precursor of corollary \ref{GHZcor} is due to \cite{Prabhu:2018jvy}, who have argued that solutions of the reconstructed form in the corollary must be 
dense in the set of all solutions satisfying suitable asymptotic conditions in some $L^2$-norm defined in terms of initial data. 

\begin{corollary}
\label{GHZcor} 
    Let $(\E h)_{ab} = 0$ in the exterior region of Kerr, and assume that $h_{ab}$ satisfies either the Standing  Assumptions at $\sI^+$ respectively at  $\sH^-$, as assumed in theorems 
    \ref{thm:2} respectively \ref{thm:3}. Then there exists a 
    Hertz potential $\Phi \circeq \GHPw{-4}{0}$ solving $\O^\dagger \Phi = 0$, 
    a zero mode\footnote{This zero mode is necessarily trivial in the situation covered by theorem \ref{thm:3}.} $\dot g_{ab}$, and a gauge vector field $X^a$ such that 
    \begin{equation}
        h_{ab} = \Re(\S^\dagger \Phi)_{ab} + \dot g_{ab} + (\mathcal{L}_X g)_{ab}.
    \end{equation}
\end{corollary}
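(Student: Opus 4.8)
The plan is to read off the corollary from the two GHZ decomposition theorems by specializing to $T_{ab}\equiv 0$ and then using theorem \ref{thm:4} to discard the corrector. First I would apply theorem \ref{thm:2} when the Standing Assumptions at $\sI^+$ are in force, or theorem \ref{thm:3} when those at $\sH^-$ are. Since $h_{ab}$ solves the \emph{homogeneous} equation, the source $T_{ab}\equiv 0$ trivially satisfies the Standing Decay Assumptions on $T_{ab}$, so both theorems apply verbatim and deliver the full decomposition
\[
h_{ab} = \Re(\S^\dagger \Phi)_{ab} + x_{ab} + (\lie{X} g)_{ab} + \dot g_{ab},
\]
[equation \eqref{eq:decomp2}, resp.\ \eqref{eq:decomp3}], together with a Hertz potential $\Phi \circeq \GHPw{-4}{0}$ solving the \emph{sourced} adjoint Teukolsky equation $\O^\dagger \Phi = \eta$, with $\eta$ fixed by \eqref{eq:eta}. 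It then remains only to eliminate $x_{ab}$ and to upgrade $\O^\dagger\Phi=\eta$ to $\O^\dagger\Phi=0$.

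Second, I would invoke theorem \ref{thm:4}. Because $T_{ab}$ vanishes on the entire exterior $r>r_+$, the hypothesis ``$T_{ab}=0$ for $r>r_{\mathrm{max}}$'' of case 1), respectively ``$T_{ab}=0$ for $r<r_{\mathrm{min}}$'' of case 2), holds for \emph{every} admissible value $r_{\mathrm{max}}>r_+$ (respectively every $r_{\mathrm{min}}$). For each such choice the theorem gives $x_{ab}=0$ and $\eta=0$ in the corresponding region. Applying this with $r_{\mathrm{max}}$ taken below the radius of an arbitrary exterior point (respectively $r_{\mathrm{min}}$ taken above it) shows that the single fixed fields $x_{ab}$ and $\eta$ produced by the construction vanish at that point, hence identically on $r>r_+$. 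Consequently $\O^\dagger\Phi = \eta = 0$, and dropping the now-vanishing corrector from the decomposition above leaves exactly
\[
h_{ab} = \Re(\S^\dagger \Phi)_{ab} + \dot g_{ab} + (\lie{X} g)_{ab},
\]
as claimed. In the $\sH^-$ case one further notes, via proposition \ref{prop:standass}, that the Standing Assumptions force $\delta A=\delta\kappa=0$ and hence (by invertibility of the Kerr map $(M,a)\mapsto(A,\kappa)$ on the sub-extremal range) $\delta M=\delta a=0$, so that $\dot g_{ab}$ is trivial, in agreement with the footnote.

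The only genuine content lies in verifying that the homogeneous hypotheses actually feed the apparatus of theorem \ref{thm:4}, and I expect this---rather than any new estimate---to be the sole, and modest, subtlety. Concretely, one must confirm that a vacuum perturbation meeting the Standing Assumptions (in the $\sI^+$ case including the Bondi-news decay $N_{AB}=O(|u|^{-3/2-\epsilon})$ demanded by theorem \ref{thm:2}) satisfies every hypothesis used in the proof of theorem \ref{thm:4}, and that the accumulation over all $r_{\mathrm{max}}$ (respectively $r_{\mathrm{min}}$) is legitimate. The latter is immediate because $x_{ab}$ and $\Phi$ are fixed outputs of the backward (respectively forward) integration scheme and do not depend on the auxiliary bookkeeping value, so no limiting estimates are required; the corollary is thus essentially a restatement of the degenerate case $T_{ab}\equiv 0$ of theorem \ref{thm:4}.
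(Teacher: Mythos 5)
Your proposal is correct and takes essentially the same route as the paper, which derives corollary \ref{GHZcor} ``immediately'' from theorem \ref{thm:4} by specializing to $T_{ab}\equiv 0$, exactly as you do: the hypothesis ``$T_{ab}=0$ for $r>r_{\rm max}$'' (resp.\ ``$r<r_{\rm min}$'') then holds for every admissible cutoff, forcing $x_{ab}=0$ and $\eta=0$ on the whole exterior, hence $\O^\dagger\Phi=0$. Your closing remark deriving triviality of the zero mode in the $\sH^-$ case from $\delta A=\delta\kappa=0$ (proposition \ref{prop:standass}) and the invertibility of $(M,a)\mapsto(A,\kappa)$ is a reasonable justification of the paper's otherwise unargued footnote.
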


Let us now clarify the relationship between the equation
$\O^\dagger \Phi = \eta$ in theorem~\ref{thm:4} and the source appearing in the inhomogeneous 
Teukolsky equation for the perturbed Weyl scalar $\psi_0$. For this purpose, we apply the SEOT identity \eqref{eq:SEOT} to the 
linearlized inhomogeneous EE $\E h_{ab} = T_{ab}$. This gives $\O \psi_0 = {}_{+2} T$, in which 
$\psi_0 = \T(h)$ is the perturbed 0-Weyl-scalar
\eqref{eq:Weylcomp} and in which ${}_{+2}T=\S^{ab}T_{ab}$ is the source in Teukolsky's
equation $\O \psi_0 = {}_{+2}T$.

Now, we substitute for $h_{ab}$ the GHZ decomposition
$h_{ab} = x_{ab} + \Re(\S^\dagger \Phi)_{ab} + \lie{X} g_{ab} + \dot g_{ab}$ of
theorem~\ref{thm:4}. Since the perturbed Weyl scalar $\psi_0$ is
gauge invariant on any background in which the corresponding
background scalar $\Psi_0=0$, it follows that $\T(\lie{X} g) =
0$. Furthermore, $\T(x)=0$, since the corrector field $x_{ab}$ has
vanishing $lm,ll,mm$ components [see \eqref{eq:T},\eqref{eq:xdef}]. Finally, 
$\T(\dot g)=0$, since $\Psi_0$ remains zero for perturbations to other Kerr black holes.
Therefore, $\psi_0=\T \Re \S^\dagger \Phi$.  On the other hand, in the
proof of theorem~\ref{thm:4}, we have seen that
$\Re (\T^\dagger \eta)_{ab} = S_{ab} = T_{ab} - (\E x)_{ab}$.
Applying $\S$, as given in \eqref{eq:S}, to this equation and using
$\S\E x = \O \T x = 0$ in view of \eqref{eq:SEOT}, we find
${}_{+2}T=\S \Re \T^\dagger \eta$. The expressions can be further
simplified using the covariant form of the TS identities (see \ref{sec:TScov}), leading to the following theorem.

\begin{theorem}
\label{thm:17}
Let $\psi_0$ be the perturbed Weyl scalar associated with a metric perturbation  $h_{ab} = x_{ab} + \dot g_{ab} + \Re(\S^\dagger \Phi)_{ab} + \lie{X} g_{ab}$  
as in theorems  \ref{thm:2} respectively \ref{thm:3}, where is $\Phi$ satisfying $\O^\dagger \Phi = \eta$ with a source $\eta$ as described in theorem~\ref{thm:4}. Then 
\begin{equation}
\label{eq:Opsi0}
\O \psi_0 = {}_{+2} T, 
\end{equation}
where
\begin{subequations}
\begin{align}
\label{eq:eta1}
{}_{+2} T &= 
- \frac{1}{4} \rho^4 \bar{\rho} \thorn \left[ \frac{\bar{\rho}^2}{\rho^4} \thorn \left( \frac{1}{\bar{\rho}^3} \thorn^2 \bar{\eta} \right) \right]
, \\
\label{eq:Phi}
\psi_0 &=  -\frac{1}{4} \thorn^4 \bar{\Phi}.
\end{align}
\end{subequations}
\end{theorem}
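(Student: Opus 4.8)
The plan is to separate the claim into its two displayed identities, \eqref{eq:Phi} and \eqref{eq:eta1}, since the equation $\O\psi_0 = {}_{+2}T$ itself requires nothing new: it is \eqref{eq:Teukpsi0}, obtained by feeding $\E h = T$ through the SEOT identity \eqref{eq:SEOT}, exactly as recorded in the paragraph preceding the theorem. The single structural fact underlying both remaining computations is that $\T$ and $\S$ are built out of $Z = l\wedge m$, so that $\T t$ and $\S t$ depend on a symmetric tensor $t_{ab}$ only through its $mm$ Newman--Penrose component $t_{mm}$ and the GHP derivatives thereof; every component carrying an $l$ or an $\bar m$ leg---in particular all of the IRG components---is annihilated. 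This is precisely the mechanism that isolates the single conjugate ``chirality'' ($\bar\Phi$, respectively $\bar\eta$) appearing in the two formulas.

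First I would prove \eqref{eq:Phi}. The paragraph before the theorem already gives $\psi_0 = \T\,\Re(\S^\dagger\Phi)$. Since $\T$ sees only the $mm$ component, and the TIRG perturbation $h=\Re(\S^\dagger\Phi)$ has $h_{\bar m\bar m}$ given by \eqref{eq:hmbmb1}, I would substitute its complex conjugate, $h_{mm} = -\tfrac{1}{2}(\thorn-\bar\rho)(\thorn+3\bar\rho)\bar\Phi$, into the GHP form of $\T$ (see \ref{app:ST}). Because the two surviving tetrad derivatives in $\T$ lie along $l^a$, this yields a fourth-order operator in $\thorn$ acting on $\bar\Phi$, with coefficients assembled from $\rho$ and $\bar\rho$. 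Using $\thorn\rho=\rho^2$, $\thorn\bar\rho=\bar\rho^2$ and the GHP commutators of \ref{app:LinEinGHP}---equivalently, the covariant TS identity \eqref{TSonshell1}---all $\rho$-dependent terms cancel and the operator collapses to the stated $-\tfrac{1}{4}\thorn^4\bar\Phi$, which is the familiar CCK inversion.

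Second, for \eqref{eq:eta1} I would begin from ${}_{+2}T=\S\,\Re(\T^\dagger\eta)$, likewise established just before the theorem. Again $\S$ reads only the $mm$ component, and by \eqref{eq:S1} the tensor $\Re(\T^\dagger\eta)=S_{ab}$ has $S_{\bar m\bar m}=\tfrac{1}{4}(\thorn-\rho)^2\eta$, so its $mm$ component is the conjugate $\tfrac{1}{4}(\thorn-\bar\rho)^2\bar\eta$---again purely of the $\bar\eta$ chirality. Inserting this into the GHP form of $\S$ (see \ref{app:ST}), whose outer structure is $\zeta^{-4}\thorn(\zeta^{4}\thorn\,\cdot\,)$ with $\zeta=\Psi_2^{-1/3}\propto\rho^{-1}$, immediately produces the weight factors $\rho^{4}$ and $\rho^{-4}$ visible in \eqref{eq:eta1}; commuting $\thorn$ through the remaining $\rho,\bar\rho$ powers and eliminating the $\eth$-type contributions by the covariant TS identity \eqref{TSonshell1} then assembles the nested fourth-order radial operator on the right-hand side of \eqref{eq:eta1}. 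As a consistency check one may note that \eqref{eq:Phi}, \eqref{eq:eta1} and $\O^\dagger\Phi=\eta$ are mutually compatible with $\O\psi_0={}_{+2}T$ exactly through the radial TS operator identity intertwining $\O$ and $\overline{\O^\dagger}$ by $\thorn^4$.

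The main obstacle in both reductions is the same: proving that the ``wrong'' contributions genuinely drop out. Concretely, one must show that the angular ($\eth$) terms produced by $\T$ and $\S$, together with all $\rho,\bar\rho$-dependent lower-order pieces, cancel so as to leave a single purely radial $\thorn^4$ operator---this cancellation is the substance of the covariant TS identities of \ref{sec:TScov}. Once it is granted, the remaining work is the purely mechanical bookkeeping of the $\rho,\bar\rho,\zeta$ weight factors, using $\thorn\rho=\rho^2$ and $\thorn\bar\rho=\bar\rho^2$ to push the $\thorn$'s into the nested form displayed in \eqref{eq:eta1}; this is lengthy but routine.
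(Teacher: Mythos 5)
Your proposal is correct and follows essentially the same route as the paper: the paper's proof consists of the paragraph preceding the theorem (SEOT plus $\T(\lie{X}g)=\T(x)=\T(\dot g)=0$, yielding $\psi_0=\T\,\Re\,\S^\dagger\Phi$ and ${}_{+2}T=\S\,\Re\,\T^\dagger\eta$), followed by an appeal to the off-shell TS identities of Proposition~\ref{Prop:TS}, namely $\T\S^\dagger=0$, $\T\overline{\S^\dagger\eta}=-\tfrac12\thorn^4\bar\eta$, $\S\T^\dagger=0$ and $\S\overline{\T^\dagger\eta}=-\tfrac12\bigl[\thorn^2-4(\rho+\bar\rho)\thorn+12\rho\bar\rho\bigr]\thorn^2\bar\eta$; your component computation is simply a hands-on verification of the relevant cases of that proposition (which the appendix states without proof), so it is arguably more self-contained. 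Two of your attributions are off, though neither damages the argument. First, $\T$ and $\S$ do not depend only on the $mm$ component: by \eqref{eq:T}, \eqref{eq:S} they read the $ll$, $lm$ and $mm$ components, and the $\eth$-terms drop not by virtue of \eqref{TSonshell1} (an on-shell relation between $\psi_0$ and $\psi_4$, irrelevant here) nor by GHP commutators, but simply because the $ll$ and $lm$ entries vanish — by TIRG for $h=\Re\,\S^\dagger\Phi$, and by $S_{ab}l^b=0$ for $S=\Re\,\T^\dagger\eta$. Second, once those terms are gone, the ``main obstacle'' you describe does not exist: the collapse is the elementary $\thorn\bar\rho=\bar\rho^2$ computation, via $(\thorn-n\bar\rho)f=\bar\rho^{\,n}\thorn(\bar\rho^{-n}f)$, giving $(\thorn-\bar\rho)^3(\thorn+3\bar\rho)=\thorn^4$ for \eqref{eq:Phi} and $(\thorn-4\rho-\bar\rho)(\thorn-\bar\rho)^3\bar\eta=\rho^4\bar\rho\,\thorn\bigl[\rho^{-4}\bar\rho^2\,\thorn\bigl(\bar\rho^{-3}\thorn^2\bar\eta\bigr)\bigr]$ for \eqref{eq:eta1}, with no TS input needed beyond what you are in effect re-deriving.
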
 
\noindent
{\bf Remark 4.} Theorem \ref{thm:17} may be read as a prescription for finding the Hertz potential $\Phi$. Either we first solve $\O\psi_0={}_{+2} T$ for $\psi_0$, and then get $\Phi$ from the inversion relation \eqref{eq:Phi} and the ``boundary conditions'' on $\Phi$ given in theorems \ref{thm:2} respectively \ref{thm:3}, or we solve either one of the transport equations \eqref{eq:eta1} or \eqref{eq:S1} for $\eta$, with the ``boundary conditions'' specified by theorem \ref{thm:4}, and then solve $\O^\dagger \Phi = \eta$ for $\Phi$. We will explain the first strategy in more detail in the context of the Cauchy problem in Sec. \ref{sec:Cauchy}.

\subsubsection{Energy flux  and gravitational memory formulas in terms of $\Phi$}
\label{sec:energyflux}

Consider again a metric perturbation $h_{ab}$ satisfying the sourced linearized EE $({\mathcal E}h)_{ab} = T_{ab}$ 
and the decay assumptions at $\sI^+$ of theorem \ref{thm:2}.
Recall that the linearized Bondi news tensor $N_{AB}$ at $\sI^+$ is defined in terms of Bondi coordinates $(\tilde u, \tilde r, \tilde x^A)$, see \ref{BondiIRG}. The 
formulas given there lead to 
\begin{equation}
\label{NAB}
    N_{AB} = 2 \, \Re \left( \rho^2 \bar m_{A} \bar m_{B} \, N^\circ \right)_{\sI^+} \circeq \GHPw{0}{0},
\end{equation}
in terms of the GHP scalar $N^\circ = -\tilde{\thorn}' h_{mm}^{1\circ} \circeq \GHPw{0}{-4}$. We now express the news tensor in terms of the Hertz potential $\Phi$ appearing in the GHZ decomposition 
$
h_{ab} = x_{ab} + \dot g_{ab} + (\lie{X} g)_{ab} + \Re(\S^\dagger \Phi)_{ab}   
$,
obtained by the ``backward integration scheme'', see theorem \ref{thm:2}.  According to this theorem, $\Phi$ behaves near $\sI^+$ as 
\begin{equation}
\Phi 
= \frac{\Phi^\circ}{\rho^3} + O\left(\frac{1}{\rho^2}\right),
\end{equation}
for some $\Phi^\circ \circeq \GHPw{-1}{3}$.
We know, by \eqref{eq:e}, \eqref{eq:phivaluesd}, that $-2\tilde{\thorn}' \bar f^\circ = \Phi^\circ$, and we know, by lemmas \ref{lem:9}
and \ref{lem:11}, that $N^\circ$ and $f^\circ$ are related by \eqref{eq:fN1}, \eqref{eq:fN2}. Furthermore, when acting on GHP scalars annihilated by $\th$, Held's operator 
$\tilde \th'$ in retarded KN coordinates and the Kinnersley frame is $\tilde \th' = \partial_u$ (see table \ref{tab:2} in \ref{sec:Held}), which is equivalent to multiplication by $(-i\omega)$ 
when acting on a mode (see Sec. \ref{sec:modesoln}). 

With the help of these relations, we may express quantities related to gravitational radiation carried by $h_{ab}$ at $\sI^+$ in terms of the leading order contribution $\Phi^\circ$ of the Hertz potential. We now give some examples. 

First, we consider the total flux of energy\footnote{That this flux is finite follows per assumption on the news tensor in theorem \ref{thm:2}.}, $F_{\sI^+}$, through $\sI^+$, assuming for simplicity that there is no 
matter contribution to this flux which is the case if $T_{nn} = O(r^{-3})$ (otherwise, this contribution is simply added on). 

Using the well-known formula \eqref{eq:Flux} for $F_{\sI^+}$ in terms of $N_{AB}$ recalled
in \ref{BondiIRG}, and using \eqref{eq:fN1}, we obtain
\begin{equation}
   F_{\sI^+} = -\frac{\pi}{4}
    \sum_{\ell,m} \int\limits_{-\infty}^\infty
    \frac{\dd \omega}{\omega^2}
    \bigg| {}_2B_{\ell m}(a\omega) \Phi^\circ_{\ell m}(\omega) + 3iM\omega(-1)^m \, \overline{ \Phi^\circ_{\ell -m}(-\omega)} \bigg|^2,
\end{equation}
where ${}_2B_{\ell m}(a\omega)$ is the angular TS constant, and where $\Phi^\circ_{\ell m}(\omega)$ are the mode amplitudes of $\Phi^\circ$ (see Sec. \ref{sec:modesoln}). In the presence of a matter stress tensor one gets an additional non-trivial energy flux at first perturbation order proportional given by the integral of $T^{2\circ}_{nn}$ along $\sI^+$.

Vice versa, from equation \eqref{eq:fN2}, we may also obtain a formula for the leading term $\Phi^\circ/\rho^3$ in the Hertz potential in terms of the mode coefficients of the Bondi news tensor: 
\begin{equation}
\begin{split}
   \Phi^\circ(u,\theta,\varphi_*) = \sum_{\ell,m} \int\limits_{-\infty}^\infty & \frac{\dd \omega}{\omega}
   \frac{{}_2B_{\ell m}(a\omega) \overline {N_{\ell -m}^\circ(-\omega)}
-3iM\omega(-1)^m  N^\circ_{\ell m}(\omega)}{-2i[{}_2B_{\ell m}(a\omega)^2+9M^2\omega^2]}\\
& \times
 {}_{-2}S_{\ell m}(\theta,\varphi_*;a\omega) e^{-i\omega u}, 
 \end{split}
\end{equation}
where ${}_{-2}S_{\ell m}(\theta,\varphi_*;a\omega)$ are the spin-weighted spheroidal harmonics, see Sec. \ref{sec:modesoln}.

Next, we turn to the memory tensor, $\Delta_{AB}$, at linear order \eqref{eq:Memory}. It is parameterized by a GHP scalar 
 $\Delta^\circ \circeq \GHPw{0}{-4}$ via a formula analogous to \eqref{NAB} i.e., 
 \begin{equation}
\label{DAB}
    \Delta_{AB} = 2 \, \Re \left( \rho^2 \bar m_{A} \bar m_{B} \, \Delta^\circ \right)_{\sI^+} ,
\end{equation}
 so that $\Delta^\circ = \int_{-\infty}^\infty N^\circ \dd u$. 
 Consequently, in view of $-2\tilde{\thorn}' \bar f^\circ = \Phi^\circ$,
where $N^\circ$ and $f^\circ$ are related by equations \eqref{eq:fN1}, it follows using the value ${}_2B_{\ell m}(a\omega=0)=\frac{1}{4}(\ell-1)\ell(\ell+1)(\ell+2)$ and 
$ {}_{2}S_{\ell m}(\theta,\varphi_*;a\omega=0)=: {}_{2}Y_{\ell m}(\theta,\varphi_*)$ (indicating the spin-weighted spherical harmonics), that
\begin{equation}
   \Delta^\circ(\theta,\varphi_*) = \frac{\pi}{4i} \sum_{\ell,m} (\ell-1)\ell(\ell+1)(\ell+2)
   \bigg(
   \frac{
    \overline {\Phi_{\ell -m}^\circ(\omega)}}{\omega}
\bigg)_{\omega=0}
 {}_{2}Y_{\ell m}(\theta,\varphi_*).
\end{equation}
At first perturbative order, the memory tensor is non-zero only for a non-zero total matter energy flux (see e.g., \cite{Hollands:2016oma}), which 
requires that $T_{nn} = O(r^{-2})$ and that $T^{2\circ}_{nn}$ is non-zero somewhere.

As is well-known (see e.g., \cite{Hollands:2016oma}), the memory tensor is related to a BMS supertranslation $u \to u+u(\theta,\varphi_*)$ at $\sI^+$ by 
the formula 
\begin{equation}
  \Delta_{AB} = D_A D_B u- \half s_{AB} D^C D_C u.  
\end{equation} 
Using properties of spin-weighted spherical harmonics, 
$u$ is therefore  
\begin{equation}
   u(\theta,\varphi_*) = \frac{\pi}{4i} \sum_{\ell,m} (\ell-1)\ell
   \bigg(
   \frac{
    \overline {\Phi_{\ell -m}^\circ(\omega)}}{\omega}
\bigg)_{\omega=0}
Y_{\ell m}(\theta,\varphi_*).
\end{equation}
One may also obtain formulas for the Bondi mass aspect and Bondi angular momentum aspect using formulas \eqref{eq:bondimu} and \eqref{eq:Bondij}. 
Such formulas will in general not only involve $\Phi$ but also the corrector, $x_{ab}$.

\subsection{Solving the Cauchy problem with the GHZ decomposition}
\label{sec:Cauchy}

\subsubsection{Detailed description}

We now show how to extract a convenient algorithm for solving the Cauchy problem for the linearized EE from the the GHZ decomposition developed in Sec. \ref{GHZproof}. 

\usetikzlibrary{decorations.pathmorphing}
\tikzset{zigzag/.style={decorate, decoration=zigzag}}
\begin{figure} 
\begin{center}
\begin{tikzpicture}[scale=0.6, transform shape]
\draw[thick](3,-3)--(0,0)--(3,3);
\draw[blue, ultra thick](0,0)--(6,0);
\draw[red, ultra thick](1.5,1.5)--(4.5,1.5);
\node[anchor=north] at(3,-.3) {{\Large \color{blue} $\mathscr{S}$}};
\node[anchor=north] at(3,1.2) {{\Large \color{red} $\mathscr{S}$}};

\draw[double](3,3)--(6,0)--(3,-3);

\node[anchor=west] at(4.3,2.1) {{\Large ${\mathscr I}^+$}};
\draw (0,0) node[draw,shape=circle,scale=0.4,fill=black]{};
\node[anchor=east] at(1.9,2.1) {{\Large ${\mathscr H}^+$}};

\node[anchor=north] at(3.1,3.8){{\Large $i^+$}};
\node[anchor=west] at(6.2,0) {{\Large $i^0$}};

\draw (3,3) node[draw,shape=circle,scale=0.5,fill=white]{};
\draw (6,0) node[draw,shape=circle,scale=0.5,fill=white]{};
\draw (3,-3) node[draw,shape=circle,scale=0.5,fill=white]{};

\end{tikzpicture}
\end{center}
\caption{\label{fig:5} Examples of Cauchy surfaces $\sS$.
The red surface represents a ``hyperboloidal'' slice and is 
a Cauchy surface only for its domain of dependence.}
\end{figure}
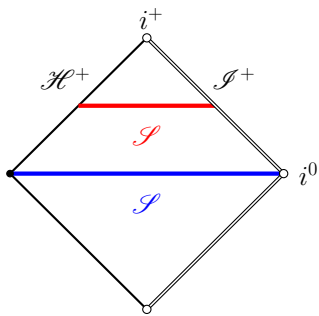


Since the linearized EE equation implies certain constraints on the Cauchy data, we first recall the notion of an initial data set, following the ADM formalism, see e.g., \cite{Waldbook}. The ADM decomposition of a metric 
is 
\begin{equation}
g^{ab} = N^{-2}(t^a - N^a)(t^b - N^b) - q^{ab},
\end{equation}
where $-q^{ab}$ is a Riemannian metric on a Cauchy surface $\sS$ (see Fig. \ref{fig:5}), $t^a=N\nu^a+N^a$ is the decomposition of $t^a = (\partial_t)^a$ into lapse and shift, and $\nu^a$ is 
the forward timelike normal to $\sS$. The ADM momentum is
$p^{ab} = \sqrt{-q}(K^{ab} - \frac{1}{2} q^{ab}K),$
where $K_{ab}$ is the extrinsic curvature of $\sS$. The Cauchy data in the full non-linear theory are $(q_{ab}, p^{ab})$. In the linear theory, 
we linearize the Cauchy data around their background values in Kerr. The linearized Cauchy data are $(\delta q_{ab}, \delta p^{ab})$, where we also set
$\delta N = 0 = \delta N^a$, referred to as ``ADM gauge''. $(\delta q_{ab}, \delta p^{ab})$ must satisfy the linearized constraints given e.g., 
in \cite{HWcanonicalenergy}. 

For simplicity, we will assume that $(\delta q_{ab}, \delta p^{ab})$ is smooth and vanishes\footnote{
\label{footnote:1} This assumption is made for simplicity. The existence of 
a wide class of such data is guaranteed by the gluing theorem \cite{CorvinoSchoen}.} for points in $\sS$ such that $r>r_{\rm max}$, for some $r_{\rm max}$. We also assume that $T_{ab}$ satisfies the Standing Decay Assumptions at $\sI^+$ (see Sec. \ref{GHZproof}), and additionally that the support of $T_{ab}$   is contained in $J(\sS \cap \{ r \le r_{\rm max} \})$, where $J=J^+ \cup J^-$ is the union of the causal past and future of a set, see e.g., \cite{Waldbook}.

We want to study the 

\medskip
\noindent
{\bf Cauchy problem:} To find $h_{ab}$ satisfying $(\E h)_{ab}=T_{ab}$ for given $T_{ab}$ and for given Cauchy data $(\delta q_{ab}, \delta p^{ab})$ on $\sS$. 

\medskip
The existence, smoothness, and uniqueness up to gauge for solutions of the Cauchy problem is not in question, see e.g., \cite{Ringstrom}.
The problem we address here is how to concretely find a solution to the Cauchy problem using the backward GHZ integration scheme
in Sec. \ref{GHZproof}, i.e., we seek the solution in the form\footnote{Note that the zero mode can be omitted since we are assuming that the initial data vanish near spatial infinity, so the perturbation will automatically have $\delta M = \delta a = 0$.}
\begin{equation}
\label{GHZ:short}
h_{ab} = \Re \S^\dagger_{ab} \Phi + x_{ab} +(\lie{X}g)_{ab} ,
\end{equation}
where $x_{ab}, \Phi$ are to be determined in such a way that the Cauchy problem is solved up to gauge (the last term parameterized by $X^a$). 

An awkward point about this strategy is that theorem \ref{thm:2}, to which we would like to appeal here, assumed certain decay conditions on $h_{ab}$ towards $\sI^+$, as well as along $\sI^+$. Clearly, we are given only the Cauchy data and the stress tensor $T_{ab}$, so it would seem that
we cannot decide whether or not these hold without actually solving the Cauchy problem in the first place! Fortunately, the situation is not as bad as it may seem, 
for the following reasons. Firstly, our conditions for the initial data imply that the solution is asymptotically flat at $\sI^+$ e.g., by \cite{GX},  
so the fall-off conditions \eqref{eq:hdec} towards $\sI^+$ hold in a suitable gauge such as Bondi gauge, see \ref{BondiIRG}. Secondly, by the usual domain of dependence property for the Cauchy problem (see e.g., \cite{Ringstrom}), we know that a gauge exists in which $h_{ab}$ vanishes in an open neighborhood
of $i^0$, because the initial data vanish there. Thus, the only question is if the news tensor $N_{AB}$ has the behavior 
as $u \to \infty$ assumed in theorem \ref{thm:2}, which appears consistent with recent results about the linear stability of Kerr, see remark 2 below theorem \ref{thm:2}. 

In this section, we will take the assumption about $N_{AB}$ in theorem \ref{thm:2} for granted, but note that we could enforce it by simply replacing $h_{ab} \to \chi h_{ab}$ where $\chi$ is a window function that vanishes in a small neighborhood\footnote{More precisely, let $\tau$ be some time function that is equal to retarded KN time $u$ for $r>r_{\rm max}$ and equal to advanced KN time $v$ for $r_+ \le r < r_{\rm min}$. Then $\chi(\tau)=1$ for $\tau>\tau_0$ and, e.g., $\chi(\tau)=0$ for $\tau > 2\tau_0$} of $i^+$. The error term incurred by the derivatives of $\chi$ would effectively be subsumed in a change of $T_{ab}$, which, furthermore, would not affect our actual construction of $h_{ab}$
except for an arbitrarily small neighborhood of $\sH^+$ and of $i^+$. We will comment more on this in Sec. \ref{sec:concl} but will leave a full discussion of the limiting behavior of the solution as $\chi \to 1$, i.e., as the cutoff is removed, to a future investigation.

Thus, we now simply assert that the solution $h_{ab}$ of the Cauchy problem has a GHZ decomposition as described in theorem \ref{thm:2}, which we assume can be found using the backward integration procedure described in Sec. \ref{GHZproof}. Following this procedure, we first determine the corrector $x_{ab}$ by solving the GHZ transport equation with ``trivial'' final conditions at $\sI^+$. In practice, this may be done e.g., by applying the advanced 
Green's function for the corrector given in \cite{Casals:2024ynr} to $T_{ab}$. If necessary, we need to adjust the leading coefficient of $x_{ab}$ in the expansion in $r^{-1}$ so that \eqref{eq:xreadj} holds.

To find $\Phi$, we first consider the Teukolsky equation \eqref{eq:Opsi0} (see theorem \ref{thm:17}) for $\psi_0$ with source ${}_{+2} T = \S^{ab}T_{ab}$. We need to know 
what are the Cauchy data for $\psi_0$ on $\sS$, i.e., what is $(\psi_0, {\mathcal L}_t \psi_0)$ on $\sS$. 
$\psi_0$ is gauge invariant, so it can be calculated from $h_{ab}$ in any gauge, including the ADM gauge 
that we use to specify the Cauchy data for $h_{ab}$. Since $\psi_0$ contains up to two derivatives of $h_{ab}$ off of $\sS$, we need, in ADM gauge, the 
induced metric and up to three time-derivatives, i.e., $\delta q_{ab}, {\mathcal L}_t \delta q_{ab}, \dots, {\mathcal L}_t^3 \delta q_{ab}$ on $\sS$. As is well known, for the full metric, 
we have
\begin{equation}
{\mathcal L}_t q_{ab} = N\left(p_{ab} - \half p q_{ab}\right)/\sqrt{-q} + 2D_{(a} N_{b)}, 
\end{equation}
the linearization of which yields ${\mathcal L}_t \delta q_{ab}$ from the Cauchy data. The ADM evolution equation (see e.g., \cite{Waldbook}) 
likewise gives an equation for ${\mathcal L}_t p^{ab}$, and thereby, upon linearization, ${\mathcal L}_t^2 \delta q_{ab}$ from the Cauchy data. 
Finally, iterating the ADM evolution operator once more, we obtain  ${\mathcal L}_t^3 \delta q_{ab}$ from the Cauchy data. Thus, we may 
compute $(\psi_0, {\mathcal L}_t \psi_0)$ restricted to $\sS$, which comprise the Cauchy data for $\psi_0$. 
Many formulas that are useful for such laborious, but in principle straightforward, computations
are provided e.g., in \cite{Campanelli}. With $(\psi_0, {\mathcal L}_t \psi_0)$ at hand, 
we may solve the Cauchy problem for 
$\O \psi = {}_{+2} T$. In practice, this would be done most conveniently in the frequency domain, exploiting the separability of the Teukolsky operator \cite{Teukolsky:1972my}. Either way, it is now clear in principle how to obtain $\psi_0$ from $T_{ab}$ and the Cauchy data for $h_{ab}$ in ADM form. 

However, for \eqref{GHZ:short}, we require the Hertz potential $\Phi$, not $\psi_0$. The two are related 
by $\psi_0 =  -\frac{1}{4} \thorn^4 \bar{\Phi}$ in view of theorem \ref{thm:17}. The operator $\thorn^4$ has as 
its kernel the degree three polynomials in the GHP scalar $1/\rho$, with arbitrary coefficients of the appropriate 
GHP weights that are annihilated by $\thorn$. This kernel hence corresponds precisely to the leading 
terms of $\Phi$ in its asymptotic expansion \eqref{eq:phiexp} near $\sI^+$, see lemma \ref{lem:11}. 
By this lemma, the leading coefficients in \eqref{eq:phiexp} have to take the values \eqref{eq:phivalues}.
Thus, the kernel of $\thorn^4$ has to be used precisely to fix these values \eqref{eq:phivalues}, and 
therefore what remains for us is to find them. 

To find the values \eqref{eq:phivalues}, we require $(d^\circ, e^\circ, \zeta^\circ_l, \zeta^\circ_m, h^{1\circ}_{mm})$. 
However, by the formulas stated in lemma \ref{lem:9}, once we know the master scalar $f^\circ$, we can find $(d^\circ, e^\circ, \zeta^\circ_l, \zeta^\circ_m)$ quite easily. Thus, we only require $(f^\circ, h^{1\circ}_{mm})$, and these must now be extracted from $\psi_0$. 

By the peeling theorem, which follows from asymptotic flatness and the EE\footnote{One needs to assume 
suitable conditions on $T_{ab}$ as in our Standing Decay Assumptions at $\sI^+$. The peeling theorem also follows from the 
formulas given in \ref{BondiIRG}.} \cite{GerochAS}, we have 
\begin{equation}
\label{Teukasy}
 \psi_0 = \rho^5 \psi_0^{5\circ} + O(\rho^6), \quad \psi_4 = \rho \psi^{1\circ}_4 + O(\rho^2),
 \end{equation}
 near $\sI^+$. The GHP scalars $\psi^{5\circ}_0 \circeq \GHPw{-1}{-5} , \psi^{1\circ}_4 \circeq \GHPw{-5}{-1}$ annihilated by $\thorn$
 are related by the TS relations, expressible in Held form by
 \begin{equation}
 \tilde{\thorn}^{\prime} \tilde{\thorn}^{\prime} \tilde{\thorn}^{\prime} \tilde{\thorn}^{\prime}  \psi_0^{5\circ} = \tilde{\eth} \: \tilde{\eth}  \: \tilde{\eth}\: \tilde{\eth} \, \psi^{1\circ}_4 + 3\Psi^\circ \tilde{\thorn}' \bar \psi^{1\circ}_4. 
\end{equation}
(This follows e.g., by applying the rules
 of Held's calculus in \ref{sec:Held} to the relation in proposition \ref{prop:3'}, using the assumed asymptotic behavior of $T_{ab}$.)
Furthermore, from the definition of $\psi_4$, we also have
\begin{equation}
\psi^{1\circ}_4 = \frac{1}{2} \tilde{\thorn}^{\prime} \tilde{\thorn}^{\prime} h_{\bar m \bar m}^{1\circ}.
\end{equation}
Using the same reasoning as in the proof of proposition \ref{prop:13}, we can determine  $(f^\circ, h^{1\circ}_{mm})$
from these equations in terms of the modes of $\psi_0^\circ$ as:
\begin{equation}
\label{Hmode}
\begin{split}
   h^{1\circ}_{\bar m\bar m}(u,\theta,\varphi_*) =  \sum_{\ell,m} \int\limits_{-\infty}^\infty \dd \omega \, \omega^2  & \, 
   \frac{ {}_2B_{\ell m}(a\omega) \overline {\psi_{0\ell -m}^{5\circ}(-\omega)}
+3iM\omega(-1)^m  \psi^{5\circ}_{0\ell m}(\omega) }{{}_2B_{\ell m}(a\omega)^2+9M^2\omega^2} \\
& \times
 {}_{-2}S_{\ell m}(\theta,\varphi_*;a\omega) e^{-i\omega u}, 
 \end{split}
\end{equation}
where ${}_2B_{\ell m}(a\omega)$ is the angular TS constant, 
${}_{-2}S_{\ell m}(\theta,\varphi_*;a\omega)$ are the spin-weighted spheroidal harmonics, see Sec. \ref{sec:modesoln}, and 
\begin{equation}
\label{Fmode}
   f^\circ(u,\theta,\varphi_*) =  2i \sum_{\ell,m} \int\limits_{-\infty}^\infty \dd \omega
   \, \omega^3 \, \psi^{5\circ}_{0\ell m}(\omega) \, 
 {}_{-2}S_{\ell m}(\theta,\varphi_*;a\omega) e^{-i\omega u}. 
\end{equation}

\subsubsection{Summary of procedure}
To summarize, the GHZ scheme for the Cauchy problem is as follows.

\medskip
\noindent
{\bf Step 1.}
Determine the corrector $x_{ab}$ solving the GHZ transport equations with ``trivial'' final conditions at $\sI^+$, e.g., by applying the advanced Green's function for the corrector \cite{Casals:2024ynr} to $T_{ab}$. Then adjust the 
leading coefficient of $x_{ab}$ in the expansion in $r^{-1}$
so that \eqref{eq:xreadj} holds.

\medskip
\noindent
{\bf Step 2.}
Obtain $\delta q_{ab}, {\mathcal L}_t \delta q_{ab}, \dots, {\mathcal L}_t^3 \delta q_{ab}$ on $\sS$ from the Cauchy data using the linearization of the ADM evolution equations. From this, compute the Cauchy data $(\psi_0, {\mathcal L}_t \psi_0)$ for $\psi_0$.

\medskip
\noindent
{\bf Step 3.}
Solve the Cauchy problem for the sourced Teukolsky equation $\O \psi_0 = \S^{ab} T_{ab}$, where $\S$ is Teukolsky's source operator, see \ref{sec:TScov}.

\medskip
\noindent
{\bf Step 4.}
Obtain $f^\circ$ and $h^{1\circ}_{mm}$ via \eqref{Fmode}, \eqref{Hmode} (which rely on the angular TS identities) from the leading asymptotic coefficient $\psi^\circ_0$ of $\psi_0 = r^{-5} \psi^\circ_0 + O(r^{-6})$. 

\medskip
\noindent
{\bf Step 5.}
Determine $d^\circ$ and $\zeta^\circ_l$ 
from $f^\circ$ solving \eqref{eq:d} and \eqref{eq:zl} [substituting \eqref{eq:zm}], 
which is straightforward in the frequency domain.

\medskip
\noindent
{\bf Step 6.}
Find the unique solution $\Phi$ of the transport equation $\psi_0 =  -\frac{1}{4} \thorn^4 \bar{\Phi}$ such that $\Phi = O(r^{-1})$, in retarded KN coordinates and the Kinnersley frame, near $\sI^+$. This may be accomplished e.g., by viewing 
$\psi_0 =  -\frac{1}{4} \thorn^4 \bar{\Phi}$ as a fourth order transport equation using $\thorn = \partial_r$, or by using a radial TS identity (see e.g., \cite{Ori:2002uv})  
in the frequency domain. Then add:
\begin{equation}
\Phi \to \Phi + d^\circ + \left[ 2 \left(\tilde{\eth}'\tilde{\eth}'  \zeta_l^\circ - 2 \bar{\tau}^\circ \tilde{\eth} \, \bar f^\circ  \right)
+ \bar{h}^{1 \circ}_{mm} 
\right] \frac{1}{\rho}
-2 \left( \tilde{\eth}' \tilde{\eth} \, \bar f^\circ
\right) \frac{1}{\rho^2} -2
\left(\tilde{\thorn}' \bar f^\circ \right) 
\frac{1}{\rho^3}.
\end{equation}

\medskip
\noindent
{\bf Step 7.} The metric perturbation solving the Cauchy problem up to a gauge transformation is  $h_{ab} = \Re \S^\dagger_{ab} \Phi + x_{ab}$.


\section{Lorenz gauge}
\label{sec:PTL}

Various inconvenient features of the IRG were already pointed out in Sec. \ref{sec:gaugeissues}. Here we show how to pass to the Lorenz gauge, which can be superior in some ways, e.g. in as far as the propagation of singularities or the fall-off near $\sI^+$ is concerned. 

\subsection{Reduction to a Maxwell equation}
\label{sec:RedMax}

Consider a solution to the EE $(\E h)_{ab} = T_{ab}$ on some globally hyperbolic background spacetime $(\sM, g_{ab})$ satisfying $R_{ab}=0$. It is well-known (see e.g., \cite{Waldbook}) how to find an infinitesimal gauge transformation 
$h_{ab} \to h_{ab} + (\mathcal L_\zeta g)_{ab}$ putting $h_{ab}$ into the  Lorenz gauge, $\nabla^a(h_{ab}-(1/2)hg_{ab})=0$ (here $h=h_a{}^a$ is the trace with respect to $g_{ab}$):
The equation that the gauge vector field $\zeta^a$ must satisfy is
\begin{equation}
\label{eq:wave}
\nabla^c \nabla_{c} \zeta_{b} = -\nabla^a  \left( h_{ab}- \frac{1}{2} h g_{ab} \right), 
\end{equation}
which is a wave equation. Such equations have a unique solution for any sufficiently regular source and any initial conditions on some Cauchy surface.

The problem is to concretely find such a solution in the exterior of the Kerr black hole. Ideally, one would like to decouple the components of \eqref{eq:wave} and be able to solve the individual decoupled equations by a separation of variables ansatz. Unfortunately, there is no known---to the authors---way of achieving this directly at the level 
of \eqref{eq:wave}. But one can make progress by passing from \eqref{eq:wave} to the sourced Maxwell equation by imposing the Lorenz gauge condition on the gauge vector field, $\nabla^a \zeta_a=0$. More precisely, we consider a decomposition
\begin{equation}
\zeta_a = \eta_a + \nabla_a \chi,
\end{equation}
with $\nabla^a \eta_a = 0$. We may then write 
\begin{equation}
\label{Jdef}
2 \nabla^a (\nabla_{[a} \eta_{b]}) = -\nabla^a \left( 
h_{ab} - \frac{1}{2} h g_{ab} 
\right) - \nabla_b \nabla^c \nabla_c \chi \equiv J_b.
\end{equation}
We must choose $\chi$ in such a way that the right side, $J_a$, becomes divergence free for consistency, which is seen to require that
\begin{equation}
\label{doublechi}
(\nabla^c \nabla_c)^2 \chi = T_c{}^c - \frac{1}{2} \nabla^c \nabla_c h,
\end{equation}
using $R_{ab}=0$ and
\begin{equation}
-\nabla^a \nabla^b \left( 
h_{ab} - \frac{1}{2} h g_{ab} 
\right) = T_c{}^c  - \frac{1}{2} \nabla^c \nabla_c h.
\end{equation}
We now need to solve these equations for the unknown quantities $\chi, \eta_a$. $\chi$ satisfies a double massless wave equation, the solution of which depends on the initial conditions that we wish to choose.  The problem of finding $\chi$ in practice is made feasible 
by the fact that $\nabla^a\nabla_a=\O$ is the separable spin-0 Teukolsky operator 
in Kerr, so the usual frequency domain techniques based on the separability of the Teukolsky equation \cite{Teukolsky:1973ha,Teukolsky:1972my} apply. 
The remaining task is to solve the Maxwell equation \eqref{Jdef} for $\eta_a$ under the divergence free condition $\nabla^a \eta_a = 0$. 
In the next subsection, we will show that also this problem can be reduced to a pair of spin-$1$ Teukolsky equations, and in Sec. 
\ref{sec:MaxGHZ}, we outline how it can alternatively be solved solved by a spin-1 version of the GHZ approach \cite{Hollands:2020vjg}.

\subsection{Solving the sourced Maxwell equation by a pair of potentials}
\label{SourcedMax}

We would like to solve the sourced Maxwell equation 
\begin{equation}
\label{eq:maxwell}
\nabla^a F_{ab} = J_b, \quad F_{ab} = 2 \nabla_{[a} A_{b]}
\end{equation}
in Kerr. For the problem at hand, we would like a solution in Lorenz gauge $\nabla^a A_a = 0$. This gauge may as usual be imposed, if necessary, 
by changing $A_a \to A_a - \nabla_a \eta$, where $\nabla^a \nabla_a \eta = \nabla^a A_a$. The last equation is 
a spin-0 Teukolsky equation in Kerr, which can be solved in the frequency domain by the separation of variables method  \cite{Teukolsky:1973ha,Teukolsky:1972my}.

With this possibility understood, we will from now ignore the Lorenz gauge condition on $A_a$.
We begin by defining a basis of self-dual 
2-forms, $X,Z,Z'$, given by 
\begin{equation}
\label{eq:XZZp}
Z = l \wedge m, \quad X = n \wedge l - \bar m \wedge m, \quad Z' = n \wedge \bar m
\end{equation}
in differential forms notation (for further details on the ``bivector calculus'' with these forms see e.g., \cite{Araneda,Aksteiner:2014zyp,Fayos}, and \ref{sec:bivec}). The Maxwell GHP scalars are
$\phi_0 = F_{ab} Z^{ab}, \phi_1 = \frac{1}{4} F_{ab}X^{ab}, \phi_2 = F_{ab}Z^{\prime ab}$. We also set $\xi^a = M^{-1/3}t^a$, i.e., it is proportional to the asymptotically timelike
Killing vector field $t^a$ ($=(\partial_t)^a$ in BL coordinates). Letting $\S J$ respectively $\S'J$ be the spin-1 Teukolsky sources, see \ref{app:ST}, the Maxwell scalars $\phi_0$, $\phi_2$
satisfy the Teukolsky equations \cite{Teukolsky:1973ha,Teukolsky:1972my}
\begin{equation}
\O \phi_0 = \S J, \quad \O' \phi_2 =  \S' J,
\end{equation}
where the GHP forms of $\O, \O'$ are recalled in \ref{sec:appO}.
Consider now two GHP scalars $\Phi_0 \circeq \{2,0\}, \Phi_2 \circeq \{-2, 0\}$ such that
\begin{equation}
\label{eq:Hertzspin-1}
{\mathcal L}_\xi \Phi_0 = \phi_0, \quad 
{\mathcal L}_\xi \Phi_2 = -\phi_2, 
\end{equation}
where in these expressions ${\mathcal L}_\xi$ is the GHP Lie derivative \cite{Edgar}. 
From this, we define the ``Hertz 2-form'' $H \circeq \{ 0, 0 \}$, as 
\begin{equation}
\label{Hab}
H_{ab} \equiv -\Phi_0 Z_{ab}^{\prime} + \Phi_2 Z^{}_{ab}. 
\end{equation}
Furthermore, from the current $J_a$ in Maxwell's equations, we define the 1-form $G$
as 
 \begin{equation}
 G_a \equiv J^b X_{ba} .
 \end{equation} 
The following theorem is due in this form originally to 
\cite{Green2,Green3}. In related work by \cite{Dolan}, 
an ansatz similar to \eqref{eq:Aansatz} was used. Ref. \cite{Dolan} (extended later by \cite{Dolan2}) has furthermore discovered that, in the case $h_{ab} = \Re \S^\dagger_{ab} \Phi$ is of reconstructed form, their ansatz for finding the gauge vector putting such a $h_{ab}$ into the Lorenz gauge leads non-trivially to a rather economical and elegant algorithm, which only requires taking anti-time-derivatives of certain derivatives of $\Phi$. 

\begin{theorem} 
\label{thm:MaxwellHertz}
The vector potential 
\begin{equation}
\label{eq:Aansatz}
A_a \equiv 2 \ {\rm Re} \left[ \nabla^b (\zeta H_{ab}) + G_a \right],
\end{equation}
where $\zeta = \Psi_2^{-\tfrac{1}{3}}$, 
solves the sourced Maxwell equation $\nabla^a F_{ab} = J_b$ up to a term Lie-derived by $\xi^a$. Thus, if it solves this equation at any point of 
some set $\sS$ in the hole's exterior, intersecting each orbit of $t^a$ ($=(\partial_t)^a$ in BL-coordinates) at least once, then $A_a$ solves the sourced Maxwell equations
everywhere in the hole's exterior.
\end{theorem}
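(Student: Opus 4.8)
The plan is to verify the claim by a direct local computation of the residual current $R_b := \nabla^a F_{ab}[A] - J_b$, and then to globalize using the $t$-translation symmetry. Because the Maxwell operator $A \mapsto \nabla^a F_{ab}[A]$ is real and formally self-adjoint and $A_a = 2\,\Re[\nabla^b(\zeta H_{ab}) + G_a]$, it suffices to evaluate $\nabla^a F_{ab}$ on the complex field $\nabla^b(\zeta H_{ab}) + G_a$ and take twice its real part. First I would separate the reconstructed contribution, built from the Hertz two-form $H_{ab} = -\Phi_0 Z'_{ab} + \Phi_2 Z_{ab}$, from the purely algebraic correction $G_a = J^b X_{ba}$, and treat each with the bivector calculus of \ref{sec:bivec}.

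For the reconstructed contribution the central tool is the spin-$1$ adjoint intertwining identity, the Maxwell analogue of \eqref{eq:SEOT*}, namely $\nabla^a F_{ab}\,\S^\dagger = \T^\dagger \O^\dagger$ together with its GHP prime. Writing $\nabla^b(\zeta H_{ab})$ through the self-dual forms $Z, X, Z'$, the weight $\zeta = \Psi_2^{-1/3}$ is exactly the power singled out by \eqref{OprimeOdag} at $s=1$, so that the conjugations $\zeta^{-2}\O\,\zeta^{2}$ convert the adjoint operators $\O^\dagger, \O^{\prime\dagger}$ into the physical Teukolsky operators $\O, \O'$; I expect this to reduce $\nabla^a F_{ab}$ of the reconstructed piece to a fixed first-order operator acting on $\O^\dagger\Phi_0$ and $\O^{\prime\dagger}\Phi_2$. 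At this stage the defining relations $\lie{\xi}\Phi_0 = \phi_0$, $\lie{\xi}\Phi_2 = -\phi_2$ enter, together with the observation that, since $\xi^a \propto t^a$ is a Killing field preserving the principal null directions up to GHP gauge, $\lie{\xi}$ is the GHP-covariant Lie derivative and commutes with every GHP-covariant operator — in particular with $\O, \O', \S, \S', \T$ and with multiplication by $\zeta$ and $\Psi_2$. Feeding in the sourced spin-$1$ Teukolsky equations $\O\phi_0 = \S J$ and $\O'\phi_2 = \S' J$ (see \ref{app:ST}, \ref{sec:appO}) then ties $\O^\dagger\Phi_0$ and $\O^{\prime\dagger}\Phi_2$ back to the source, leaving over a single uncancelled application of $\lie{\xi}$. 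The algebraic term $G_a$, which carries the $X$ (middle, $\phi_1$-type) polarization unreachable by the two transverse potentials $\Phi_0, \Phi_2$, is designed to supply precisely the remaining NP component of the current; adding it should reassemble the full $J_b$ and leave a residual $R_b$ that is a pure Lie-derivative term along $\xi^a$, which is the sense in which $A_a$ solves \eqref{eq:maxwell} up to a term Lie-derived by $\xi^a$.

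The main obstacle is this reconstructed-piece computation: it is a long GHP/bivector bookkeeping in which one must track how the second-order operator $\nabla^a F_{ab}\,\nabla^c(\zeta\,\cdot\,)$ acts on each bivector component, confirm that the $\zeta$-weights collapse the adjoint operators onto $\O, \O'$, and verify that the $G$-contribution exactly completes the extreme components produced by $H$ into $J_b$, leaving nothing beyond the overall $\lie{\xi}$. I would organize it by projecting $R_b$ onto the tetrad legs $l^a, n^a, m^a, \bar m^a$ and checking each component separately, since $\S, \S'$ and the form $X$ distribute the three polarizations in a controlled way; identifying the precise $\xi$-structure of the residual is where the care is needed. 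Once $R_b$ is known to be controlled entirely along the $\xi$-direction, the globalization is the statement the theorem makes directly: $\xi^a \propto t^a$ generates the isometry whose orbits foliate the exterior of Kerr, so the obstruction is pinned down along those orbits, and imposing $\nabla^a F_{ab}[A] = J_b$ on a set $\sS$ that meets each orbit of $t^a$ at least once forces $R_b \equiv 0$ throughout the exterior.
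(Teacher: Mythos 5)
There is a genuine structural error in your handling of the residual, and it invalidates your globalization step. You assert that the local computation leaves ``a residual $R_b$ that is a pure Lie-derivative term along $\xi^a$,'' i.e.\ $R_b = \lie{\xi} S_b$ for some $S_b$, and then conclude that imposing the Maxwell equation on a set $\sS$ meeting every $t$-orbit forces $R_b \equiv 0$. That inference is a non sequitur under your own description of $R_b$: a quantity of the form $\lie{\xi} S_b$ can vanish at one point of each orbit without vanishing anywhere else, so knowing $R_b = 0$ on $\sS$ gives you nothing off $\sS$. What the paper actually proves is the \emph{opposite} statement: the residual is \emph{annihilated} by $\lie{\xi}$, not contained in its image. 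Concretely, the proof establishes $\lie{\xi}\, \intd^\dagger F = -\lie{\xi} J$, i.e.\ $\lie{\xi}\left( \intd^\dagger F + J \right) = 0$; since $\xi^a \propto t^a$ is Killing, the residual is then constant along each orbit, and vanishing at one point per orbit does force it to vanish everywhere in the exterior. The phrase ``up to a term Lie-derived by $\xi^a$'' in the statement means the Lie-derived Maxwell equation holds identically, and your proof plan needs to target exactly that: the single uncancelled $\lie{\xi}$ must end up applied to the \emph{entire} Maxwell operator, not surviving as an overall factor of the error term.

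Your proposed local mechanism also misses the key input. The paper does not route the computation through adjoint intertwining identities; it works with the self-dual calculus of \ref{sec:bivec}, writing $\intd^\dagger F$ via the projector $(1+i\star)$, using the structural identity $\intd\, \intd^\dagger (\zeta H) = \zeta^{-1} \intd (\zeta^2 \intd^\dagger H) - UH$ and the algebraic identity $\zeta (B+B') \wedge J \cdot X = -\xi \wedge J - 2VJ$, and exploiting that $UH$, $VJ$ and $\lie{\xi} H$ are anti-self-dual and hence killed by $(1+i\star)$. The decisive cancellation is then carried by the Teukolsky--Starobinsky-like identity \eqref{LorentzTSI}, which relates $\intd(\zeta^2 \intd^\dagger \lie{\xi} H)$ to $\intd(\zeta^2 J \cdot X)$ using $\lie{\xi} H = -\phi_0 Z' - \phi_2 Z$ and the sourced spin-1 Teukolsky equations, together with current conservation in the form $\lie{\xi} J = -\intd^\dagger(\xi \wedge J)$. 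Note that it is $\lie{\xi} H$, not $H$ itself, that the TS-like identity controls --- this is precisely where the anti-time-derivative definitions \eqref{eq:Hertzspin-1} enter and where the correct $\lie{\xi}$-structure of the result comes from. Your sketch of pushing $\zeta$-weights through \eqref{OprimeOdag} and expecting $\nabla^a F_{ab}$ of the reconstructed piece to reduce to a first-order operator on $\O^\dagger \Phi_0$, $\O^{\prime\dagger}\Phi_2$ identifies neither this TS input nor the anti-self-duality bookkeeping, and $G_a$'s role is not simply to ``supply the middle polarization'' --- it enters through \eqref{XiwedgeJIdentity} and \eqref{LorentzTSI} in a way tied to the exact weight $\zeta$ in the ansatz. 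As written, the proposal would not close.
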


\medskip
\noindent
{\bf Remark 5.}
The ansatz \eqref{eq:Aansatz} gives a practical way of reducing the problem of solving the Maxwell equations to Teukolsky equations in many situations. For example,
if we knew that the current, $J_a$, was supported in the causal future or past of some compact subset $\mathscr{K}$ of the exterior of the Kerr black hole, then we could find the retarded respectively advanced solutions as follows. 

We solve equations \eqref{eq:Hertzspin-1}
imposing that $\Psi_0$ and $\Psi_2$ vanish in the past or future of that set $\mathscr{K}$. It then follows 
that the Maxwell equations are satisfied identically outside the causal future or past of 
some compact subset $\mathscr{K}$ because both $A_a$ and $J_a$, vanish there. Hence, by the 
above theorem, the Maxwell equations hold everywhere. 

We could also solve the 
initial value problem for the Maxwell equation with prescribed Cauchy data 
on some Cauchy surface $\sS$, by ensuring that $\Phi_0, \Phi_2$ are chosen in such a way on $\sS$ that the Maxwell equations hold there. Then, again, this gives rise to a pair of Teukolsky equations for $\Phi_0, \Phi_2$,
with suitable initial data on $\sS$, and by the above theorem, the ansatz 
\eqref{eq:Aansatz} for $A_a$ is a solution to the Cauchy problem with the given data on $\sS$. 

\begin{proof}
Our proof of theorem \ref{thm:MaxwellHertz} is a variant of that by Green \cite{Green2,Green3}. We use differential forms notation throughout: $\intd$ is the exterior differential on $p$-forms, and $\intd^\dagger =\star \intd \star=$ adjoint of $\intd$ (divergence) in the sense of definition \ref{def:adj}, on ($4-p$)-forms. We use a dot $\cdot$ to mean that a vector is contracted into the first index of a differential form. In 
differential forms notation, the ansatz for the gauge potential is
\begin{equation}
A \equiv  \intd^\dagger  (\zeta H) + G + \text{c.c.},
\end{equation}
and
\begin{equation}
\label{LorentzGdef}
G = J \cdot X.
\end{equation}
The Maxwell equation is 
\begin{equation}
\intd^\dagger F = -J.
\end{equation}
We will also use the notation 
\begin{equation}
\begin{split}
B =& -\rho n + \tau \bar{m},\\
C =& -\rho m + \tau l.
\end{split}
\end{equation}
See \ref{sec:bivec} for many identities connecting the  2-forms $Z,Z',X$ and the 1-forms $B,B',C,C'$ used implicitly in this proof.

We know that the source $J$ is divergence free, i.e., $\intd^\dagger J = 0$. Hence,
\begin{equation}\label{LieXiJIdentity}
{\mathcal L}_\xi J = \star {\mathcal L}_\xi \star J  = \star (\xi \cdot \intd \star J) + \star \intd (\xi \cdot \star J) = 
- \intd^\dagger \star (\xi \cdot \star J) = - \intd^\dagger (\xi \wedge J).
\end{equation}
On the other hand, one can observe that the two form $H$ satisfies the identity
\begin{equation}\label{HIdentity}
\intd \intd^\dagger (\zeta H) = \zeta^{-1} \intd (\zeta^2 \intd^\dagger H) - U H,
\end{equation}
where $U H = -{\mathcal L}_\xi \Phi_0 Z' - {\mathcal L}_\xi \Phi_2 Z - \zeta \left[ C \cdot \Theta \Phi_2 + C' \cdot \Theta \Phi_0 \right] X$ and
\begin{equation}\label{XiwedgeJIdentity}
\zeta (B + B') \wedge J \cdot X = - \xi \wedge J - 2 V J 
\end{equation}
where $V J = \zeta \left( C \cdot J Y' + C' \cdot J Z \right)$. We have the following  TS-like identity for the Maxwell field,
\begin{equation}\label{LorentzTSI}
- \zeta^{-2} \intd (\zeta^2 \intd^\dagger {\mathcal L}_\xi H) 
= \half \zeta^{-2} \intd (\zeta^2 J \cdot X)
\end{equation}
using ${\mathcal L}_\xi H = -\phi_0 Z' - \phi_2 Z$. 
Note that $U H, {\mathcal L}_\xi H$ and $V J$ are purely anti-self dual 2 forms. Therefore $(1+i\star) U H = (1+i\star) {\mathcal L}_\xi H  = (1+i\star) V J = 0$. Consequently, using the definitions \eqref{LorentzTSI} and \eqref{LorentzGdef} and the identities \eqref{LieXiJIdentity}, \eqref{HIdentity} and \eqref{XiwedgeJIdentity}, we have
\begin{equation}
\begin{split}
{\mathcal L}_\xi \intd^\dagger F =& \ {\mathcal L}_\xi \intd^\dagger \left( \intd \intd^\dagger  (\zeta H) + \intd G \right) + \text{c.c.}\\
=& - i \star {\mathcal L}_\xi\intd (1 + i \star) \left[ \intd \intd^\dagger (\zeta H) + \intd G \right] + \text{c.c.}\\
=& - i \star {\mathcal L}_\xi \intd (1 + i \star) \left[ \zeta^{-1} \intd (\zeta^2 \intd^\dagger H) - UH + \intd G \right] + \text{c.c.}\\
=& - i \star \intd (1 + i \star) \left[ \zeta^{-1} \intd (\zeta^2 \intd^\dagger {\mathcal L}_\xi H) - U {\mathcal L}_\xi H + \half \intd (\zeta J \cdot Y) \right] + \text{c.c.}\\
=& - i \star \intd (1 + i \star) \left[ \zeta^{-1} \intd ( \zeta^2 \intd^\dagger {\mathcal L}_\xi H) + \half \intd( \zeta J \cdot Y) \right] + \text{c.c.}\\
=&  - i \star \intd (1 + i \star) \left[ - \half \zeta^{-1} \intd (\zeta^2 J \cdot Y) + \half \intd( \zeta J \cdot Y) \right] + \text{c.c.}\\
=&  \ \half i \star \intd (1 + i \star) \left[ \zeta (B + B') \wedge J \cdot Y \right] + \text{c.c.}\\
=& - \half i \star \intd (1 + i \star) \left( \xi \wedge J + 2 V J \right) + \text{c.c.}\\
=& - \half i \star \intd (1 + i \star) \left( \xi \wedge J \right) + \text{c.c.}\\
=& \ \intd^\dagger \left( \xi \wedge J \right)\\
=& - {\mathcal L}_\xi J.
\end{split}
\end{equation}
Hence, the Maxwell equation holds up to some term annihilated by ${\mathcal L}_\xi$. 
\end{proof}

\subsection{Solving the sourced Maxwell-equation using GHZ}
\label{sec:MaxGHZ}

As an alternative to the method described in the previous section, we may also use a version of GHZ for spin-1 \cite{Hollands:2020vjg} to solve 
the Maxwell equations \eqref{eq:maxwell}.
That method considers a decomposition of a vector potential $A_a$ solving \eqref{eq:maxwell} of the form
\begin{equation}
\label{GHZ:spin1}
A_a = \Re \S^\dagger_a \Psi + x_a + \dot A_a + \nabla_a X ,
\end{equation}
where $\Psi \circeq \GHPw{-2}{0}$ is a suitable Hertz potential, $x_a$ is a corrector, $\dot A_a$ is an infinitesimal perturbation
towards the vector potential of the Kerr-Newman solution (``zero mode'') in IRG, $\dot A_a l^a = 0$, corresponding to a linear change $\delta Q$ of the electric charge, and $X$ is some gauge function. 

Proceeding in a similar way as we did for spin-2 in Sec. \ref{ch:GHZ}, one can obtain a proof of the 
spin-1 GHZ decomposition \eqref{GHZ:spin1}, under much weaker assumptions than in \cite{Hollands:2020vjg}, by a backwards or forwards integration 
scheme. For instance, in the backward scheme, the asymptotic conditions at $\sI^+$ analogous to our Standing Decay Assumptions on $h_{ab}$ and $T_{ab}$, see Sec. \ref{GHZproof}, on $A_a$ and $J^a$ are now that
\begin{subequations}
\label{eq:Adec}
\begin{align}
\label{eq:Adeca}
    &A_n = O(r^{-1}), \quad A_m = O(r^{-1}), \quad A_l = O(r^{-2}), \\
\label{eq:Adecb}
    &J_n = O(r^{-2}), \quad J_m = O(r^{-3}), \quad J_l = O(r^{-4}),
\end{align}
\end{subequations}
hold, and that, possibly after subtracting from $A_a$ a zero mode, 
that all expansion coefficients in \eqref{asympt_exp} for $A_a, J_a$
tend to zero as $u \to -\infty$ with their derivatives. 
Many of these arguments would involve lengthy repetitions, though of a much more straightforward nature, of those given in Secs. \ref{GHZproof} and \ref{GHZproofH-} for spin-2. We will, therefore, not give a detailed discussion here but only outline the main steps. 

For definiteness, we consider a backward integration scheme
similar to that in Sec. \ref{GHZproof}. To prove a theorem analogous to theorem \ref{thm:2}, we require \eqref{eq:Adec}, as well as a decay condition on $E_A = \lim_{\tilde r \to \infty} \tilde r F_{A\tilde u}$ (the asymptotic electric field at $\sI^+$ in Bondi coordinates, see \ref{BondiIRG}) as $|\tilde u| \to \infty$, similar to that on the Bondi news tensor in theorem \ref{thm:2}.

We begin by subtracting a suitable zero mode, $\dot A_a$, from $A_a$ to
ensure that the expansion coefficients in \eqref{asympt_exp} for $A_a$
tend to zero as $u \to -\infty$. Next, we subtract a suitable gauge piece $\nabla_a X$ from $A_a$ to put it into IRG, $A_a l^a = 0$. The prerequisite transport equation is integrated backwards from $\sI^+$ with ``trivial'' final conditions (meaning $X = O(r^{-1})$).\footnote{The residual gauge transformation would consist of gauge pieces $\nabla_a \zeta$, where $\zeta \circeq \GHPw{0}{0}$ is annihilated by $\thorn$. However, unlike for spin-2, no such residual gauge transformations need to be considered here.} 

After this, we solve the transport ODE 
\begin{equation}
\label{x1def}
\rho^2 \thorn \left[ \frac{\bar{\rho}}{\rho^3} \thorn \left( \frac{\rho}{\bar{\rho}} x_n \right) \right] = -J_l.
\end{equation}
for $x_n$ demanding that
\begin{equation}
\label{eq:xndef}
x_n = x^{1\circ}_n \rho + O(\rho^2), \quad    
\tilde{\thorn}' x^{1\circ}_n = J^{2\circ}_n,
\end{equation}
which picks a unique solution if we demand that 
$x^{1\circ}_n$ goes to zero as $u \to -\infty$.
$x_n$ determines the corrector as
\begin{equation}
x^a = x_n l^a.
\end{equation}
We subtract this $x_a$ from $A_a$, thereby effectively eliminating 
the $l$ NP component from $J_a$, similar to the spin-2 case.

In fact, one now argues that, after these steps, $A_a$ can be written in reconstructed form as $A_a = \Re \S^\dagger_a \Psi$: We use the $\bar m$ NP
component of this equation in order to define $\Psi$. Writing the
$\bar m$ NP component of $\S^\dagger_a$ out (see \ref{app:ST}), we see that we need to integrate the transport equation
\begin{equation}
A_{\bar m} = -\frac{1}{4\rho}\thorn \left( \rho \Psi \right).
\end{equation}
We pick a unique solution demanding that
\begin{equation}\label{eq:Psidef}
\Psi = - 4 A^{1\circ}_{\bar m} + O(\rho)
\end{equation}
near $\sI^+$.
An analysis analogous to that in the proof of lemma \ref{lem:10} shows that, at this stage
\begin{equation}
\label{eq:Adiff}
    A_a - \Re \S^\dagger_a \Psi = l_a \left[a^\circ (\rho + \bar{\rho}) + b^\circ \left(\frac{\rho}{\bar{\rho}} + \frac{\bar{\rho}}{\rho} \right) \right],
\end{equation}
for certain real GHP scalars $a^\circ, b^\circ$ annihilated by $\thorn$. $b^\circ$ is identical to zero according to \eqref{eq:Psidef} and \eqref{eq:Adeca}. In addition, from \eqref{eq:xndef} we learn in view of \eqref{eq:Adecb} that $S^{2 \circ}_n$ vanishes. Knowing that $S_n = O(\rho^3)$ and substituting \eqref{eq:Adiff} into the Maxwell equation, we learn that $\tilde{\thorn}' a^\circ = 0$ (from its $n$ NP component). Then using that $a^\circ$ vanishes in the $u \to - \infty$ limit by the Standing Decay Assumptions, we see that $a^\circ = 0$.
Therefore, we get \eqref{GHZ:spin1} after adding $x_a, \nabla_a X, \dot A_a$ back to $A_a$. 

Having established the spin-1 GHZ decomposition \eqref{GHZ:spin1}, we 
now explain how it can be used in practice to solve the Cauchy problem. We first note that, similar to the spin-2 case, we 
have a relation between the Hertz potential $\Psi$ and the bottom Maxwell scalar, i.e. 
\begin{equation}
\label{eq:invspin-1}
    \phi_0 = -\dfrac{1}{4} \thorn^2 \bar \Psi.
\end{equation}
This suggests again that we should first solve the Cauchy problem for the sourced spin-1 Teukolsky equation $\O \phi_0 = \S^a J_a$. Since we are given $J_a$, we furthermore obtain the non-trivial component $x_n$ of the corrector by solving \eqref{x1def} subject to the 
final condition \eqref{eq:xndef}. To get the correct $\Psi$ from $\phi_0$ via the inversion relation \eqref{eq:invspin-1}, we must, as in the spin-2 case, take into account the kernel of that equation. 
The correct solution is picked by demanding that \eqref{eq:Psidef}
must hold. This requires $A^{1\circ}_{\bar m}$, for which we have the relation
\begin{equation}
\label{Abm}
    -\tilde{\thorn}' A^{1\circ}_{\bar m} = \phi_2^{1 \circ}, 
\end{equation}
where we are referring to the leading asymptotic expansion coefficient(s) of the Maxwell scalar(s), i.e., 
\begin{equation}
\label{peeling}
    \phi_2 = \phi_2^{1 \circ} \rho + O(\rho^2), \quad 
    \phi_0 = \phi_0^{3 \circ} \rho^3 + O(\rho^4).
\end{equation}
These in turn are related by an angular TS identity, 
\begin{equation}
    \tilde{\eth}' \tilde{\eth}' \phi_2^{1 \circ} = 
    \tilde{\thorn}'\tilde{\thorn}' \phi_0^{3 \circ}. 
\end{equation}
This enables us, via an expansion into spin-weighted spheroidal harmonics of the general form \eqref{eq:Fourier}, and the relations in Sec. \ref{sec:modesoln}, to obtain 
$\phi_2^{1 \circ}$ from $\phi_0^{3 \circ}$, which is known. We thereby get $A^{1\circ}_{\bar m}$ from \eqref{Abm}. Explicitly, 
\begin{equation}
\label{Acirc}
    A^{1\circ}_{\bar m}(u, \theta, \varphi_*) = - \sum_{m,\ell} \int\limits_0^\infty \dd \omega \
\frac{\omega}{{}_{1} B_{\ell m}(a\omega)} \phi_{0 \ell m}^{3 \circ}(\omega) 
{}_{-1} S_{\ell m}(\theta, \varphi_*; a\omega) e^{-i\omega u},
\end{equation}
where $(u,\theta,\varphi_*)$ are the retarded KN coordinates,
where ${}_{1} B_{\ell m}(a\omega)$ is the spin-1 angular TS
constant, and where ${}_{-1} S_{\ell m}(\theta, \varphi_*; a\omega)$ are the spin-weighted spheroidal harmonics, see Sec. \ref{sec:modesoln}. Eq. \eqref{Acirc} should be regarded as analogous to \eqref{Hmode} in the spin-2 case. The decay assumption on the electric field is needed in order to show that this integral exists, though we do not give the prerequisite argument here which is analogous to proof of proposition \ref{prop:13}.

To summarize, our scheme for solving the Cauchy problem for the sourced Maxwell equation, $\nabla^a F_{ab} = -J_b$, in Kerr is as follows, assuming for simplicity compactly supported initial data\footnote{Hence there is no zero-mode in the GHZ decomposition \eqref{GHZ:spin1}, and we automatically have 
the peeling property \eqref{peeling}.} for $A_a$, and assuming 
a source $J_a$ with the decay \eqref{eq:Adec} at $\sI^+$. 

\medskip
\noindent
{\bf Step 1.} Solve the Cauchy problem for $\phi_0$ 
in the sourced spin-1 Teukolsky equation, $\O \phi_0 = \S^a J_a$.

\medskip
\noindent
{\bf Step 2.} Obtain $A^{1\circ}_{\bar m}$ from \eqref{Acirc}.

\medskip
\noindent
{\bf Step 3.} Obtain the non-trivial NP component $x_n$ of the corrector by solving \eqref{x1def} subject to the 
final condition \eqref{eq:xndef}.

\medskip
\noindent
{\bf Step 4.} Solve the inversion problem \eqref{eq:invspin-1} to get $\Psi$ from $\phi_0$, by picking the unique solution to this transport equation for which \eqref{eq:Psidef} holds. 

\medskip
\noindent
{\bf Step 5.} The solution of the Cauchy problem (up to gauge) is 
$A_a = x_a + \Re \S^\dagger_a \Psi$.

\section{Conclusions}
\label{sec:concl}

In this paper we have re-analyzed the GHZ method \cite{Green:2019nam}. 
Our main aim was to obtain a version of that method applicable to the Cauchy problem 
for the sourced, linearized EE on Kerr. Our secondary, yet closely related, aim was to connect the quantities in the GHZ approach to observables for gravitational radiation such as the news, memory tensors or the corresponding BMS supertranslations. 

While the analysis was quite involved algebraically, the end recipe for solving the Cauchy problem is rather simple: We must solve, as usual, the standard Cauchy problem for the sourced Teukolsky equation for the Weyl scalar $\psi_0$, one set of transport equations for the NP components of the corrector tensor (for which a completely explicit Green's function exists \cite{Casals:2024ynr}), and an inversion problem to go from $\psi_0$ to the Hertz potential, $\Phi$. The precise understanding of this inversion problem in our setup was by far the most intricate part of our analysis, but its solution requires, at the end of the day, essentially only the spin-2 angular Teukolsky-Starobinsky identity.

Since our scheme is formulated for fairly general stress-energy tensors in the linearized EE, we should be able to iterate the procedure, in order to obtain the metric solving the initial value problem to as high an order in perturbation theory around Kerr as we like, in principle. 

In practice, for the extreme mass ratio inspiral problem, the second order in perturbation theory is known to be required to fully compute the self-force to the necessary accuracy for systems that will be observable by LISA, see e.g., \cite{Poisson:2011nh,Pound:Wardell,Barack}. Carrying out this analysis concretely, and with the prerequisite efficiency, for general inclined bound orbits in Kerr is one of the current most important open problems in this area, see \cite{Wardell:2021fyy} for the state-of-the-art. The GHZ method is a possible tool for this difficult challenge. For a proof-of-principle analysis of the self-force at first order for circular orbits in Schwarzschild, see \cite{Bourg:2024vre}.

Besides the self-force problem, perturbative approaches are, of course, useful also for a number of other questions in black hole physics, such as when studying possible non-linear effects in quasinormal mode ringdown after a black hole merger \cite{Sberna:2021eui,Mitman:2022qdl,Cheung:2022rbm,Lagos:2022otp,Ma:2024qcv},  when exploring the possibilities of turbulent dynamics around black holes \cite{Yang:2014tla,Green:2022htq}, or for quantum field theory on Kerr.

An obvious mathematical question is whether the extension of the GHZ method obtained in this paper can be used to directly transfer decay results on the Teukolsky equation (see \cite{DHR,Ma,Hintz,Angelopoulos,Shlapentokh-Rothman:2023bwo,Andersson}) to the sourced\footnote{For the case of the source-free linearized EE, such an analysis was given by \cite{Andersson:2019dwi}.} linearzed EE. This is perhaps possible, though one would have to pass to a better behaved gauge such as the Lorenz or no-string gauge, see Sec. \ref{sec:gaugeissues}, and one would have to generalize the analysis of \ref{sec:zetaproof} to a setup where no a priori knowledge on the decay of the news tensor is needed. This would require a better understanding of the mapping properties of the operator appearing in the angular Teukolsky-Starobinsky identities in Kerr. 

A considerably more speculative question is whether the apparent efficiency of our perturbative approach, combined with the possibility to switch between IRG (for infinity), ORG (for the horizon) and other gauges, could be useful for shedding further light on the non-linear Kerr stability problem \cite{Klainerman:2021qzy,Hintz2,Dafermos:2021cbw}.

\medskip
{\bf Acknowledgements:} We thank Gabriele Benomio, Marc Casals, Dejan Gajic, Dietrich H\" afner, Peter Hintz, Anna-Laura Sattelberger, Bernd Sturmfels and Barry Wardell for useful discussions. 
S.H. thanks the Erwin Schr\" odinger Institut in Vienna, where part of this work has been completed, for its hospitality and support.
S.H. is grateful to the Max-Planck Society for supporting the collaboration between MPI-MiS and Leipzig U., grant Proj. Bez. M.FE.A.MATN0003. 
S.H. acknowledges support of the Institut Henri Poincaré (UAR 839 CNRS-Sorbonne Université), and LabEx CARMIN (ANR-10-LABX-59-01).
Many of the calculations in this work were enabled by the xAct \cite{xact1,xact2} tensor algebra package for Mathematica.


\appendix

\section{GHP relations in Kerr \cite{Kinnersley:1969zza,Price}}
\label{sec:GHPformulas}
The formulas in this appendix are essentially due to \cite{Kinnersley:1969zza}, who however did not employ the GHP formalism. For derivations in the GHP formalism, see e.g., \cite{Price}. We are assuming that the background spacetime is Kerr, with an NP tetrad aligned with the principal null directions, though many formulas remain valid in
broader classes of type D spacetimes, see e.g., \cite{Price} for a more detailed discussion.
The GHP commutators are
\begin{subequations}
\begin{align}
[\thorn, \thorn'] =& \ (\bar{\tau} - \tau') \eth + (\tau - \bar{\tau}') \eth' - p(\Psi_2 - \tau \tau') - q (\bar{\Psi}_2 - \bar{\tau} \bar{\tau}'),\\
[\eth, \eth'] =& \ (\bar{\rho} - \rho') \thorn + (\rho - \bar{\rho}) \thorn' - p(\Psi_2 + \rho \rho') - q (\bar{\Psi}_2 + \bar{\rho} \bar{\rho}'),\\
[\thorn, \eth] =& \ \bar{\rho} \eth - \bar{\tau}' \thorn + q\bar{\rho} \bar{\tau}'
\end{align}
\end{subequations}
when acting on GHP scalars of weights $\GHPw{p}{q}$. The remaining commutators may be obtained from these by applying the priming- and overbar operations in the GHP formalism, i.e.,
\begin{equation}
    \bar \eth = \eth', \quad
    \bar \thorn = \thorn, \quad
    \bar \thorn' = \thorn'.
\end{equation}
The action of the GHP operators on the non-vanishing optical scalars  is summarized in table \ref{tab:4}.

\begin{table}[t]\label{DerivativeofGHPQuantities}
\begin{indented}
\item[]
\begin{tabular}{ c | c  c c }
\br
& $\rho$ & $\tau$ & $\Psi_2$\\ \mr
$\thorn$ & $\rho^2$ & $\rho(\tau-\bar \tau')$ & $3\rho\Psi_2$\\
$\thorn'$ & $\rho \rho'+ \tau'(\tau-\bar \tau') - \tfrac{1}{2} \Psi_2 - \frac{\rho}{2\bar \rho} \bar \Psi_2$ & $2\rho'\tau$ & $3\rho'\Psi_2$ \\
$\eth$ & $\tau(\rho-\bar\rho)$ &  $\tau^2$& 
$3\tau \Psi_2$ \\
$\eth'$ & $2\rho\tau'$ & $\tau \tau' + \rho'(\rho-\bar{\rho}') + \tfrac{1}{2}\Psi_2 - \frac{\rho}{2\bar{\rho}} \bar{\Psi}_2$ & $3\tau' \Psi_2$ \\
\br
\end{tabular}
\end{indented}
\caption{Action of GHP operators on non-vanishing background GHP scalars in Kerr. All other combinations may be obtained by applying the GHP priming and overbar operations to various entries in the table.} \label{tab:4}
\end{table}

We have 
\begin{equation}
\label{eq:GHPrelations}
\left( \frac{\Psi_2}{ \bar{\Psi}_2} \right)^{\tfrac{1}{3}} = 
\frac{\rho}{\bar{\rho}} = 
\frac{\rho'}{\bar{\rho}'} = 
-\frac{\tau}{\bar{\tau}'} = 
-\frac{\tau'}{\bar{\tau}},
\end{equation}
and these are the only remaining relations on a general non-accelerating type D spacetime, i.e. the scalars 
$\tau, \tau', \rho, \rho', \Psi_2$ are independent up to these relations as functions of the spacetime coordinates and the moduli of non-accelerating type D spacetime.

\section{Formulas for Held's version of GHP \cite{Held,StewardWalker}}
\label{sec:Held}
Held \cite{Held} has shown that 
\begin{subequations}\label{eq:exp}
  \begin{align}
    \rho' 
    &= \half (\rho + \bar{\rho}) \rho^{\prime \circ} + \half (\rho - \bar{\rho}) \bar{\rho}^{\prime \circ} - \half \rho^2 \Psi^\circ - \half \rho \bar{\rho} \bar{\Psi}^\circ - \rho^2 \bar{\rho} \tau^\circ \bar{\tau}^\circ,\\
    \tau' &= -\bar{\tau}^\circ \rho^2, \frac{}{} \\
    \tau &= \tau^\circ \rho\bar\rho, \frac{}{}\\
    \Psi_2 &= \Psi^\circ \rho^3 \frac{}{},
  \end{align}
\end{subequations}
for any vacuum, type D spacetime of the Case II in  Kinnersley's classification~\cite{Kinnersley:1969zza} (which includes Kerr) as long as the NP frame is aligned with the principal null directions.
Here, $\tau^\circ \circeq \GHPw{-1}{-3}$, $\Psi^\circ \circeq \GHPw{-3}{-3}$, $\rho^{\prime \circ} \circeq \GHPw{-2}{-2}$, and
\begin{equation}
\label{eq:barrho}
\Omega^\circ = \frac{1}{\bar{\rho}} - \frac{1}{\rho} \circeq \GHPw{-1}{-1}.
\end{equation}
Using these formulas, one finds \cite{Held} the following relationships between Held's operators \eqref{eq:Hops} and the GHP operators 
\begin{subequations}\label{eq:Hops2}
\begin{align}
\thorn' &= \tilde{\thorn}' + \rho \bar{\rho} \bar{\tau}^\circ \tilde{\eth} + \rho \bar{\rho} \tau^\circ \tilde{\eth}' - \rho \bar{\rho} \tau^\circ \bar{\tau}^\circ \left( p \rho + q\bar{\rho} \right) - \half \left( p \Psi^\circ \rho^2 + q \bar{\Psi}^\circ \bar{\rho}^2 \right), \\
\eth &= \bar{\rho} \tilde{\eth} - q \bar{\rho}^2 \tau^\circ, \\
\eth' &= \rho \tilde{\eth}' - p \rho^2 \bar{\tau}^\circ \ . 
\end{align}
\end{subequations}
These hold when acting on GHP scalar with the weight of $\GHPw{p}{q}$. 
Table \ref{tab:1} summarizes the action of Held's operators on the GHP scalars $\rho, \Omega^\circ, \rho^{\prime \circ}, \tau^\circ$ and $\Psi^\circ$.
\begin{table}[t]\label{DerivativeofHeldQuantities}
\begin{indented}
\item[]
\begin{tabular}{ c | c  c  c }
\br
& $\tilde{\thorn}'$ & $\tilde{\eth}$ & $\tilde{\eth}'$ \\ \mr
$\rho$ & 
$\rho^2 \rho^{\prime \circ} - \half \rho^2 (\rho \Psi^\circ + \bar{\rho} \bar{\Psi}^\circ) - \rho^3 \bar{\rho} \tau^\circ \bar{\tau}^\circ$ &
$\tau^\circ \rho^2$ &
$-\bar{\tau}^\circ \rho^2$ \\
$\Omega^\circ$ & $0$ & $2\tau^\circ$ & $-2\bar{\tau}^\circ$\\ 
$\rho^{\prime \circ}$ & $0$ & $0$ & $0$ \\
$\tau^\circ$ & $0$ & $0$ & 
                       $\half(\rho^{\prime \circ} + \bar{\rho}^{\prime \circ})\Omega^\circ + \half(\Psi^\circ - \bar{\Psi}^\circ)$\\
$\Psi^\circ$ & $0$ & $0$ & $0$ \\
\br
\end{tabular}
\end{indented}
\caption{Action of Held's operators 
in type D spacetimes 
of Kinnersley's Case II \cite{Held}, taken from \cite{HT2}.} \label{tab:1}
\end{table}
To automate calculations using Held's formalism, one requires furthermore the commutators between Held's operators. These can easily be obtained 
from \cite{Held} by specializing to type D, see also \cite{StewardWalker}. The first set of commutators
\begin{equation}
\begin{split}
[\thorn, \tilde{\thorn}'] 
&= \left(\rho \bar{\rho} (\rho + \bar{\rho}) \tau^\circ \bar{\tau}^\circ + \half \rho^2 \Psi^\circ + \half \bar{\rho}^2 \bar{\Psi}^\circ \right) \thorn,  \\
[\thorn, \tilde{\eth}] &= 
0, 
\end{split}
\end{equation}
expresses that if $\tilde{\eth}, \tilde{\eth}', \tilde{\thorn}'$ are applied to a GHP scalar annihilated by $\thorn$, then we obtain another such GHP scalar. The second set of commutators is
\begin{equation}
\begin{split}
[\tilde{\eth}, \tilde{\eth}'] 
&= \left( - \frac{\rho^{\prime \circ}}{\bar{\rho}} + \frac{\bar{\rho}^{\prime \circ}}{\rho} + \half \rho \Omega^\circ \Psi^\circ + \half \bar{\rho} \Omega^\circ \bar{\Psi}^\circ + \rho \bar{\rho} \Omega^\circ \tau^\circ \bar{\tau}^\circ \right) \thorn + \Omega^\circ \tilde{\thorn}' + p \rho^{\prime \circ} - q \bar{\rho}^{\prime \circ},\\
[\tilde{\thorn}', \tilde{\eth}'] 
&= \left( \rho \rho^{\prime \circ} \bar{\tau}^\circ - \bar{\rho} \bar{\rho}^{\prime \circ} \bar{\tau}^\circ - \half \rho^2 \bar{\rho} \bar{\tau}^\circ \Omega^\circ \Psi^\circ - \half \rho \bar{\rho}^2 \bar{\tau}^\circ \Omega^\circ \bar{\Psi}^\circ - \rho^2 \bar{\rho}^2 \Omega^\circ \tau^\circ \bar{\tau}^{\circ 2} \right) \thorn 
\end{split}
\end{equation}
%
%
%
when acting on GHP scalars of weights $\GHPw{p}{q}$.
In the Kinnersley tetrad and retarded KN coordinates, the Held GHP scalars and operators take the forms given in table \ref{tab:2}.
\begin{table}[t]\label{eq:Held coeffs in Kinn}
\renewcommand{\arraystretch}{1.5}
\begin{indented}
\item[]
\begin{tabular}{  c  c }
\br
Held quantity & Value in Kinnersley frame \\ \mr 
$\tau^\circ$ & $-\frac{1}{\sqrt{2}} ia\sin\theta$ \\
$\rho^{\prime\circ}$ & $-\half$ \\
$\Psi^\circ$ & $M$ \\[-.5em]
$\Omega^\circ$ & $-2ia\cos\theta$ \\[-.5em]
$\thorn$ & $\partial_r$ \\
$\tilde{\thorn}'$ & $\partial_u - \half \frac{r^2 - 2 M r + a^2}{r^2 + a^2 \cos^2 \theta} \partial_r$\\
$\tilde{\eth}$ & $-\frac{1}{\sqrt{2}} \left( \partial_\theta 
+ i \csc \theta \partial_{\varphi_*} -i a \sin \theta \partial_u- s \cot \theta \right)$\\
$\tilde{\eth}'$ & $-\frac{1}{\sqrt{2}} \left( \partial_{\theta} - i \csc \theta \partial_{\varphi_*} + ia\sin \theta \partial_u + s \cot \theta \right)$\\
\br
\end{tabular}
\end{indented}
\caption{Summary of Held's GHP quantities in Kerr in retarded KN coordinates and the Kinnersley frame, taken from \cite{HT2}. 
Held's operators $\tilde{\eth}$ and $\tilde{\eth}'$ are thereby equal to  ${}_s \mathcal{L}^\pm(a\omega)$ (see Sec. \ref{sec:modesoln})
when acting on a mode with dependence $e^{-i\omega t+im\varphi}$. 
They are also equal up to normalization to the operators appearing in the theory of spin-weighted harmonics, see e.g., \cite{Chandrasekhar:1984siy}.} \label{tab:2}
\end{table}
\section{Zero modes in TIRG \cite{HT2}}\label{app:D}

We use the term zero mode for algebraically special perturbations within the Kerr class in TIRG. Such a perturbation is gauge equivalent to the perturbation
\begin{equation}\label{dgZ}
h_{ab} = \left( \delta M\, \frac{\partial}{\partial M} g^{M,a}_{\mu\nu} + \delta a\, \frac{\partial}{\partial a}  g^{M,a}_{\mu\nu} \right) (\dd x^\mu)_a (\dd x^\nu)_b,
\end{equation}
with $g^{M,a}_{\mu\nu}$ the Kerr-family in some coordinate system. After transforming this perturbation to TIRG, we obtain the 2-dimensional space of zero modes, $\dot g_{ab}$. 
Its NP components are \cite{HT2}
\begin{equation}
\begin{split}
\dot g_{nn} =& \left( \rho + \bar{\rho} \right) g^\circ_M + 2 \rho \bar{\rho} \left( \rho + \bar{\rho} \right) \Re \left( \half \Omega^\circ \tilde{\eth}' g^\circ_a + \bar{\tau}^\circ g^\circ_a \right),\\
\dot g_{nm} =& \rho \left( \rho + \bar{\rho} \right) g^\circ_a,
\label{zeromode1}
\end{split}
\end{equation}
and 
\begin{equation}
\dot g_{mm} = \dot g_{nl} = \dot g_{ml} = \dot g_{ll} = \dot g_{\mb m} = 0,
\label{zeromode2}
\end{equation}
where $g^\circ_M = - 2 \rho^{\prime \circ} \Psi^\circ \frac{\delta M}{M} \circeq \left\{ -3, -3 \right\}$ and $g^\circ_a = - \tau^\circ \Psi^\circ \frac{\delta a}{a} \circeq \left\{ -2, -4 \right\}$.

\section{Bondi expansion in IRG}
\label{BondiIRG}

\subsection{Bondi gauge}

Recall that in the framework of conformal infinity (for details see e.g., \cite{GerochAS,Waldbook}), a spacetime $(\sM, g_{ab})$ is called asymptotically flat 
if it has a suitable conformal envelope $(\tilde \sM, \tilde g_{ab}, \Omega)$. The boundary of this envelope $\tilde \sM$ contains 
the null infinities $\sI^\pm \cong {\mathbb R} \times {\mathbb S}^2$. On $\sI^\pm$, we have $\Omega = 0$ but $\nabla_a \Omega \neq 0$, 
and $\tilde g_{ab} = \Omega^2 g_{ab}$ as well as $\Omega$ is required to be smooth across $\sI^\pm$. 
There is much freedom in the specific choice of $\Omega$. But one can show (see e.g., \cite{Hollands:2016oma}) 
that  it is possible to pick a coordinate system $(\tilde u, \tilde x^A)$
on e.g., $\sI^+$, and to pick $\Omega$ in such a way that, if we set $\tilde r = 1/\Omega$,
the metric components behave as 
\begin{subequations}
\label{Bondicord}
\begin{align}
&g_{\tilde r \tilde r} = g_{\tilde r A} = 0, \quad g_{\tilde r \tilde u} = 1,\\
&g_{AB} = -\tilde r^2 s_{AB} + O(\tilde r), \\
&g_{\tilde u A} = O(1),\\
&g_{\tilde u \tilde u} = -1/2 + O(\tilde r^{-1}),\\
&g_{AB} s^{AB} = O(1),
\end{align}
\end{subequations}
when $\tilde r  \to \infty$ at fixed $(\tilde u, \tilde x^A)$ (i.e., near $\sI^+$).
Here, $s_{AB} \dd \tilde x^A \dd \tilde x^B=\dd\tilde \theta^2 + \sin^2 \tilde \theta \, \dd \tilde \varphi^2$ is the standard metric on $\mathbb{S}^2$, and the result holds
provided that $g_{ab}$ satisfies the EE with a stress tensor vanishing sufficiently rapidly towards $\sI^+$, and provided $g_{ab}$
approaches a stationary metric at a sufficiently fast rate as\footnote{$\tilde u$ is chosen to increase towards the future.} $\tilde u \to -\infty$. 

We will refer to \eqref{Bondicord} as the ``Bondi gauge''\footnote{Technically speaking, this gauge should be called more properly ``conformal Gaussian null gauge'', because, while very similar, it is not identical with that usually referred to as ``Bondi'' or ``Bondi-Sachs gauge'' in most of the literature.}. For perturbation theory, one first needs to find this gauge i.e., 
the ``Bondi coordinates'' $(\tilde u, \tilde r, \tilde x^A)$, for the Kerr metric itself. Starting from BL coordinates, this can easily be accomplished using coordinates found by \cite{Fletcher}. Their coordinates are stated in terms of the auxiliary quantities 
\begin{subequations}
\begin{align}
B =& \ \left[ (r^2 + a^2)^2 - a^2\Delta \right]^{1/2} = r^2 + \half a^2 + O(r^{-1})\\
\alpha =& \ -a \int_r^\infty \left[ r^4 + a^2 r^2 + 2Ma^2 r\right]^{1/2} \dd r = -\frac{a}{r} + \frac{a^3}{6r^3} + O(r^{-4}).
\end{align}
\end{subequations}
At first, one sets $\tilde r = r$ and 
\begin{subequations}
\begin{align}
t =& \ \tilde u + a \frac{\tanh \alpha + \sin \tilde \theta}{1+(\tanh \alpha) \sin \tilde \theta} + \int^r \frac{B}{\Delta} \dd r,\\
\theta =& \ \arcsin\left[ \frac{\tanh \alpha + \sin \tilde \theta}{1+(\tanh \alpha) \sin \tilde \theta}\right] , \\
\varphi =& \ \tilde \varphi +\int^r \frac{2Mar}{B\Delta} \dd r ,
\end{align}
\end{subequations}
wherein $(t,r,\theta,\varphi)$ are the BL coordinates. As shown by \cite{Fletcher}, the Kerr metric is almost
in Bondi-gauge in the coordinates $(\tilde u, \tilde r, \tilde \theta, \tilde \varphi)$ with the only exception that (see \cite{Fletcher}, Eq. (48))
$
g_{\tilde r \tilde u} =  \frac{\Sigma(r,\theta)}{B(r)}
$
and not 
$g_{\tilde r \tilde u} = 1$ as one would like. This is easily remedied by introducing a different $\tilde r$ related to $r$ by
\begin{equation}
\tilde r =  \int^{\tilde r} \frac{\Sigma}{B} \dd r. 
\end{equation}
Having put the Kerr metric into Bondi gauge, we can say that a linear perturbation $h_{ab}$ is asymptotically flat 
at $\sI^+$ if all its Bondi coordinate $(\tilde u, \tilde r, \tilde \theta, \tilde \varphi)$ components have an asymptotic 
expansion in $1/\tilde r$ with coefficient functions that are smooth in $(\tilde u, \tilde \theta, \tilde \varphi)$, and 
furthermore, such that the linearization of the conditions \eqref{Bondicord} hold, i.e., 
\begin{subequations}
\label{Bondicordh}
\begin{align}
&h_{\tilde r \tilde r} = h_{\tilde r A} = h_{\tilde r \tilde u} = 0,\\
&h_{AB} = O(\tilde r), \\
&h_{\tilde u A} = O(1),\\
&h_{\tilde u \tilde u} = O(\tilde r^{-1}),\\
&h_{AB} s^{AB} = O(1).
\end{align}
\end{subequations}
To find the NP component version of these conditions, we consider the Kinnersley tetrad \eqref{eq:Kintet} in BL coordinates.
By expressing this NP tetrad in the Bondi coordinates $(\tilde u, \tilde r, \tilde \theta, \tilde \varphi)$ using 
the above definitions, one can see that the linearized Bondi gauge conditions \eqref{Bondicordh} imply the conditions 
\eqref{eq:hdec} given in the main text. 

As is well-known, in Bondi gauge, quantities related to gravitational radiation at $\sI^+$ have particularly simple expressions, see e.g., 
\cite{Hollands:2016oma}. For example, the linearized Bondi news tensor is given by
\begin{equation}
N_{AB} = -\lim_{\tilde r \to \infty} \frac{\partial_{\tilde u} h_{AB}}{\tilde r}.
\end{equation}
Assuming that $N_{AB}$ has a sufficient decay as $|\tilde u| \to \infty$, we can define the memory tensor
\begin{equation}
\label{eq:Memory}
\Delta_{AB} = \int_{-\infty}^\infty \dd \tilde u 
\, N_{AB}. 
\end{equation}
Finally, assuming that $T_{ab}$ decays sufficiently fast at $\sI^+$ (e.g. if it is supported for $r<r_{\rm max}$), 
the only contribution to the flux of energy through $\sI^+$ comes from the metric and is given in our units ($8\pi G=1$) at first non-trivial 
(quadratic) perturbation order by 
\begin{equation}
\label{eq:Flux}
F_{\sI^+} = -\frac{1}{4} \int_{-\infty}^\infty \dd \tilde u 
\int_{\mathbb{S}^2} \dd^2 \tilde x \, \sqrt{s} 
N_{AB} N^{AB}, 
\end{equation}
where indices are contracted with the round metric $s_{AB}$. By expressing the Bondi coordinate components in terms of NP components in 
the Kinnersley frame \eqref{eq:Kintet} via the above formulas, we may easily express $N_{AB}, \Delta_{AB}, F_{\sI^+}$ in terms of NP 
components of the asymptotic expansion coefficients of the metric perturbation as done in the main text.

\subsection{Bondi expansion}

The EE gives relationships between the various expansion coefficients of the metric perturbation in powers of $1/\tilde r$. These are traditionally analyzed in coordinate components in Bondi gauge \eqref{Bondicordh}, see e.g., \cite{Hollands:2016oma}.
In this paper, we require, however, the consequences of the EE to infer relationships between the various NP components. 
In view of the complicated form of the NP tetrad aligned with the principal null directions in Bondi coordinates
$(\tilde u, \tilde r, \tilde \theta, \tilde \varphi)$ described above, it is easiest to derive these relationships from scratch. 
To do this, we will make use of the fact that a metric perturbation $h_{ab}$ satisfying the Standing Decay Assumptions 
at $\sI^+$ in Sec. \ref{sec:gdot} may be put into IRG, $h_{ab}l^b=0$, by a gauge transformation preserving these conditions (see Sec. \ref{sec:gaugexi}). 

We therefore assume for the rest of this section that $h_{ab}$ is in IRG, and we carry out the asymptotic expansions of the type \eqref{asympt_exp} for its 
non-zero NP components and those of $T_{ab}$, the stress tensor.\footnote{These expansion are equivalent 
to corresponding expansion in $1/r$, because $O(\rho) = O(1/r)$ in the Kinnersley frame and retarded KN coordinates.} 
Then we substitute these expansions into the linearized EE in GHP form, see \ref{app:LinEinGHP},
replacing all GHP operators and optical scalars by their counterparts in Held's variant of GHP, 
and then expanding the resulting formulas in powers of $\rho$.  See  \ref{sec:Held} for 
a compilation of the prerequisite identities. The precise 
decay assumptions that we will work with at first are:
\begin{equation}
\label{hdecay}
h_{nn}, h_{nm}, h_{mm}, h_{m\bar m} = O(\rho),
\end{equation}
as well as 
\begin{equation}\label{eq:BondiExpansionforh}
T_{ll} = O(\rho^4), \quad
T_{lm},T_{mm},T_{m\bar m} = O(\rho^3), \quad
T_{ln},T_{nm} = O(\rho^2), \quad
T_{nn} = O(\rho),
\end{equation}
which, as we note, are slightly weaker than what we postulated in our Standing Decay Assumptions, \eqref{eq:hdec}, \eqref{Tdec} in Sec. \ref{sec:gdot}. Consequently, 
\eqref{hdecay} are also slightly weaker than the conditions imposed by \eqref{Bondicordh} (the condition on the trace is instead $h_{AB} s^{AB} = O(\tilde r)$).
Before working out the consequences of decay assumptions \eqref{hdecay}, \eqref{eq:BondiExpansionforh}, we note that the 
relation $\sfrac{1}{\bar{\rho}}-\sfrac{1}{\rho}=\Omega^\circ$ trivially yields
\begin{equation}
\begin{split}
\bar{h}^{1 \circ}_{m\bar{m}} =& \ h^{1 \circ}_{m\bar{m}},\\
\bar{h}^{2 \circ}_{m\bar{m}} =& \ h^{2 \circ}_{m\bar{m}} + \Omega^\circ h^{1 \circ}_{m\bar{m}},\\
\bar{h}^{3 \circ}_{m\bar{m}} =& \ h^{3 \circ}_{m\bar{m}} + 2 \Omega^\circ h^{2 \circ}_{m\bar{m}} + \Omega^{\circ 2} h^{1 \circ}_{m\bar{m}}.
\end{split}
\end{equation}
as well as
\begin{equation}
\begin{split}
\bar{h}^{1 \circ}_{nn} =& \ h^{1 \circ}_{nn},\\
\bar{h}^{2 \circ}_{nn} =& \  h^{2 \circ}_{nn} + \Omega^\circ h^{1 \circ}_{nn},
\end{split}
\end{equation}
which are implicitly used in many equations below.
The $ll$ NP component of the EE gives, at leading orders,
\begin{equation}
\begin{split}
T^{4 \circ}_{ll} =& \ 2 h^{2 \circ}_{m\bar{m}} + \Omega^\circ h^{1 \circ}_{m\bar{m}},\\
T^{5 \circ}_{ll} =& \ 6 h^{3 \circ}_{m\bar{m}} + 2 \Omega^\circ h^{2 \circ}_{m\bar{m}} - 2 \Omega^{\circ 2} h^{1 \circ}_{m\bar{m}}.\\
\end{split}
\end{equation}
The $ln$ NP component of the EE gives, at leading orders, 
\begin{equation}
\begin{split}
T^{2 \circ}_{ln} =& \tilde{\thorn}' h^{1 \circ}_{m\bar{m}},\\
T^{3 \circ}_{ln} =& \Re \left( \tilde{\eth}' \tilde{\eth} h^{1 \circ}_{m\bar{m}} - 2 \tilde{\eth}' h^{1 \circ}_{nm} - \tilde{\eth}' \tilde{\eth}' h^{1 \circ}_{mm} \right) - \Omega^\circ \tilde{\thorn}' h^{1 \circ}_{m\bar{m}}.
\end{split}
\end{equation}
The $lm$ NP component of the EE gives, at leading orders, 
\begin{equation}
\label{eq:Tlm}
\begin{split}
T^{3 \circ}_{lm} =& \ \half \tilde{\eth} h^{1 \circ}_{m\bar{m}} - \half \tilde{\eth}' h^{1 \circ}_{mm} - h^{1 \circ}_{nm},\\
T^{4 \circ}_{lm} =& \ \tilde{\eth} h^{2 \circ}_{m\bar{m}} - \tilde{\eth}' h^{2 \circ}_{mm} - \half  \Omega^\circ \tilde{\eth}' h^{1 \circ}_{mm} + \Omega^\circ h^{1 \circ}_{nm} + 2 \bar{\tau}^\circ h^{1 \circ}_{mm} + \tau^\circ h^{1 \circ}_{m\bar{m}}.
\end{split}
\end{equation}
The $nn$ NP component of the EE gives, at leading orders, 
\begin{equation}
\label{eq:Tnn}
\begin{split}
T^{1 \circ}_{nn} =& \ \tilde{\thorn}' \tilde{\thorn}' h^{1 \circ}_{m\bar{m}},\\
T^{2 \circ}_{nn} =& \ - \Re \left( 2 \tilde{\eth}' \tilde{\thorn}' h^{1 \circ}_{nm} + \tilde{\thorn}' h^{1 \circ}_{nn} \right) + \tilde{\thorn}' \tilde{\thorn}' h^{2 \circ}_{m\bar{m}}.
\end{split}
\end{equation}
The $nm$ NP component of the EE gives, at leading orders, 
\begin{equation}\label{eq:Tnm}
\begin{split}
T^{2 \circ}_{nm} =& \ \half \tilde{\eth} \tilde{\thorn}' h^{1 \circ}_{m\bar{m}} - \half \tilde{\eth}' \tilde{\thorn}' h^{1 \circ}_{mm} - \tilde{\thorn}' h^{1 \circ}_{nm},\\
T^{3 \circ}_{nm} =& \ - \half \tilde{\eth} \tilde{\eth} \bar{h}^{1 \circ}_{nm} + \half \tilde{\eth}' \tilde{\eth} h^{1 \circ}_{nm} + \half \tilde{\eth} \tilde{\thorn}' h^{2 \circ}_{m\bar{m}} - \half \Omega^\circ \tilde{\eth} \tilde{\thorn}' h^{1 \circ}_{m\bar{m}} - \half \tilde{\eth}' \tilde{\thorn}' h^{2 \circ}_{mm} + \half \tilde{\eth} h^{1 \circ}_{nn}\\
&+ \half \rho^{\prime \circ} \tilde{\eth} h^{1 \circ}_{m\bar{m}} - \half \rho^{\prime \circ} \tilde{\eth}' h^{1 \circ}_{mm} + \frac{5}{2} \tau^\circ \tilde{\thorn}' h^{1 \circ}_{m\bar{m}} + \frac{5}{2} \bar{\tau}^\circ \tilde{\thorn}' h^{1 \circ}_{mm} - \frac{3}{2} \tilde{\thorn}' h^{2 \circ}_{nm}.
\end{split}
\end{equation}
The $mm$ NP component of the EE gives, at leading orders, 
\begin{equation}
\begin{split}
T^{3 \circ}_{mm} =& \ - \tilde{\thorn}' \left(h^{2 \circ}_{mm} + \Omega^\circ h^{1 \circ}_{mm} \right),\\
T^{4 \circ}_{mm} =& \ - 2 \rho^{\prime \circ} h^{2 \circ}_{mm} + \frac{3}{2} \Psi^\circ h^{1 \circ}_{mm} - \half \bar{\Psi}^\circ h^{1 \circ}_{mm} - \rho^{\prime \circ} \Omega^\circ h^{1 \circ}_{mm} - \bar{\tau}^\circ \tilde{\eth} h^{1 \circ}_{mm} - \tau^\circ \tilde{\eth}' h^{1 \circ}_{mm}\\
&+ \tilde{\eth} h^{2 \circ}_{nm} + \Omega^\circ \tilde{\eth} h^{1 \circ}_{nm} - 2 \tilde{\thorn}' \left(h^{3 \circ}_{mm} + \half \Omega^\circ h^{2 \circ}_{mm} - \half \Omega^{\circ 2} h^{1 \circ}_{mm} \right).
\end{split}
\end{equation}
Finally, the $m\bar m$ NP component of the EE gives, at leading orders, 
\begin{equation}
\begin{split}
T^{3 \circ}_{m\bar{m}} =& \ \tilde{\thorn}' \left( h^{2 \circ}_{m\bar{m}} + \half \Omega^\circ h^{1 \circ}_{m\bar{m}} \right),\\
T^{4 \circ}_{m\bar{m}} =&\  \Re \left( \frac{}{} - 4 \bar{\tau}^\circ h^{1 \circ}_{nm} - \Psi^\circ h^{1 \circ}_{m\bar{m}} + 2 \tau^\circ \tilde{\eth}' h^{1 \circ}_{m\bar{m}} - 2 \Omega^\circ \tilde{\eth}' h^{1 \circ}_{nm} - \tilde{\eth}' h^{2 \circ}_{nm} \right) + h^{2 \circ}_{nn} + \half \Omega^\circ h^{1 \circ}_{nn}\\
&+ 2 \rho^{\prime \circ} h^{2 \circ}_{m\bar{m}} + \Omega^\circ \rho^{\prime \circ} h^{1 \circ}_{m\bar{m}} + \tilde{\thorn}' \left( 2 h^{3 \circ}_{m\bar{m}} + \half \Omega^\circ h^{2 \circ}_{m\bar{m}} - \half \Omega^{\circ 2} h^{1 \circ}_{m\bar{m}} \right).
\end{split}
\end{equation}
Let us now assume that the stress tensor satisfies even
\begin{equation}
T_{nn}=O(\rho^2), \quad
T_{ln} = O(\rho^3), \quad
T_{lm},
T_{ll} = O(\rho^5),
\end{equation}
and let us assume that the leading order (in $\rho$) $m\bar m$ NP components of the $m\bar m$ metric perturbation vanishes as we take the retarded KN coordinate $u \to -\infty$.

Then we can draw the following conclusions:
\begin{enumerate}
\item Using the $ln$ NP component of the linearized EE we get $\tilde{\thorn}' h^{1 \circ}_{m\bar m} = 0$. 
Then, since $\tilde{\thorn}'=\partial_u$ in retarded KN coordinates and the Kinnersley frame, we conclude that $h^{1 \circ}_{m\bar m} = 0$, as this quantity is constant in $u$ and vanishes for $u \to -\infty$. Using the $ll$ NP component of the linearized EE we get $h^{2 \circ}_{m\bar m} = 0$.

\item Using the $nn$ NP component of the linearized EE we then get
\begin{equation}
\label{eq:bondinn}
T^{2 \circ}_{nn} = -\tilde \thorn' \left(  \tilde{\eth}' h^{1 \circ}_{nm} + \tilde{\eth} \, \bar{h}^{1 \circ}_{nm} +  h^{1 \circ}_{nn} \right).
\end{equation}
This equation can be seen as the NP analog of the flux-balance law for the Bondi mass aspect $\mu^\circ \circeq \GHPw{-3}{-3}$ ($=-$ the expression in $(\dots)$ on the right side) i.e.,
\begin{equation}
\label{eq:bondimu}
\mu^\circ  =   \tilde{\eth}' h^{1 \circ}_{nm} + \tilde{\eth} \, \bar{h}^{1 \circ}_{nm} +  h^{1 \circ}_{nn} .
\end{equation}

\item Using the $lm$ NP component of the linearized EE  at order $\rho^3$ and expanding all quantities in terms of Held's operators, it follows
\begin{equation}
0=  \half \tilde \eth' h^{1 \circ}_{mm} + h^{1 \circ}_{nm} .
\end{equation}

\item Using the $nm$ and $nn$ NP component of the linearized EE at order $\rho^3$ and $\rho^2$, using the previous items, and developing all quantities in terms of Held's operators, it follows
\begin{equation}\label{eq:Jorigin}
-2 \tilde{\thorn}' T_{nm}^{3 \circ} + \tilde{\eth} T_{nn}^{2 \circ}
= \tilde{\thorn}' \left[-2 \tilde{\eth} \left(\tilde{\eth}' h^{1 \circ}_{nm} + h^{1 \circ}_{nn} \right) + 3 \tilde{\thorn}' \left(h^{2 \circ}_{nm} -\bar\tau^\circ {h}^{1 \circ}_{mm} + \Omega^\circ h^{1 \circ}_{nm} \right)   \right].
\end{equation}
By combining \eqref{eq:Jorigin} with \eqref{eq:bondinn}, one can furthermore obtain
\begin{equation}
\label{eq:fluxbalance}
\begin{split}
&\tilde{\thorn}' \left( 2\tau^\circ T^{2 \circ}_{nn} - 4 \rho^{\prime \circ} T^{3 \circ}_{nm} \right) \\
&= \left( \tilde{\eth} + \dfrac{\tau^\circ}{2 \rho^{\prime \circ}}  \tilde{\thorn}' \right)  \left[
 - 2 \rho^{\prime \circ} T^{2\circ}_{nn} - 4 \rho^{\prime \circ} \left(\tilde{\eth}' h^{1 \circ}_{nm} + h^{1 \circ}_{nn} \right) \right]\\
&+ \tilde{\thorn}' \tilde{\thorn}' \left[ 6 \rho^{\prime \circ} \left( h^{2 \circ}_{nm} -\bar\tau^\circ {h}^{1 \circ}_{mm} + \Omega^\circ h^{1 \circ}_{nm} \right) - \tau^\circ \left(\tilde{\eth}'h^{1 \circ}_{nm} + 3 \tilde{\eth} \bar{h}^{1 \circ}_{nm} + h_{nn}^{1 \circ} \right) \right]
\end{split}
\end{equation}
This equation gives the NP analog of the flux-balance law for the Bondi angular aspect as follows. 
The operator $\tilde{\eth} + \tfrac{\tau^\circ}{2 \rho^{\prime \circ}}  \tilde{\thorn}'$ is a spin-lowering operator on 
a 2-sphere $\mathbb{S}^2$ of constant $u,r$ in retarded KN coordinates and the Kinnersley frame, see table \ref{tab:2} in \ref{sec:Held}.
Therefore, taking the inner product with an appropriate spin $s=1$, $\ell = 1$ spin-weighted harmonic, 
the term in the second line vanishes when integrated over $\mathbb{S}^2$ after an integration by parts.
Then, we can basically remove one $\tilde \thorn'$ from both sides, as this is $\partial_u$ in retarded KN coordinates and the Kinnersley frame, provided that the terms fall of sufficiently say, as $u \to -\infty$. The left side then gives the flux of the Bondi angular momentum component
associated with this spin-weighted harmonic, and the right side i.e., the expression in $[\dots]$ in the term in the third line, gives the NP form of the 
Bondi angular moment aspect $j^\circ \circeq \GHPw{-6}{-4}$ i.e.,  
\begin{equation}
\label{eq:Bondij}
j^\circ= 6 \rho^{\prime \circ} \left( h^{2 \circ}_{nm} -\bar\tau^\circ {h}^{1 \circ}_{mm} + \Omega^\circ h^{1 \circ}_{nm} \right) - 
\tau^\circ \left(\tilde{\eth}'h^{1 \circ}_{nm} + 3 \tilde{\eth} \bar{h}^{1 \circ}_{nm} + h_{nn}^{1 \circ} \right).
\end{equation}
\item As a consequence of the previous items, we get
\begin{equation}
T_{ln} = O(\rho^4), \quad
T_{nm} = O(\rho^3).
\end{equation}

\end{enumerate}

\section{Gauge transformations preserving the TIRG \cite{Price,Price2}}
\label{sec:residualg}



In the main part of the paper, we require a characterization 
of the infinitesimal gauge transformations preserving the TIRG $h_{ll} = h_{ln} = h_{lm} = h_{m\bar{m}}=0$. Such an analysis has been carried out by \cite{Price,Price2}. It is recalled here---with minor modifications---for convenience, since some formulas of this analysis are required in the main text. The analysis is just as involved for a general non-accelerating type D spacetime as it is for Kerr, and the formulas below apply to this more general class of backgrounds. One requires all the NP components of the Lie derivative of such a metric with respect to a vector field $\xi^a$, given by
\begin{subequations}
\begin{align}
\label{Liexigll}
\left( \lie{\xi} g \right)_{ll} &= \ 2 \thorn \xi_l \\
\label{Liexigln}
\left( \lie{\xi} g \right)_{ln} &= \  \thorn \xi_n + \thorn' \xi_l + \left( \tau' + \bar{\tau} \right) \xi_m + \left( \tau + \bar{\tau}' \right) \xi_{\bar{m}}\\
\label{Liexiglm}
\left( \lie{\xi} g \right)_{lm} &=  \ \left( \eth + \bar{\tau}' \right) \xi_l + \left( \thorn + \bar{\rho} \right) \xi_{\bar m}\\
\label{Liexignn}
\left( \lie{\xi} g \right)_{nn} &= \ 2 \thorn' \xi_n\\
\label{Liexignm}
\left( \lie{\xi} g \right)_{nm} &= \ \left( \eth + \tau \right) \xi_n + \left( \thorn' + \rho' \right) \xi_m \\
\label{Liexigmm}
\left( \lie{\xi} g \right)_{mm} &= \ 2 \eth \xi_m \\
\label{Liexigmmb}
\left( \lie{\xi} g \right)_{m\bar{m}} &= \ \left( \rho'+ \bar{\rho}' \right) \xi_l + \left( \rho + \bar{\rho} \right) \xi_n + \eth' \xi_m - \eth \xi_{\bar{m}}.
\end{align}
\end{subequations}
For a residual vector field, we clearly must have $\left( \mathcal L_\xi g \right)_{ll} = \left( \lie{\xi} g \right)_{lm} = \left( \lie{\xi} g \right)_{ln} = \left( \lie{\xi} g \right)_{m\bar{m}} = 0$.
To analyze these conditions, it is very convenient to re-express the GHP quantities in terms of Held's, see \ref{sec:Held}. One finds
\begin{subequations}
\begin{align}
\label{Liexigll1}
\left( \lie{\xi} g \right)_{ll} 
=& \ 2 \thorn \xi_l,\\
\label{Liexigln1}
\left( \lie{\xi} g \right)_{ln} 
=& \ \Big[ \tilde{\thorn}' + \bar{\tau}^\circ \rho \bar{\rho} \tilde{\eth} + \tau^\circ \rho \bar{\rho} \tilde{\eth}' - \tau^\circ \bar{\tau}^\circ \rho \bar{\rho} \left( \rho + \bar{\rho} \right) - \half \left( \Psi^\circ \rho^2 + \bar{\Psi}^\circ \bar{\rho}^2 \right) \Big] \xi_l \nonumber \\
&+ \thorn \xi_n - \left( \bar{\tau}^\circ \rho^2 \bar{\rho} \Omega^\circ \right) \xi_m + \left( \tau^\circ \rho \bar{\rho}^2 \Omega^\circ \right) \xi_{\bar{m}},\\
\label{Liexiglm1}
\left( \lie{\xi} g \right)_{lm} 
=& \ \bar{\rho} \left( \tilde{\eth} - 2 \tau^\circ \bar{\rho} \right) \xi_l + \left( \thorn + \bar{\rho} \right) \xi_m,\\
\label{Liexignn1}
\left( \lie{\xi} g \right)_{nn} 
=& \ 2 \Big[ \tilde{\thorn}' + \bar{\tau}^\circ \rho \bar{\rho} \tilde{\eth} + \tau^\circ \rho \bar{\rho} \tilde{\eth}' + \tau^\circ \bar{\tau}^\circ \rho \bar{\rho} \left( \rho + \bar{\rho} \right) + \half \left( \Psi^\circ \rho^2 + \bar{\Psi}^\circ \bar{\rho}^2 \right) \Big] \xi_n,\\
\label{Liexignm1}
\left( \lie{\xi} g \right)_{nm} 
=& \ \bar{\rho} \Big[ \tilde{\eth} + \tau^\circ (\rho + \bar{\rho}) \Big] \xi_n\\
&+ \Big[ \tilde{\thorn}' + \bar{\tau}^\circ \rho \bar{\rho} \tilde{\eth} + \tau^\circ \rho \bar{\rho} \tilde{\eth}' + \rho \bar{\rho}^2 \tau^\circ \bar{\tau}^\circ - 2 \rho^2 \bar{\rho} \tau^\circ \bar{\tau}^\circ + \half \bar{\Psi}^\circ \bar{\rho}^2 + {\rho'}^\circ \bar{\rho}\nonumber \\
&- \Psi^\circ \rho^2 + \half \rho \bar{\rho} \left( \rho^{\prime \circ} + \bar{\rho}^{\prime \circ} \right) \Omega^\circ - \half  \rho \bar{\rho} \bar{\Psi}^\circ \Big] \xi_m,\nonumber
\\
\label{Liexigmm1}
\left( \lie{\xi} g \right)_{mm} 
=& \ 2 \bar{\rho} \left( \tilde{\eth} + \bar{\rho} \tau^\circ \right) \xi_m,\\
\label{Liexigmmb1}
\left( \lie{\xi} g \right)_{m\bar{m}} 
=& \ \left[ {\rho'}^\circ \bar{\rho} + \bar{\rho}^{\prime \circ} \rho - \half \left( \Psi^\circ \rho^2 + \bar{\Psi}^\circ \bar{\rho}^2 \right) - \half \left( \Psi^\circ + \bar{\Psi}^\circ \right) \rho \bar{\rho} - \tau^\circ \bar{\tau}^\circ \rho \bar{\rho} ( \rho + \bar{\rho} ) \right] \xi_l\nonumber\\
&+ \left( \rho + \bar{\rho} \right) \xi_n + \rho \left( \tilde{\eth}' - \bar{\tau}^\circ \rho \right) \xi_m + \bar{\rho} \left( \tilde{\eth} - \tau^\circ \bar{\rho} \right) \xi_{\bar{m}}. 
\end{align}
\end{subequations}
Given these expressions, if we demand now that $\left( \lie{\xi} g_{ab} \right) l^a = 0$, as is necessary for a residual gauge vector field, this leads to \cite{Price}:
\begin{equation}\label{eq:residual1}
\begin{split}
\xi_l =&\  \xi^\circ_l,\\
\xi_n =&\  \xi^\circ_n + \half \left( \frac{1}{\rho} + \frac{1}{\bar{\rho}} \right) \tilde{\thorn}' \xi^\circ_l + \rho \bar{\rho} \tau^\circ \bar{\tau}^\circ \xi^\circ_l + \half \left( \Psi^\circ \rho + \bar{\Psi}^\circ \bar{\rho} \right) \xi^\circ_l + \rho \bar{\tau}^\circ \Omega^\circ \xi^\circ_m\\
-& \bar{\rho} \tau^\circ \Omega^\circ \xi^\circ_{\bar{m}} - \rho \bar{\tau}^\circ \tilde{\eth} \xi^\circ_l - \bar{\rho} \tau^\circ \tilde{\eth}' \xi^\circ_l,\\
\xi_m =& \ \frac{1}{\bar{\rho}} \xi^\circ_m + \bar{\rho} \tau^\circ \xi^\circ_l - \tilde{\eth} \xi^\circ_l,
\end{split}
\end{equation}
for certain $\xi^\circ_l \circeq \GHPw{1}{1}$, $\xi^\circ_n \circeq \GHPw{-1}{-1}$ and $\xi^\circ_m \circeq \GHPw{2}{0}$, where we recall that a circle 
over a GHP scalar indicates that it is annihilated by $\th$. 

A residual gauge vector field must additionally satisfy $\left( \lie{\xi} g_{ab} \right) m^a \bar{m}^b = 0$. To analyze this condition, we 
use the formula for the Lie-derivate of $g_{ab}$ with respect to a general vector field $\xi^a$ of the form \eqref{eq:residual1}. This leads to 
\begin{equation}\label{HeldLiexigmmb}
\begin{split}
\left( \lie{\xi} g \right)_{m\bar{m}} =& \ \half \left( \frac{1}{\rho} + \frac{1}{\bar{\rho}} \right)^2 \rho \bar{\rho} \tilde{\thorn}' \xi^\circ_l + \left( {\rho'}^\circ \bar{\rho} + \bar{\rho}^{\prime \circ} \rho \right) \xi^\circ_l - \bar{\rho} \tilde{\eth} \tilde{\eth}' \xi^\circ_l - \rho \tilde{\eth}' \tilde{\eth} \xi^\circ_l\\
&+ \left( \rho + \bar{\rho} \right) \xi^\circ_n + \frac{\rho}{\bar{\rho}} \tilde{\eth}' \xi^\circ_m - \left( \rho + \bar{\rho} \right) \bar{\tau}^\circ \xi^\circ_m + \frac{\bar{\rho}}{\rho} \tilde{\eth} \xi^\circ_{\bar{m}} - \left( \rho + \bar{\rho} \right) \tau^\circ \xi^\circ_{\bar{m}} = 0.
\end{split}
\end{equation}
Expanding this equation in powers of $\rho, \bar \rho$ using \ref{sec:Held}, and then equating the coefficients annihilated by $\th$ to zero, one finds \cite{Price,Price2}:

\begin{theorem}
\label{thm:resgauge}
A gauge vector field $\xi^a$ on a non-accelerating type D spacetime preserves the TIRG if and only if $\xi^a$ can be written in the form \eqref{HeldLiexigmmb} with  GHP scalars $\xi^\circ_l \circeq \GHPw{1}{1}$, $\xi^\circ_n \circeq \GHPw{-1}{-1}$ and $\xi^\circ_m \circeq \GHPw{2}{0}$ satisfying
\begin{equation}
\label{eq:residcond}
\begin{split}
\tilde{\thorn}' \xi_l^\circ &= -\half \left(\tilde{\eth}' \xi_m^\circ + \tilde{\eth} \; \xi_{\bar{m}}^\circ \right),\\
\xi_n^\circ &= \half \left(\tilde{\eth}' \tilde{\eth} + \tilde{\eth} \;\tilde{\eth}' - \rho^{\prime\circ} -  \bar{\rho}^{\prime\circ} \right)\xi^\circ_l - \half \Omega^\circ \left(
\tilde{\eth}' \xi_m^\circ - \tilde{\eth} \;\xi_{\bar{m}}^\circ \right) . 
\end{split}
\end{equation}    
\end{theorem}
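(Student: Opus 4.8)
The plan is to prove both directions of the equivalence at once by reducing the TIRG-preservation conditions to a triangular system of transport equations along the principal null direction $l^a$, followed by a single residual constraint. Since TIRG means $h_{ll}=h_{ln}=h_{lm}=h_{m\bar{m}}=0$, a gauge vector field $\xi^a$ preserves it precisely when the four NP components $(\lie{\xi} g)_{ll}$, $(\lie{\xi} g)_{ln}$, $(\lie{\xi} g)_{lm}$ and $(\lie{\xi} g)_{m\bar{m}}$ vanish; the first three are exactly the statement $(\lie{\xi} g_{ab})l^a=0$, while the fourth is the residual trace condition $g^{ab}(\lie{\xi} g)_{ab}=0$ once $(\lie{\xi}g)_{ln}=0$ is imposed. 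The strategy is therefore to first integrate $(\lie{\xi} g_{ab})l^a=0$ to obtain the representation \eqref{eq:residual1}, and then impose $(\lie{\xi} g)_{m\bar{m}}=0$ to extract \eqref{eq:residcond}.

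First I would solve the three conditions $(\lie{\xi} g_{ab})l^a=0$ using the Held-form expressions \eqref{Liexigll1}, \eqref{Liexiglm1}, \eqref{Liexigln1}. The $ll$ condition reads $2\thorn\xi_l=0$, so $\xi_l=\xi_l^\circ$ with $\xi_l^\circ\circeq\GHPw{1}{1}$ annihilated by $\thorn$. Substituting into the $lm$ condition \eqref{Liexiglm1} gives the first-order ODE $(\thorn+\bar{\rho})\xi_m=-\bar{\rho}(\tilde{\eth}-2\tau^\circ\bar{\rho})\xi_l^\circ$ along the orbits of $l^a$; using $\thorn(1/\bar{\rho})=-1$ (so that $\xi_m^\circ/\bar{\rho}$ spans the homogeneous kernel) together with the Held identities of \ref{sec:Held} to integrate the inhomogeneity, I recover the displayed formula for $\xi_m$ in \eqref{eq:residual1}, with integration datum $\xi_m^\circ\circeq\GHPw{2}{0}$. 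Feeding $\xi_l,\xi_m,\xi_{\bar{m}}$ into the $ln$ condition \eqref{Liexigln1} yields a further first-order transport equation for $\xi_n$, whose integration produces the formula for $\xi_n$ in \eqref{eq:residual1} with datum $\xi_n^\circ\circeq\GHPw{-1}{-1}$. Uniqueness of the integration (once the $\thorn$-annihilated data are fixed) shows that every $\xi^a$ with $(\lie{\xi} g_{ab})l^a=0$ is of the form \eqref{eq:residual1}, and a direct check confirms the converse.

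It then remains to impose the last TIRG condition $(\lie{\xi} g)_{m\bar{m}}=0$. Substituting the representation \eqref{eq:residual1} into \eqref{Liexigmmb1} and simplifying with the Held relations produces \eqref{HeldLiexigmmb}. This is a single scalar equation whose coefficients are specific polynomials in $\rho$ and $\bar{\rho}$ multiplying expressions in $\xi_l^\circ,\xi_n^\circ,\xi_m^\circ$ that are annihilated by $\thorn$. The final step is to expand \eqref{HeldLiexigmmb} in powers of $\rho,\bar{\rho}$ and, since the distinct monomials $\rho^a\bar{\rho}^b$ are linearly independent along each $l^a$-orbit, to set each coefficient separately to zero; the claim is that this system collapses to exactly the two relations \eqref{eq:residcond}, namely the $\tilde{\thorn}'\xi_l^\circ$ equation and the algebraic formula for $\xi_n^\circ$, with all remaining coefficients following automatically. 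Running the argument backwards establishes sufficiency, completing both directions of the equivalence.

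The hard part will be this last step: verifying that the $\rho,\bar{\rho}$-expansion of \eqref{HeldLiexigmmb} reduces to precisely \eqref{eq:residcond} and generates no additional independent constraints at higher orders. This requires careful bookkeeping with the Held expansions of $\rho',\tau,\tau',\Psi_2$, the action of $\tilde{\thorn}',\tilde{\eth},\tilde{\eth}'$ on $\rho,\Omega^\circ,\rho^{\prime\circ},\tau^\circ,\Psi^\circ$, and the Held commutators, all recalled in \ref{sec:Held}; the reality structure (that $\xi_l^\circ$ has the real weight $\GHPw{1}{1}$, and so on) is what allows the many terms to recombine into the compact expressions appearing in \eqref{eq:residcond}.
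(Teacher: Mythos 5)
Your proposal is correct and follows essentially the same route as the paper: solve $(\lie{\xi}g_{ab})l^a=0$ as a triangular transport system along $l^a$ to obtain the representation \eqref{eq:residual1}, then substitute into $(\lie{\xi}g)_{m\bar{m}}=0$ to get \eqref{HeldLiexigmmb}, expand in powers of $\rho,\bar{\rho}$ via Held's formalism, and equate the $\thorn$-annihilated coefficients to zero, which yields exactly the two relations \eqref{eq:residcond}. This is precisely the argument of \cite{Price,Price2} that the paper reproduces, including the final coefficient-matching step you correctly flag as the main computational burden.
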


\section{Linearized Einstein operator in GHP \cite{Price,Green:2019nam}}
\label{app:LinEinGHP}

The linearized Einstein operator $\E$ in a Petrov type II spacetime was given in GHP form
by \cite{Price}, and by \cite{Green:2019nam}, correcting some typos. Specializing to non-accelerating type D, 
we get the following expressions.
\begin{equation}
\begin{split}
(\E h)_{ll} &= \Big[(\eth'-\tau')(\eth-\bar{\tau}') + \rho(\thorn'+\rho'-\bar{\rho}') -
(\thorn-\rho)\rho' + \Psi_2\Big]h_{ll}\\
&\pheq + \Big[-(\rho+\bar{\rho})(\thorn+\rho+\bar{\rho}) +4\rho\bar{\rho}\Big]h_{ln}\\
&\pheq + \Big[-(\thorn-3\bar{\rho})(\eth'-\tau'+\bar{\tau}) + \bar{\tau}\thorn-\bar{\rho}\eth'\Big]h_{lm}\\
&\pheq + \Big[-(\thorn-3\rho)(\eth+\tau-\bar{\tau}') + \tau\thorn-\rho\eth\Big]h_{l\bar{m}}\\
&\pheq + \Big[\thorn(\thorn-\rho-\bar{\rho})+2\rho\bar{\rho}\Big]h_{m\bar{m}},
\end{split}
\label{eqn:Ell_app}
\end{equation}

\begin{equation}
\begin{split}
(\E h)_{nn} 
&= \pheq  \Big[(\eth'-\bar{\tau})(\eth-\tau) +\bar{\rho}'(\thorn-\rho+\bar{\rho}) -
(\thorn'-\bar{\rho}')\bar{\rho} + {\bar\Psi}_2\Big]h_{nn}\\
&\pheq + \Big[-(\rho'+\bar{\rho}')(\thorn'+\rho'+\bar{\rho}') + 4\rho'\bar{\rho}'
 \Big]h_{ln}\\
&\pheq + \Big[-(\thorn'-3\rho')(\eth'+\tau'-\bar{\tau}) + \tau'\thorn' - \rho'\eth'\Big]h_{nm}\\
&\pheq + \Big[-(\thorn'-3\bar{\rho}')(\eth+\bar{\tau}'-\tau) + \bar{\tau}'\thorn' - \bar{\rho}'\eth\Big]h_{n\bar{m}}\\
&\pheq + \Big[\thorn'(\thorn'-\rho'-\bar{\rho}') + 2\rho'\bar{\rho}'\Big]h_{m\bar{m}},
\end{split}
\end{equation}

\begin{equation}
\begin{split}
(\E h)_{ln} &= \half\Big[\rho'(\thorn'-\rho') + \bar{\rho}'(\thorn'-\bar{\rho}')\Big]h_{ll}\\
&\pheq + \half\Big[\rho(\thorn-\rho) + \bar{\rho}(\thorn-\bar{\rho})\Big]h_{nn}\\
&\pheq + \half\Big[-(\eth'+\tau'+\bar{\tau})(\eth-\tau-\bar{\tau}') -
(\eth'\eth+3\tau\tau'+3\bar{\tau}\bar{\tau}') + 2(\bar{\tau}+\tau')\eth\\
&\pheq\pheq+ (\thorn-2\bar{\rho})\rho' +(\thorn'-2\rho')\bar{\rho} - \bar{\rho}'(\thorn+\rho) -
\rho(\thorn'+\bar{\rho}') -\Psi_2 -{\bar\Psi}_2\Big]h_{ln}\\
&\pheq + \half\Big[(\thorn'-2\bar{\rho}')(\eth'-\tau') + \bar{\tau}(\thorn'+\rho'+\bar{\rho}')
-\tau'(\thorn'-\rho')\\
&\pheq \pheq-(2\eth'-\bar{\tau})\bar{\rho}'\Big]h_{lm}\\
&\pheq + \half\Big[(\thorn'-2\rho')(\eth-\bar{\tau}') + \tau(\thorn'+\bar{\rho}'+\rho')
-\bar{\tau}'(\thorn'-\bar{\rho}')\\
&\pheq \pheq-(2\eth-\tau)\rho'\Big]h_{l\bar{m}}\\
&\pheq + \half\Big[(\thorn-2\rho)(\eth'-\bar{\tau}) + (\tau'+\bar{\tau})(\thorn+\bar{\rho})
-2(\eth'-\tau')\rho-2\bar{\tau}\thorn\Big]h_{nm}\\
&\pheq + \half\Big[(\thorn-2\bar{\rho})(\eth-\tau) + (\bar{\tau}'+\tau)(\thorn+\rho)
-2(\eth-\bar{\tau}')\bar{\rho}-2\tau\thorn\Big]h_{n\bar{m}}\\
&\pheq + \half\Big[-(\eth'-\bar{\tau})(\eth'-\tau') +
\bar{\tau}(\bar{\tau}-\tau')\Big]h_{mm}\\
&\pheq + \half\Big[-(\eth-\tau)(\eth-\bar{\tau}') +
\tau(\tau-\bar{\tau}')\Big]h_{\bar{m}\bar{m}}\\
&\pheq + \half\Big[(\eth'+\tau'-\bar{\tau})(\eth-\tau+\bar{\tau}') +
(\eth'\eth-\tau\tau'-\bar{\tau}\bar{\tau}'+\tau\bar{\tau}) - (\Psi_2+{\bar\Psi}_2)\\
&\pheq \pheq+(\thorn'-2\rho')\bar{\rho} + (\thorn-2\bar{\rho})\rho' +\rho(3\thorn'-2\bar{\rho}')
+\bar{\rho}'(3\thorn-2\rho)\\
&\pheq\pheq -2\thorn'\thorn + 2\rho\bar{\rho}' +2\eth'(\tau)-\tau\bar{\tau}\Big]h_{m\bar{m}},
\end{split}
\label{eqn:Eln_app}
\end{equation}

\begin{equation}
\begin{split}
(\E h)_{lm} &= \half\Big[(\thorn'-\rho')(\eth-\bar{\tau}')
+(\eth-\tau-2\bar{\tau}')\bar{\rho}' -(\eth-\tau)\rho' +\tau(\thorn'+\rho')\Big]h_{ll}\\
&\pheq +\half\Big[-(\thorn-\rho+\bar{\rho})(\eth+\tau-\bar{\tau}') -
(\eth-3\tau+\bar{\tau}')\bar{\rho} - 2\rho\bar{\tau}'\Big]h_{ln}\\
&\pheq +\half\Big[-(\thorn'+\bar{\rho}')(\thorn-2\bar{\rho}) + \rho(\thorn'+2\rho'-2\bar{\rho}') -
4\rho'\bar{\rho} +2\Psi_2\\
&\pheq\pheq + (\eth'+\bar{\tau})(\eth-2\bar{\tau}') - \tau(\eth'+\tau'-2\bar{\tau})
-\tau'(\tau-4\bar{\tau}')\Big]h_{lm}\\
&\pheq + \half\Big[-\eth(\eth-2\tau)  -
2\bar{\tau}'(\tau-\bar{\tau}')\Big]h_{l\bar{m}}\\
&\pheq + \half\Big[\thorn(\thorn-2\rho) + 2\bar{\rho}(\rho-\bar{\rho})\Big]h_{nm}\\
&\pheq + \half\Big[-(\thorn-\bar{\rho})(\eth'-\tau'+\bar{\tau}) + 2\bar{\tau}\bar{\rho}\Big]h_{mm}\\
&\pheq + \half\Big[(\thorn+\rho-\bar{\rho})(\eth+\bar{\tau}'-\tau) + 2\bar{\tau}'(\thorn-2\rho) -
(\eth-\tau-\bar{\tau}')\bar{\rho} +2\rho\tau\Big]h_{m\bar{m}},
\end{split}
\label{eqn:Elm_app}
\end{equation}

\begin{equation}
\begin{split}
(\E h)_{n\bar{m}} &= 
\pheq  \half\Big[(\thorn-\rho+\bar{\rho})(\eth'-\bar{\tau}) - (\eth'-2\tau'+\bar{\tau})\rho +
\tau'(\thorn-\bar{\rho})\Big]h_{nn}\\
&\pheq + \half\Big[(-(\thorn'-\rho'+\bar{\rho}')(\eth'+\tau'-\bar{\tau}) -
(\eth'-3\tau'+\bar{\tau})\bar{\rho}' -2\rho'\bar{\tau}\Big]h_{ln}\\
&\pheq +\half\Big[(\thorn'(\thorn'-2\rho') +2\bar{\rho}'(\rho'-\bar{\rho}')\Big]h_{l\bar{m}}\\
&\pheq +\half\Big[-\eth'(\eth'-2\tau') -
2\bar{\tau}(\tau'-\bar{\tau})\Big]h_{nm}\\
&\pheq +\half\Big[-(\thorn+\bar{\rho})(\thorn'-2\bar{\rho}') + \rho'(\thorn+2\rho-2\bar{\rho}) -
4\rho\bar{\rho}' + 2\Psi_2\\
&\pheq\pheq + (\eth+\bar{\tau}')(\eth'-2\bar{\tau}) - \tau'(\eth+\tau-2\bar{\tau}') -
\tau(\tau'-4\bar{\tau})\Big]h_{n\bar{m}}\\
&\pheq +\half\Big[-(\thorn'-\bar{\rho}')(\eth-\tau+\bar{\tau}') +2\bar{\tau}'\bar{\rho}'\Big]h_{\bar{m}\bar{m}}\\
&\pheq +\half\Big[(\thorn'+\rho'-\bar{\rho}')(\eth'-\tau'+\bar{\tau}) + 2\bar{\tau}(\thorn'-2\rho')
-(\eth'-\tau'-\bar{\tau})\bar{\rho}'+2\rho'\tau'\Big]h_{m\bar{m}} \\
\end{split}
\end{equation}

\begin{equation}
\begin{split}
(\E h)_{mm} &= 
\pheq \Big[-\eth(\eth-\tau-\bar{\tau}') -2\tau\bar{\tau}'\Big]h_{ln}\\
&\pheq +\Big[(\thorn'-\rho')(\eth-\bar{\tau}') - (\eth-\tau-\bar{\tau}')\rho' +
\tau(\thorn'+\rho'-\bar{\rho}') -\bar{\tau}'(\thorn+\bar{\rho}')\Big]h_{lm}\\
&\pheq +\Big[(\thorn-\bar{\rho})(\eth-\tau) - (\eth-\tau-\bar{\tau}')\bar{\rho} -\tau(\thorn+\rho)
+ \bar{\tau}'(\thorn-\rho+\bar{\rho})\Big]h_{nm}\\
&\pheq +\Big[-(\thorn'-\rho')(\thorn-\bar{\rho}) + (\eth-\tau)\tau' -
\tau(\eth'+\tau'-\bar{\tau}) + \Psi_2\Big]h_{mm}\\
&\pheq +\Big[(\tau+\bar{\tau}')\eth +
(\tau-\bar{\tau}')^2\Big]h_{m\bar{m}},
\end{split}
\end{equation}

\begin{equation}
\begin{split}
(\E h)_{m\bar{m}} &= \half\Big[\thorn'(\thorn'-\rho'-\bar{\rho}') + 2\rho'\bar{\rho}' \Big]h_{ll}\\
&\pheq +\half\Big[\thorn(\thorn-\rho-\bar{\rho}) + 2\rho\bar{\rho}\Big]h_{nn}\\
&\pheq +\half\Big[-(\thorn'+\rho'-\bar{\rho}')(\thorn-\rho+\bar{\rho}) -\thorn'(\thorn+\rho)
+\rho(\thorn'+\rho'-\bar{\rho}') -{\bar\Psi}_2\\
&\pheq\pheq+(\eth'-\bar{\tau})(\eth-\tau-\bar{\tau}') + \eth'\eth -
(\eth-2\bar{\tau}')\tau' -
\bar{\tau}(2\eth+\bar{\tau}')\\
&\pheq\pheq-2\tau(\eth'-\bar{\tau})+2\tau'\bar{\tau}'+\bar{\rho}\bar{\rho}'\Big]h_{ln}\\
&\pheq +\half\Big[-(\thorn'-2\rho')(\eth'-2\bar{\tau}) + \bar{\tau}(\thorn'+2\rho'-2\bar{\rho}')-2\tau'\bar{\rho}'\Big]h_{lm}\\
&\pheq +\half\Big[-(\thorn'-2\bar{\rho}')(\eth-2\tau) + \tau(\thorn'+2\bar{\rho}'-2\rho') -2\bar{\tau}'\rho'\Big]h_{l\bar{m}}\\
&\pheq +\half\Big[-(\thorn-2\bar{\rho})(\eth'-2\tau') + \tau'(\thorn-2\rho-2\bar{\rho}) -
2\rho\bar{\tau}+4\tau'\bar{\rho}\Big]h_{nm}\\
&\pheq +\half\Big[-(\thorn-2\rho)(\eth-2\bar{\tau}') + \bar{\tau}'(\thorn-2\bar{\rho}-2\rho) -
2\bar{\rho}\tau+4\bar{\tau}'\rho\Big]h_{n\bar{m}}\\
&\pheq
+\half\Big[-\bar{\tau}(\eth'-\bar{\tau})-\tau'(\eth'-\tau')
\Big]h_{mm}\\
&\pheq +\half\Big[-\tau(\eth-\tau)-\bar{\tau}'(\eth-\bar{\tau}')
\Big]h_{\bar{m}\bar{m}}\\
&\pheq + \half\Big[2\thorn'\thorn -(\thorn'-\bar{\rho}')\bar{\rho} - (\thorn-\rho)\rho'
-\rho(\thorn'-\rho'+\bar{\rho}') - \bar{\rho}'(\thorn+\rho-\bar{\rho})\\
&\pheq\pheq -(\eth'-2\tau')\bar{\tau}' + \tau(\eth'+2\bar{\tau}) -
\tau'(\eth-\bar{\tau}') + \bar{\tau}(\eth+\tau) -\eth'(\tau)\\
&\pheq\pheq -\Psi_2-{\bar\Psi}_2\Big]h_{m\bar{m}}.
\end{split}
\end{equation}

\section{Teukolsky operators in GHP \cite{Price,StewardWalker}}
\label{sec:appO}

\noindent
{\bf Spin-2:}
$\O$ takes a $\GHPw{4}{0}$ GHP scalar to another $\GHPw{4}{0}$ GHP scalar and is 
\begin{equation}\label{eq:O}
  \O \eta = 2 \Big[ (\thorn - 4\rho - \bar \rho)(\thorn'-\rho') - (\eth - 4\tau - \bar \tau')(\eth' - \tau') -3\Psi_2 \Big] \eta.
 \end{equation}
  The adjoint Teukolsky operator $\O^\dagger$ takes a $\GHPw{-4}{0}$ GHP scalar to another $\GHPw{-4}{0}$ GHP scalar and is 
\begin{equation}
  \label{eq:Odag}
  \O^\dagger \Phi =  2\Big[ (\thorn'-\bar \rho')(\thorn + 3 \rho) - (\eth' - \bar \tau)(\eth +3\tau) -3\Psi_2 \Big] \Phi.
\end{equation}

\noindent
{\bf Spin-1:}
$\O$ takes a $\GHPw{2}{0}$ GHP scalar to another $\GHPw{2}{0}$ GHP scalar and is 
\begin{subequations}
\begin{align}
\label{Odef}
\mathcal{O} \eta =&2  \Big[ \left( \thorn - \bar{\rho} - 2\rho \right) \left( \thorn' -  \rho' \right) - \left( \eth - \bar{\tau}' - 2\tau \right) \left( \eth' - \tau' \right) \Big] \eta.
\end{align}
\end{subequations}
  The adjoint Teukolsky operator $\O^\dagger$ takes a $\GHPw{-2}{0}$ GHP scalar to another $\GHPw{-2}{0}$ GHP scalar and is 
\begin{subequations}
\begin{align}
\label{Odaggerdef}
\mathcal{O}^\dagger \Phi =&2  \Big[ \left( \thorn' - \bar{\rho}' \right) \left( \thorn + \rho \right) - \left( \eth' - \bar{\tau} \right) \left( \eth + \tau \right) \Big] \Phi.
\end{align}
\end{subequations}

\section{Reconstruction operators in GHP \cite{Price,StewardWalker}}
\label{app:ST}
{\bf Spin-2:}
The operators $\S, \T$ appearing in Teukolsky's master equation for spin-2 perturbations act on symmetric
rank-2 covariant tensor fields and give GHP scalars of weights
$\GHPw{4}{0}$,
\begin{subequations}
  \begin{align}\label{eq:S}
    \S T ={}& (\eth - \bar \tau' - 4\tau)\Big[(\thorn - 2\bar \rho) T_{lm} - (\eth - \bar \tau') T_{ll} \Big] \nonumber \\
            & + (\thorn - \bar \rho - 4\rho)\Big[(\eth - 2\bar \tau') T_{lm} - (\thorn - \bar \rho) T_{mm} \Big], \\
    \label{eq:T}\T h ={}& \half(\eth-\bar \tau')(\eth-\bar \tau') h_{ll} + \half (\thorn - \bar \rho)(\thorn - \bar \rho) h_{mm} \nonumber \\
            &-\half\Big[(\thorn- \bar \rho)(\eth-2\bar \tau') + (\eth -\bar \tau')(\thorn -2\bar \rho)\Big] h_{(lm)}.
  \end{align}
\end{subequations}
The adjoint operators $\T^\dagger, \S^\dagger$ act on GHP scalars of weights $\GHPw{-4}{0}$ and produce a contravariant symmetric 
rank-2 tensor, 
\begin{subequations}
\begin{align}
\label{eq:Tdag}
  \left(\T^\dagger \eta\right)^{ab} ={}
  & \half l^a l^b(\eth-\tau)(\eth-\tau) \eta + \half m^a m^b (\thorn-\rho)(\thorn-\rho) \eta \nonumber\\
  & - \half l^{(a} m^{b)}\Big[(\eth + \bar{\tau}' -\tau)(\thorn - \rho) + (\thorn - \rho + \bar \rho)(\eth-\tau) \Big] \eta, \\
\label{eq:Sdag}
  \left(\S^\dagger \Phi \right)^{ab} ={}
  & - l^a l^b(\eth-\tau)(\eth+3\tau) \Phi - m^a m^b (\thorn - \rho)(\thorn + 3\rho) \Phi \nonumber \\
  & + l^{(a} m^{b)}\Big[(\thorn - \rho + \bar \rho)(\eth + 3\tau) + (\eth-\tau+\bar \tau')(\thorn +3\rho) \Big] \Phi . 
\end{align}
\end{subequations}

\noindent
{\bf Spin-1:}
The operators $\S, \T$ appearing in Teukolsky's master equation for spin-1 perturbations act on symmetric
rank-1 covariant tensor fields and give GHP scalars of weights
$\GHPw{2}{0}$,
\begin{subequations}
\begin{align}
\label{Tdef}
\mathcal{T} A =& \left( \eth - \bar{\tau}' \right) A_l - \left( \thorn - \bar{\rho} \right) A_m ,\\
\label{Sdef}
\mathcal{S} J =& \half \Big[ \left( \eth - \bar{\tau}' - 2\tau \right) J_l - \left( \thorn - \bar{\rho} - 2\rho \right) J_m  \Big].
\end{align}
\end{subequations}
The adjoint operators $\T^\dagger, \S^\dagger$ act on GHP scalars of weights $\GHPw{-2}{0}$ and produce a contravariant 
rank-1 tensor,
\begin{subequations}
\begin{align}
\label{Tdaggerdef}
\left( \mathcal{T}^\dagger \eta \right)^a =& \Big[ m^a \left( \thorn - \rho \right) - l^a \left( \eth - \tau \right) \Big] \eta,\\
\label{Sdaggerdef}
\left(\mathcal{S}^\dagger \Phi \right)^a =& \half \Big[ m^a \left( \thorn + \rho \right) - l^a \left( \eth + \tau \right) \Big] \Phi.
\end{align}
\end{subequations}

\section{Calculus with $Z,X,B,C$}
\label{sec:bivec}

Here we give some formulas involving the self-dual 2-forms $Z,Z',X$ as defined in \eqref{eq:XZZp}.

\begin{enumerate}
\item For a 1-form $G$ we have the following formula for its exterior differential.
\begin{equation}
\begin{split}
\intd G =& \ Z' \bigg[ (\thorn - \bar{\rho}) G_{m} - (\eth - \bar{\tau}') G_l \bigg]\\
+& \bar{Z'} \bigg[ (\thorn - \rho) G_{\bar{m}} - (\eth' - \tau') G_l \bigg]\\
+& \half X \bigg[ (\thorn + \rho - \bar{\rho}) G_n - (\thorn' + \rho' - \bar{\rho}') G_l - (\eth + \tau - \bar{\tau}') G_{\bar{m}} + (\eth' + \tau' - \bar{\tau}) G_m \bigg]\\
+& \half \bar{X} \bigg[ (\thorn + \bar{\rho} - \rho) G_n - (\thorn' + \bar{\rho}' - \rho') G_l + (\eth + \bar{\tau}' - \tau) G_{\bar{m}} - (\eth' + \tau' - \bar{\tau}) G_m \bigg]\\
+& Z \bigg[ (\eth' - \bar{\tau}) G_n - (\thorn' - \bar{\rho}') G_{\bar{m}} \bigg]\\
+& \bar{Z} \bigg[ (\eth - \tau) G_n - (\thorn' - \rho') G_{m} \bigg].
\end{split}
\end{equation}

\item We have the following formulas for the Hodge-dual of the NP tetrad legs:
\begin{equation}
\begin{split}
\star l =&\  -i \bar{m} \wedge Z = -i l \wedge X,\\
\star n =& \ i m \wedge Z' = i n \wedge X,\\
\star m =& \ -i n \wedge Z = -i m \wedge X,\\
\star \bar{m} =& \ i l \wedge Z' = i \bar{m} \wedge X.
\end{split}
\end{equation}

\item Letting $\intd_\Theta$ be the twisted exterior differential on GHP weighted forms, defined by replacing $\nabla_a$ by $\Theta_a$ in its definition, one may show that
\begin{equation}
\begin{split}
\intd_\Theta l =& \ - \bar{\tau} Z +\half (\rho - \bar{\rho}) X + \text{c.c.},\\
\intd_\Theta n =& \ \bar{\tau}' Z' - \half (\rho' - \bar{\rho}') X + \text{c.c.},\\
\intd_\Theta m =& \ (\tau - \bar{\tau}') \frac{X + \bar{X}}{2} + \rho \bar{Z'} - \bar{\rho}' Z,\\
\intd_\Theta \bar{m} =& \ - (\tau' - \bar{\tau}) \frac{X + \bar{X}}{2} + \bar{\rho} Z' - \rho' \bar{Z},
\end{split}
\end{equation}
and that 
\begin{equation}
\begin{split}
\intd_\Theta^\dagger l =& \ \rho + \bar{\rho},\\
\intd_\Theta^\dagger n =& \ \rho' + \bar{\rho}',\\
\intd_\Theta^\dagger m =& \ \tau + \bar{\tau}',\\
\intd_\Theta^\dagger \bar{m} =& \ \tau' + \bar{\tau}.
\end{split}
\end{equation}

\item One has equations of structure
\begin{equation}
\begin{split}
\intd_\Theta Z' =& \ B' \wedge Z',\\
\intd_\Theta X =& \ 2 (B + B') \wedge X,\\
\intd_\Theta Z =& \ B \wedge Z,
\end{split}
\end{equation}
and 
\begin{equation}
\begin{split}
\intd_\Theta^\dagger Z' =&\ - {B'} \cdot Z',\\
\intd_\Theta^\dagger X =& \- 2 ({B + B'}) \cdot X,\\
\intd_\Theta^\dagger Z =& \ - B \cdot Z,
\end{split}
\end{equation}
where
\begin{equation}
\begin{split}
B =& \-\rho n + \tau \bar{m},\\
C =& \-\rho m + \tau l.
\end{split}
\end{equation}

\item One may check that
\begin{equation}
\begin{split}
\star B =&\ i B \wedge Y,\\
\star B' =&\ - i B' \wedge Y,
\end{split}
\end{equation}
\begin{equation}
\zeta^{-1} \intd \zeta = B + B',
\end{equation}
where $\zeta = \Psi_2^{-\tfrac{1}{3}}$. A closely related identity is $ \Theta_a \Psi_2 = -3 (B_a + B'_a) \Psi_2$.

\item
$B_a, C_a$ are complex null vectors, with the following further properties.
\begin{equation}
  B \cdot B' = \rho \rho' - \tau \tau' = - C \cdot C'.
\end{equation}
The twisted divergence is given by
\begin{subequations}
\begin{align}
  \intd_\Theta^\dagger B =&\ \rho \rho' - \tau\tau' + \Psi_2,\\
  \intd_\Theta^\dagger C =& \ 0.
\end{align}
\end{subequations}
The twisted exterior differential is given by
\begin{subequations}\label{eq:dB}
\begin{align}
  \intd_\Theta B =& \
  \bar{Z'}(\eth'\rho) + \bar{Z}(\thorn'\tau) + \bar{X} \left(\eth' \tau - \rho \bar\rho' - \half \Psi_2\right) - \half\Psi_2 X + 2 C \wedge C',\\
\intd_\Theta C =& \Psi_2 Z.
 \end{align}
\end{subequations}
Contractions with $Z,Z',X$:
\begin{IEEEeqnarray}{rClCrCl}
  Z^{\prime ab} B_b &=& 0, &\qquad & Z^{\prime ab} C_b &=& B^a, \\
  X^{ab} B_b &=& B^a, &\qquad & X^{ab} C_b &=& - C^a, \\
  Z^{ab} B_b &=& -C^a, &\qquad & Z^{ab} C_b &=& 0.
\end{IEEEeqnarray}
\end{enumerate}


\section{Construction of residual gauge vector field $\zeta^a$}
\label{sec:zetaproof}

Here we construct a solution $(d^\circ, e^\circ, \zeta^\circ_l, \zeta^\circ_n, \zeta^\circ_m)$ to Eqs. \eqref{eq:residcond2}, \eqref{eq:c2}, \eqref{eq:ecircdet}, \eqref{eq:dcircdet}. The GHP scalars $(\zeta^\circ_l, \zeta^\circ_n, \zeta^\circ_m)$ combined then give the residual gauge vector field $\zeta^a$ via  \eqref{eq:residual1}.

Recall that a degree mark $\circ$ on any quantity means that it is a GHP scalar annihilated by $\thorn$, so such a quantity 
does not depend upon $r$ in retarded KN coordinates and the Kinnersley frame. 
Eqs. \eqref{eq:residcond2}, \eqref{eq:c2}, \eqref{eq:ecircdet}, \eqref{eq:dcircdet} are coupled and, as a system of PDEs in the remaining retarded KN coordinates $(u,\theta,\varphi_*)$, not of any particular canonical PDE-type, except in Schwarzschild spacetime, where they can be reduced to solving a single, fourth order, parabolic equation \cite{Green:2019nam}. Thus, it is far from obvious that a solution should exist in Kerr, and even if it does, how to construct it.\footnote{However, one can make the crude observation that, merely counting the number of functions to be determined and the number of equations, there is a certain freedom left to fix basically one free real valued GHP scalar, which basically corresponds to $f^\circ$ in lemma \ref{lem:9}.}  

Fortunately it turns out that there is a non-obvious simplifying structure behind these equations which we uncover in lemma \ref{lem:9}, and which is ultimately based on the non-accelerating type D geometry of Kerr. This structure, featuring a new auxiliary GHP scalar, $f^\circ \circeq \GHPw{4}{0}$, will then be combined with the TS identities to obtain the desired solution in proposition \ref{prop:13}.

\begin{lemma}
\label{lem:9}
Suppose $(d^\circ, e^\circ, f^\circ, \zeta^\circ_l, \zeta^\circ_n, \zeta^\circ_m)$ are smooth, 
$(e^\circ, f^\circ, \zeta^\circ_l, \zeta^\circ_n, \zeta^\circ_m)$ and $\tilde{\thorn}' d^\circ$ tend to zero as $u \to -\infty$ with their
Held derivatives $\tilde{\thorn}', \tilde{\eth}, \tilde{\eth}'$, and satisfy the following equations:
\begin{subequations}

\begin{equation}\label{eq:f}
\tilde{\eth}'\tilde{\eth}'\tilde{\eth}'\tilde{\eth}' f^\circ  - 3 \Psi^\circ \tilde{\thorn}' \bar f^\circ = 
\tilde{\thorn}' \bar h^{1\circ}_{mm},
\end{equation}

\begin{equation}\label{eq:zm}
\zeta^\circ_m = \tilde{\eth}' f^\circ,
\end{equation}

\begin{equation}\label{eq:zl}
\tilde \thorn' \zeta_l^\circ = -\Re \left(\tilde{\eth}' \zeta_m^\circ \right),
\end{equation}

\begin{equation}\label{eq:zn}
\zeta_n^\circ = \Re \Big[ \left(\tilde{\eth}' \tilde{\eth} - \rho^{\prime\circ} \right)\zeta^\circ_l - \Omega^\circ \tilde{\eth}' \zeta_m^\circ + 2 \bar{\tau}^\circ \zeta^\circ_m \Big],
\end{equation}

\begin{equation}\label{eq:e}
e^\circ = -2\tilde{\thorn}' \bar f^\circ,
\end{equation}

\begin{equation}\label{eq:d}
\tilde{\thorn}' d^\circ = \frac{1}{3} \left( 2  \tilde{\eth}' \tilde{\eth}' \tilde{\eth}' \tilde{\eth} \,\zeta^\circ_l -2 \Omega^\circ \tilde{\eth}' \tilde{\eth}' \tilde{\eth}' \zeta^\circ_m +6 \bar{\tau}^\circ \tilde{\eth}' \tilde{\eth}' \zeta^\circ_m -6 \bar{\tau}^\circ \tilde{\eth}' \tilde{\eth} \zeta^\circ_{\bar{m}} + \tilde{\eth}' \tilde{\eth} \bar{h}^{1 \circ}_{mm} +2 \rho^{\prime \circ} \bar{h}^{1 \circ}_{mm}  \right) \ .
\end{equation}

\end{subequations}
%
Then we can adjust $d^\circ \to d^\circ +\delta^\circ$, with $\tilde{\thorn}' \delta^\circ = 0$, so that 
$(d^\circ, e^\circ, \zeta^\circ_l, \zeta^\circ_n, \zeta^\circ_m)$ solve Eqs. \eqref{eq:residcond2}, \eqref{eq:c2}, \eqref{eq:ecircdet}, \eqref{eq:dcircdet}.
\end{lemma}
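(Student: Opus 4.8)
The plan is to treat the system \eqref{eq:f}--\eqref{eq:d} as a triangular, ``solved'' parameterization of the residual gauge data in terms of the single master scalar $f^\circ$, and to verify the four target equations \eqref{eq:residcond2}, \eqref{eq:c2}, \eqref{eq:ecircdet}, \eqref{eq:dcircdet} by direct substitution, using only the commutators and the action of Held's operators on the background scalars $\rho^{\prime\circ}, \tau^\circ, \Omega^\circ, \Psi^\circ$ recalled in \ref{sec:Held} and table \ref{tab:1}. Two of the targets require no work: the hypotheses \eqref{eq:zl} and \eqref{eq:zn} are \emph{verbatim} the two components \eqref{eq:residcond:Xil}, \eqref{eq:residcond:Xin} of \eqref{eq:residcond2}, so those hold by assumption.

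Next I would dispose of \eqref{eq:ecircdet}. Conjugating \eqref{eq:zm} gives $\zeta^\circ_{\bar m} = \tilde{\eth}\,\bar f^\circ$ (since $\overline{\tilde{\eth}'} = \tilde{\eth}$), while \eqref{eq:e} gives $e^\circ = -2\tilde{\thorn}'\bar f^\circ$. Substituting both, the left minus right side of \eqref{eq:ecircdet} collapses to $2\,[\tilde{\eth},\tilde{\thorn}']\bar f^\circ$ together with the term $2 b^\circ$. Because the Held commutator $[\tilde{\thorn}',\tilde{\eth}]$ (the conjugate of the tabulated $[\tilde{\thorn}',\tilde{\eth}']$) is proportional to $\thorn$, it annihilates the $\thorn$-killed scalar $\bar f^\circ$; thus \eqref{eq:ecircdet} reduces to the vanishing of $b^\circ$, which in the present setting follows from lemma \ref{lem:10} and the imposed decay, the coefficient of the non-decaying $\rho^{-1}$ piece of $y$ being forced to zero. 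The verification of \eqref{eq:c2} runs along the same lines but is considerably longer: after inserting \eqref{eq:zm}, \eqref{eq:zl}, \eqref{eq:zn} one meets the fourth-order term $\tilde{\eth}'\tilde{\eth}'\tilde{\eth}'\tilde{\eth}' f^\circ$, which is precisely what the master equation \eqref{eq:f} trades for $\tilde{\thorn}'\bar h^{1\circ}_{mm}$ (up to the curvature term $3\Psi^\circ \tilde{\thorn}'\bar f^\circ$). After this substitution, and repeated use of $[\tilde{\eth},\tilde{\eth}']$ and of table \ref{tab:1}, the expression in \eqref{eq:c2} should reduce to the value of $c^\circ$ furnished by lemma \ref{lem:10}.

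The remaining and hardest target is \eqref{eq:dcircdet}, and this is where the adjustment $\delta^\circ$ enters. The hypothesis \eqref{eq:d} determines only $\tilde{\thorn}'d^\circ = \partial_u d^\circ$, hence $d^\circ$ up to an element of $\ker\tilde{\thorn}'$. My strategy is to check the \emph{integrability} of \eqref{eq:dcircdet}: apply $\tilde{\thorn}'$ to both of its sides, commute $\tilde{\thorn}'$ through the $\tilde{\eth}'$'s using \ref{sec:Held}, and then substitute \eqref{eq:d} on the left together with \eqref{eq:f}, \eqref{eq:zm}--\eqref{eq:zn} on the right. I expect the two sides to coincide identically, so that $\tilde{\eth}\,d^\circ$ minus the right side of \eqref{eq:dcircdet} lies in $\ker\tilde{\thorn}'$. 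Since all data entering both equations decay as $u \to -\infty$ by hypothesis, integrating \eqref{eq:d} in $u$ from $u = -\infty$ fixes the particular solution whose difference from the desired one is $u$-independent; this residual discrepancy is absorbed by $d^\circ \to d^\circ + \delta^\circ$ with $\tilde{\thorn}'\delta^\circ = 0$, an adjustment that leaves \eqref{eq:residcond2}, \eqref{eq:c2}, \eqref{eq:ecircdet} and \eqref{eq:e} untouched because none of them constrains the $\ker\tilde{\thorn}'$ part of $d^\circ$.

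I would flag the integrability check for \eqref{eq:dcircdet} as the main obstacle: it is the one place where the specific fourth-order structure of \eqref{eq:f} is indispensable, and where a single bookkeeping error among the many Held commutators would be fatal. A secondary technical point is that enforcing the residual $u$-independent equation through the spin-raising operator $\tilde{\eth}$ is possible only away from the lowest multipoles; here the decay assumptions as $u \to -\infty$, sharpened by the $f^\circ$--news relations of lemma \ref{lem:11} (whose master equation \eqref{eq:f} is solved in proposition \ref{prop:13}), should guarantee that the obstructing low-$\ell$ components are absent, so that $\delta^\circ$ indeed exists and the proof closes.
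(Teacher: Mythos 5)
Your skeleton coincides with the paper's proof: the identifications \eqref{eq:zl}, \eqref{eq:zn} $=$ \eqref{eq:residcond2} are indeed trivial; \eqref{eq:ecircdet} reduces, exactly as you say, to $b^\circ = 0$ (your reason for $b^\circ=0$ — the coefficient of the growing $\rho^{-1}$ kernel piece of $y$ must vanish given the decay of the metric — is the content of lemma \ref{lem:11}, not lemma \ref{lem:10}, which only asserts existence); and the two hard equations are verified at the level of $\tilde{\thorn}'$-derivatives and then integrated from $u=-\infty$, with the $u$-independent discrepancy absorbed into $\delta^\circ$. The paper merely packages this in the auxiliary scalars $k^\circ$, $p^\circ$ (with $k^\circ = 0$ equivalent to \eqref{eq:f} after substituting \eqref{eq:zm}, \eqref{eq:e}, and $p^\circ = 0$ equivalent to \eqref{eq:d}), which is bookkeeping for the same substitution strategy.

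There is, however, a genuine gap in your stated inputs. You claim the targets can be verified ``using only the commutators and the action of Held's operators on the background scalars,'' plus the trade of $\tilde{\eth}'\tilde{\eth}'\tilde{\eth}'\tilde{\eth}' f^\circ$ via \eqref{eq:f}. That cannot close. Equations \eqref{eq:c2} and \eqref{eq:dcircdet} contain $c^\circ = \tfrac12 \Re\big(h^{1\circ}_{nn} + \tilde{\eth}'\tilde{\eth}' h^{1\circ}_{mm}\big)$ and $a^\circ = \tfrac12 \bar{h}^{2\circ}_{nm} - \tfrac12\tau^\circ \bar{h}^{1\circ}_{mm} - \tfrac14\Omega^\circ\tilde{\eth}\,\bar{h}^{1\circ}_{mm}$, so after you apply $\tilde{\thorn}'$ you need $\tilde{\thorn}' h^{1\circ}_{nn}$, $\tilde{\thorn}' h^{1\circ}_{nm}$ and $\tilde{\thorn}' \bar{h}^{2\circ}_{nm}$. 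None of these is constrained by the hypotheses \eqref{eq:f}--\eqref{eq:d}, which tie $f^\circ$ only to $h^{1\circ}_{mm}$; they are supplied by the linearized EE with the improved source, i.e.\ by lemma \ref{lem:S} ($S^{2\circ}_{nn} = S^{2\circ}_{nm} = S^{3\circ}_{nm} = S^{3\circ}_{mm} = S^{3\circ}_{lm} = 0$) through the Bondi-expanded field equations of \ref{BondiIRG}. This is precisely where the paper's lengthy identities $\Re\big(\tilde{\eth}\tilde{\eth} k^\circ\big) = x^\circ + 2S^{2\circ}_{nn} - 4\Re\big(\tilde{\eth}' S^{2\circ}_{nm}\big)$ and $\tilde{\eth} p^\circ - \Omega^\circ \tilde{\eth} k^\circ = 4\bar{S}^{3\circ}_{nm} - 2\tilde{\eth}\bar{S}^{3\circ}_{mm} - 2\tilde{\eth}'\tilde{\eth}' S^{3\circ}_{lm} + 2\tilde{\eth}\tilde{\eth}'\bar{S}^{3\circ}_{lm} - 6\tilde{\thorn}' y^\circ$ enter; without these field-equation inputs, the ``integrability check'' you rightly flag as the main obstacle will not come out identically zero. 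A secondary underspecification: for the solvability of $\tilde{\eth}\,\delta^\circ = y^\circ$ away from the $s=-1$, $\ell=1$ obstruction, decay of the data does not suffice, because $d^\circ$ itself — hence the term $-\tilde{\eth}\, d^\circ$ in the $u$-independent discrepancy — need not decay as $u\to-\infty$; only $\tilde{\thorn}' d^\circ$ does. The paper exploits exactly this: in the limit it replaces $\tilde{\eth}\, d^\circ$ by $\big(\tilde{\eth} + \tfrac{\tau^\circ}{2\rho^{\prime\circ}}\tilde{\thorn}'\big) d^\circ$, a pure-angular spin operator on the cuts of $\sI^+$ whose image is automatically $L^2$-orthogonal to the $s=-1$, $\ell=1$ harmonics; your appeal to ``the decay assumptions'' and the $f^\circ$--news relations glosses over this maneuver.
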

\begin{proof}
The proof of the lemma hinges on the special properties of two auxiliary GHP scalars, $p^\circ \circeq \GHPw{-5}{-1}$, $k^\circ \circeq \GHPw{-4}{0}$, 
defined by 
\begin{equation}
\label{Kcircdef}
k^\circ := - 4 \tilde{\eth}' \tilde{\eth}' \tilde{\eth}' \zeta^\circ_{m} + 4 \tilde{\thorn}' \bar{h}^{1 \circ}_{mm} - 6 \Psi^\circ e^\circ,
\end{equation}
and
\begin{multline}
\label{Pcircdef}
p^\circ := - 4 \tilde{\eth}' \tilde{\eth}' \tilde{\eth}' \tilde{\eth} \, \zeta^\circ_l + 4 \Omega^\circ \tilde{\eth}' \tilde{\eth}' \tilde{\eth}' \zeta^\circ_m - 12 \bar{\tau}^\circ \tilde{\eth}' \tilde{\eth}' \zeta^\circ_m + 12 \bar{\tau}^\circ \tilde{\eth}' \tilde{\eth} \zeta^\circ_{\bar{m}} - 2 \tilde{\eth}' \tilde{\eth} \bar{h}^{1 \circ}_{mm}\\
- 4 \rho^{\prime \circ} \bar{h}^{1 \circ}_{mm} + 6 \tilde{\thorn}' d^\circ.
\end{multline}
Obviously, $p^\circ=0$ is equivalent to \eqref{eq:d}. Furthermore, substituting equations \eqref{eq:zm}, \eqref{eq:e} into the definition of $k^\circ$, 
we get $k^\circ = 0$ from  \eqref{eq:f}. 

Next, we define the GHP scalar $x^\circ$ by
\begin{equation}
\label{eq:x}
\begin{split}
x^\circ := \ & \tilde{\thorn}' c^\circ -  \Re \bigg\{ \half \left(\tilde{\eth}' \tilde{\eth}' \tilde{\eth} \; \tilde{\eth} + \tilde{\eth} \; \tilde{\eth} \; \tilde{\eth}' \tilde{\eth}' \right)  \tilde{\eth}'
\zeta^\circ_m \\
&
+ \bigg[  \Omega^\circ \left(\tilde{\eth}' \tilde{\eth} + \tilde{\eth} \, \tilde{\eth}' \right) + 2 \tau^\circ \tilde{\eth}'  - 2 \bar{\tau}^\circ \tilde{\eth}  + \half \Psi^\circ - \frac{7}{2} \bar{\Psi}^\circ \bigg]\tilde{\thorn}' \tilde{\eth}'
\zeta^\circ_m \bigg\}.
\end{split}
\end{equation}
Using lemma \ref{lem:11}, using the type D identities as encoded in Held's formalism (recalled in \ref{sec:Held}), and using the asymptotic expansion of the linearized EE in the form $({\mathcal E}h)_{ab} = S_{ab}$ (see Sec. \ref{sec:xabdef}), we get after a very lengthy calculation:
\begin{equation}\label{eq: ReEthEthK0}
\Re \left( \tilde{\eth} \tilde{\eth} k^\circ \right) = x^\circ + 2 \tilde{\thorn}' \Re \left( \tilde{\eth}' \tilde{\eth}' h^{1 \circ}_{mm} - h^{1 \circ}_{nn} \right) = x^\circ + 2 S^{2 \circ}_{nn} - 4 \Re \left( \tilde{\eth}' S^{2 \circ}_{nm} \right).
\end{equation}
By lemma \ref{lem:S}, we know that $S^{2 \circ}_{nn}=S^{2 \circ}_{nm}=0$, so $k^\circ = 0$ gives $x^\circ = 0$ in view of \eqref{eq: ReEthEthK0}. Substituting \eqref{eq:zl} into $x^\circ$ in \eqref{eq:x}, we can pull out $\tilde{\thorn}'$ from all expressions and get
\begin{equation}
\label{eq:x=0}
\begin{split}
0 = & \ \tilde{\thorn}' \bigg\{ c^\circ -  \Re \bigg[ -\half \left(\tilde{\eth}' \tilde{\eth}' \tilde{\eth} \; \tilde{\eth} + \tilde{\eth} \; \tilde{\eth} \; \tilde{\eth}' \tilde{\eth}' \right) \zeta^\circ_l \\
&+ \bigg(  \Omega^\circ \left(\tilde{\eth}' \tilde{\eth} + \tilde{\eth} \, \tilde{\eth}' \right) + 2 \tau^\circ \tilde{\eth}'  - 2 \bar{\tau}^\circ \tilde{\eth}  + \half \Psi^\circ - \frac{7}{2} \bar{\Psi}^\circ \bigg)  \tilde{\eth}' \zeta^\circ_m \bigg] \bigg\}.
\end{split}
\end{equation}
By assumption, all quantities in the curly brackets acted upon by $\tilde{\thorn}'$ are decaying for $u \to -\infty$. Thus, we may effectively remove $\tilde{\thorn}'$ (which is $= \partial_u$ in KN coordinates and the Kinnersley frame) form the entire equation. This gives us precisely \eqref{eq:c2}.

Next, we define the auxiliary GHP scalar $y^\circ$ by
\begin{equation}
\label{eq:ycirc1}
\begin{split}
y^\circ :=& \, - \tilde{\eth} \, d^\circ  - \bar{h}^{2 \circ}_{nm} + \tau^\circ \bar{h}^{1 \circ}_{mm} + \half \Omega^\circ \tilde{\eth} \bar{h}^{1 \circ}_{mm} + 2 \Omega^\circ \tilde{\eth}' \tilde{\eth}' \tilde{\eth} \; \zeta^\circ_l \nonumber - 2 \Omega^{2 \circ} \tilde{\eth}' \tilde{\eth}' \zeta^\circ_m \\
& \, - 2 \left(2 \Omega^\circ \bar{\rho}^{\prime \circ} - \bar{\Psi}^\circ \right) \tilde{\eth}' \zeta^\circ_l 
- 2 \Omega^\circ \bar{\tau}^\circ \tilde{\eth} \; \zeta^\circ_{\bar{m}} + 2 \Omega^\circ \bar{\tau}^\circ \tilde{\eth}' \zeta^\circ_m + 2 \Omega^\circ \Psi^\circ \zeta^\circ_{\bar{m}}.
\end{split}
\end{equation}
Using lemma \ref{lem:11}, using the type D identities as encoded in Held's formalism, and using the asymptotic expansion of the linearized EE in the form $(\mathcal{E} h)_{ab} = S_{ab}$, we get after a very lengthy calculation:
\begin{equation}
\label{eq: EthP0-Omega0EthK0}
\begin{split}
\tilde{\eth} p^\circ - \Omega^\circ \tilde{\eth} k^\circ =& \, \tilde{\eth}' \tilde{\eth}' \tilde{\eth}' h^{1 \circ}_{mm} -  \tilde{\eth}' \tilde{\eth} \, \tilde{\eth} \, \bar{h}^{1 \circ}_{mm} - 3 \Omega^\circ \tilde{\eth} \tilde{\thorn}' \bar{h}^{1 \circ}_{mm} + 6 \tau^\circ \tilde{\thorn}' \bar{h}^{1 \circ}_{mm} - 6 \tilde{\thorn}' \bar{h}^{2 \circ}_{nm}\\
& + 2 \tilde{\eth}' h^{1 \circ}_{nn} - 6\tilde{\thorn}' y^\circ \\
=& \, 4 \bar{S}^{3 \circ}_{nm} - 2 \tilde{\eth} \bar{S}^{3 \circ}_{mm} - 2 \tilde{\eth}' \tilde{\eth}' S^{3 \circ}_{lm} + 2 \tilde{\eth} \, \tilde{\eth}' \bar{S}^{3 \circ}_{lm} - 6\tilde{\thorn}' y^\circ.
\end{split}
\end{equation}
By lemma \ref{lem:S}, we know that $S^{3 \circ}_{nm}=S^{3 \circ}_{mm} = S^{3 \circ}_{lm} =0$, so $k^\circ = 0$ (by \eqref{eq:f}) and $p^\circ = 0$ (by \eqref{eq:d}) gives $\tilde{\thorn}' y^\circ = 0$ in view of \eqref{eq: EthP0-Omega0EthK0}. 

By assumption, all terms in $y^\circ$ in \eqref{eq:ycirc1} go to zero for $u \to -\infty$, except possibly for $-\tilde{\eth} \, d^\circ$. 
However, we are still free to adjust $d^\circ \to d^\circ +\delta^\circ$, with $\tilde{\thorn}' \delta^\circ = 0$. This adjustment will set $y^\circ = 0$ provided that $\delta^\circ$
satisfies 
\begin{equation}
\label{yeq}
y^\circ = \tilde{\eth} \, \delta^\circ = \left( \tilde{\eth} + \frac{\tau^\circ}{2 \rho^{\prime \circ}}  \tilde{\thorn}' \right) \delta^\circ , 
\end{equation}
where we used $\tilde{\thorn}' \delta^\circ = 0$ in the second step. The operator $\tilde{\eth} + \tfrac{\tau^\circ}{2 \rho^{\prime \circ}}  \tilde{\thorn}'$ is a spin-lowering operator on 
each cut $\mathbb{S}^2_u$ of constant $u$ of $\sI^+$, in retarded KN coordinates and the Kinnersley frame, see table \ref{tab:2} in \ref{sec:Held}. By well-known properties of these operators, , see e.g. \cite{Chandrasekhar:1984siy}, \eqref{yeq} hence has a solution on such a  cut $\mathbb{S}^2_u$---hence on all such cuts, since $y^\circ, \delta^\circ$ do not depend upon $u$---iff 
$y^\circ$ is $L^2$-orthogonal to the $s=-1,\ell=1$ spin-weighted spherical harmonics, see e.g. \cite{Chandrasekhar:1984siy}. 
To see that this is the case, we take an $L^2$ inner product of $y^\circ$ as in \eqref{eq:ycirc1} with such a harmonic
on a cut of constant $u$ and we let $u \to -\infty$. Then the contribution of all terms in $y^\circ$ as in \eqref{eq:ycirc1} other than $-\tilde{\eth} \, d^\circ$ goes to zero, by assumption.
Furthermore, since $\tilde{\thorn}' d^\circ \to 0$ as $u \to -\infty$, we may replace this term, asymptotically, by $-(\tilde{\eth} + \tfrac{\tau^\circ}{2 \rho^{\prime \circ}}  \tilde{\thorn}')d^\circ$, 
which is orthogonal to the $s=-1,\ell=1$ spin-weighted spherical harmonics. Thus, replacing $d^\circ \to d^\circ +\delta^\circ$, we can set $y^\circ = 0$.
This gives us precisely \eqref{eq:dcircdet} after using the formulas of lemma \ref{lem:11} to appropriately rewrite $y^\circ = 0$.

The remaining equations hold trivially: Eq. \eqref{eq:ecircdet} follows by applying $\tilde{\eth}$ to \eqref{eq:e} and then substituting  \eqref{eq:zm}. Eqs. \eqref{eq:zl}, \eqref{eq:zn} are mere repetitions of \eqref{eq:residcond2}.
\end{proof}

\begin{lemma}
\label{lem:11}
Integration of  \eqref{eq:hmbmb1} for $\Phi$ and subsequent improvement \eqref{Phireddef} yields 
\begin{equation}\label{eq:phiexp}
\Phi = \Phi^{0\circ} + \frac{\Phi^{1\circ}}{\rho} + \frac{\Phi^{2\circ}}{\rho^2} + \frac{\Phi^{3\circ}}{\rho^3} + O(\rho),
\end{equation}
where $\Phi^{i\circ} \circeq \GHPw{-4+i}{i}$ are GHP scalars annihilated by $\thorn$, given by
\begin{subequations}\label{eq:phivalues}
\begin{align}
\Phi^{0\circ} =& \ d^\circ, \\
\Phi^{1\circ} =& \ 2 \left(\tilde{\eth}'\tilde{\eth}' \zeta_l^\circ - 2 \bar{\tau}^\circ \zeta^\circ_{\bar{m}} \right) + \bar{h}^{1 \circ}_{mm},\\
\Phi^{2\circ} =& \ -2 \tilde{\eth}' \zeta_{\bar{m}}^\circ,\\
\label{eq:phivaluesd}
\Phi^{3\circ} =& \ e^\circ.
\end{align}
\end{subequations}
The GHP scalars $a^\circ \circeq \GHPw{-4}{-2}$, $b^\circ \circeq \GHPw{-1}{1}$, $c^\circ \circeq \GHPw{-3}{-3}$, defined by \eqref{eq:yzdef} and lemma \ref{lem:10}, are
\begin{equation}\label{eq:abc}
\begin{split}
a^\circ &= \ \half \bar{h}^{2 \circ}_{nm} - \half \tau^\circ \bar{h}^{1 \circ}_{mm} - \frac{1}{4} \Omega^\circ \tilde{\eth} \bar{h}^{1 \circ}_{mm},\\
b^\circ &= \ 0,\\
c^\circ &= \ \half \Re \left( h^{1 \circ}_{nn} + \tilde{\eth}' \tilde{\eth}' h^{1 \circ}_{mm} 
\right).
\end{split}
\end{equation}
\end{lemma}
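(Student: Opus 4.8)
The plan is to exploit the fact that the defining relation \eqref{eq:hmbmb1}, $h_{\bar{m}\bar{m}} = -\half(\thorn-\rho)(\thorn+3\rho)\Phi$, involves only the radial operator $\thorn$ and the scalar $\rho$, and is therefore a purely radial second-order ODE at each angular point. Using $\thorn\rho = \rho^2$ one checks that $-\half(\thorn-\rho)(\thorn+3\rho) = -\half(\thorn^2+2\rho\thorn)$ acts on a pure power $\Phi^{i\circ}\rho^{-i}$ (with $\Phi^{i\circ}$ annihilated by $\thorn$) simply by multiplication, $-\half(\thorn-\rho)(\thorn+3\rho)(\Phi^{i\circ}\rho^{-i}) = \half\, i(3-i)\,\Phi^{i\circ}\rho^{2-i}$. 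First I would expand the source $h_{\bar{m}\bar{m}}=\overline{h_{mm}}$ in powers of $\rho$, converting the natural $\bar{\rho}$-expansion to a $\rho$-expansion via $\Omega^\circ = 1/\bar{\rho} - 1/\rho$ as recalled in \ref{sec:Held}, and then match powers. Since the multiplier $\half\, i(3-i)$ vanishes for $i=0$ and $i=3$, these two powers are exactly the homogeneous solutions, while $i=1,2$ are fixed by the source: because $h_{\bar{m}\bar{m}}=O(\rho)$, the vanishing of its $\rho^0$ coefficient forces the particular solution to have $\Phi^{2\circ}=0$, whereas its $\rho^1$ coefficient gives $\Phi^{1\circ}=h^{1\circ}_{\bar{m}\bar{m}}=\bar{h}^{1\circ}_{mm}$.

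Next I would add the improvement \eqref{Phireddef}. Its four terms $d^\circ$, $\tfrac{2}{\rho}(\tilde{\eth}'\tilde{\eth}'\zeta_l^\circ - 2\bar{\tau}^\circ\zeta^\circ_{\bar{m}})$, $-\tfrac{2}{\rho^2}\tilde{\eth}'\zeta^\circ_{\bar{m}}$ and $\tfrac{1}{\rho^3}e^\circ$ contribute precisely to the coefficients of $\rho^0,\rho^{-1},\rho^{-2},\rho^{-3}$, i.e.\ to the two homogeneous channels $i=0,3$ and to the residual-gauge dressing of $i=1,2$. Collecting them with the particular solution immediately yields the expansion \eqref{eq:phiexp} and the coefficient values \eqref{eq:phivalues}, namely $\Phi^{0\circ}=d^\circ$, $\Phi^{1\circ}=2(\tilde{\eth}'\tilde{\eth}'\zeta_l^\circ - 2\bar{\tau}^\circ\zeta^\circ_{\bar{m}})+\bar{h}^{1\circ}_{mm}$, $\Phi^{2\circ}=-2\tilde{\eth}'\zeta^\circ_{\bar{m}}$ and $\Phi^{3\circ}=e^\circ$. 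This part is essentially bookkeeping once the radial ODE structure is recognized.

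For the second assertion \eqref{eq:abc} I would substitute this explicit $\Phi$ into the definitions \eqref{eq:yzdef1} of $y$ and $z$, i.e.\ into the right-hand sides of \eqref{eq:hmbn1} and \eqref{eq:hnn1}, together with the IRG asymptotic expansions of $h_{\bar{m} n}$ and $h_{nn}$ from \ref{BondiIRG} and the simplification $h_{m\bar{m}}=0$ established prior to lemma \ref{lem:10}. After converting every GHP operator and background scalar to Held form and expanding in $\rho$, I would compare order by order with the two forms proved in lemma \ref{lem:10}. The coefficient $b^\circ$ is read off from the most singular, $O(\rho^{-1})$, part of $y$: using that the leading $\rho^{-3}$ term of $\Phi$ lies in the kernel of $(\thorn+3\rho)$ and that $h_{\bar{m} n}=O(\rho)$, this singular part cancels, giving $b^\circ=0$. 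The remaining coefficients come from the $O(\rho^2)$ part of $y$ and the $O(\rho)$ part of $z$, which after a lengthy reduction collapse to the purely metric expressions $a^\circ=\half\bar{h}^{2\circ}_{nm}-\half\tau^\circ\bar{h}^{1\circ}_{mm}-\tfrac14\Omega^\circ\tilde{\eth}\,\bar{h}^{1\circ}_{mm}$ and $c^\circ=\half\Re(h^{1\circ}_{nn}+\tilde{\eth}'\tilde{\eth}'h^{1\circ}_{mm})$.

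The main obstacle is precisely this last expansion. It is the genuinely lengthy Held-calculus computation, requiring the commutators and the operator-action tables of \ref{sec:Held}, careful tracking of several orders in $\rho$ with the attendant $\rho$–$\bar{\rho}$ conversions through $\Omega^\circ$, and, crucially, the verification that the dependence on the gauge and Hertz parameters $d^\circ,e^\circ,\zeta^\circ$ drops out of $a^\circ$ and $c^\circ$, leaving only metric data. By contrast the first assertion is comparatively painless because its defining equation is radial and diagonal in powers of $\rho$. I expect no conceptual difficulty beyond organizing this algebra, which can be cross-checked with the tensor-algebra system used elsewhere in the paper.
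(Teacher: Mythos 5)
Your radial-ODE framework is the right one, and your treatment of the second assertion coincides with the paper's (essentially one-line) argument---substitute the explicit $\Phi$ into \eqref{eq:yzdef1}, expand everything in Held form, and compare order by order with the expressions of lemma \ref{lem:10}. But the first assertion has a genuine gap at exactly the order your own multiplier formula flags as dangerous. You observe that $-\half(\thorn-\rho)(\thorn+3\rho)$ multiplies $\Phi^{i\circ}\rho^{-i}$ by $\half\, i(3-i)$, with $i=0,3$ the homogeneous channels, yet you only match the $\rho^0$ and $\rho^1$ coefficients of the source. The source $h_{\bar m\bar m}=\bar h_{mm}$ has a generically nonvanishing $\rho^2$ coefficient, namely $\bar h^{2\circ}_{mm}-\Omega^\circ \bar h^{1\circ}_{mm}$ after converting the $\bar\rho$-expansion via $\Omega^\circ$, and that order is \emph{resonant}: it could only be matched by the $i=0$ channel, whose multiplier vanishes. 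Power matching therefore fails there and the general solution acquires a logarithm: a short computation gives $-\half(\thorn-\rho)(\thorn+3\rho)\left(\Phi^{\log\circ}\log\rho\right)=-\tfrac{3}{2}\Phi^{\log\circ}\rho^2$, so integration produces $\Phi^{\log\circ}=-\tfrac{2}{3}\left(\bar h^{2\circ}_{mm}-\Omega^\circ \bar h^{1\circ}_{mm}\right)$. If this coefficient is nonzero, the asserted expansion \eqref{eq:phiexp} with an $O(\rho)$ remainder is false, and your proposal, which calls this step ``essentially bookkeeping'' and uses only the decay $h_{\bar m\bar m}=O(\rho)$, never addresses it.

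The paper closes this hole with an input absent from your first part: the linearized EE for the corrected metric. By lemma \ref{lem:S} one has $S^{3\circ}_{mm}=0$, and the $mm$ NP component of the EE at that order (see \ref{BondiIRG}) reads $\tilde{\thorn}'\left(h^{2\circ}_{mm}+\Omega^\circ h^{1\circ}_{mm}\right)=0$; since $\tilde{\thorn}'=\partial_u$ in retarded KN coordinates and the Kinnersley frame, and this combination vanishes as $u\to-\infty$ by the Standing Decay Assumptions, it vanishes identically, whence (using $\bar\Omega^\circ=-\Omega^\circ$) $\Phi^{\log\circ}=0$. So the expansion \eqref{eq:phiexp} genuinely requires the field equation plus the $u\to-\infty$ decay, not just the ODE structure. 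Apart from this missing step, your bookkeeping of the improvement terms (the $\rho^0,\rho^{-3}$ homogeneous channels and the gauge dressing of the $\rho^{-1},\rho^{-2}$ coefficients, which is legitimate because \eqref{Phireddef} is performed simultaneously with \eqref{hreddef}) and your identification of the orders of $y$ and $z$ that determine $b^\circ$, $a^\circ$, $c^\circ$ are consistent with the paper's proof.
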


\begin{proof}
Integration of  \eqref{eq:hmbmb1} using Held's coordinate free method (see \ref{sec:Held}) shows that the solution can be given as an asymptotic power series in terms of $\rho^j, j =-1,0,1,2,\dots$ and $\log \rho$ times coefficients annihilated by $\thorn$. 
It follows that integration of equation \eqref{eq:hmbmb1} for $\Phi$ yields a solution with the asymptotic behavior
\begin{equation}
\Phi = \Phi^{1\circ} \frac{1}{\rho} + \Phi^{0\circ} + \Phi^{{\rm log}\circ} \log(\rho) + o(\rho). 
\end{equation}
Since $\Phi^{0\circ} \circeq \GHPw{-4}{0}$ is a solution to the homogeneous version of the equation \eqref{eq:hmbmb1} for $\Phi$, we may set it to zero if we wish. Furthermore, we find
\begin{equation}
\Phi^{1\circ} = \bar{h}_{mm}^{1 \circ},
\quad
\phi_{\rm log}^\circ = -\frac{2}{3} \left(\bar h_{mm}^{2 \circ} - \Omega^\circ \bar{h}_{mm}^{1 \circ} \right). 
\end{equation}
Since by lemma \ref{lem:S} we may assume that $S_{mm}$ is of order $O(\rho^4)$, i.e., $S^{j \circ}_{mm} = 0$ for $j \le 3$, the $mm$ NP component of the linearized EE gives
\begin{equation}
\tilde{\thorn}' \left( h^{2 \circ}_{mm} + \Omega^\circ h^{1 \circ}_{mm} \right) = 0,
\end{equation}
see \ref{BondiIRG}.
Since we can assume that $h^{2 \circ}_{mm} + \Omega^\circ h^{1 \circ}_{mm} = 0$ as $u \to -\infty$ and since $\tilde{\thorn}' = \partial_u$ in retarded KN coordinates and the Kinnersley frame, we conclude $\Phi^{{\rm log}\circ}=0$.

The claimed formulas for the GHP scalars $a^\circ, b^\circ, c^\circ$ follow by a substitution of the 
asymptotic expansion of $h_{ab}$ and $\Phi$ from the formulas in lemma \ref{lem:10} and \eqref{eq:yzdef}.  
\end{proof}

We finally complete the construction by showing:

\begin{proposition}
\label{prop:13}
Under the assumptions of theorem \ref{thm:2} there exist $(e^\circ, f^\circ,  \zeta_l^\circ, \zeta^\circ_n, \zeta^\circ_m)$ with the 
properties required in lemma \ref{lem:9}.
\end{proposition}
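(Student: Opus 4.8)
The plan is to exploit the triangular structure of the system \eqref{eq:f}--\eqref{eq:d} exhibited in lemma \ref{lem:9}: the single complex GHP scalar $f^\circ \circeq \GHPw{4}{0}$ is the master field, and once it is found the remaining data are generated by explicit differentiation and integration. Concretely, I would first solve the master equation \eqref{eq:f} for $f^\circ$; then set $\zeta^\circ_m = \tilde{\eth}' f^\circ$ by \eqref{eq:zm} and $e^\circ = -2\tilde{\thorn}' \bar f^\circ$ by \eqref{eq:e}; obtain $\zeta^\circ_l$ by integrating \eqref{eq:zl} in the retarded coordinate $u$ (recall $\tilde{\thorn}' = \partial_u$ in the Kinnersley frame and retarded KN coordinates, see table \ref{tab:2}); read off $\zeta^\circ_n$ algebraically from \eqref{eq:zn}; and finally fix $d^\circ$ through \eqref{eq:d}. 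The content of the proposition is therefore reduced to (i) solvability of \eqref{eq:f} and (ii) the decay and smoothness bookkeeping ensuring that $(e^\circ, f^\circ, \zeta^\circ_l, \zeta^\circ_n, \zeta^\circ_m)$ and $\tilde{\thorn}' d^\circ$ vanish as $u \to -\infty$ together with their Held derivatives, as demanded in lemma \ref{lem:9}.

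For step (i) I would pass to the frequency domain, Fourier transforming in $u$ so that $\tilde{\thorn}' \mapsto -i\omega$, and expand into spin-weighted spheroidal harmonics of the appropriate weights (see Sec. \ref{sec:modesoln}), using that in the Kinnersley frame $\tilde{\eth}, \tilde{\eth}'$ act as the spin raising/lowering operators ${}_s\mathcal{L}^\pm(a\omega)$. The fourfold spin-lowering operator $\tilde{\eth}'\tilde{\eth}'\tilde{\eth}'\tilde{\eth}'$ carries the spin $+2$ harmonic ${}_{+2}S_{\ell m}$ into a multiple of ${}_{-2}S_{\ell m}$ governed by the angular Teukolsky--Starobinsky constant ${}_2B_{\ell m}(a\omega)$, by \eqref{S:TS}. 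The essential feature of \eqref{eq:f} is the term $3\Psi^\circ \tilde{\thorn}' \bar f^\circ = 3M\,\tilde{\thorn}' \bar f^\circ$ (with $\Psi^\circ = M$ in the Kinnersley frame), whose complex conjugation, together with the harmonic symmetries \eqref{S:sym}, couples the mode $(\omega, \ell, m)$ of $f^\circ$ to the mode $(-\omega, \ell, -m)$ and closes on each such pair. Thus \eqref{eq:f} reduces, for each $(\ell, m, \omega)$, to a $2\times 2$ linear system relating $f^\circ_{\ell m}(\omega)$ and $\overline{f^\circ_{\ell,-m}(-\omega)}$ to the source modes of $\tilde{\thorn}'\bar h^{1\circ}_{mm}$. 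Its determinant is ${}_2B_{\ell m}(a\omega)^2 + 9M^2\omega^2$, which is strictly positive for $\omega \neq 0$ and, at $\omega = 0$, equals ${}_2B_{\ell m}(0)^2 = \tfrac14(\ell-1)\ell(\ell+1)(\ell+2) > 0$ for all $\ell \ge 2$; hence the system is invertible mode by mode and yields an explicit representation of $f^\circ$ closely parallel to \eqref{Fmode}, \eqref{Hmode}.

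For step (ii) I would feed in the hypotheses of theorem \ref{thm:2}. The source in \eqref{eq:f} is $\tilde{\thorn}'\bar h^{1\circ}_{mm} = -\bar N^\circ$, i.e. (up to conjugation) the Bondi news $N^\circ = -\tilde{\thorn}' h^{1\circ}_{mm}$ of \eqref{NAB}, which by assumption decays as $O(|u|^{-3/2-\epsilon})$ together with its derivatives along $\sI^+$. This decay controls the source in frequency space strongly enough to tame the factors of $\omega^{-1}$ that the inverse of the mode matrix produces near $\omega = 0$, while smoothness and the uniform eigenvalue bound of lemma \ref{lem:u} supply the decay in $\omega$ and in $\ell$ needed for the inverse Fourier integral and the spheroidal sum to converge to a smooth $f^\circ$ that vanishes, with its Held derivatives, as $u \to -\infty$. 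The remaining quantities inherit these properties: $\zeta^\circ_m$ and $e^\circ$ by direct application of $\tilde{\eth}'$ and $\tilde{\thorn}'$; $\zeta^\circ_l$ by the $u$-antiderivative $\zeta^\circ_l(u) = -\int_{-\infty}^{u} \Re(\tilde{\eth}'\zeta^\circ_m)\,\dd u'$ (real-valued since $\zeta^\circ_l \circeq \GHPw{1}{1}$), whose convergence and vanishing at $u\to-\infty$ are automatic; $\zeta^\circ_n$ algebraically from \eqref{eq:zn}; and $\tilde{\thorn}' d^\circ$ from \eqref{eq:d}, which is a fixed combination of Held derivatives of the already-controlled scalars, so that $\tilde{\thorn}' d^\circ \to 0$ as $u \to -\infty$. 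Combined with lemmas \ref{lem:9}, \ref{lem:11} and \ref{lem:S}, this produces the tuple required by the proposition.

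The main obstacle I anticipate lies in the analytic control of step (ii) rather than in the algebra of step (i): one must show that the mode representation of $f^\circ$---and subsequently the $u$-integral defining $\zeta^\circ_l$---converges to a smooth function with the prescribed $u\to-\infty$ decay, uniformly enough to justify interchanging the harmonic sum, the $\omega$-integral, and the Held operators. This is precisely the point at which the weak news-decay hypothesis of theorem \ref{thm:2}, the positivity of ${}_2B_{\ell m}(a\omega)^2 + 9M^2\omega^2$, the eigenvalue estimate of lemma \ref{lem:u}, and the completeness of the spin-weighted spheroidal harmonics must be brought together; the regime near $\omega = 0$, where the inverse matrix carries inverse powers of $\omega$, is the most delicate and is what forces a decay rate strictly faster than integrable, namely the $-\tfrac32-\epsilon$ of theorem \ref{thm:2}.
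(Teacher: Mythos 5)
Your proposal follows essentially the same route as the paper's proof: the identical triangular sequencing (solve \eqref{eq:f} for the master scalar $f^\circ$, then read off $\zeta^\circ_m$, $e^\circ$, integrate \eqref{eq:zl} in $u$ from $-\infty$, take $\zeta^\circ_n$ algebraically, and fix $d^\circ$ from \eqref{eq:d}); the same frequency-domain reduction of \eqref{eq:f} via the angular TS identity \eqref{S:TS} and the symmetries \eqref{S:sym} to a mode system coupling $(\omega,\ell,m)$ to $(-\omega,\ell,-m)$, cf.\ \eqref{eq:fN1}, \eqref{eq:fN2}; and the same invertibility argument through the determinant ${}_2B_{\ell m}(a\omega)^2 + 9M^2\omega^2$, positive at $\omega=0$ because $\ell \ge 2$ for spin $2$.

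One genuine misstep, however: your closing diagnosis that the delicate point is ``the regime near $\omega=0$, where the inverse matrix carries inverse powers of $\omega$'' contradicts your own determinant computation. Since ${}_2B_{\ell m}(a\omega) = \tfrac{1}{4}(\ell-1)\ell(\ell+1)(\ell+2) + O(a\omega)$ uniformly in $m,\ell$, one has ${}_2B_{\ell m}(a\omega)^2 + 9M^2\omega^2 \ge c > 0$ uniformly in $(\omega,\ell,m)$, so the mode inversion introduces \emph{no} singularity at $\omega=0$ whatsoever. The $O(|u|^{-3/2-\epsilon})$ news decay of theorem \ref{thm:2} is used for something else: it ensures $(1+|u|)N^\circ$ is smooth and $L^2$ in the sense of \eqref{eq:L2} together with its Held derivatives, hence that the mode amplitudes $N^\circ_{\ell m}(\omega)$ are $\omega$-differentiable with square-integrable derivative; dividing by the uniformly bounded-below constant preserves this thanks to the eigenvalue-derivative estimate of lemma \ref{lem:u}, and the Fourier duality $-i\,\dd/\dd\omega \leftrightarrow$ multiplication by $u$ then delivers $(1+|u|)f^\circ \in L^2$ with all Held derivatives, i.e.\ precisely the $u \to -\infty$ decay of $f^\circ$ demanded by lemma \ref{lem:9} and the convergence of the $u$-integral defining $\zeta^\circ_l$. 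Finally, note that \eqref{eq:d} is integrated from a finite $u_0$ rather than from $u=-\infty$ (its source merely tends to zero and need not be integrable there), which suffices because lemma \ref{lem:9} requires decay only of $\tilde{\thorn}' d^\circ$, not of $d^\circ$ itself; the residual freedom $d^\circ \to d^\circ + \delta^\circ$ with $\tilde{\thorn}'\delta^\circ = 0$ is then consumed in the proof of lemma \ref{lem:9}, a step your sketch leaves implicit.
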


\begin{proof}
Throughout this proof, we work in retarded KN coordinates $(u,\theta,\varphi_*)$ and the Kinnersley frame. 
The equations posed by lemma \ref{lem:9}
will be solved in the following sequence:

\begin{enumerate}
\item We show that \eqref{eq:f} has a suitable solution $f^\circ$. 
\item We view  \eqref{eq:zm} as the definition 
for $\zeta^\circ_m$ and also of $\zeta^\circ_{\bar m} = \bar \zeta^\circ_m$,
since $\zeta^a$ is real.
\item We view  \eqref{eq:e} as the definition of $e^\circ$.
\item We integrate  \eqref{eq:zl} from $u=-\infty$ to obtain the unique solution $\zeta^\circ_l$ decaying for $u\to -\infty$, noting that $\tilde{\thorn}' = \partial_u$. 
\item We view  \eqref{eq:zn} as a definition of $\zeta_n^\circ$. 
\item At this stage, the right side of 
 \eqref{eq:d} 
for $d^\circ$ is known, and we integrate \eqref{eq:d} from some freely chosen $u=u_0$ noting again that $\tilde{\thorn}' = \partial_u$. Having performed this integration, we perform the unique adjustment $d^\circ \to d^\circ +\delta^\circ$ for the $\delta^\circ$ satisfying $\tilde{\thorn}' \delta^\circ = 0$ whose construction is described in the proof of lemma \ref{lem:9}.
\end{enumerate}

The most non-trivial step is (i). Using the formulas for the Bondi news tensor in \ref{BondiIRG}, we can write \eqref{eq:f}
\begin{equation}\label{eq:f1}
\tilde{\eth}'\tilde{\eth}'\tilde{\eth}'\tilde{\eth}' f^\circ  - 3 \Psi^\circ \tilde{\thorn}' \bar f^\circ = 
-N^\circ,
\end{equation}
where $N^\circ = -\tilde{\thorn}' h^{1\circ}_{mm} \circeq \GHPw{0}{-4}$ is related to $N_{AB}$ by \eqref{NAB}. Consequently, the assumption about $N_{AB}$ 
in theorem \ref{thm:2} gives us that $N^\circ$ and any Held-type derivative thereof is $= O(|u|^{-3/2-\epsilon})$ when $u \to \pm \infty$, for some $\epsilon>0$.

It will be convenient to analyze \eqref{eq:f1}  in Fourier space, for which an $L^2$-framework is most suited here. We define the $L^2$-norm of GHP scalars
$X^\circ \circeq \GHPw{p}{q}$ to be 
\begin{equation}
\label{eq:L2}
\| X^\circ \|^2_{L^2} \equiv \int_{\mathbb{R} \times \mathbb{S}^2} |\Psi^\circ|^{\frac{p+q}{3}+1} \, |X^\circ|^2 \, \eta^\circ ,
\end{equation}
where $\eta^\circ_{abc} \circeq \GHPw{3}{3}$ is an integration element\footnote{In retarded KN coordinates and the Kinnersley 
frame, it would be given by $\eta^\circ = \sin \theta \, \intd u \wedge \intd \theta \wedge\intd \varphi_*$.} on $\sI^+ \cong \mathbb{R} \times \mathbb{S}^2$
defined by $l^d \epsilon_{dabc} = \eta^\circ_{abc}/\rho^2 + O(1/\rho)$, 
and $\Psi^\circ \circeq \GHPw{-3}{-3}$ is the GHP scalar defined in \ref{sec:Held}. It is included to make the definition GHP invariant.

Using the Fourier inversion theorem and the fact \cite{Breuer} that the spin-weighted spheroidal harmonics form a strongly complete set for $\omega \in \mathbb{R}$,  
it is possible to expand any $L^2$-GHP scalar ${}_sX^\circ$ of spin $s=(p-q)/2$ annihilated by $\thorn$ into modes as 
\begin{equation}
\label{eq:Fourier}
    {}_sX^\circ = \sum_{\ell,m}  \int_{-\infty}^\infty \intd \omega \, e^{-i\omega u} {}_sS_{\ell m}(\theta,\varphi_*;a\omega) {}_sX_{\ell m}^\circ(\omega) ,
\end{equation}
with square summable/integrable mode amplitudes ${}_sX_{\ell m}^\circ(\omega)$.
[${}_sS_{\ell m}(\theta,\varphi_*;a\omega)$ are the spin-weighted spheroidal harmonics with the 
harmonic $e^{im\varphi_*}$ included in their definition, see Sec. \ref{sec:modesoln}]. The summation sign is a shorthand for 
\begin{equation}
    \sum_{\ell,m} 
    =  \sum_{\ell = |s|}^\infty \sum_{m=-\ell}^\ell
\end{equation}
In the case of interest in this proof, $(1+|u|)N^\circ$ is smooth and 
$L^2$ (see \eqref{eq:L2}) together with all of its Held-type derivatives
$\tilde \thorn', \tilde{\eth}, \tilde{\eth'}$. Consequently, the Fourier modes of 
$N^\circ$ are such that $N_{\ell m}^\circ(\omega)$ or $\dd N_{\ell m}^\circ(\omega)/\dd \omega$
times any power of $|m|,|\omega|$, and ${}_{\pm 2} E_{\ell m}(a\omega)$ [the angular eigenvalue, see 
Sec. \ref{sec:modesoln}] are square summable/integrable.


We now consider the equivalent form of \eqref{eq:f1} in terms of the mode decomposition \eqref{eq:Fourier}. 
Using the angular TS identity \eqref{S:TS}, the symmetries \eqref{S:sym}, 
and the relationship between the operators ${}_s \mathcal{L}^\pm(a\omega)$ and $\tilde{\eth}, \tilde{\eth}'$ in \ref{sec:Held}, table \ref{tab:2}, Eq. \eqref{eq:f1} 
is found to be equivalent to
\begin{equation}
\label{eq:fN1}
    {}_2B_{\ell m}(a\omega) f^\circ_{\ell m}(\omega) +3iM\omega(-1)^m \, \overline{ f^\circ_{\ell -m}(-\omega)} = - \, 
    \overline {N_{\ell -m}^\circ(-\omega)} . 
\end{equation}
Here ${}_2B_{\ell -m}(-a\omega) = 
{}_2B_{\ell m}(a\omega)$ is the angular TS constant for spin $s=2$ which is real an non-negative for real $\omega$ by results quoted in Sec. \ref{sec:modesoln}. 
Eq. \eqref{eq:fN1} implies that
\begin{equation}
\label{eq:fN2}
\begin{split}
-{}_2C_{\ell m}(\omega) f^\circ_{\ell m}(\omega) \equiv& \ -\bigg[ {}_2B_{\ell m}(a\omega)^2+9M^2\omega^2 \bigg]  f^\circ_{\ell m}(\omega) \\
=& \  {}_2B_{\ell m}(a\omega) \overline {N_{\ell -m}^\circ(-\omega)}
-3iM\omega (-1)^m N^\circ_{\ell m}(\omega).
\end{split}
\end{equation}
The ``radial TS constant'' ${}_2C_{\ell m}(\omega)$
is non-negative for real $\omega$, and may vanish at most for $\omega=0$. However, by results quoted in Sec. \ref{sec:modesoln},
we have ${}_2B_{\ell m}(a\omega) = (\ell-1)\ell(\ell+1)(\ell+2)/4 + O(a\omega)$ uniformly in $m,\ell$,
showing that ${}_2C_{\ell m}(\omega)$ can vanish at most for $\ell \in \{0,1\}$. These values are excluded because $N^\circ$ has spin $s=2$. 
It follows that ${}_2C_{\ell m}(\omega) \ge c$ for some constant $c>0$ independent of $\omega,m,\ell$, and we can safely divide by ${}_2C_{\ell m}(\omega)$ in \eqref{eq:fN2}. 
The mode amplitudes $f^\circ_{\ell m}(\omega)$ obtained in this way from 
$N^\circ_{\ell m}(\omega)$ define a GHP scalar $f^\circ \circeq \GHPw{4}{0}$ via \eqref{eq:Fourier}, which is a weak
 $L^2$-solution of \eqref{eq:f1}.

We finally investigate the regularity and decay properties of $f^\circ$. Using lemma \ref{lem:u}, it follows that 
$f_{\ell m}^\circ(\omega)$ or $\dd f_{\ell m}^\circ(\omega)/\dd \omega$
times any power of $|m|,|\omega|$, and ${}_{\pm 2} E_{\ell m}(a\omega)$, are square summable/integrable.
Now, $-i\dd/\dd \omega, -i\omega, im$, and ${}_{\pm 2} E_{\ell m}(a\omega)$ correspond to multiplication by $u$,
$\mathcal{L}_t, \mathcal{L}_\phi$ and the application of the angular TS operator, respectively (which is an elliptic operator), see 
Sec. \ref{sec:modesoln}. Consequently, it follows that $(1+|u|)f^\circ$ is smooth and 
$L^2$ [see \eqref{eq:L2}] together with all of its Held-type derivatives
$\tilde \thorn', \tilde{\eth}, \tilde{\eth'}$. We have thus completed step (i).

Eqs. \eqref{eq:zm}, \eqref{eq:zn}, \eqref{eq:e} in steps (ii), (iii), (v) are mere definitions, and
give $e^\circ, \zeta_m^\circ, \zeta_n^\circ$.  

In step (iv), we must solve an equation of the form $\tilde{\thorn}' \zeta^\circ_l = X^\circ$ 
such that that $(1+|u|)X^\circ$ is smooth and 
$L^2$ together with all of its Held-type derivatives
$\tilde \thorn', \tilde{\eth}, \tilde{\eth'}$. Because $\tilde \thorn' =\partial_u$, we 
may integrate this equation from $u=-\infty$ and obtain a solution $\zeta^\circ_l$
which is smooth and going to zero as $u \to -\infty$ with all of its Held-type derivatives, though it may asymptote to a constant 
for $u \to +\infty$.

In step (vi)  we must solve an equation of the form $\tilde{\thorn}' d^\circ = X^\circ$ 
such that that $X^\circ$ is smooth and goes to zero as $u \to -\infty$ with all of its Held-type derivatives. Contrary to 
step (iv) his equation may therefore not 
necessarily be integrated from $u=-\infty$, but we may still integrate from some $u_0$. 
In this way we obtain a solution that may be growing for both $u \to \pm \infty$ but such that 
$\tilde{\thorn}' d^\circ$ goes to zero as $u \to -\infty$ with all of its Held-type derivatives.
\end{proof}

\section{Off-shell TS identities and AAB identity
}
\label{sec:TScov}

It is well-known that the TS identities can be written in GHP covariant form, see e.g., \cite{Fayos,Price}. For easier reference  
we give these forms here, keeping also those terms that are identically zero for solutions to the homogeneous EE. Such ``off-shell''
identities have not been given in the literature to our knowledge and are needed in the main text.
The TS identities in covariant form can be seen as consequences of operator identities involving combinations of  
$\mathcal{S}$, $\mathcal{T}$, see \ref{app:ST}, and their adjoint, overbared, and primed versions. 
One finds the following relations, where 
here and in the following $\lie{\xi}$ means the GHP covariant Lie-derivative \cite{Edgar}.

\begin{proposition}
\label{Prop:TS}
In any non-accelerating type D spacetime the following operator identities hold:
\begin{equation}
\label{TSfirstform}
\begin{split}
\mathcal{T} \mathcal{S}^\dagger \eta =\ & 0, \\
\mathcal{T} \overline{\mathcal{S}^\dagger \eta} =\ & - \half \thorn^4 \bar{\eta}, \\
\mathcal{T}' \mathcal{S}^\dagger \eta =\ & - \frac{1}{8} (\mathcal{O}^\dagger + 8 B^{\prime} \cdot \Theta + 16 \Psi_2) \mathcal{O}^\dagger \eta + \frac{3}{2} \Psi_2^{\frac{4}{3}} \lie{\xi} \eta\\
\mathcal{T}' \overline{\mathcal{S}^\dagger \eta} =\ & - \half \eth^{\prime 4} \bar{\eta}, \\
\end{split}
\end{equation}
for any $\eta \circeq \GHPw{-4}{0}$, where $\xi_a = \Psi_2^{-\frac{1}{3}} \left( B'_a - B_a \right)$ is the Killing vector field that exists in all such spacetimes ($\equiv M^{-1/3} t^a$ in Kerr). 
Under the same hypothesis, we also have:
\begin{equation}
\begin{split}
\S \T^\dagger \eta = \ & 0 \\ 
\S \overline{\T^\dagger \eta}=\ & -\half
\left[\thorn^2 - 4 (\rho + \bar{\rho}) \thorn + 12 \rho \bar{\rho} \right] \thorn^2 \bar{\eta}\\
\S' \T^\dagger \eta 
=\ & - \frac{1}{8} \mathcal{O}' (\mathcal{O}' + 8 B \cdot \Theta + 16 \Psi_2) \eta + \frac{3}{2} \Psi_2^{\frac{4}{3}} \lie{\xi} \eta \\
\S' \overline{\T^\dagger \eta}=& - \half \left[ \eth^{\prime 2} - 4 (\bar{\tau} + \tau') \eth' + 12 \bar{\tau} \tau' \right] \; \eth^{\prime 2} \bar{\eta}
\end{split}
\end{equation}
\end{proposition}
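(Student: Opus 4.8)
The plan is to verify all eight relations as equalities of differential operators, working throughout in the GHP calculus of \ref{app:ST} and \ref{sec:appO}, with the commutators of \ref{sec:GHPformulas} and the type~D derivative relations of table~\ref{tab:4}. Here $\mathcal{S}^\dagger\eta$ and $\mathcal{T}^\dagger\eta$ are the symmetric tensors of \eqref{eq:Sdag}, \eqref{eq:Tdag}, with indices lowered by $g_{ab}$; the single dictionary point I would record first is that the upper-index building blocks $l^al^b,\ m^am^b,\ l^{(a}m^{b)}$ appearing there correspond, after lowering, to the NP slots $nn,\ \bar m\bar m,\ n\bar m$ respectively, all other slots vanishing. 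In particular $(\mathcal{S}^\dagger\eta)_{ll}=(\mathcal{S}^\dagger\eta)_{mm}=(\mathcal{S}^\dagger\eta)_{lm}=0$, and likewise for $\mathcal{T}^\dagger\eta$. Since $\mathcal{T}$ in \eqref{eq:T} is built solely from $h_{ll},h_{mm},h_{(lm)}$ and $\mathcal{S}$ in \eqref{eq:S} solely from $T_{ll},T_{mm},T_{lm}$, the two relations $\mathcal{T}\mathcal{S}^\dagger\eta=0$ and $\mathcal{S}\mathcal{T}^\dagger\eta=0$ are immediate; these encode the reconstruction-theorem fact that $\Re\mathcal{S}^\dagger\Phi$ is in TIRG, cf.\ \eqref{irg}.

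For the three pure-derivative identities I would use that complex conjugation of a symmetric tensor interchanges the slots $\bar m\bar m\leftrightarrow mm$ and $n\bar m\leftrightarrow nm$ while fixing $ll,nn$. Hence $\overline{\mathcal{S}^\dagger\eta}$ has nonvanishing $mm,nn,nm$ components, and feeding it into $\mathcal{T}$ picks out only its $mm$ slot: reading \eqref{eq:Sdag} (the factor $\tfrac12$ in \eqref{eq:hmbmb1} comes solely from $\Re=\tfrac12(\,\cdot+\overline{\,\cdot}\,)$), one gets $(\overline{\mathcal{S}^\dagger\eta})_{mm}=-(\thorn-\bar\rho)(\thorn+3\bar\rho)\bar\eta$, so
\[
\mathcal{T}\,\overline{\mathcal{S}^\dagger\eta}=-\tfrac12(\thorn-\bar\rho)^3(\thorn+3\bar\rho)\bar\eta.
\]
The collapse to $-\tfrac12\thorn^4\bar\eta$ is a one-line operator identity: from $\thorn\bar\rho=\bar\rho^2$ (table~\ref{tab:4}) one has $\thorn-c\bar\rho=\bar\rho^{c}\thorn\bar\rho^{-c}$, whence $(\thorn-\bar\rho)^3(\thorn+3\bar\rho)=\bar\rho\,\thorn^3\bar\rho^{-1}\cdot\bar\rho^{-3}\thorn\bar\rho^{3}=\thorn^4$. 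The primed identity $\mathcal{T}'\overline{\mathcal{S}^\dagger\eta}=-\tfrac12\eth'^4\bar\eta$ is the verbatim GHP-prime, now picking the $nn$ slot and using $\eth'\bar\tau=\bar\tau^2$ to telescope $(\eth'-\bar\tau)^3(\eth'+3\bar\tau)=\eth'^4$. For $\mathcal{S}\,\overline{\mathcal{T}^\dagger\eta}$ the same reductions give $-\tfrac12(\thorn-\bar\rho-4\rho)(\thorn-\bar\rho)^3\bar\eta$; here the prefactor carries a $\rho$, so the expression does not fully collapse, but expanding with $(\thorn-\bar\rho)^3=\thorn^3-3\bar\rho\thorn^2$ and $(\thorn-\bar\rho)^4=\thorn^4-4\bar\rho\thorn^3$ reproduces exactly $-\tfrac12[\thorn^2-4(\rho+\bar\rho)\thorn+12\rho\bar\rho]\thorn^2\bar\eta$. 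As a check, these specializations recover $\psi_0=-\tfrac14\thorn^4\bar\Phi$ of \eqref{eq:Phi} through $\psi_0=\mathcal{T}\Re\mathcal{S}^\dagger\Phi=\tfrac12\mathcal{T}\overline{\mathcal{S}^\dagger\Phi}$.

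The genuinely hard identities are the two involving $\mathcal{O}^\dagger$, namely $\mathcal{T}'\mathcal{S}^\dagger\eta$ and $\mathcal{S}'\mathcal{T}^\dagger\eta$. For $\mathcal{T}'\mathcal{S}^\dagger\eta$ I would write $\mathcal{T}'$ (the prime of \eqref{eq:T}) and substitute the three nonzero NP components of $(\mathcal{S}^\dagger\eta)^{ab}$, then systematically commute all $\thorn',\eth'$ past $\thorn,\eth$ using \ref{sec:GHPformulas} and reduce background derivatives via table~\ref{tab:4}. The aim is to organize the outcome as $-\tfrac18(\mathcal{O}^\dagger+8B'\cdot\Theta+16\Psi_2)\mathcal{O}^\dagger\eta$ plus a remainder, reading $\mathcal{O}^\dagger$ off from \eqref{eq:Odag}; the decisive step is to recognize the leftover first-order piece as $\tfrac32\Psi_2^{4/3}\lie{\xi}\eta$. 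This is where the non-accelerating type~D hypothesis enters essentially: the inputs $\Theta_a\Psi_2=-3(B_a+B'_a)\Psi_2$ and $\zeta^{-1}\intd\zeta=B+B'$ from \ref{sec:bivec}, together with the Killing property of $\xi_a=\Psi_2^{-1/3}(B'_a-B_a)$, are exactly what convert the single-derivative remainder into the covariant Lie derivative along $\xi$. I expect this term-matching to be the main obstacle, the bookkeeping being long enough that it is most safely carried out with xAct, as elsewhere in the paper. The dual identity $\mathcal{S}'\mathcal{T}^\dagger\eta$ I would then obtain either by the same direct computation or, more economically, by taking the formal adjoint of the $\mathcal{T}'\mathcal{S}^\dagger$ identity using $\mathcal{E}^\dagger=\mathcal{E}$, $(\mathcal{O}^\dagger)^\dagger=\mathcal{O}$ and the conjugation relations \eqref{OprimeOdag} between $\mathcal{O},\mathcal{O}',\mathcal{O}^\dagger$, the Killing term being adjoint-invariant up to the sign absorbed into $\lie{\xi}$. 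A useful global cross-check is that sandwiching the adjoint SEOT identity \eqref{eq:SEOT*} with $\mathcal{S}'$ and using \eqref{eq:SEOTprime} yields $\mathcal{O}'\mathcal{T}'\mathcal{S}^\dagger=\mathcal{S}'\mathcal{T}^\dagger\mathcal{O}^\dagger$, a relation the two boxed formulas must be consistent with.
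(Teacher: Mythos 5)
Your proposal is correct and takes essentially the same route as the paper, which presents Proposition \ref{Prop:TS} as the outcome of a direct GHP computation: your slot-counting argument (the covariant NP components of $\S^\dagger\eta$, $\T^\dagger\eta$ live only in the $nn,n\bar m,\bar m\bar m$ slots, so $\T\S^\dagger\eta=0=\S\T^\dagger\eta$), your telescoping of the conjugated identities via $[\thorn,\bar\rho]=\bar\rho^2$ and $[\eth',\bar\tau]=\bar\tau^2$ (e.g.\ $(\thorn-\bar\rho)^3(\thorn+3\bar\rho)=\thorn^4$ and $(\thorn-\bar\rho-4\rho)(\thorn-\bar\rho)^3=[\thorn^2-4(\rho+\bar\rho)\thorn+12\rho\bar\rho]\thorn^2$), and your plan to establish the two $\O^\dagger$-identities by brute-force GHP/xAct reduction, recognizing the first-order remainder as $\frac{3}{2}\Psi_2^{4/3}\lie{\xi}\eta$ through $\Theta_a\Psi_2=-3(B_a+B'_a)\Psi_2$ and $\xi_a=\Psi_2^{-1/3}(B'_a-B_a)$, all match how these identities are obtained, with your cross-checks ($\psi_0=-\frac{1}{4}\thorn^4\bar\Phi$ from \eqref{eq:Phi}, and $\O'\T'\S^\dagger=\S'\T^\dagger\O^\dagger$ from \eqref{eq:SEOT*} and \eqref{eq:SEOTprime}) being valid. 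The only cosmetic points are that the mixed slot carries a factor $-\frac{1}{2}$ (i.e.\ $(\S^\dagger\eta)_{n\bar m}=-\frac{1}{2}[\cdots]\eta$, which is immaterial since that slot never enters your completed computations) and that $\S'\overline{\T^\dagger\eta}$, though not written out, follows by the identical $nn$-slot telescoping you describe.
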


We also have the GHP primed- and overbared versions of these equations since the GHP formalism is symmetric under these operations in the case of type D spacetimes. Using the GHP-identity \cite{Price}
\begin{equation}
\thorn^4 \Psi_2^{-\frac{4}{3}} \eth^{\prime 4} =  \eth^{\prime 4} \Psi_2^{-\frac{4}{3}} \thorn^4,
\end{equation}
the consistency of the relations in proposition \ref{Prop:TS} implies:
\begin{proposition}
\label{prop:3}
For any $\eta \circeq \GHPw{-4}{0}$
we have
\begin{equation}
\begin{split}
\thorn^4 \Psi_2^{-\frac{4}{3}} \mathcal{T}' \Re(\S^\dagger \eta)  =& \eth^{\prime 4} \Psi_2^{-\frac{4}{3}} \mathcal{T} \Re(\S^\dagger \eta) - 3\lie{\xi} \bar{\mathcal{T}}
\Re(\S^\dagger \eta) \\
&- \frac{1}{16} \thorn^4 \Psi_2^{-\frac{4}{3}}(\O^\dagger + 8 B' \cdot \Theta + 16 \Psi_2)\O^\dagger \eta.
\end{split}
\end{equation}
\end{proposition}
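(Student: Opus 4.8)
The plan is to establish the identity by direct substitution of the four operator relations of proposition~\ref{Prop:TS} into the real combination $\Re(\S^\dagger\eta)$, followed by two commutation arguments. First I would write $\Re(\S^\dagger\eta)=\tfrac12\bigl(\S^\dagger\eta+\overline{\S^\dagger\eta}\bigr)$ and use linearity to evaluate the operators appearing on both sides. From the first two relations of \eqref{TSfirstform} (the $\mathcal{T}\S^\dagger\eta=0$ term drops out) one gets $\mathcal{T}\,\Re(\S^\dagger\eta)=-\tfrac14\thorn^4\bar\eta$, and from the third and fourth relations one gets $\mathcal{T}'\,\Re(\S^\dagger\eta)=-\tfrac1{16}(\O^\dagger+8B^{\prime}\cdot\Theta+16\Psi_2)\O^\dagger\eta+\tfrac34\Psi_2^{4/3}\lie{\xi}\eta-\tfrac14\eth^{\prime4}\bar\eta$. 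Since $\Re(\S^\dagger\eta)$ is real and $\bar\thorn=\thorn$, I would then read off $\bar{\mathcal{T}}\,\Re(\S^\dagger\eta)=\overline{\mathcal{T}\,\Re(\S^\dagger\eta)}=-\tfrac14\thorn^4\eta$.

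Next I would insert these three expressions into the two sides of the claimed equation. Acting with $\thorn^4\Psi_2^{-4/3}$ on the formula for $\mathcal{T}'\Re$ and using $\Psi_2^{-4/3}\Psi_2^{4/3}=1$, the left-hand side splits into an $\O^\dagger$-term, a term $\tfrac34\thorn^4\lie{\xi}\eta$, and a term $-\tfrac14\thorn^4\Psi_2^{-4/3}\eth^{\prime4}\bar\eta$. On the right-hand side, substituting the formulas for $\mathcal{T}\Re$ and $\bar{\mathcal{T}}\Re$ produces $-\tfrac14\eth^{\prime4}\Psi_2^{-4/3}\thorn^4\bar\eta$, a term $\tfrac34\lie{\xi}\thorn^4\eta$, and the same $\O^\dagger$-term. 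The two $\O^\dagger$-terms are identical by inspection, and the two $\bar\eta$-terms are matched exactly by the GHP commutation identity $\thorn^4\Psi_2^{-4/3}\eth^{\prime4}=\eth^{\prime4}\Psi_2^{-4/3}\thorn^4$ quoted just above the proposition. The whole computation thus collapses to verifying the single remaining equality $\thorn^4\lie{\xi}\eta=\lie{\xi}\thorn^4\eta$, i.e.\ that $\lie{\xi}$ commutes with $\thorn^4$.

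The hard part — indeed the only step beyond routine bookkeeping — is justifying this last commutation. I would argue it from the fact that $\xi^a=\Psi_2^{-1/3}(B^{\prime a}-B^a)$ ($\equiv M^{-1/3}t^a$ in Kerr) is a Killing field that additionally preserves the pair of principal null directions and the orientations, hence preserves the GHP reduction bundle $\sP$ and the weighted line bundles $\sL_{p,q}$. Consequently the GHP covariant Lie-derivative $\lie{\xi}$ commutes with each GHP covariant directional derivative, in particular with $\thorn$ and so with $\thorn^4$, while $\Psi_2$ is $\lie{\xi}$-annihilated; this is the structural input that makes the $\Psi_2^{4/3}\lie{\xi}\eta$ term of proposition~\ref{Prop:TS} reproduce the $-3\lie{\xi}\bar{\mathcal{T}}$ term on the right. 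Granting this, the two Lie-derivative terms agree and the two sides coincide. The residual effort is then purely mechanical: tracking the conjugations, the shifts of GHP weight under $\thorn$, $\eth'$ and $\lie{\xi}$, and the cancellation of the $\Psi_2^{\pm4/3}$ factors in the substitutions carried out in the first step.
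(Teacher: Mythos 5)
Your proposal is correct and follows essentially the same route as the paper, which obtains the proposition precisely as a consistency consequence of the four relations in proposition~\ref{Prop:TS} combined with the GHP identity $\thorn^4 \Psi_2^{-\frac{4}{3}} \eth^{\prime 4} = \eth^{\prime 4} \Psi_2^{-\frac{4}{3}} \thorn^4$. The one step you spell out that the paper leaves implicit --- that the GHP covariant Lie derivative $\lie{\xi}$ along the Killing field commutes with $\thorn^4$ (a standard property of the Edgar--Ludwig Lie derivative along a symmetry preserving the principal null directions, cf.\ \cite{Edgar}) --- is justified correctly.
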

Again, also  the primed and overbared analogues of this statement hold true. For a metric $h_{ab} = \Re(\S^\dagger \eta)_{ab}$ of reconstructed form such that $\O^\dagger \eta = 0$
holds (implying the EE $(\E h)_{ab} = 0$), the identity of proposition \ref{prop:3} may be written in the more familiar form \cite{Fayos,Price}
\begin{equation}
\label{TSonshell1}
\thorn^4 \Psi_2^{-\frac{4}{3}} \psi_4  = \eth^{\prime 4} \Psi_2^{-\frac{4}{3}} \psi_0 - 3\lie{\xi} \bar \psi_0.
\end{equation} 

Given that, up to global assumptions and up to terms not contributing to $\psi_0, \psi_4$, any perturbation $h_{ab}$  satisfying the linearized EE $h_{ab}$ can be written in reconstructed from $h_{ab} = \Re(\S^\dagger \eta)_{ab}$ such that $\O^\dagger \eta = 0$ by corollary \ref{GHZcor}, it is plausible that an ``off-shell'' version of
the ``on-shell'' TS identity \eqref{TSonshell} ought to exist that is valid for any $h_{ab}$ (not necessarily satisfying the EE). Such an identity would be expected to 
have additional terms explicitly containing $(\E h)_{ab}$. 
In fact, the existence of such an identity follows directly from a general principle expressed by theorem \ref{lem:null}, see proposition \ref{prop:3'}. 

Actually, that proposition is also a consequence of an even more general identity, due to \cite{Aksteiner:2016pjt}.
Using the operators $\S,\T$, 
their identity may be stated as:

\begin{theorem} (AAB decomposition \cite{Aksteiner:2016pjt}) For any symmetric tensor $h_{ab}$, 
\begin{equation}
\label{AAB}
(\lie{\xi} h)_{ab} = \frac{1}{3} \Re\left[ \S^\dagger_{ab}\left( \Psi_2^{-\frac{4}{3}} \T'h \right) - \S^{\prime \dagger}_{ab} \left(\Psi_2^{-\frac{4}{3}} \T h\right) \right] + \nabla_{(a} (\mathcal{A} h)_{b)} + 
(\mathcal{M} \E h)_{ab},
\end{equation}
where $\mathcal{M}, \A$ are partial differential operators
built from the GHP directional derivative operators and rational functions of the background GHP scalars $\tau, \rho, \tau', \rho', \Psi_2$ and the complex conjugates.
\end{theorem}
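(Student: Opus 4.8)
The plan is to derive \eqref{AAB} as an operator identity, linear in $h_{ab}$, from the Nullstellensatz of theorem~\ref{lem:null} together with the off-shell Teukolsky--Starobinsky machinery already assembled in propositions~\ref{Prop:TS} and~\ref{prop:3}. First I would introduce the residual tensor
\[
t_{ab} := (\lie{\xi} h)_{ab} - \tfrac{1}{3}\Re\!\left[ \S^\dagger_{ab}\!\left(\Psi_2^{-4/3}\T' h\right) - \S^{\prime\dagger}_{ab}\!\left(\Psi_2^{-4/3}\T h\right)\right],
\]
so that the goal becomes to show that $t_{ab}$ is a symmetrized gradient plus a term proportional to $\E h$, i.e.\ that its gauge-invariant content vanishes modulo $\E h$. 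Since on a type D background the gauge-invariant content of a symmetric tensor is captured by the two scalars $\T(\cdot)$ and $\T'(\cdot)$, the core of the argument is to compute $\T t$ and $\T' t$ and to show that both reduce to operators applied to $\E h$.

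Next I would carry out these two computations using the operator identities of proposition~\ref{Prop:TS} and their primed and overbarred analogues. Applying $\T$ to the real parts, the terms $\T\S^\dagger\eta=0$ drop out, while $\T\overline{\S^\dagger\eta}=-\tfrac12\thorn^4\bar\eta$ and $\T\S^{\prime\dagger}\eta'=-\tfrac18(\O^{\prime\dagger}+8B\cdot\Theta+16\Psi_2)\O^{\prime\dagger}\eta'-\tfrac32\Psi_2^{4/3}\lie{\xi}\eta'$ contribute, with $\eta=\Psi_2^{-4/3}\T'h$ and $\eta'=\Psi_2^{-4/3}\T h$. Because $\xi^a$ is Killing one has $\lie{\xi}\Psi_2=0$ and $[\lie{\xi},\T]=0$, so the factor $\Psi_2^{4/3}\lie{\xi}(\Psi_2^{-4/3}\T h)$ collapses to $\T\lie{\xi} h$; and via the SEOT identities \eqref{eq:SEOT}, \eqref{eq:SEOTprime} together with \eqref{OprimeOdag}, every $\O^{\dagger}$- and $\O^{\prime\dagger}$-string acting on $\Psi_2^{-4/3}\T h$ or $\Psi_2^{-4/3}\T' h$ reduces to $\S\E h$- or $\S'\E h$-type expressions, i.e.\ to operators applied to $\E h$. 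Finally the residual $\thorn^4\bar\eta$ terms are not yet manifestly of $\E h$-type, and here the off-shell TS identity of proposition~\ref{prop:3} (and its refinement~\ref{prop:3'} valid for arbitrary $h$) is used to trade $\thorn^4\Psi_2^{-4/3}\T' h$ for $\eth^{\prime 4}\Psi_2^{-4/3}\T h$ plus a $-3\lie{\xi}$ term plus genuine $\E h$ terms. The $-3\lie{\xi}$ contribution combines with the $-\tfrac32\Psi_2^{4/3}\lie{\xi}$ contribution, and the overall prefactor $\tfrac13$ is exactly what is needed so that the assembled $\lie{\xi}$ pieces reproduce $\T\lie{\xi} h$ and cancel against the $\T(\lie{\xi}h)$ coming from $t_{ab}$, leaving $\T t=(\text{op})\,\E h$. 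The computation for $\T' t$ is the prime of this one, noting that $\xi^a\to-\xi^a$ under priming, which supplies the relative sign between the two reconstructed terms.

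With $\T t$ and $\T' t$ shown to be operators applied to $\E h$, I would invoke theorem~\ref{lem:null}: a symmetric tensor whose two Teukolsky scalars are expressible through $\E t$---equivalently through $\E h$, since $t_{ab}$ and $h_{ab}$ differ by terms with controlled image under $\E$---lies in the sum of the image of the symmetrized gradient and the image of $\E$. This yields precisely $t_{ab}=\nabla_{(a}(\A h)_{b)}+(\mathcal{M}\E h)_{ab}$ with $\A,\mathcal{M}$ the operators furnished by the Nullstellensatz, which rearranges to \eqref{AAB}; tracking the construction also shows these operators are built from GHP directional derivatives and rational functions of $\tau,\rho,\tau',\rho',\Psi_2$ and their conjugates, as claimed.

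The main obstacle I anticipate is the coefficient bookkeeping in the middle paragraph. Because $\T h$ and $\T' h$ are themselves linked by the TS identity \eqref{TSonshell1}, the representation of $t_{ab}$ is over-determined: the $\lie{\xi}$ content can be routed either through the $\thorn^4$ channel or the $\eth^{\prime 4}$ channel, and one must keep the real-part and prime/overbar structure perfectly consistent to be sure the $\lie{\xi}$ terms assemble with the single correct weight rather than cancelling or doubling. Establishing, at that stage, that every surviving $\O^\dagger$/$\O^{\prime\dagger}$ string genuinely collapses onto $\E h$ with no leftover gauge-invariant remainder is the delicate point on which the whole identity turns.
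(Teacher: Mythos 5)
Your middle paragraph, computing the two Teukolsky scalars of the residual $t_{ab}$ from the identities \eqref{TSfirstform} and their primed/overbarred versions, is essentially the paper's opening step. But the inference in your final paragraph is where the proof breaks: theorem \ref{lem:null} is purely an operator-factorization statement --- if $\mathcal{D}h=0$ for \emph{every} solution of $(\E h)_{ab}=0$, then $\mathcal{D}=\mathcal{M}\E$ --- and it does not say that a symmetric tensor whose scalars $\T t$, $\T' t$ are $\E h$-controlled lies in the image of the symmetrized gradient plus the image of $\E$. Indeed it cannot: on-shell, $t_{ab}$ is not zero (it is pure gauge, plus possibly a zero mode), so the operator $h\mapsto t_{ab}(h)$ does not annihilate on-shell $h$ and cannot itself be factored through $\E$. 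The principle you invoke instead, that ``the gauge-invariant content of a symmetric tensor is captured by $\T(\cdot)$ and $\T'(\cdot)$,'' is false for arbitrary symmetric tensors (a pure trace $h_{ab}=f g_{ab}$ has $\T h=\T'h=0$ without being a symmetrized gradient for generic $f$); its on-shell version is precisely Wald's theorem \cite{Waldthm}, a substantive independent input that the paper invokes explicitly: a solution of the linearized EE with $\psi_0=\psi_4=0$ is pure gauge plus a perturbation within the Kerr family, with a gauge vector locally constructed from $h$ (whence the form $\nabla_{(a}(\A h)_{b)}$ with $\A$ a differential operator). Only after this on-shell statement is in hand does the Nullstellensatz enter, applied to $\mathcal{D}h:=t_{ab}-\nabla_{(a}(\A h)_{b)}$, to supply the $(\mathcal{M}\E h)_{ab}$ term. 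Your proposal never invokes Wald's theorem and so is missing the key idea.

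Two further defects. First, even granting Wald's theorem, its conclusion leaves a possible zero mode ($\delta M$, $\delta a$), which is \emph{not} of the form $\nabla_{(a}(\A h)_{b)}$; the paper excludes it by a localization argument using the Corvino--Schoen gluing theorems \cite{CorvinoSchoen} (interpolate $h_{ab}$ to zero on the causal complement of a neighborhood of a point; a zero mode cannot vanish on the causal complement of any compact set in any gauge). You do not address this obstruction at all. Second, your plan to trade $\thorn^4\Psi_2^{-4/3}\T'h$ for $\eth^{\prime 4}\Psi_2^{-4/3}\T h$ ``for arbitrary $h$'' via proposition \ref{prop:3'} is circular: in the paper, the off-shell TS identity \ref{prop:3'} is \emph{derived from} the AAB decomposition \eqref{AAB}, and its operator $\mathcal{Q}$ contains the very $\mathcal{M}$ you are trying to construct. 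For the on-shell step, proposition \ref{prop:3} and the classical identity \eqref{TSonshell1} are legitimately available --- which is all the paper uses --- but then one is again thrown back on Wald's theorem, the zero-mode exclusion, and only then the Nullstellensatz, to complete the argument.
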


\medskip
\noindent
{\bf Remark 6.} The AAB decomposition has certain analogies to the decomposition in the spin-1 case claimed in theorem \ref{thm:MaxwellHertz} \cite{Dolan,Wardell2}. Indeed, let us consider two potentials $\Phi_0 \circeq \GHPw{4}{0}$ and $\Phi_4 \circeq \GHPw{-4}{0}$ that are anti-time-derivatives of the Weyl scalars $\psi_0$ and $\psi_4$ [compare \eqref{eq:Hertzspin-1}],
\begin{equation}
\label{eq:Hertzspin-2}
{\mathcal L}_\xi \Phi_0 = \psi_0, \quad 
{\mathcal L}_\xi \Phi_4 = -\psi_4, 
\end{equation}
and let us further consider an anti-time-derivative, $S_{ab}$, 
of the stress tensor:
\begin{equation}
\label{}
{\mathcal L}_\xi S_{ab} = T_{ab}. 
\end{equation}
Then we set [compare \eqref{Hab}]
\begin{equation}
\label{Habcd}
H_{abcd} := -\Phi_0 Z_{abcd}^{\prime} - \Phi_4 Z^{}_{abcd}, 
\end{equation}
and [compare \eqref{eq:Aansatz}]
\begin{equation}
    h_{ab} := 
    \frac{1}{3} \nabla^c \Re\left( \zeta^4 \nabla^d H_{d(ab)c}  \right)
    +(\mathcal{M} S)_{ab},
\end{equation}
where $\zeta = \Psi_2^{-\frac{1}{3}}$. Then \eqref{AAB} implies that this $h_{ab}$ satisfies the linearized EE $\E h_{ab} = T_{ab}$ up to a term annihilated by $\mathcal{L}_\xi$, analogously to theorem \ref{thm:MaxwellHertz}. One may further note the structural similarities between this formula for $h_{ab}$ and \eqref{eq:Aansatz} in the spin-1 case, but also the following crucial differences: 1) the terms in \eqref{Habcd} have a relative sign difference compared to \eqref{Hab}, 2) the powers of $\zeta$ are not analogous. 

\medskip
\noindent
{\bf Remark 7.}
The authors of \cite{Aksteiner:2016pjt} used a spinor formalism to express \eqref{AAB} in a somewhat different looking form. The operators $\mathcal{M}, \mathcal{A}$ may be extracted from their spinor formulas \cite[Eqs. (56), (57a,b)]{Aksteiner:2016pjt} with some labor\footnote{
Since we are taking the real part in \eqref{AAB}, our operator $\mathcal{M}$ is related to the real part of the operator $\mathcal{N}$ that appears in \cite[Eq. (57b)]{Aksteiner:2016pjt}.}. We now give a different proof showing clearly that the existence of 
the identity \eqref{AAB} can be traced to the TS identities. 
Our proof is more conceptual than \cite{Aksteiner:2016pjt}, but does not, by itself, yield the explicit forms of $\mathcal{M}, \A$ as provided by \cite{Aksteiner:2016pjt}.

\begin{proof}  
First we argue that \eqref{AAB} holds if $(\E h)_{ab}=0$: Using the 
identities $\T\lie{X} = \T'\lie{X} = 0$,  \eqref{TSonshell}, and \eqref{TSfirstform}, we see that the difference between the left and right side of \eqref{AAB} 
is annihilated by both $\T, \T'$. By Wald's theorem \cite{Waldthm} (see e.g., \cite{StewardWalker} for a proof in GHP language), this difference hence has to be the sum of a pure gauge perturbation-and a zero mode perturbation, see \ref{app:D}. 

The argument by \cite{Waldthm} shows that 
the pure gauge transformation is constructed out of $h_{ab}$ and its derivatives, and thus of the form $\nabla_{(a} ({\mathcal A}h)_{b)}$ for some partial differential operator $\A$ as in \eqref{AAB}. Furthermore, a zero mode perturbation actually cannot be present in the difference: Interpolating $h_{ab}$ to zero on the causal complement of 
an open neighborhood of some point, $x$ while keeping $h_{ab}$ the same---as is always possible by the gluing theorems by 
\cite{CorvinoSchoen}---on a yet smaller neighborhood of $x$, 
the difference is of compact support on this causal complement, and the same near $x$. But a zero mode cannot vanish on the causal complement of 
any compact set in any gauge. Hence, the difference does not contain a zero mode. 

Thus, \eqref{AAB} without the term involving $\mathcal{M}$ holds whenever $(\E h)_{ab}=0$. Let $(\mathcal{D}h)_{ab}$ be 
difference of the left side of \eqref{AAB}, minus the terms on the right side not involving $\mathcal{M}$. We need to show that is an operator from which we can factor $\E$. 
This problem is solved by applying theorem \ref{lem:null} to all possible NP components of the operator $\mathcal{D}$. 
\end{proof}

We now apply $\T$ to both sides of \eqref{AAB}, and we use that (i) $\T$ commutes with the GHP Lie-derivative $\lie{\xi}$, (ii) $\O^{\prime \dagger} \Psi_2^{-\frac{3}{4}} = 
\Psi_2^{-\frac{3}{4}} \O$, (iii) Eqs. \eqref{TSfirstform} and their GHP-primed versions, and (iv)
the SEOT identity $\S\E=\O\T$. Then we obtain:
\begin{proposition}
\label{prop:3'} (Off-shell TS identity)
We have
\begin{equation}
\label{TSonshell}
\thorn^4 \Psi_2^{-\frac{4}{3}} \T'  = \eth^{\prime 4} \Psi_2^{-\frac{4}{3}} \T - 3\lie{\xi} \bar\T + \bar{\mathcal{Q}} \E ,
\end{equation}
where\footnote{One may show that $\mathcal{Q}$ is proportional $\T \mathcal{N}$, where $\mathcal N$ is the operator defined in Eq. (57b) by \cite{Aksteiner:2016pjt}. We thank Barry Wardell for pointing this fact out to us.}
\begin{equation}
\mathcal{Q} = 3\left[ 2\T\mathcal{M} + \frac{1}{4} \Psi_2^{-\frac{4}{3}} \left(\O + 8B \cdot (\Theta + 4B') + 16 \Psi_2 \right) \S
\right].
\end{equation}
Here $\mathcal{M}$ is the operator appearing in \eqref{AAB}.
\end{proposition}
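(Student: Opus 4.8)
The plan is to obtain \eqref{TSonshell} directly from the AAB decomposition \eqref{AAB} by applying the Teukolsky extraction operator $\T$ to both sides and reading off the resulting operator identity acting on an arbitrary symmetric $h_{ab}$. First I would apply $\T$ to \eqref{AAB}. On the left, fact (i) gives $\T\lie{\xi} h = \lie{\xi}\T h$. On the right, the pure-gauge term drops out entirely, $\T\nabla_{(a}(\A h)_{b)} = \tfrac12 \T\lie{\A h} g = 0$, since $\psi_0 = \T h$ is gauge invariant on a background with $\Psi_0 = 0$. It then remains to evaluate $\T$ on the reconstructed block $\tfrac13\Re[\S^\dagger(\Psi_2^{-4/3}\T' h) - \S'^\dagger(\Psi_2^{-4/3}\T h)]$ and on $\T\M\E h$.

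The reconstructed block is handled termwise with the ``first-form'' TS identities \eqref{TSfirstform} and their GHP-primed versions (fact (iii)), after writing $\Re[\,\cdot\,] = \tfrac12(\,\cdot\,) + \tfrac12\overline{(\,\cdot\,)}$. The four relevant evaluations are $\T\S^\dagger(\cdots)=0$; $\T\overline{\S^\dagger(\cdots)} = -\tfrac12\thorn^4(\cdots)$, which produces the $\thorn^4\Psi_2^{-4/3}\T'$ principal term after conjugation; $\T\overline{\S'^\dagger(\cdots)} = -\tfrac12\eth^4(\cdots)$, which produces the $\eth^{\prime 4}\Psi_2^{-4/3}\T$ term (using $\bar\eth = \eth'$); and the primed analogue of $\T'\S^\dagger$, which is the only one not already ``clean'': it contributes both a Lie-derivative piece $\tfrac32\Psi_2^{4/3}\lie{\xi}$ and a Teukolsky-operator piece built from $\O'^\dagger$. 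Using $\lie{\xi}\Psi_2 = 0$ (so that $\Psi_2^{4/3}\lie{\xi}(\Psi_2^{-4/3}\psi_0) = \lie{\xi}\psi_0$) together with the fact that priming sends $\xi \mapsto -\xi$, the Lie term assembles into the coefficient $-3$ of $\lie{\xi}\bar\T$ after conjugation.

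The heart of the argument is converting the leftover Teukolsky piece, which a priori acts on $\psi_0 = \T h$ rather than on $\E h$, into a genuine multiple of $\E h$. Here I would use the intertwining relation between $\O'^\dagger$ and $\O$ [fact (ii), cf. \eqref{OprimeOdag}] to trade $\O'^\dagger$ acting on $\Psi_2^{-4/3}\psi_0$ for $\O$ acting on $\psi_0$ up to background factors, and then the SEOT identity $\S\E = \O\T$ \eqref{eq:SEOT} to replace $\O\psi_0 = \O\T h$ by $\S\E h$. Commuting the remaining $\Psi_2^{-4/3}$ past the $B\cdot\Theta$ and $\Psi_2$ terms of the operator with the help of $\Theta_a\Psi_2 = -3(B_a + B'_a)\Psi_2$, $B\cdot B = 0$, and $B\cdot B' = \rho\rho' - \tau\tau'$ (see \ref{sec:bivec}) is precisely what turns the bare $B\cdot\Theta$ into the combination $B\cdot(\Theta + 4B')$ appearing in $\mathcal Q$. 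Combining the result with the $\T\M\E h$ term then yields $\bar{\mathcal Q}\E h$ with $\mathcal Q$ as stated, and a final complex conjugation puts the whole identity into the form \eqref{TSonshell}.

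I expect the bookkeeping in this last step to be the main obstacle: tracking the powers of $\Psi_2$, the fate of the $8B\cdot\Theta + 16\Psi_2$ terms under commutation with $\Psi_2^{-4/3}$, and confirming that the several $\lie{\xi}\psi_0$ contributions collapse to the single coefficient $-3$ without leaving a spurious residue. As a sanity check I would specialize to reconstructed $h = \Re(\S^\dagger\eta)$ with $\O^\dagger\eta = 0$: there $\E h = 0$ kills the $\mathcal Q$ term and the identity must reduce to the on-shell relation \eqref{TSonshell1}, and matching the $\lie{\xi}$ coefficient in that limit pins down the normalization. Independently, the mere existence of \emph{some} operator making \eqref{TSonshell} hold also follows abstractly: \eqref{TSonshell1} shows that $\thorn^4\Psi_2^{-4/3}\T' - \eth^{\prime 4}\Psi_2^{-4/3}\T + 3\lie{\xi}\bar\T$ annihilates every solution of $\E h = 0$, whence theorem \ref{lem:null} forces it to factor through $\E$; the AAB route is what supplies the explicit $\mathcal Q$.
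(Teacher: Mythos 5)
Your proposal follows exactly the paper's route: apply $\T$ to the AAB decomposition \eqref{AAB}, kill the pure-gauge term by gauge invariance of $\psi_0$, evaluate the reconstructed block with the operator TS identities \eqref{TSfirstform} and their primed versions, and convert the leftover Teukolsky piece into a multiple of $\E$ via the intertwining relation \eqref{OprimeOdag} and the SEOT identity $\S\E=\O\T$, which is precisely the list of facts (i)--(iv) the paper invokes just before proposition \ref{prop:3'}. Your supplementary observations (the on-shell consistency check against \eqref{TSonshell1}, the sign $\xi' = -\xi$, and the abstract existence of $\mathcal{Q}$ via theorem \ref{lem:null}) are all correct and consistent with the paper's surrounding discussion.
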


\section{A ``Nullstellensatz'' for the linearized EE}
\label{sec:NSsatz}

The following theorem 
is based on a combination of a general Nullstellensatz for systems of PDE \cite{Shankar} and specific structural properties of the linearized Einstein operator, $\E$.

\begin{theorem}
\label{lem:null}
(Nullstellensatz for the linearized EE)
Let $\mathcal{D}$ be a partial differential operator on Kerr, from symmetric covariant tensors to some GHP bundle $\sL_{p,q}$, built from rational functions of the GHP scalars $\rho, \tau, \rho', \tau', \Psi_2$, 
the GHP operators $\thorn, \thorn', \eth,$ and the NP tetrad legs $l^a, n^a, m^a$, and their complex conjugates.
Suppose $\mathcal{D}$ has the property that $\mathcal{D}h = 0$ whenever a smooth $h_{ab}$ satisfies 
$(\E h)_{ab} = 0$. Then there is a partial differential operator $\mathcal{M}$, with the same general properties as $\mathcal{D}$, such that 
$\mathcal{D} = \mathcal{M}\E$.
\end{theorem}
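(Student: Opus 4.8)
The plan is to recast Theorem~\ref{lem:null} as an equality of left modules over the ring $\mathcal{R}$ of GHP-covariant differential operators on Kerr that are built from $\thorn,\thorn',\eth,\eth'$ and rational functions of $\rho,\tau,\rho',\tau',\Psi_2$ and their conjugates. Writing $\mathcal{N}$ for the annihilator module $\{\mathcal{D}:\mathcal{D}h=0 \text{ whenever } (\E h)_{ab}=0\}$, the inclusion $\mathcal{R}\,\E\subseteq\mathcal{N}$ is immediate, so the entire content is the reverse inclusion $\mathcal{N}\subseteq\mathcal{R}\,\E$, which is precisely the claimed factorization $\mathcal{D}=\mathcal{M}\E$. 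First I would localize and pass to the level of formal jets, reducing the global vanishing hypothesis to the statement that $\mathcal{D}$ annihilates every formal (equivalently, every local smooth) solution of $\E$, so that the abstract Nullstellensatz of \cite{Shankar} becomes applicable.

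The general Nullstellensatz asserts, for a linear differential operator with real-analytic coefficients that is formally integrable and whose characteristic variety is suitably regular, that any operator killing all its solutions lies in the left module it generates. Since the Kerr metric and the Kinnersley tetrad are analytic, the coefficients of $\E$ in \eqref{eq:linearE} are analytic and the coefficient-ring hypotheses hold; the output of this step is a factorization $\mathcal{D}=\tilde{\mathcal{M}}\E$ with $\tilde{\mathcal{M}}$ an a~priori abstract analytic operator between the relevant tensor and GHP bundles. The real work is verifying the two structural hypotheses on $\E$ itself, and here I would exploit its geometry. The principal symbol $\sigma(\E)(\xi)$ on $S^2T^*_x\sM$ has, by the linearized contracted Bianchi identity $\nabla^a(\E h)_{ab}=0$ and the gauge invariance $\E(\lie{\xi}g)=0$, an explicitly describable kernel and cokernel: on null $\xi$ the kernel is spanned by the gauge directions $\xi_{(a}v_{b)}$ and the image is their $g$-orthogonal complement, with characteristic set the light cone $\{g^{ab}\xi_a\xi_b=0\}$. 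This gives the reduced-characteristic-variety input. For formal integrability I would argue that, modulo gauge, the linearized EE reduces in Lorenz gauge to a normally hyperbolic system carrying the full constrained Cauchy solution space, so that gauge together with the single Bianchi compatibility exhausts all compatibility conditions and the system becomes involutive after one prolongation; this is the ``enough solutions'' input ensuring that $\mathcal{N}$ contains nothing beyond the differential consequences of $\E$.

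Finally I would upgrade $\tilde{\mathcal{M}}$ to the required GHP-covariant rational form. The division built into \cite{Shankar} inverts $\sigma(\E)$ on its image, and by the explicit symbol description above this inverse is rational in $\xi$ with coefficients polynomial in $g_{ab}$; expressing everything in the Kinnersley frame makes these coefficients rational in the background scalars $\rho,\tau,\rho',\tau',\Psi_2$, subject to the relations \eqref{eq:GHPrelations} of \ref{sec:GHPformulas}. Since those scalars generate the ring of analytic background invariants and $\thorn,\thorn',\eth,\eth'$ generate the covariant derivatives, $\tilde{\mathcal{M}}$ lies in $\mathcal{R}$ and yields the operator $\mathcal{M}$ with the stated properties.

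The main obstacle I anticipate is the formal-integrability hypothesis under the gauge degeneracy of $\E$: because $\E$ is neither elliptic nor determined in the naive sense, one must show rigorously that the gauge directions and the single Bianchi relation account for the \emph{entire} kernel and cokernel of the symbol and of its prolongations, so that no hidden differential consequence of $\E$ escapes the module $\mathcal{R}\,\E$. Establishing this involutivity for $\E$ as it stands---rather than for a gauge-fixed reduction whose solutions form a strictly smaller set---is what makes \cite{Shankar} directly applicable and is where the specific structure of the linearized Einstein operator is genuinely needed.
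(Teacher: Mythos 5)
There is a genuine gap at the heart of your plan: the Nullstellensatz of \cite{Shankar} that you propose to apply directly to the variable-coefficient operator $\E$ on Kerr does not exist in that generality. The result in \cite{Shankar} is Oberst's theorem for systems with \emph{constant} coefficients, resting on the fact that $C^\infty(\mathbb{R}^4)$ is an injective cogenerator over the polynomial ring in $\partial_0,\dots,\partial_3$; there is no ``formally integrable system with regular characteristic variety'' version for variable (even real-analytic) coefficients, and establishing involutivity of $\E$, however carefully, would not supply one. This is exactly why the paper's proof takes a different route: it freezes coefficients at an arbitrary point $x_0$ via an adapted orthonormal frame, uses high-frequency solutions to constrain the principal symbol of $\mathcal{D}$, and then verifies the hypothesis of Oberst's theorem for the resulting \emph{constant-coefficient} Minkowski operators $\mathcal{D}_0,\E_0$ --- the ``enough solutions'' input being supplied not by involutivity but by the Ehrenpreis--Palamodov fundamental principle \cite{hormander3} together with the Corvino--Schoen gluing theorem \cite{CorvinoSchoen}, which passes from spatially compact solutions to all smooth solutions. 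The factorization $\mathcal{D}_0=\mathcal{N}_0\E_0$ so obtained holds only pointwise; the paper then restores GHP covariance of $\mathcal{N}_0$ by correcting the failure of $\mathbb{C}^\times$-equivariance (the failure $\mathcal{K}_0$ satisfies $\E_0\mathcal{K}^\dagger_0=0$ and can be absorbed into a gauge-type term), promotes $\mathcal{N}_0$ to an operator $\mathcal{N}$ on Kerr with coefficients in the fraction field built on the GHP scalars modulo the type D relations \eqref{eq:GHPrelations}, and closes by induction on differential order applied to $\mathcal{D}-\mathcal{N}\E$, which satisfies the same hypotheses at strictly lower order. Your proposal has no counterpart of this freezing-plus-induction scaffolding, and without it the cited tools cannot be made to bite on the curved background.

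Two further points. Your description of the symbol is incorrect: at a null covector $\xi$ the kernel of $\sigma_{\E}(\xi)$ is not spanned by the gauge directions $\xi_{(a}v_{b)}$ alone, since every traceless amplitude $A_{ab}$ with $A_{ab}\xi^b=0=g^{ab}A_{ab}$ also lies in it --- these are the two physical polarizations, and they are precisely what the paper's high-frequency and Fourier-representation arguments exploit; with your stated kernel the hypothesis-verification step would be vacuous. Separately, your opening reduction of the global hypothesis (``$\mathcal{D}h=0$ for all globally smooth solutions'') to annihilation of all local or formal solutions is not free: one must produce global solutions of the constrained system with prescribed local behavior, which is the role played in the paper by the Corvino--Schoen gluing; formal solvability at a point does not deliver this.
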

\begin{proof}
Let $x_0 \in \sM$ be an arbitrary but fixed point and consider a high frequency solution $h_{ab}(\nu, x) = A_{ab}(x) e^{i\nu S} + O(\nu^{-1})$ in traceless Lorenz gauge
to the linearized EE near $x_0$ with momentum $p_a=\nabla_a S(x_0)$ at $x_0$, as constructed e.g., in \cite{Green:2015kur}. 
Taking the frequency $\nu \to \infty$ in $\mathcal{D}h(\nu)=0$, it follows that 
$\sigma_{\mathcal D}(x_0, p)^{ab} A_{ab} = 0$, where $A_{ab}$ is any symmetric covariant rank two tensor in the tangent space at $x_0$ such that
$A_{ab}p^b=0=g^{ab}A_{ab}$. (Here, $\sigma$ is the principal symbol \cite{hormander1} of a partial differential operator obtained by replacing $\partial_a \to -ip_a$ in the 
highest derivative terms.) 

Now we introduce an orthonormal frame $e_\mu^a, \mu=0,1,2,3$ in $T_{x_0} \sM$ such that $\frac{1}{\sqrt{2}}(e^a_0 \pm e^a_1)$, 
$\frac{1}{\sqrt{2}}(e^a_2 \pm i e^a_3)$ give $l^a, n^a, m^a, \bar m^a$ at $x_0$ in this ordering. We use this tetrad 
to identify tensors  over this tangent space with tensors in Minkowski spacetime $(\mathbb{R}^4, \eta_{\mu\nu})$, 
as e.g., in $p^a = p^\mu e_\mu^a$, $A_{ab} =  A_{\mu\nu} e^\mu_a e^\nu_b$ etc., and we use it as a trivialization of 
$\sL_{p,q}$ at $x_0$. 
Then under this identification, $\mathcal{D}_0 := \sigma_{\mathcal D}(x_0, \partial_\mu), \mathcal{E}_0 := \sigma_{\mathcal E}(x_0, \partial_\mu)$ define 
matrix partial differential operators with constant complex coefficients on Minkowski spacetime. $\E_0$ is, of course, 
the linearized Einstein operator in Minkowski spacetime. By linearity, any smooth tensor of the form 
\begin{equation}
\label{Fourierrep}
h_{\mu\nu}(x) = \int \dd^3 \textbf{p} \, \left[ A_{\mu\nu}^+(\textbf{p}) e^{+i|\textbf{p}|x^0+ i\textbf{px}} +
A_{\mu\nu}^-(\textbf{p}) e^{-i|\textbf{p}|x^0+ i\textbf{px}}
\right]
\end{equation}
 such that $A_{\mu\nu}^\pm(\textbf{p})$ is a Schwarz class function of $\textbf{p}$ with (here $p^\mu = (|\textbf{p}|,\textbf{p}), x^\mu = (x^0,\textbf{x})$)
$A_{\mu\nu}^\pm(\textbf{p})p^\nu = A_{\mu\nu}^\pm(\textbf{p})\eta^{\mu\nu} = 0$, is a solution in traceless Lorenz gauge, 
to the linearized EE in Minkowski spacetime, 
$\E_0 h=0$, such that also $\mathcal{D}_0 h = 0$. As is well-known and described, e.g., in Sec. \ref{sec:RedMax}, any smooth solution to the EE
with spatially compact support\footnote{I.e., the restriction to any Cauchy surface has compact support.} 
can be put into traceless Lorenz gauge by a gauge transformation of spatially compact support. Furthermore, by an application of the ``Ehrenpreis–Palamodov Fundamental Principle'' (see e.g., \cite[sec. 7.7]{hormander3})
any smooth solution to the EE with spatially compact support in Minkowski spacetime in traceless Lorenz gauge obviously can be put into the form \eqref{Fourierrep}
with an $A_{\mu\nu}(\textbf{p})$ having the indicated properties. Since $\mathcal{D}_0 \mathcal{L}_X \eta = 0$ (which follows 
by testing\footnote{This equation holds due to the assumption of the lemma, 
because $\mathcal{E} \mathcal{L}_X g = 0$, which is a general structural property of the linearized EE
on a background satisfying $R_{ab}=0$.} $\mathcal{D} \mathcal{L}_X g = 0$ with a high frequency gauge transformation $X^a$ near $x_0$), it follows that 
$\mathcal{D}_0 h = 0$ for every smooth solution $\E_0 h = 0$ to the EE with spatially compact support in Minkowski spacetime. 

By the gluing theorems in \cite{CorvinoSchoen}, given an open set $\mathscr{U}$ of any Cauchy surface of 
Minkowski spacetime and smooth Cauchy data for the linearized EE in this set (necessarily satisfying the linearized constraint equations), we can find 
an extension of these Cauchy data to the entire Cauchy surface having compact support in a somewhat larger set (necessarily satisfying the linearized constraint equations). 
It follows easily from this fact and the usual propagation properties of the linearized EE that $\mathcal{D}_0 h = 0$ for every smooth solution $\E_0 h = 0$, not necessarily of spatially compact support. 

At this stage, we can apply a well known Nullstellensatz for systems of PDE with constant coefficients, 
see ``Oberst's theorem'' for $C^\infty(\mathbb{R}^4)$ (see e.g., \cite{Shankar}), implying that there is a matrix partial differential operator $\mathcal{N}_0$ with constant coefficients
such that $\mathcal{D}_0 = \mathcal{N}_0 \mathcal{E}_0$. 
Actually, we have some flexibility how with think about the coefficients of $\E_0, \mathcal{D}_0, \mathcal{N}_0$. 

The best result is obtained if we think about the coefficients as elements of a field of fractions\footnote{We thank Anna-Laura Sattelberger for suggesting the use of other coefficient fields in this argument.}, $\mathbb{K}$, constructed as follows. First, we consider the ring $\mathbb{C}[X]$, where the indeterminates $X$ stand for $\tau(x_0), \rho(x_0), \Psi_2(x_0)$, as well as their GHP primes and conjugates, see \ref{sec:GHPformulas}. However, rather than considering them as GHP weighted scalars, we consider them as formal indeterminates at this stage. Next, we consider the integral domain $\mathbb{I}$ obtained by factoring $\mathbb{C}[X]$ by the relations \eqref{eq:GHPrelations} (at $x_0$), which generate a prime ideal of $\mathbb{C}[X]$. Finally, we let 
$\mathbb{K}$ be the quotient field of $\mathbb{I}$. Since 
Oberst's theorem applies to all fields of characteristic zero, 
we can think of $\E_0, \mathcal{D}_0, \mathcal{N}_0 \in \mathbb{K}[\partial_0, \partial_1, \partial_2, \partial_3]$ as having coefficients in $\mathbb{K}$, i.e. rational functions of GHP background scalars modulo the relations \eqref{eq:GHPrelations}.

On $\mathbb{K}[\partial_0, \partial_1, \partial_2, \partial_3]$, there is an action of the multiplicative group $\mathbb{C}^\times$ of non-zero complex numbers. It acts as 
\eqref{GHPtrafo} on the GHP background scalars at $x_0$ (hence thereby on $\mathbb{K}$), and as $(\partial_0+\partial_1) \to \lambda \bar{\lambda} (\partial_0+\partial_1)$, $(\partial_0-\partial_1) \to \lambda^{-1} \bar{\lambda}^{-1} (\partial_0-\partial_1)$, and $(\partial_2+i\partial_3) \to \lambda \bar{\lambda}^{-1} (\partial_2+i\partial_3)$. Let us denote this 
group action by $\alpha_\lambda$, and consider 
\begin{equation}
\label{K0defn}
\mathcal{K}_0:=\left(
\lambda \frac{\partial}{\partial \lambda} +
\bar \lambda \frac{\partial}{\partial \bar \lambda}-q-p
\right) \alpha_\lambda \mathcal{N}_0 \bigg|_{\lambda = 0}.
\end{equation}
$\mathcal{K}_0$ measures the failure of $\mathcal{N}_0$ to be locally GHP covariant at $x_0$. Since $\E_0, \mathcal{D}_0$ have the property of being locally GHP invariant, we get 
$0 = \mathcal{K}_0 \mathcal{E}_0$ by applying $\alpha_\lambda$ to the relation $\mathcal{D}_0 = \mathcal{N}_0 \mathcal{E}_0$
and applying the derivative operator in \eqref{K0defn}. Now we take the adjoint of this operator relation, giving 
\begin{equation}
    0 =  \mathcal{E}_0 \mathcal{K}_0^\dagger.
\end{equation}
This implies that $\mathcal{K}_{0\mu\nu}^\dagger$ must be of the 
form $\mathcal{K}_{0\mu\nu}^\dagger = \partial_{(\mu} \mathcal{A}_{0\nu)}^\dagger$, i.e. it has the form of a gauge-transformation, for some $\mathcal{A}_{0\mu}^\dagger \in \mathbb{K}[\partial_0, \partial_1, \partial_2, \partial_3]$. We now redefine 
\begin{equation}
 \mathcal{N}_0^{\mu\nu} \to   \mathcal{N}_0^{\mu\nu} -  \mathcal{A}^{(\mu}_0 \partial^{\nu)},   
\end{equation}
and then the redefined operator still satisfies $\mathcal{D}_0 = \mathcal{N}_0 \mathcal{E}_0$ (since the Einstein operator necessarily yields a divergence-free rank-2 tensor), and furthermore, we now have $\mathcal{K}_0=0$ in \eqref{K0defn}. 
Thus, the new $\mathcal{N}_0$ is GHP covariant at $x_0$. 

In the next step, we promote $\mathcal{N}_0$ to an operator 
on all of $\sM$ by going through the following steps. 
1) We pass from the basis $\partial_0, \partial_1, \partial_2, \partial_3$ to $\frac{1}{\sqrt{2}}(\partial_0 \pm \partial_1), 
\frac{1}{\sqrt{2}}(\partial_2 \pm i\partial_3)$. 
2) We replace these operators, in this order, by $\thorn, \thorn', \eth, \eth'$. The GHP directional derivatives obviously do not commute, so we pick any ordering that we please. 3) We replace the formal indeterminates in the ring $\mathbb{K}$ by the rational functions of the GHP background scalars that they correspond to. Then we get a new operator for each $x_0$, and since $x_0$ was arbitrary, this defines a partial differential operator $\mathcal{N}$ on $\sM$ with the general properties stated in the theorem. 

At this stage, because the GHP derivatives $\thorn, \thorn', \eth, \eth'$ turn rational functions of $\rho, \rho', \tau, \tau', \Psi_2$ and their complex conjugates into such functions
by table \ref{tab:4}, it follows that $\mathcal{D} - \mathcal{N}\E$ is a new 
partial differential operator on $\sM$ which is satisfying the same assumptions of the theorem as $\mathcal{D}$, but is of lower differential order. By lowering the order in this way a 
finite number of times, we reach an operator $\mathcal{D}$ of zeroth order, for which the theorem trivially holds. 
\end{proof}

 \section*{References} 

\end{document}